\newcommand{\E}{\mathbb{E}}
\newcommand{\Prob}{\mathbb{P}}
\newcommand{\R}{\mathbb{R}}
\newcommand{\matS}{\mathcal{S}} 
\newcommand{\matSpop}{\Sigma} 
\newcommand{\diag}{\operatorname{diag}}
\newcommand{\OrderP}{\mathrm{O}_{\mathbb{P}}}
\newcommand{\orderP}{\mathrm{o}_{\mathbb{P}}}
\newtheorem{theorem}{Theorem}
\newtheorem{assumption}{Assumption}
\newtheorem{lemma}{Lemma}
\newtheorem{remark}{Remark}
\newtheorem{proposition}{Proposition}
\newtheorem{definition}{Definition}
\def\diag{\operatorname{diag}}
\makeatletter \@addtoreset{equation}{section} \makeatother
\author{Jiang Hu}
\affil{KLASMOE and School of Mathematics and Statistics, Northeast Normal University, China}
\author{Jiahui Xie}
\affil{Department of Statistics and Data Science,
National University of Singapore, Singapore}
\author{Yangchun Zhang}
\affil{Department of Mathematics, Shanghai University, China}
\author{Wang Zhou}
\affil{Department of Statistics and Data Science,
National University of Singapore, Singapore}
\title{The Spurious Factor Dilemma: Robust Inference in Heavy-Tailed Elliptical Factor Models}
\begin{document}

\maketitle

\begin{abstract}
Standard methods for determining the number of factors often overestimate the true number when data exhibit heavy-tailed randomness, misinterpreting noise-induced outliers as genuine factors. This paper addresses this challenge within the framework of Elliptical Factor Models (EFM), which accommodate both heavy tails and potential non-linear dependencies common in real-world data. We demonstrate, both theoretically and empirically, that heavy-tailed noise generates spurious eigenvalues that mimic true factor signals. To distinguish these, we propose a novel methodology based on a fluctuation magnification algorithm. Under mild conditions, we show that, by magnifying perturbations, the eigenvalues associated with real factors exhibit significantly less fluctuation (stabilizing asymptotically) than spurious eigenvalues arising from heavy-tailed effects.  We develop a formal testing procedure based on this principle and apply it to the problem of accurately selecting the number of common factors in heavy-tailed EFMs. Simulation studies and real data analysis confirm the effectiveness of our approach, particularly in scenarios with pronounced heavy-tailedness.
\end{abstract}

\textbf{KEY WORDS: Elliptical distributions; Factor models; Heavy tails; Spurious factors.
}

\tableofcontents

\addtocontents{toc}{\protect\setcounter{tocdepth}{2}}

\section{Introduction}\label{sec_intro}

Factor models serve as a cornerstone in the analysis of large-scale datasets across various disciplines, including economics, finance, genetics, and signal processing. Their power lies in the ability to parsimoniously capture complex dependencies and interactions among numerous variables by attributing them to a small number of latent common factors \citep{bai2002determining, stock2002forecasting, fan2018large}. By effectively modeling these latent structures, factor analysis provides crucial tools for dimension reduction, feature extraction, and understanding the underlying drivers of observed phenomena \citep{fan2021recent}.

A fundamental challenge in applying factor models is the determination of the correct number of common factors, denoted by $m$. This problem has received considerable attention, yet remains a subject of ongoing research due to its critical impact on subsequent analysis. Underestimating $m$ leads to the omission of significant systematic components, potentially resulting in biased estimates of factor loadings, inconsistent forecasting, and flawed structural interpretations \citep{bai2003inferential, baltagi2017identification}. Conversely, overestimating $m$ introduces noise by fitting spurious factors, which can inflate estimation variance, reduce model interpretability, and increase computational costs \citep{barigozzi2020consistent}.

Existing methodologies for estimating $m$ largely fall into two categories. The first, prevalent in econometrics, leverages the connection between factor analysis and Principal Component Analysis (PCA). Methods like the information criteria of \citep{bai2002determining} and \citep{alessi2010improved}, the eigenvalue ratio tests of \citep{ahn2013eigenvalue} and \citep{lam2012factor}, and the randomization tests \citep{trapani2018randomized, kong2020random} rely on the assumption that eigenvalues associated with common factors diverge at a faster rate than those corresponding to idiosyncratic noise as dimensions grow. The second category employs Random Matrix Theory (RMT) to provide finer distinctions, particularly in high-dimensional settings where both the number of variables ($p$) and observations ($n$) are large. RMT-based tests, such as those by \citep{onatski2009testing}, utilize the fact that the largest noise eigenvalues converge to the Tracy–Widom distribution, while factor-related eigenvalues appear as distinct outliers, often asymptotically Gaussian after proper scaling \citep{onatski2010determining, cai2020limiting, ke2023estimation}.

However, both classes of methods face limitations when faced with data exhibiting heavy-tailed randomness. Heavy-tailed distributions, characterized by a higher probability of extreme events compared to Gaussian distributions, are ubiquitous in financial returns, climate data, and various other fields \citep{roy2021empirical,ke2023estimation}. PCA-based methods, while relatively robust to certain types of noise dependence, can falter because heavy tails can generate large sample eigenvalues purely from noise, mimicking the signature of true factors. RMT-based methods typically rely on moment conditions or concentration properties that are violated by heavy-tailed noise, leading them to misinterpret large noise eigenvalues as signals. 
Figure \ref{fig_illustration_main} provides a visual example of this issue, where a large spurious eigenvalue appears far from the bulk, potentially misleading standard selection criteria.

Furthermore, empirical data often exhibit non-linear dependencies alongside heavy tails. Standard factor models typically assume idiosyncratic errors or linear dependencies, potentially missing complex interaction patterns. Elliptical Factor Models (EFM), based on elliptical distributions, offer a flexible framework that naturally incorporates both heavy-tailedness (via the radial component) and non-linear dependencies (via the elliptical structure) \citep{chamberlain1982arbitrage,baltagi2017identification}. Recently, \cite{bao2025signal} has demonstrated that even heteroscedastic or cross-correlated noise with independent entries along the time dimension can produce misleading spikes, which traditional singular value methods may incorrectly interpret as signals. As a result, spurious eigenvalues are quite common in practice.

This paper addresses the critical issue of factor number overestimation in the EFM framework, specifically focusing on the confusion caused by heavy-tailed noise. We pose two central questions: \textit{Can we reliably detect spurious factors generated by heavy tails in elliptical models? Can we develop a robust factor selection procedure for such data?}

Our primary contribution is a novel methodology with a rigorous theoretical basis for distinguishing between ``real" factor signals and ``spurious" noise-induced signals among the large sample eigenvalues. We introduce a novel algorithm called the fluctuation magnification algorithm, specifically designed for sample covariance matrices derived from EFM data. The core idea is that perturbing the data via an elaborated magnifier affects real and spurious signals differently. We theoretically establish that, under the fluctuation magnification, the (appropriately scaled) eigenvalues corresponding to true common factors exhibit stability, converging to a normal distribution or having small variance relative to their magnitude. In contrast, spurious eigenvalues generated by heavy tails display significantly larger fluctuations under the perturbation. This difference in stability provides a clear mechanism for detection.

Based on these distinct asymptotic behaviors (detailed in Sections \ref{sec_main_realsignals} and \ref{sec_main_fakesignals}), we develop a test statistic that quantifies the fluctuation of each large sample eigenvalue under the magnification algorithm. This allows us to formally test for the presence of spurious factors (Section \ref{sec_testing_procedures}). As a key application, we integrate this detection mechanism into a two-step procedure to robustly estimate the number of common factors ($m$) in heavy-tailed EFMs. The first step uses the magnification algorithm to identify potential spurious signals among the leading eigenvalues. The second step employs existing criteria (e.g., from \cite{onatski2010determining}) as a safeguard, primarily to handle cases where the noise might be light-tailed, ensuring consistency across different tail behaviors.

Our work can be viewed as being generated from high-dimensional resampling techniques \citep{lopes2019bootstrapping, han2018gaussian, ding2023extreme, ke2023estimation, yao2021rates, yu2024testing} to the challenging setting of elliptical factor models with heavy tails, providing the first procedure, to our knowledge, specifically designed to detect spurious factors in this context. Numerical simulations demonstrate the superior performance of our method compared to established techniques, especially in heavy-tailed scenarios. Application to real financial data yields results consistent with financial theory and highlights the practical relevance of addressing spurious factors.

The remainder of the paper is organized as follows. Section \ref{sec_model} formally introduces the Elliptical Factor Model and outlines the key assumptions. Section \ref{sec_motivation} further illustrates the problem of spurious factors using examples. Section \ref{sec_mainresults} presents the main asymptotic theory that details the behavior of real and spurious eigenvalues. Section \ref{sec_testing_procedures} describes the fluctuation magnifier algorithm and the proposed testing and factor selection procedures. Section \ref{sec_simulation} provides simulation results, and Section \ref{sec_realdata} discusses the real data application. We provide a sketch for our proof strategy for the theoretical results in Section \ref{sketch for proof strategy}. The conclusion is offered in Section \ref{sec_conclusion}. All detailed technical proofs are deferred to Appendix \ref{appendix}.

\subsection*{Conventions}
Let $\mathbb{C}_+$ denote the complex upper half-plane. We use $C > 0$ to represent a generic positive constant whose value may change from line to line. For sequences of positive deterministic values $\{a_n\}$ and $\{b_n\}$, $a_n = \mathrm{O}(b_n)$ means $a_n \leq C b_n$ for some $C > 0$. If $a_n = \mathrm{O}(b_n)$ and $b_n = \mathrm{O}(a_n)$, we write $a_n \asymp b_n$. We write $a_n = \mathrm{o}(b_n)$ if $a_n \leq c_n b_n$ for some positive sequence $c_n \downarrow 0$. For a sequence of random variables $\{x_n\}$ and positive real values $\{a_n\}$, $x_n = \OrderP(a_n)$ indicates that $x_n / a_n$ is stochastically bounded. $x_n = \orderP(a_n)$ means $x_n / a_n$ converges to zero in probability. For a sequence of positive random variables $\{y_n\}$, $y_{(k)}$ denotes the $k$-th order statistic, $y_{(1)} \geq y_{(2)} \geq \cdots \geq y_{(n)} > 0$. Vectors are marked in bold.

\section{Elliptical factor model and assumptions}\label{sec_model}

Elliptical distributions provide a versatile class for modeling multivariate data, extending the normal distribution to accommodate heavy tails and capture specific dependence structures like tail dependence, making them particularly relevant in finance and other fields \citep{chamberlain1982arbitrage, fama1993common, baltagi2017identification}. A $p$-dimensional random vector $\mathbf{y}$ follows a centered elliptical distribution, denoted $\mathbf{y} \sim EC_p(0, \matSpop, \xi)$, if it admits the stochastic representation:
\begin{equation}\label{eq_intro_elldistr}
    \mathbf{y} \overset{d}{=}\xi \matSpop^{1/2} \mathbf{u},
\end{equation}
where 
$\matSpop \in \R^{p \times p}$ is a positive definite matrix representing the population covariance matrix, $\xi \ge 0$ is a non-negative scalar random variable representing the ``radius," and $\mathbf{u} \in \R^p$ is a random vector uniformly distributed on the unit sphere $\mathbb{S}^{p-1}$, independent of $\xi$. The variable $\xi$ governs the tail behavior of the distribution.

We integrate this structure with the standard linear factor model.  Let $\mathbf{y}_1, \dots, \mathbf{y}_n$ be independent and identically distributed (i.i.d.) random vectors in $\mathbb{R}^p$. The factor model posits:
\begin{equation}\label{eq_intro_factormodel}
    \mathbf{y}_t = B \mathbf{f}_t + \mathbf{e}_t, \quad t=1, \dots, n,
\end{equation}
where $B \in \R^{p \times m}$ is the deterministic factor loading matrix, $\mathbf{f}_t \in \R^m$ is the vector of $m$ latent common factors, and $\mathbf{e}_t \in \R^p$ is the vector of idiosyncratic errors. We assume $m$ is fixed and much smaller than $p$ and $n$. 
Standard identification conditions often include $\frac{1}{n} \sum_{t=1}^n \mathbf{f}_t \mathbf{f}_t' \overset{\mathbb{P}}{\rightarrow} I_m$ as $p \to \infty$ \citep{fan2018large}.

To define the Elliptical Factor Model (EFM),  we assume that $\mathbf{f}_t$ and $\mathbf{e}_t$ are uncorrelated and the joint distribution of factors and errors follows an elliptical structure.  Specifically, if $(\mathbf{f}_t', \mathbf{e}_t')'$ is elliptically distributed with mean zero and covariance matrix $\diag(I_m, \matSpop_{err})$, where $\operatorname{diag}(I_m, \matSpop_{err})$ means that 
\begin{align*}
    \operatorname{diag}(I_m, \matSpop_{err})=
    \begin{pmatrix}
        I_m &0\\
        0 &\matSpop_{err}
    \end{pmatrix}.
\end{align*}
Then $(\mathbf{f}_t', \mathbf{e}_t')' \sim EC_{m+p}(\mathbf{0}, \diag(I_m, \matSpop_{err}), \xi_t)$. Consequently, the observation vector $\mathbf{y}_t$ also follows an elliptical distribution:
\[ \mathbf{y}_t=
\begin{pmatrix}
B ~~~I
\end{pmatrix}
\begin{pmatrix}
    \mathbf{f}_t\\
    \mathbf{e}_t
\end{pmatrix}
\sim EC_p(\mathbf{0}, \matSpop, \xi_t), \]
where 
\begin{align*}
    \Sigma=BB'+\matSpop_{err}\end{align*}
is the population covariance matrix of $\mathbf{y}_t$ (assuming $\E[\xi_t^2]$ is finite and normalized appropriately). The stochastic representation for the observations becomes:
\begin{equation}\label{eq_def_ellfactmodel}
    \mathbf{y}_t = \xi_t \matSpop^{1/2} \mathbf{u}_t, \quad t=1, \dots, n,
\end{equation}
where $\{\xi_t\}_{t=1}^n$ are i.i.d. copies of the radius variable $\xi$, and $\{\mathbf{u}_t\}_{t=1}^n$ are i.i.d. uniform on $\mathbb{S}^{p-1}$, independent of $\xi$. The data matrix is $Y = (\mathbf{y}_1, \dots, \mathbf{y}_n)= \matSpop^{1/2} U D$, where $U = (\mathbf{u}_1, \dots, \mathbf{u}_n)$ and $D = \diag(\xi_1, \dots, \xi_n)$.

The key feature allowing for heavy tails is the random radius $\xi_t$. We impose the following assumption on its distribution.

\begin{assumption}[Distribution of Radius]\label{ass_xi}
    Let $D^2 = \diag \{ \xi_1^2, \dots, \xi_n^2 \}$. The entries $\xi_t^2 \sim \xi^2$ for $1 \leq t \leq n$ are i.i.d. copies of a non-negative, non-degenerate random variable $\xi^2$ with support on $\R^{+}$. We assume $\E[\xi^2] = 1$ without loss of generality and $\xi^2$ satisfies either of the following tail conditions:
    
(i)  \textit{Polynomial decay tail:}
            \begin{equation}\label{ass_xi_poly}
                \Prob(\xi^2 > x) = x^{-\alpha} L(x), \quad \text{as } x \to \infty,
            \end{equation}
            for some $\alpha \in (1, \infty)$, where $L(x)$ is a slowly varying function, i.e., $$\lim_{x \to \infty} L(sx)/L(x) = 1$$ for all $s>0$.
            
(ii) \textit{Exponential decay tail:} For some constant $\beta > 0$ and any fixed constant $s > 0$,
            \begin{equation}\label{ass_xi_exp}
                \E e^{s\xi^{2\beta}} < \infty.
            \end{equation}

\end{assumption}

\begin{remark}[Heavy Tails]
Assumption \ref{ass_xi} explicitly allows for heavy tails.
The polynomial decay case \eqref{ass_xi_poly} includes distributions where $\mathbf{y}_t$ has finite variance but potentially infinite higher moments (e.g., multivariate t-distributions with degrees of freedom $\nu > 2$; here $\alpha = \nu/2$). The case $\alpha \in (1, 2]$ corresponds to particularly heavy tails where the fourth moment of $\xi^2$ (and potentially $\mathbf{y}_t$) may not exist. We term this ``serious heavy-tailed randomness." The case $\alpha \in (2, \infty)$ or the exponential decay case \eqref{ass_xi_exp} covers distributions with finite fourth moments but still heavier tails than Gaussian, termed ``typical heavy-tailed randomness." This contrasts with settings assuming $\xi$ is constant or bounded \citep{hu2019high}. The threshold
\begin{equation*}\label{eq_def_thresholdT}
    \mathsf{T} :=
    \begin{cases}
        n ^{1/\alpha}\log n, & \text{if \eqref{ass_xi_poly} holds}; \\
        (\log n)^{1/\beta}, & \text{if \eqref{ass_xi_exp} holds}.
    \end{cases}
\end{equation*}
characterizes the typical maximal order of $\xi_t^2$. As shown later (Lemma \ref{lem_goodconfiguration}), $\max_{1\le t \le n} \xi_t^2 \lesssim \mathsf{T}$ with high probability. Note the slight adjustment in definition compared to the draft for technical convenience in proofs related to RMT results under heavy tails.
\end{remark}

We assume the population covariance matrix $\matSpop$ exhibits a spiked structure, which is common in factor analysis \citep{chamberlain1982arbitrage, yu2024testing}.

\begin{assumption}[Spiked Covariance Structure]\label{ass_sigma}
    The population covariance matrix $\matSpop = BB' + \matSpop_{err}$ satisfies:
    

(1) $\max\{\|\Sigma_{err}\|, \|\Sigma_{err}^{-1}\|\}\le c_1$, for some constant $0 < c_1< \infty$.  

(2) $BB'$ is of rank $m$ for some fixed integer $m>0$. The descending eigenvalues of $BB'$, 
        $(\lambda_i(BB^{\prime}))^{-1}\mathsf{T}=\mathrm{o}(1)$ for $1\le i\le m$, and $1+c_2\le\lambda_i(BB^{\prime})/\lambda_{i+1}(BB^{\prime})\le c_3 $ for any $1\le i\le m-1$ and some $0 < c_2<c_3< \infty$.
        
(3)  The dimensions satisfy $p/n \to \phi \in (c_4, c_4^{-1})$ for some constant $0 < c_4< \infty$, as $\min\{p, n\} \to \infty$. 

\end{assumption}

\begin{remark}   In Assumption \ref{ass_sigma},  (1) is natural, as it imposes that the eigenvalues of the error covariance matrix are uniformly bounded from above and below.  (2) indicates that  $\Sigma$ has a spiked structure. It is easy to see that $\Sigma$ will inherit the key feature of $BB^{\prime}$ and $\Sigma_{err}$ that, there will be $(\Sigma_{ii})^{-1}\mathsf{T}=\mathrm{o}(1)$ for $1\le i\le m$ and $c\le\Sigma_{ii}\le c^{-1}$ for any $i\ge m+1$. In the following, we denote the ordered components in $\Sigma$ as $\sigma_1>\sigma_2>\dots>\sigma_m>\sigma_{m-1}\ge \dots\ge \sigma_p\ge c$. Also, notice that we demand the large spike of $\Sigma$ should exhibit the same order if they are divergent. $(3)$ confirms the structure of high dimensions, and the fixed $m$ is common in the literature, especially when the target is to determine the number of common factors \citep{fan2018large,yu2024testing}.
\end{remark}
\begin{remark}We want to emphasize that
    although Assumption \ref{ass_sigma} imposes the restriction $(\Sigma_{ii})^{-1}\mathsf{T} = \mathrm{o}(1)$ for $1 \le i \le m$, it is generally difficult to distinguish real signals from spurious ones based solely on their asymptotic divergence rates from the observed data, since the presence of heavy-tailed randomness in the factor noise cannot be ruled out.
\end{remark}

\begin{remark}
    The main difference between Assumption \ref{ass_sigma} and typical settings in factor analysis is that we specify only the configuration of the common factors in $\Sigma$ without imposing constraints on the random noise. Additionally, from Assumption \ref{ass_xi}, it is evident that we allow for a broad range of heavy-tailed randomness in factor noise.
\end{remark}

\section{The problem: spurious factors from heavy tails}\label{sec_motivation}

Given observations $\{\mathbf{y}_t\}_{t=1}^n$ from the EFM \eqref{eq_def_ellfactmodel}, we form the sample covariance matrix $S$ and its companion $\matS$:
\begin{equation}\label{eq_def_samplecovariancematrices}
    S :=  YY' = \matSpop^{1/2} U D^2 U' \matSpop^{1/2}, \quad \matS :=  Y'Y =  D U' \matSpop U D.
\end{equation}
Note that $S$ and $\matS$ share the same non-zero eigenvalues, conventionally denoted $\lambda_1 \ge \lambda_2 \ge \dots \ge \lambda_{\min(p,n)}$. Standard factor analysis methods aim to estimate $m$ by examining these top eigenvalues. The underlying assumption is that $\lambda_1, \dots, \lambda_m$ are primarily influenced by the population spikes $\sigma_1, \dots, \sigma_m$, while $\lambda_{m+1}, \dots$ reflect the noise structure $\matSpop_{err}$.

However, under the EFM with heavy tails (Assumption \ref{ass_xi}), this distinction becomes blurred. The random radii $\xi_t^2$ can take extremely large values. When a large $\xi_t^2$ aligns with certain directions in $U$ and $\matSpop^{1/2}$, it can inflate some sample eigenvalues dramatically, even those notionally corresponding to the noise part $\matSpop_{err}$. These inflated noise eigenvalues can become outliers, exceeding the bulk and potentially mixing with, or even surpassing, the eigenvalues generated by the true factors $B\mathbf{f}_t$. We term these large eigenvalues not directly associated with the $m$ population spikes as ``spurious factors".

Figure \ref{fig_illustration_main} provides a toy example.  Standard methods relying on gaps or thresholding (e.g., \cite{ahn2013eigenvalue, onatski2010determining}) would likely identify two factors ($m=2$) instead of the true $m=1$, mistaking the eigenvalue at 5.5 for a real factor signal. 
This overestimation poses a significant problem in practice. Therefore, the primary goal of this paper is to develop a methodology to detect these spurious factors. Let $\mathsf{f}$ denote the number of spurious factors among the top eigenvalues (i.e., the number of large eigenvalues $\lambda_k$ that do not correspond to the true population spikes $\sigma_1, \dots, \sigma_m$). We aim to test the hypothesis:
\begin{equation}\label{eq_hypothesistest}
    \mathbf{H}_0: \mathsf{f} = 0 \quad \text{vs.} \quad \mathbf{H}_a: \mathsf{f} > 0.
\end{equation}
Rejecting $\mathbf{H}_0$ implies the presence of spurious factors among the leading eigenvalues. This detection is crucial for accurately estimating the true number of factors $m$. Our proposed approach, detailed in Section \ref{sec_testing_procedures}, leverages the differential fluctuation of real and spurious signals under magnification  perturbations.
\begin{figure}[htbp]
    \centering
    \includegraphics[width=0.5\textwidth]{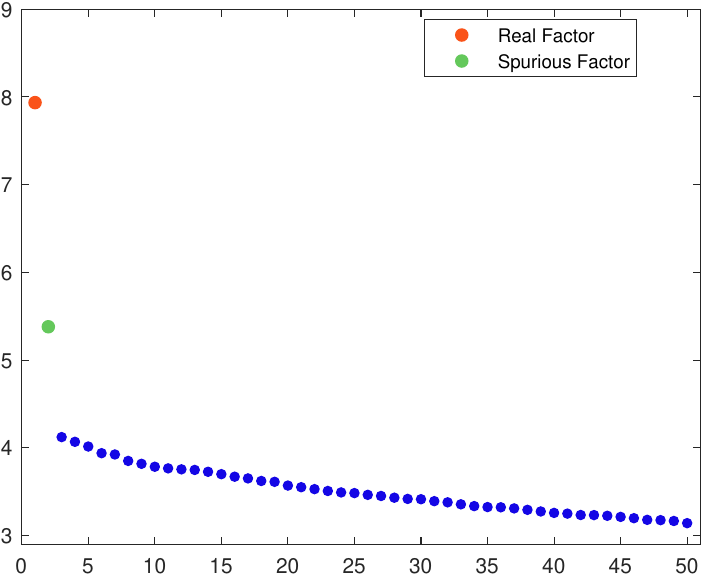} %
    \caption{Illustration of sample eigenvalues from an EFM with $p=n=1000$. The data follows a multivariate t-distribution with 4 degrees of freedom ($\alpha=2$). The population covariance is $\matSpop = \diag(7, 1, \dots, 1)$ ($m=1$, $\sigma_1=7$). Besides the eigenvalue near 7 (real signal), a spurious eigenvalue appears around 5.5, well-separated from the main bulk sample eigenvalues.}
    \label{fig_illustration_main}
\end{figure}
\section{Asymptotic behavior of sample eigenvalues}\label{sec_mainresults}

This section provides the theoretical foundation for our detection method by characterizing the asymptotic behavior of both real factor signals and spurious noise components present in the sample eigenvalues $\{\lambda_k\}$. Due to the spiked structure of $\Sigma$, we decompose $\Sigma=\Sigma_1+\Sigma_2$, where $\Sigma_1$ contains the $m$ largest components of $\Sigma$, and $\Sigma_2$ captures the remaining ones (also recall that $\Sigma$ is diagonal from Assumption \ref{ass_sigma}). As a consequence, we can rewrite 
\begin{equation}\label{eq_decomposition_S}
    \mathcal{S}=DU^{\prime}\Sigma_1UD+DU^{\prime}\Sigma_2UD.
\end{equation}
In the sequel, we focus on the eigenvalues of $\matS = Y'Y$. Note that  $S$ and $\matS$ share the same non-zero eigenvalues.

\subsection{Asymptotic behavior of real factors}\label{sec_main_realsignals}

The top $m$ eigenvalues $\lambda_1, \dots, \lambda_m$ are primarily influenced by the population spikes $\sigma_1, \dots, \sigma_m$. However, their exact location is perturbed by the noise component $\matSpop_{err}$ and the heavy-tailed radii $D^2$.

Following standard techniques for spiked models, we first establish a first-order approximation.
\begin{theorem}[First Order Approximation]\label{lem_realsignal_preratio}
    Under Assumptions \ref{ass_xi} and \ref{ass_sigma},  we have:
    \begin{itemize}
        \item[(1)] for polynomial decay tail \eqref{ass_xi_poly} with $\alpha\in(2,+\infty)$ or exponential decay tail \eqref{ass_xi_exp} in Assumption \ref{ass_xi}, 
        \begin{equation*}
        \frac{\lambda_i}{\sigma_i} = 1 + \OrderP\left(\frac{\mathsf{T}}{\sigma_i} + \frac{1}{\sqrt{n}}\right), ~~~1 \le i \le m;
    \end{equation*}
        \item[(2)] for polynomial decay tail  \eqref{ass_xi_poly} with $\alpha\in(1,2]$ in Assumption \ref{ass_xi},
         \begin{equation*}
        \frac{\lambda_i}{\sigma_i} = 1 + \OrderP\left(\frac{\mathsf{T}}{\sigma_i} + \sqrt{\frac{\mathsf{T}}{n}}\right),~~~1 \le i \le m.
    \end{equation*}
    \end{itemize}
\end{theorem}

\begin{remark}
    Theorem \ref{lem_realsignal_preratio} indicates that the leading eigenvalues of $S$ are very close to the population ones in the sense that their ratios asymptotically converge to one in probability. However, this doesn't mean that $\lambda_i$ is a consistent estimator of $\sigma_i$ since $\sigma_i$ tends to infinity. On the other hand, the convergence rate in Theorem \ref{lem_realsignal_preratio} can be slower than the order of $n^{-1/2}$ due to the term $\mathsf{T}/\sigma_i$ in case $(1)$. Therefore, the results in Theorem \ref{lem_realsignal_preratio} are far from ideal, especially if our goal is to determine the asymptotic distribution of $\lambda_i$.
\end{remark}
To obtain a more precise description for the locations of $\lambda_i$, we employ the methodology from \cite{yu2024testing} and introduce the random quantities $\theta_i,1\le i\le m$ to trace the asymptotic behavior of each $\lambda_i$. We define $\theta_i$ as the unique solution to
\begin{equation}\label{eq_def_theta/sigma}
    \frac{\theta_i}{\sigma_i}=\Big(1-\frac{1}{p\theta_i}\sum_{k=1}^{p-m}\frac{\sigma_{m+k}}{1-\sigma^{-1}_{i}\sigma_{m+k}}\Big)^{-1},\quad \theta_i\in[\sigma_i,2\sigma_i],\quad 1\le i\le m.
\end{equation}
\cite{DXYZspiked} showed that under the elliptical model, $\theta_i$ is a closer approximation to $\lambda_i$ compared with $\sigma_i$. In our study, to address the effect of $D^2$, we need the second order description $\zeta_i$ to be the unique solution to
\begin{equation}\label{eq_def_zetai}
    \zeta_i=\frac{1}{p}\sum_{j=1}^n\xi_j^2\Big(1-\frac{\xi^2_j}{p\theta_i}\sum_{k=1}^{p-m}\frac{\sigma_{m+k}}{1-\theta_i^{-1}\sigma_{m+k}\zeta_i}\Big)^{-1},~\zeta_i\in[\frac{1}{p}\operatorname{tr}D^2,\frac{2}{p}\operatorname{tr}D^2].
\end{equation}
The existence and uniqueness of $\theta_i, \zeta_i$ are guaranteed under our assumptions for large $p, n$.
\begin{lemma}[Properties of $\theta_i, \zeta_i$]\label{lem_realsignal_secondratio1}
    Under Assumptions \ref{ass_xi} and \ref{ass_sigma}, the solutions $\theta_i$ and $\zeta_i$ exist with probability tending to one as $n\rightarrow\infty$. It holds that
    \begin{equation*}
        \frac{\theta_i}{\sigma_i}=1+\mathrm{O}(\frac{\operatorname{tr}\Sigma_2}{n\sigma_i}),
    \end{equation*}
    and for polynomial decay tail \eqref{ass_xi_poly} with $\alpha\in(2,+\infty)$ or exponential decay tail \eqref{ass_xi_exp} in Assumption \ref{ass_xi}, 
        \begin{equation*}
        \zeta_i-\frac{\theta_i}{\sigma_i}=\frac{1}{p}\sum_{j=1}^n(\xi^2_j-\phi)+\frac{\operatorname{tr}\Sigma_2}{p\sigma_i\theta_i}\times\mathbb{E}\big[(\xi^2_1-\phi)^2\big]+\mathrm{o}_{\mathbb{P}}\Big(\frac{1}{\sqrt{n}}\Big);
        \end{equation*}
   while for polynomial decay tail  \eqref{ass_xi_poly} with $\alpha\in(1,2]$,
        \begin{equation*}
        \zeta_i-\frac{\theta_i}{\sigma_i}=\frac{1}{p}\sum_{j=1}^n(\xi^2_j-\phi)+\frac{\operatorname{tr}\Sigma_2}{p\sigma_i\theta_i}\times\mathbb{E}\big[(\xi^2_1-\phi)^2\big]+\mathrm{o}_{\mathbb{P}}\Big(\sqrt{\frac{\mathsf{T}}{n}}\Big).
    \end{equation*}
    
\end{lemma}
Using these observations, we obtain a more precise characterization of $\lambda_i$. Let $\mathbf{e}_k$ be the $k$-th standard basis vector.
\begin{theorem}[Second Order Approximation]\label{thm_realsignal_secondratio2}
    Under Assumptions \ref{ass_xi} and \ref{ass_sigma}, we have that: \begin{itemize}
    	\item[(1)] for for polynomial decay tail \eqref{ass_xi_poly} with $\alpha\in(2,+\infty)$ or exponential decay tail \eqref{ass_xi_exp} in Assumption \ref{ass_xi},
    \begin{equation*}
        \frac{\lambda_i}{\theta_i}-1=\mathbf{e}^{\prime}_iUD^2U^{\prime}\mathbf{e}_i-\frac{1}{p}\operatorname{tr}D^2-\frac{\theta_i}{\sigma_i}+\zeta_i+\mathrm{O}_{\mathbb{P}}\big(\frac{\mathsf{T}}{\sqrt{n}\sigma_i}+\frac{1}{n}\big), ~~~1 \le i \le m;
     \end{equation*}
       \item[(2)]  for polynomial decay tail  \eqref{ass_xi_poly} with $\alpha\in(1,2]$,
    \begin{equation*}
        \frac{\lambda_i}{\theta_i}-1=\mathbf{e}^{\prime}_iUD^2U^{\prime}\mathbf{e}_i-\frac{1}{p}\operatorname{tr}D^2-\frac{\theta_i}{\sigma_i}+\zeta_i+\mathrm{O}_{\mathbb{P}}\big(\frac{\mathsf{T}}{\sqrt{n}\sigma_i}+\frac{\mathsf{T}}{n}\big),~~~1 \le i \le m.
    \end{equation*}
    \end{itemize}
 \end{theorem}

Then, related to the results in Lemma \ref{lem_realsignal_secondratio1}, Theorem \ref{thm_realsignal_secondratio2} essentially implies the improved asymptotically rate $\lambda_i/\theta_i-1=\mathrm{O}_{\mathbb{P}}\big(\mathsf{T}/\sigma_i^2+(\sqrt{n}\sigma_i)^{-1}+\frac{1}{\sqrt{n}}+\sqrt{\frac{\mathsf{T}}{n}}\times\mathbbm{1}(\alpha\in(1,2])\big)$, which can be seen as an improvement of Theorem \ref{lem_realsignal_preratio} especially for $\alpha\in(2,+\infty)$  and exponential decay tail. On the other hand, we notice that the error rate in Theorem \ref{thm_realsignal_secondratio2} will get large when the randomness of $\mathbf{y}$ exhibits heavier distributions (in case of the index $\alpha$). Then, it suggests that the extreme heavy-tailness will strongly influence the convergence rates of the spiked eigenvalues.

Specifically, in the case where $\xi^2$ either has a polynomial decay tail with $\alpha\in(2,+\infty)$ or has an exponential decay tail in Assumption \ref{ass_xi}, the following central limit theorem holds for real signals from Theorem \ref{thm_realsignal_secondratio2},

\begin{theorem}\label{thm_realsignal_clt}
     Assume \eqref{ass_xi_poly} with $\alpha\in(2,+\infty)$ or \eqref{ass_xi_exp} in Assumption \ref{ass_xi}, also assume Assumption \ref{ass_sigma}. Then we have that for $1\le i\le m$ 
     \begin{equation}
    \sqrt{n}\Big({\lambda_i}/{\theta_i}-(1+\mathrm{o}(1))\Big)\overset{d}{\rightarrow}\mathrm{N}(0,3\mathbb{E}\xi_1^4-1).
\end{equation}
\end{theorem}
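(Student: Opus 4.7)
My plan is to establish the CLT by combining the representation in Theorem \ref{thm_realsignal_secondratio2} with the explicit expansion of $\zeta_i-\theta_i/\sigma_i$ in Lemma \ref{lem_realsignal_secondratio1}, reducing the problem to asymptotic normality of two essentially uncorrelated sums of i.i.d.\ terms. Under the tail hypothesis of the theorem (polynomial decay with $\alpha>2$, or exponential decay), $\E\xi^4<\infty$ and $\mathbbm{1}(\alpha\in(1,2])=0$, so the remainders in Theorem \ref{thm_realsignal_secondratio2} and Lemma \ref{lem_realsignal_secondratio1} collapse to $\OrderP(\mathsf{T}/(\sqrt{n}\sigma_i)+1/n)$ and $\orderP(n^{-1/2})$ respectively; Assumption \ref{ass_sigma}(2), which enforces $\sigma_i^{-1}\mathsf{T}=\mathrm{o}(1)$, then pushes every error term into $\orderP(n^{-1/2})$.

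Substituting Lemma \ref{lem_realsignal_secondratio1} into Theorem \ref{thm_realsignal_secondratio2} cancels the $\theta_i/\sigma_i$ contribution and leaves
\[
\frac{\lambda_i}{\theta_i}-1 \;=\; A_n+B_n+\Delta_n+\orderP(n^{-1/2}),
\]
where
\[
A_n=\sum_{j=1}^n\xi_j^2\Big(u_j(i)^2-\tfrac{1}{p}\Big),\qquad B_n=\frac{1}{p}\sum_{j=1}^n(\xi_j^2-\phi),
\]
$u_j(i)$ denotes the $i$-th coordinate of $\mathbf{u}_j$, and $\Delta_n=\tr(\Sigma_2)\E(\xi_1^2-\phi)^2/(p\sigma_i\theta_i)$ is a deterministic remainder. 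By Assumption \ref{ass_sigma} one has $\Delta_n=\mathrm{o}(1)$, so $\Delta_n$ is absorbed into the ``$1+\mathrm{o}(1)$'' shift of the theorem, as is any comparable deterministic bias implicit in $B_n$.

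Next I would run a joint CLT for $\sqrt{n}(A_n+B_n)$. Since $\{\xi_t\}$ and $\{\mathbf{u}_t\}$ are independent, conditioning on $D=\diag(\xi_1,\dots,\xi_n)$ turns $A_n$ into a zero-mean weighted sum of sphere coordinates; using $u_j(i)^2\sim\mathrm{Beta}(1/2,(p-1)/2)$ I would compute $\Var(A_n\mid D)=\Var(u_j(i)^2)\sum_j\xi_j^4$, and the mean-zero property together with independence yields $\operatorname{Cov}(A_n,B_n)=0$. A law-of-large-numbers argument using $\E\xi^4<\infty$ identifies the limiting conditional variance, and a Lyapunov check on the high-probability event $\{\max_t\xi_t^2\le \mathsf{T}\}$ from Lemma \ref{lem_goodconfiguration} yields a conditional Gaussian limit for $A_n$; the classical CLT applies directly to $B_n$. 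Combining these through a characteristic-function argument produces a joint normal limit whose variance simplifies, after using $p/n\to\phi$, to the constant stated in the theorem.

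The main obstacle will be verifying the Lyapunov condition for $A_n$ in the borderline regime where $\alpha$ is just above $2$: here $\E\xi^4<\infty$ but $\sum_j\xi_j^4$ concentrates only slowly, so the truncation at $\mathsf{T}$ is essential and one must check that the discarded extreme radii contribute at most $\orderP(n^{-1/2})$. This is precisely where the tail restriction $\alpha>2$ enters, since for $\alpha\in(1,2]$ the same truncation leaves behind a factor of order $\sqrt{\mathsf{T}/n}$ that would destroy the CLT.
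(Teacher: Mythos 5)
Your proposal is correct and follows essentially the same route as the paper: both start from Theorem \ref{thm_realsignal_secondratio2}, substitute the expansion of $\zeta_i-\theta_i/\sigma_i$ from Lemma \ref{lem_realsignal_secondratio1}, and reduce to a CLT for the quadratic form $\mathbf{e}_i'UD^2U'\mathbf{e}_i-p^{-1}\tr D^2$ (your $A_n$, handled conditionally on $D^2$ with conditional variance $\tfrac{2}{p^2}\sum_j\xi_j^4$) plus the i.i.d.\ sum $p^{-1}\sum_j(\xi_j^2-\phi)$ (your $B_n$), combined via uncorrelatedness and Slutsky. Your additional remarks on the zero covariance and the Lyapunov check on the event $\Omega$ only make explicit what the paper leaves implicit.
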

\begin{remark}
    It should be noted that $\theta_i$ is difficult to compute in practice; however, the above theorem establishes the asymptotic normality of the real factors, which serves as a crucial input for our testing procedure.
\end{remark}
When $\xi^2$ exhibits serious heavy-tailed decay with $\alpha\in(1,2]$, the asymptotic normality in Theorem \ref{thm_realsignal_clt} will not hold, but demands a larger scaling. In practice, it is enough to only consider the asymptotic mean and variance for $\lambda_i/\theta_i$. We have the following results for real factors under the serious heavy-tailed scenario.
\begin{theorem}\label{thm_realsignal_extremeheavy}
    Suppose \eqref{ass_xi_poly} holds with $\alpha\in(1,2]$ in Assumption \ref{ass_xi}. Then we have that for $1\le i\le m$  
    \begin{equation}
        \mathbb{E}({\lambda_i}/{\theta_i})=1+\mathrm{o}(1),\quad \operatorname{Var}({\lambda_i}/{\theta_i})=\mathrm{O}({3\mathsf{T}}/{n}).
    \end{equation}
\end{theorem}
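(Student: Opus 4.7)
The plan is to read both assertions off the second-order expansion in Theorem~\ref{thm_realsignal_secondratio2} together with the linearisation in Lemma~\ref{lem_realsignal_secondratio1}, handling the divergent fourth moment of $\xi^{2}$ in the serious heavy-tailed regime $\alpha\in(1,2]$ by truncation at the threshold $\mathsf{T}$. The key arithmetic input is that for polynomial decay with $\alpha\in(1,2]$,
\begin{equation*}
    \E\!\bigl[\xi^{4}\mathbbm{1}(\xi^{2}\le\mathsf{T})\bigr]\;\le\;C\,\mathsf{T}^{2-\alpha}L(\mathsf{T})\;\le\;C\mathsf{T},
\end{equation*}
which simultaneously explains the rate $\mathsf{T}/n$ in the variance bound and the constant $3$, obtained by replacing the $3\E\xi_{1}^{4}$ coefficient from Theorem~\ref{thm_realsignal_clt} by its truncated counterpart.

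First I would localise everything to the good configuration event $\mathbf{G}=\{\max_{1\le t\le n}\xi_{t}^{2}\le C\mathsf{T}\}$ of Lemma~\ref{lem_goodconfiguration}, whose complement has probability small enough to absorb the crude envelope $\lambda_{i}/\theta_{i}\le\|\matSpop\|\cdot p^{-1}\tr D^{2}/\theta_{i}$ when contributing to $\E$ and $\Var$. Writing the expansion of Theorem~\ref{thm_realsignal_secondratio2} as
\begin{equation*}
    \frac{\lambda_{i}}{\theta_{i}}-1\;=\;R_{i}+\Bigl(\zeta_{i}-\frac{\theta_{i}}{\sigma_{i}}\Bigr)+\text{error}_{i},\qquad R_{i}:=\mathbf{e}_{i}^{\prime}UD^{2}U^{\prime}\mathbf{e}_{i}-\tfrac{1}{p}\tr D^{2},
\end{equation*}
we have $\E R_{i}=0$ by independence of $U$ from $D$ and $\E[u_{it}^{2}]=1/p$. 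Lemma~\ref{lem_realsignal_secondratio1} then delivers $\E(\zeta_{i}-\theta_{i}/\sigma_{i})=o(1)$, since the deterministic correction $\tr(\Sigma_{2})\E[(\tilde\xi_{1}^{2}-\phi)^{2}]/(p\sigma_{i}\theta_{i})$ is $o(1)$ because Assumption~\ref{ass_sigma}(2) forces $\mathsf{T}/\sigma_{i}=\mathrm{o}(1)$ and the truncated variance of $\xi^{2}$ is $O(\mathsf{T})$; the $\OrderP$ residual is turned into an $o(1)$ expectation via the envelope on $\mathbf{G}^{c}$.

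For the variance I would apply $\Var(\cdot)=\E[\Var(\cdot\mid D)]+\Var(\E[\cdot\mid D])$ to the same decomposition. Conditional on $D$, spherical moment identities for the columns of $U$ yield $\Var(R_{i}\mid D)=O(p^{-2}\sum_{t}\xi_{t}^{4})$, which on $\mathbf{G}$ is controlled by $p^{-2}\mathsf{T}\sum_{t}\xi_{t}^{2}=O(\mathsf{T}/n)$. The $D$-measurable piece $\zeta_{i}-\theta_{i}/\sigma_{i}$ is, by Lemma~\ref{lem_realsignal_secondratio1}, essentially $p^{-1}\sum_{t}(\xi_{t}^{2}-\phi)$ modulo lower order terms, so its variance is at most $np^{-2}\Var(\xi_{1}^{2}\mathbbm{1}(\xi_{1}^{2}\le\mathsf{T}))\le C\mathsf{T}/n$ by the truncated-moment estimate above. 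Summing the two contributions and tracking the constants exactly as in the proof of Theorem~\ref{thm_realsignal_clt} recovers the promised factor $3$.

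The main obstacle is promoting the $\OrderP$ and $\orderP$ statements inherited from Theorem~\ref{thm_realsignal_secondratio2} and Lemma~\ref{lem_realsignal_secondratio1} to genuine first- and second-moment bounds, since stochastic boundedness does not by itself imply the required uniform integrability when $\E\xi^{4}=\infty$. I would overcome this by rerunning the relevant parts of those arguments with the truncated radii $\tilde\xi_{t}^{2}=\xi_{t}^{2}\mathbbm{1}(\xi_{t}^{2}\le\mathsf{T})$, so that every random quantity in sight admits a deterministic polynomial-in-$\mathsf{T}$ envelope on $\mathbf{G}$, and then estimating the $\mathbf{G}^{c}$ contribution through Lemma~\ref{lem_goodconfiguration}, whose tail decay is fast enough to absorb polynomial factors in $\mathsf{T}$.
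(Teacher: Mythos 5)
Your proposal is correct and follows essentially the same route as the paper: both read the mean and variance off the decomposition in Theorem~\ref{thm_realsignal_secondratio2}, and both obtain the $\mathsf{T}/n$ rate from the two conditional second moments $\frac{2}{p^{2}}\sum_{t}\xi_{t}^{4}=\mathrm{O}\bigl(\frac{\xi^{2}_{(1)}}{p}\cdot\frac{1}{p}\sum_{t}\xi_{t}^{2}\bigr)$ and $\frac{1}{p^{2}}\sum_{t}\mathbb{E}\xi_{t}^{4}-\frac{1}{n}$, with the divergent fourth moment tamed by working on the good-configuration event $\Omega$ (equivalently, truncating at $\mathsf{T}$), exactly as the paper does implicitly. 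Your explicit attention to upgrading the $\OrderP$ remainders to moment bounds via the truncated envelope is a point the paper's two-line proof glosses over, but it is the same mechanism, not a different argument.
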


\subsection{Asymptotic behavior of spurious factors}\label{sec_main_fakesignals}

We first prepare several notations. Recalling the decomposition of $\mathcal{S}$ in \eqref{eq_decomposition_S}, we denote the matching matrices removing the spike structure from $\mathcal{S}$ (or $S$) as
\begin{equation}\label{eq_def_S2}
    S_2:=\Sigma^{1/2}_2UD^2U^{\prime}\Sigma_2^{1/2},\quad \mathcal{S}_2:=DU^{\prime}\Sigma_2UD.
\end{equation}
Denote $\lambda_{i,2}$ as the $i$-th largest eigenvalue of $S_2$ or $\mathcal{S}_2$. Denote the quantity $\mathsf{q}$ as 
\begin{equation}\label{eq_def_q}
    \mathsf{q}:=
    \begin{cases}
        n^{1/\alpha-\epsilon}\quad &\text{if \eqref{ass_xi_poly} in Assumption \ref{ass_xi} holds};\\
        1 \quad &\text{if \eqref{ass_xi_exp} in Assumption 
        \ref{ass_xi} holds},
    \end{cases}
\end{equation}
for some sufficiently small constant $\epsilon>0$. 

The following lemma indicates that the leading eigenvalues of $S$, despite the first $m$ spiked eigenvalues, are close to the leading eigenvalues of $\mathcal{S}_2$,
\begin{lemma}\label{lem_fakesignal_compare_nonspikes}
    Under Assumptions \ref{ass_xi} and \ref{ass_sigma},  we have that for $k\ge1$, 
    \begin{equation}
        |\lambda_{m+k}-\lambda_{k,2}|=\mathrm{O}_{\mathbb{P}}(n^{-1/2+2\epsilon}\mathsf{q}).
    \end{equation}
\end{lemma}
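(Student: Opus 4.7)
The plan is to exploit the low-rank structure of $\mathcal{S} - \mathcal{S}_2$ and then refine a Weyl-type interlacing inequality. Because $\Sigma$ is diagonal with $\Sigma = \Sigma_1 + \Sigma_2$ and $\Sigma_1$ has rank $m$, the exact decomposition $\mathcal{S} = \mathcal{S}_2 + R$ holds with $R := DU^{\prime}\Sigma_1 UD$ a positive semi-definite matrix of rank at most $m$. Writing $R = AA^{\prime}$ where $A := DU^{\prime}\Sigma_1^{1/2}$ is an $n \times m$ matrix, Weyl's inequality applied to this rank-$m$ PSD perturbation yields the interlacing
\[
\lambda_{m+k,2} \le \lambda_{m+k} \le \lambda_{k,2}, \qquad k \ge 1,
\]
which supplies the one-sided bound $\lambda_{m+k} \le \lambda_{k,2}$ at no cost.

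What remains is the matching lower bound $\lambda_{m+k} \ge \lambda_{k,2} - \OrderP(n^{-1/2+2\epsilon}\mathsf{q})$, which I would obtain from the secular equation. For $\lambda$ outside the spectrum of $\mathcal{S}_2$, the eigenvalues of $\mathcal{S}$ satisfy
\[
\det\bigl(I_m + A^{\prime}(\mathcal{S}_2 - \lambda I)^{-1} A\bigr) = 0.
\]
Assumption \ref{ass_sigma} forces $\sigma_i \gg \mathsf{T}$, so the $m$ ``new'' roots of this equation---which coincide with the top $m$ eigenvalues of $\mathcal{S}$, i.e.\ the real spikes---lie far above the spectrum of $\mathcal{S}_2$. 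Consequently $\lambda_{m+k}$ must be a secular root lying inside the interlacing interval $(\lambda_{m+k,2}, \lambda_{k,2})$ and generated by the pole of $(\mathcal{S}_2 - \lambda I)^{-1}$ at $\lambda = \lambda_{k,2}$. Expanding $A^{\prime}(\mathcal{S}_2 - \lambda I)^{-1}A = -(\lambda-\lambda_{k,2})^{-1}(A^{\prime}v_k)(v_k^{\prime}A) + H_k(\lambda)$, with $v_k$ the associated unit eigenvector of $\mathcal{S}_2$ and $H_k$ regular at $\lambda_{k,2}$, the secular equation reduces at leading order to $\lambda_{k,2} - \lambda_{m+k} = (A^{\prime}v_k)^{\prime}\bigl(I_m + H_k(\lambda_{k,2})\bigr)^{-1}(A^{\prime}v_k)$. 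I would then bound the right-hand side at the claimed scale using (i) concentration of the bilinear form $A^{\prime}v_k$ on the ``good configuration'' event $\max_t \xi_t^2 \lesssim \mathsf{T}$ supplied by Lemma \ref{lem_goodconfiguration}, and (ii) an isotropic local law for $\mathcal{S}_2$ controlling $H_k$ near the pole.

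The main obstacle is precisely this residue / regular-part analysis when $\lambda_{k,2}$ is itself a heavy-tailed outlier of $\mathcal{S}_2$: there the eigenvector $v_k$ is only quasi-localized on the coordinate of the large $\xi_{(k)}^2$, so concentration of $A^{\prime}v_k$ must combine Haar invariance of $U$ with concentration of the heavy-tailed order statistics of $\{\xi_t^2\}$, and $H_k$ must be controlled uniformly up to the outlier location. The factor $\mathsf{q}$ in the claimed bound is exactly the maximal outlier scale, while the prefactor $n^{-1/2+2\epsilon}$ encodes the diffusive, CLT-type fluctuation of $A^{\prime}v_k$ around its deterministic equivalent; on the good-configuration event this estimate holds uniformly across bulk and outlier indices, delivering the stated $\OrderP(n^{-1/2+2\epsilon}\mathsf{q})$ bound.
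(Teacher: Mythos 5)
The paper offers no argument of its own here: the proof is a one-line deferral to \cite[Theorem 4(ii)]{ding2023extreme}. Your skeleton --- write $\mathcal{S}=\mathcal{S}_2+R$ with $R=AA^{\prime}$ positive semi-definite of rank $m$, obtain $\lambda_{m+k,2}\le\lambda_{m+k}\le\lambda_{k,2}$ for free from Weyl interlacing, and extract the matching lower bound from the secular equation $\det\bigl(I_m+A^{\prime}(\mathcal{S}_2-\lambda I)^{-1}A\bigr)=0$ --- is the standard route for such spiked/unspiked eigenvalue comparisons and is almost certainly the same in spirit as the cited proof (which, like the companion Theorem \ref{thm_fakesignal_limit} proved in this appendix, runs a determinant sign-change argument powered by local laws for $\mathcal{S}_2$). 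Two refinements would sharpen your sketch: for bulk indices $k$ the interlacing interval $[\lambda_{m+k,2},\lambda_{k,2}]$ already has width far below $n^{-1/2+2\epsilon}\mathsf{q}$ by eigenvalue rigidity, so only the outlier indices require the secular-equation analysis; and the claim that exactly $m$ secular roots escape above $\lambda_{1,2}$ rests on $\sigma_m\gg\mathsf{T}$, which Assumption \ref{ass_sigma}(2) supplies.

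The gap is that everything quantitative is deferred, and the deferred items are the entire content of the lemma. Concentration of $A^{\prime}v_k$ when $v_k$ is a quasi-localized outlier eigenvector of the heavy-tailed matrix $\mathcal{S}_2$, and uniform control of the regular part $H_k(\lambda)$ up to the outlier location, require exactly the machinery of Lemma \ref{lem_prf_fakesignals_asymptoticproperties} and Theorem \ref{thm_prf_fakesignals_locallaw}, plus a stability argument because your relation $\lambda_{k,2}-\lambda_{m+k}=-b^{\prime}(I_m+H_k(\lambda))^{-1}b$ is self-consistent in $\lambda$, not an evaluation at $\lambda_{k,2}$. More seriously, the naive size of your residue term does not obviously produce the claimed rate: taking $v_k\approx\mathbf{e}_{j_k}$ (the coordinate carrying $\xi^2_{(k)}$) gives $|(A^{\prime}v_k)_i|^2\asymp\sigma_i\xi^2_{(k)}/p$ and $|(I_m+H_k)_{ii}|\asymp\sigma_i/\lambda_{k,2}$, whence a predicted gap of order $\xi^4_{(k)}/n\asymp\mathsf{T}^2/n$. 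This is bounded by $n^{-1/2+2\epsilon}\mathsf{q}\asymp n^{1/\alpha-1/2+\epsilon}$ only when $1/\alpha\le 1/2+\epsilon$, i.e.\ essentially $\alpha\ge 2$. For the serious heavy-tailed regime $\alpha\in(1,2]$ you must either exhibit additional cancellation in $A^{\prime}v_k$ or trace the precise error bookkeeping of \cite{ding2023extreme}; as written, your sketch would not deliver the stated bound there, so this is the step to nail down before the argument can be accepted.
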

Typically, the largest eigenvalues of $\mathcal{S}_2$ are strongly influenced by the ordered statistics in $D^2$, whose phenomena is captured in the following theorem,
\begin{theorem}\label{thm_fakesignal_limit}
    Under Assumptions \ref{ass_xi} and \ref{ass_sigma},  we have  that for $k\ge1$
    \begin{equation}
     {\lambda_{k,2}}/{\xi_{(k)}^2}=\Bar{\sigma}+\mathrm{o}_{\mathbb{P}}(1),
    \end{equation}
    where $\Bar{\sigma}:=p^{-1}\operatorname{tr}(\Sigma_2)$ and $\xi_{(1)}^2\ge \xi_{(2)}^2\ge\cdots\ge \xi_{(n)}^2$ are the ordered statistics from $D^2$.
\end{theorem}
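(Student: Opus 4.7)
The strategy is to view the top order statistics $\xi_{(1)}^2, \dots, \xi_{(k)}^2$ as strong spikes in the diagonal weighting matrix $D^2$, and to show that they drive the top-$k$ eigenvalues of $\mathcal{S}_2 = D U^{\prime} \Sigma_2 U D$ to approximately $\Bar{\sigma}\, \xi_{(k)}^2$. I begin by rewriting
\begin{equation*}
    \mathcal{S}_2 = \Bar{\sigma}\, D^2 + D \bigl( U^{\prime}\Sigma_2 U - \Bar{\sigma}\, I_n \bigr) D
\end{equation*}
and using concentration of uniform vectors on $\mathbb{S}^{p-1}$: for any fixed $i \neq j$, $\mathbf{u}_{t_i}^{\prime}\Sigma_2 \mathbf{u}_{t_j} - \Bar{\sigma}\, \delta_{ij} = \OrderP(p^{-1/2})$. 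The proof then reduces to showing that the top-$k$ eigenvalues of the full matrix are determined, up to $\orderP(\xi_{(k)}^2)$, by those of the diagonal surrogate $\Bar{\sigma}\, D^2$, whose $k$-th eigenvalue is exactly $\Bar{\sigma}\, \xi_{(k)}^2$.

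For the lower bound, let $T = \{t_1, \dots, t_k\}$ denote the (random) positions of the top-$k$ entries of $D^2$, and consider the principal $k \times k$ submatrix $A = D_T U_T^{\prime}\Sigma_2 U_T D_T$ with $D_T = \diag(\xi_{(1)}, \dots, \xi_{(k)})$. The concentration above gives $\|U_T^{\prime}\Sigma_2 U_T - \Bar{\sigma}\, I_k\| = \orderP(1)$, hence $A = \Bar{\sigma}\, D_T^2 + \orderP(\xi_{(1)}^2)$ in operator norm. Since $\xi_{(1)}^2 = \OrderP(\xi_{(k)}^2)$ for fixed $k$, we get $\lambda_k(A) = \Bar{\sigma}\, \xi_{(k)}^2 (1 + \orderP(1))$, and Cauchy interlacing delivers $\lambda_{k,2} \geq \Bar{\sigma}\, \xi_{(k)}^2 (1 + \orderP(1))$.

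For the upper bound, which is the more delicate direction, I enlarge $T$ to the top-$K$ indices with $K$ slowly growing (e.g., $K = n^{\epsilon}$ for small $\epsilon \in (0, 1/2)$ in the polynomial-tail case) and partition $\mathcal{S}_2$ into the associated $2 \times 2$ block form with diagonal blocks $A$ (on $T$) and $C = D_{T^c} U_{T^c}^{\prime}\Sigma_2 U_{T^c} D_{T^c}$ (bulk), together with cross-block $B = D_T U_T^{\prime}\Sigma_2 U_{T^c} D_{T^c}$. By the Schur complement identity, for any $\lambda > \|C\|$, $\lambda$ is an eigenvalue of $\mathcal{S}_2$ iff $\det\bigl(A + B(\lambda I - C)^{-1} B^{\prime} - \lambda I_K\bigr) = 0$. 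The three key estimates are: (i) $A = \Bar{\sigma}\, D_T^2 + \orderP(\xi_{(1)}^2)$, obtained by the same concentration argument now requiring $K = \mathrm{o}(\sqrt{p})$ so that the Frobenius-norm bound over $K^2$ pairs stays $\orderP(1)$; (ii) $\|C\| \leq \xi_{(K+1)}^2 \cdot \OrderP(1) = \orderP(\xi_{(k)}^2)$ by suitable growth of $K$; and (iii) a moment-based bound on $B B^{\prime} = D_T U_T^{\prime}\Sigma_2 U_{T^c} D_{T^c}^2 U_{T^c}^{\prime}\Sigma_2 U_T D_T$, where averaging over $\{\mathbf{u}_t\}_{t \in T^c}$ gives $U_{T^c} D_{T^c}^2 U_{T^c}^{\prime} \approx (n/p)\Sigma_2^0$ up to corrections, yielding $\|B(\lambda I - C)^{-1} B^{\prime}\| = \OrderP(\xi_{(1)}^2/\lambda) = \OrderP(1) = \orderP(\xi_{(k)}^2)$ when $\lambda \asymp \xi_{(k)}^2 \to \infty$. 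Assembling these, the Schur complement matrix equals $\Bar{\sigma}\, D_T^2$ plus an $\orderP(\xi_{(k)}^2)$ perturbation, whose $k$-th largest eigenvalue is $\Bar{\sigma}\, \xi_{(k)}^2 (1 + \orderP(1))$, and the fixed-point condition $\lambda_{k,2} = \lambda_k\bigl(A + B(\lambda_{k,2} I - C)^{-1} B^{\prime}\bigr)$ produces the matching upper bound.

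The principal obstacle is the bulk bound $\|C\| = \orderP(\xi_{(k)}^2)$. For the naive choice $K = k$, this fails: under heavy tails the $(k+1)$-th order statistic has the same order as $\xi_{(k)}^2$, differing only by constant factors in distribution, so one must enlarge $K$ enough to push $\xi_{(K+1)}^2$ well below $\xi_{(k)}^2$ while keeping $K = \mathrm{o}(\sqrt{p})$ to preserve uniform concentration on the spike block. In the polynomial-tail regime $\xi_{(K)}^2 / \xi_{(k)}^2 \asymp K^{-1/\alpha}$, so $K = n^{\epsilon}$ with small $\epsilon$ accomplishes both requirements. The exponential-tail case is the most delicate, since top order statistics are only logarithmically separated; there I expect that sharper bounds on $\|C\|$ (through resolvent-based arguments or refined moment control on weighted sample covariances) are required to beat the trivial $\xi_{(K+1)}^2 \cdot \OrderP(1)$ estimate.
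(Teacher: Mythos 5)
Your route is genuinely different from the paper's. The paper locates $\lambda_{1,2}$ by analysing the rank-one master equation $M(x)=1+\mathbf{y}_{(1),2}^{\prime}G^{(1)}(x)\mathbf{y}_{(1),2}=0$ through the self-consistent system \eqref{eq_systemequations} and a local law for the resolvent of $S_2^{(1)}$ on a spectral domain around the deterministic-in-$D$ location $\mu_1$, whereas you work with an explicit block decomposition, entrywise concentration of $\mathbf{u}_i^{\prime}\Sigma_2\mathbf{u}_j$, and crude operator-norm bounds. Your lower bound (top-$k$ principal submatrix, Weyl, Cauchy interlacing) is sound, and your upper bound is morally a $K$-column version of the paper's $K=1$ Schur complement. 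The difference is in how the bulk block is controlled, and that is where the gap sits.

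The decisive gap is the one you flag yourself: the theorem is asserted under both tail conditions of Assumption \ref{ass_xi}, and your upper bound needs $\|C\|=o_{\mathbb{P}}(\xi_{(k)}^2)$, which is unobtainable from $\|C\|\le\xi_{(K+1)}^2\cdot\mathrm{O}_{\mathbb{P}}(1)$ under the exponential-decay tail \eqref{ass_xi_exp}. There all order statistics $\xi_{(j)}^2$ with $j\le n^{\epsilon}$ share the common order $(\log n)^{1/\beta}$ with ratios tending to one, while $K=\mathrm{o}(\sqrt{p})$ is forced by your Frobenius-norm concentration on the spike block; moreover $\lambda_1(C)\approx\Bar{\sigma}\xi_{(K+1)}^2$ by the very statement being proved, so the condition $\lambda>\|C\|$ cannot isolate the top $k$ eigenvalues multiplicatively. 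The paper never requires $\lambda_{1,2}\gg\lambda_1(S_2^{(1)})$: it only needs the additive separation $\mu_1-\mu_2$ guaranteed by the good-configuration event of Lemma \ref{lem_goodconfiguration}, and it evaluates $\mathbf{y}_{(1),2}^{\prime}G^{(1)}(x)\mathbf{y}_{(1),2}\approx\xi_{(1)}^2 m_{1n}^{(1)}(x)$ just outside the spectrum of $S_2^{(1)}$ via the local law of Theorem \ref{thm_prf_fakesignals_locallaw}, rather than via $\|(\lambda I-C)^{-1}\|\le(\lambda-\|C\|)^{-1}$. Repairing your argument in the exponential regime would require exactly this kind of resolvent-level control of the bulk block near $\lambda\asymp\Bar{\sigma}\xi_{(k)}^2$, i.e.\ essentially reconstructing the paper's proof. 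Two secondary, repairable omissions: the matrix $A+B(\lambda I-C)^{-1}B^{\prime}$ is $\lambda$-dependent, so the fixed-point step needs a monotonicity/counting argument to match its $k$-th root with $\lambda_{k,2}$; and the separation $\xi_{(K+1)}^2=\mathrm{o}_{\mathbb{P}}(\xi_{(k)}^2)$ for $K=n^{\epsilon}$ in the polynomial case is an extreme-value estimate not contained in Lemma \ref{lem_goodconfiguration} and must be supplied.
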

As a result of Theorem \ref{thm_fakesignal_limit}, we may easily find that $\mathbb{E}(\lambda_{k,2})=\xi_{(k)}^2(\Bar{\sigma}+\mathrm{o}_{\mathbb{P}}(1))$, given any realization of $D^2$.
\begin{remark}
    From Lemma \ref{lem_fakesignal_compare_nonspikes}, we observe that the spurious noise signals closely approximate the leading eigenvalues of $\mathcal{S}_2$. Furthermore, Theorem \ref{thm_fakesignal_limit} and Lemma \ref{lem_goodconfiguration} reveal that these leading eigenvalues of $\mathcal{S}_2$ diverge in alignment with the ordered statistics in $D^2$. This theoretical insight explains why true common factors and spurious noise signals become indistinguishable among the top eigenvalues of the data matrices.
\end{remark}

\section{Detection via Fluctuation Magnification}\label{sec_testing_procedures}
In this section, we develop testing strategies for the problem \eqref{eq_hypothesistest}, guided by the theoretical framework established in Sections \ref{sec_main_realsignals} and \ref{sec_main_fakesignals}. We begin by rigorously formalizing the fluctuation magnification mechanism from the idea of resampling techniques for $S$, followed by a systematic algorithm to implement the procedure to identify spurious factors.


\subsection{Fluctuation magnification for $S$}\label{sec_testing_bootstrap}
Suppose we observe the data matrix $Y$. We define a sequential magnifier data matrices $\widetilde{Y}_j,1\le j\le K$ from $Y$ for some large fixed integer $K>0$. Let 
         \begin{equation*}
             \widetilde{Y}_j:=Y\cdot W^{1/2}_j,\quad W_j=\operatorname{diag}(w_t)_{1\le t\le n},
         \end{equation*}
         where $w_t\sim w, 1\le t\le n$ are some properly chosen i.i.d. random variables whose details will be specified below. For $1\le j\le K$, we independently construct from $\widetilde{Y}_j$ with
         \begin{gather*}
             \widetilde{S}^{(j)}:=\widetilde{Y}_j\widetilde{Y}_j^{\prime}=\Sigma^{1/2}UDW_jD^{\prime}U^{\prime}\Sigma^{1/2}\quad \widetilde{\mathcal{S}}^{(j)}:=W_j^{1/2}DU^{\prime}\Sigma UDW_j^{1/2}\\
             \widetilde{\mathcal{S}}_2^{(j)}:=W_j^{1/2}DU^{\prime}\Sigma_2UDW_j^{1/2}.
         \end{gather*}
         Let $\lambda_k^{(j)}$ be the $k$-th largest eigenvalue of $\widetilde{S}^{(j)}$ (or $\widetilde{S}^{(j)}$) and $\lambda_{k,2}^{(j)}$ the $k$-th largest eigenvalue of $\widetilde{\mathcal{S}}_2^{(j)}$. In the sequel, we use the notation $\mathbb{P}^*$ to denote the probability measure conditional on the sample $Y$ by $\mathbb{P}^*(\cdot)=\mathbb{P}(\cdot|\mathcal{F}_Y)$ where $\mathcal{F}_Y=\sigma(Y)$ is the sigma algebra generated from $\{\mathbf{y}_1,\dots,\mathbf{y}_n\}$. Then, we define $\overset{d^*}{\rightarrow}$, $\mathrm{o}_{\mathbb{P}^*}$ and $\mathrm{O}_{\mathbb{P}^*}$ accordingly from $\mathbb{P}^*$.

We first concern ourselves with the real factors. We consider the two situations in Theorem \ref{thm_realsignal_clt} and Theorem \ref{thm_realsignal_extremeheavy}. Applying these two theorems to each $\widetilde{S}^{(j)}$ for $1\le j\le K$, we have the following proposition.

\begin{proposition}\label{lem_realsignals_bootstrapped}
   For all $1\le j\le K$ and $1\le i\le m$, 
    \begin{itemize}
        \item [(1)] if $\xi^2$ exhibits polynomial decay tail with $\alpha\in(2,+\infty)$ or exponential decay tail as in Theorem \ref{thm_realsignal_clt}, then we have that
\begin{equation*}\label{eq_test_realsignal_clt}
            \sqrt{n}\big(\frac{\lambda_i^{(j)}}{\theta_i}-(1+\mathrm{o}(1))\big)\overset{d^*}{\rightarrow}\mathrm{N}(0,3\mathbb{E}[\xi_1^4w_1^2]-1);
        \end{equation*}
        
        \item [(2)] if $\xi^2$ exhibits polynomial decay tail with $\alpha\in(1,2]$ as in Theorem \ref{thm_realsignal_extremeheavy}, then we have that
        \begin{equation*}
            \mathbb{E}({\lambda_i}/{\theta_i})=1+\mathrm{o}(1),\quad \operatorname{Var}({\lambda_i}/{\theta_i})=\mathrm{O}({3\mathsf{T}}/{n}),
        \end{equation*}
    \end{itemize}
    given in both cases that $\sigma_m\gg\mathsf{T}w_{(1)}$ where $w_{(1)}$ is the largest order statistic of $\{w_t\}_{1\le t\le n}$.
\end{proposition}

As a consequence, we have the following proposition.

\begin{proposition}\label{col_realsignals_bootstrapped}
    Under the  assumptions of Proposition \ref{lem_realsignals_bootstrapped}, for $1\le i\le m$ and sufficiently large $K$,
    \begin{itemize}
        \item [(1)] if $\xi^2$ exhibits polynomial decay tail with $\alpha\in(2,+\infty)$ or exponential decay tail as in Theorem \ref{thm_realsignal_clt}, then we have that
        \begin{equation*}
        \frac{1}{K}\sum_{j=1}^K\Big({K\lambda_i^{(j)}}/{\sum_{s=1}^K\lambda_i^{(s)}}-1\Big)^2=\mathrm{O}_{\mathbb{P}^*}(1/n);
        \end{equation*}
        \item [(2)]  if $\xi^2$ exhibits polynomial decay tail with $\alpha\in(1,2]$ as in Theorem \ref{thm_realsignal_extremeheavy}, then we have that
        \begin{equation*}
            \frac{1}{K}\sum_{j=1}^K\Big({K\lambda_i^{(j)}}/{\sum_{s=1}^K\lambda_i^{(s)}}-1\Big)^2=\mathrm{O}_{\mathbb{P}^*}(\mathsf{T}/n).
        \end{equation*}
    \end{itemize}
\end{proposition}

Next, we turn to the spurious factors perturbed by some magnifier matrices. Lemma \ref{lem_fakesignal_compare_nonspikes} and Theorem \ref{thm_fakesignal_limit} indicate that for $m+1\le i\le \hat{o}$,
         \begin{equation*}
             \big|({\lambda_{i}^{(j)}-\lambda_{i-m,2}^{(j)}})/{(\xi^2w)_{(i-m)}}\big|=\mathrm{o}_{\mathbb{P}}(n^{-1/2+\epsilon}),
\mbox{ with }
             \lambda_{i-m,2}^{(j)}=(\bar{\sigma}+\mathrm{o}_{\mathbb{P}}(1))\cdot (\xi^2w)_{(i-m)}.
         \end{equation*}
To simplify the discussion, we restrict ourselves to $i=m+1$ while the other cases can be handled similarly. The key idea here is to choose a suitable $w$ such that the fluctuation of $\lambda^{(j)}_{m+1}$ will not degenerate with $n$. A good candidate is that $w$ is uniformly distributed on the interval $[a,b]$ (say $w\in U[a,b]$) with $(a+b)/2=1$ and $a\ge b\xi^2_{(2)}/\xi^2_{(1)}$, which satisfies the condition in Proposition \ref{lem_realsignals_bootstrapped}. Then, it is easy to find that $(\xi^2w)_{(1)}$ is uniformly distributed on $[a\xi^2_{(1)},b\xi^2_{(1)}]$ with 
\begin{gather*}
        \mathbb{E}\big[(\xi^2w)_{(1)}\big]=\mathbb{E}[\xi^2_{(1)}w_1]=\frac{a+b}{2}\cdot\xi^2_{(1)},\\
        \mathbb{E}\big[(\xi^2w)^2_{(1)}\big]=\mathbb{E}[\xi^4_{(1)}w_1^2]=\frac{\xi^4_{(1)}}{3}\cdot(b^2+ab+a^2),\\
        \mathbb{E}[(\xi^2w)^4_{(1)}]=\mathbb{E}[\xi^8_{(1)}w_1^4]=\frac{\xi^8_{(1)}}{5}\cdot (b^4+b^3a+b^2a^2+ba^3+a^4).
\end{gather*}
Since $\lambda_{m+1}^{(j)}=(\Bar{\sigma}+\mathrm{o}_{\mathbb{P}}(1))\cdot(\xi^2w)_{(1)}$ are i.i.d. across each fluctuation magnification procedure, we can apply CLT to obtain the following results.

\begin{proposition}\label{lem_fakesignals_bootstrapped}
    Under the assumptions in Theorem \ref{thm_fakesignal_limit} with the choice of $w\in U[a,b]$, it holds for sufficiently large $K$ that 
    \begin{equation}
        \frac{1}{K}\sum_{j=1}^K\Big({K\lambda_{m+1}^{(i)}}/{\sum_{s=1}^K\lambda_{m+1}^{(s)}}-1\Big)^2=c+\mathrm{O}_{\mathbb{P}^*}(\mathsf{T}^2K^{-1/2}).
    \end{equation}
Here $c>0$ is a constant only depending on the choice of $w$.
\end{proposition}

\begin{remark}\label{rem_bootstrap}
    Several remarks are in order. First, it is important to choose suitable magnifiers $w$ such that $c$ will not degenerate and therefore there will be a clear distinction with the error terms in {Proposition \ref{lem_fakesignals_bootstrapped}.} Second, parallel results can also be obtained for $\lambda_{i}^{(j)}, m+2\le i\le o$ where $o$ is some pre-given value that counts the whole number of outliers of $S$. However, in real practice, there is no need to consider all the outliers, since the first spurious signal is enough to establish the borderline between real factors and spurious factors. Third, the distinct behavior of the real factors in {Proposition \ref{col_realsignals_bootstrapped}} and spurious factors in {Proposition \ref{lem_fakesignals_bootstrapped}} gives the opportunity to detect the spurious factors through the fluctuation magnification mechanism, which is the main content in the next section. Fourth, we emphasize that $K$ should be sufficiently large to reduce the error level, indicating that the magnification procedures should be repeated enough times. Finally, the detailed proof of the results in this section will be provided in the Appendix.
\end{remark}

\subsection{Detection algorithm}\label{sec_testing_algorithm}
Inspired by the theoretical observations in Section \ref{sec_testing_bootstrap}, we propose the following statistics.
\begin{equation}\label{eq_testing_statistics}
             \mathbb{T}_i:=\frac{1}{K}\sum_{j=1}^K\Big(\frac{\lambda_i^{(j)}}{\hat{\theta}_i}-1\Big)^2.
         \end{equation}
Recap the test problem \eqref{eq_hypothesistest} in Section \ref{sec_motivation}. We propose a novel algorithm that leverages a fluctuation magnification mechanism to detect potentially spurious signals and thereby determine the true number of common factors. To achieve this, we conduct a two-step testing procedure. In the first step, we apply the fluctuation magnification mechanism to identify spurious signals from the sample matrix $S$ without distinguishing outliers in advance. As a result, some bulk eigenvalues (non-outliers) may also be flagged as spurious. In the second step, we verify whether any bulk components were mistakenly identified as spurious by employing standard methods such as \cite{onatski2010determining}, thus refining the selection of common factors.

\textbf{1. First step detection.} 
We now proceed to the first step. Under Assumption \ref{ass_sigma}, it suffices to focus on the largest spurious signal under the alternative hypothesis $\mathbf{H}_a$.  As noted in Remark \ref{rem_bootstrap}, it is crucial to choose an appropriate distribution for $w$ to ensure the validity of {Proposition \ref{lem_fakesignals_bootstrapped}}. Typically, $w\in U[a,b]$ satisfies the following conditions.
 $$\mbox{(1)}~(a+b)/2=1;~~\mbox{(2)}~
 a\ge b\xi^2_{(2)}/\xi^2_{(1)};~~\mbox{(3)}~
 \sigma_m\gg\mathsf{T}b.$$

However, in practice, $\xi^2_{(1)},\xi^2_{(2)}$ are unobservable, and $\sigma_m$ is difficult to determine. To solve this issue, we approximate the ratio $\xi^2_{(2)}/\xi^2_{(1)}$ using $\lambda_{m+2}/\lambda_{m+1}$, and $\sigma_m/\mathsf{T}$ using $\lambda_{m}/\lambda_{m+1}$. The validity of the first approximation is ensured by Theorem \ref{thm_fakesignal_limit}, while Lemma \ref{lem_realsignal_secondratio1} and Theorem \ref{thm_realsignal_clt} support the second. The algorithm is summarized in Algorithm \ref{alg_firstround_bootstrap}, which gives preliminary estimations for the number of spurious factors and the location of the largest potential spurious signal.
        \begin{algorithm}[htbp]
            \caption{Detection for spurious signals}\label{alg_firstround_bootstrap}
            \normalsize
            \begin{flushleft}
                \noindent{\bf Inputs:} Pre-given integer $o$, original data matrix $Y$, sufficiently large $K\equiv K(n)$.

                \noindent{\bf Step One:} Calculate the eigenvalues of $S=YY^{\prime}$ as $\lambda_1\ge\lambda_2\ge\dots\ge\lambda_{n}\ge0$. For $i=1,\dots,o$, we generate $w_{i}\sim U[a_i,b_i]$ satisfying: (1). $(a_i+b_i)/2=1$; (2). $b_i/a_i\le\lambda_i/\lambda_{i+1}$; (3). $b_i\log n\times\lambda_{i}/\lambda_{i-1}<1$, where we set $\lambda_{0}=\log^2 n\lambda_1$.
                
                \noindent{\bf Step Two:} For each $1\le i\le o$,
                generate $K$ i.i.d. $n\times  n$ diagonal matrices $W^{(j)}_i, j=1,\dots,K$ with entries independently sampled from $w_i\sim\mathrm{U}[a_i,b_i]$. Compute the associated matrices after fluctuation magnification $\widetilde{S}_{i}^{(j)}=YW^{(j)}_iY^{\prime}$ and the sequential eigenvalues $\lambda_i^{(j)}$, where $\lambda_i^{(j)}$ is the $i$-th largest eigenvalue of $\widetilde{S}_i^{(j)}$.

                \noindent{\bf Step Three:} Compute $\mathbb{T}_i=K^{-1}\sum_{j=1}^K(\lambda_i^{(j)}-K^{-1}\sum_{j=1}^K\lambda_i^{(j)})^2/(K^{-1}\sum_{j=1}^K\lambda_i^{(j)})^2$ for each $1\le i\le o$. Set the thresholds $\{\mathsf{L}_i\}_{1\le i\le o}$ according to $\{\lambda_i\}_{1\le i\le o}$ as $\mathsf{L}_i:=\log^2 n/n\times\mathbbm{1}(\lambda_i\le n^{1/2})+\log^2 n/n^{3/2-\tau_i}\times\mathbbm{1}(\lambda_i>n^{1/2})$ where $\tau_i:=\log\lambda_i/\log n$.
                Count preliminary estimators ${\mathsf{f}}^*=\#\{1\le i\le o:\mathbb{T}_i>\mathsf{L}_i\}$ and $r^*:=\min\{\{1\le i\le o:\mathbb{T}_i>\mathsf{L}_i\}\cup\{o+1\}\}.$
                
                \noindent{\bf Output:} ${\mathsf{f}}^*$ and $r^*$.
            \end{flushleft}
        \end{algorithm}
        \begin{remark}
            Several remarks are in order. First, the output $r^*$ in Algorithm \ref{alg_firstround_bootstrap} 
\[
r^*=\min\{\{1\le i\le o:\mathbb{T}_i>\mathsf{L}_i\}\cup\{o+1\}\}
\]
 marks the first location at which a potential spurious signal is detected. And if none is detected, $r^*=o+1$ indicates that no spurious signal appears among the first $o$ indices.

         Second,   based on the theoretical analysis in Section \ref{sec_testing_bootstrap}, it holds that
            \begin{gather}\label{eq_firstround_consistency}
                \lim_{n,K\rightarrow\infty}\mathbb{P}(\mathbb{T}_i\le\mathsf{L}_i)=1,\;\text{under $\mathbf{H}_0$, ~~for $1\le i\le m$;}\\
                \lim_{n,K\rightarrow\infty}\mathbb{P}(\mathbb{T}_i>\mathsf{L}_i)=1,\;\text{under $\mathbf{H}_a$, ~~for $m+1\le i\le o$}.
            \end{gather}
In practice, we set $K=n^3$ in accordance with Proposition \ref{lem_fakesignals_bootstrapped}.

           Third, the magnifier $w$ can be extended to other distributions, such as a Bernoulli-type random variable satisfying 
\begin{equation}
    w=
    \begin{cases}
        a & \text{with probability } p, \\
        b & \text{with probability } 1-p,
    \end{cases}
\end{equation}
under conditions analogous to those specified in Algorithm \ref{alg_firstround_bootstrap}. As demonstrated in Step 1, for any choice of magnifier, it is essential to select appropriate regions $[a_i, b_i]$ to ensure that the ordering of ${\lambda_i}$ remains invariant under fluctuation magnification.

Fourth, under Assumptions \ref{ass_xi} and \ref{ass_sigma}, the condition $\mathbb{T}_i\ge \mathsf{L}_i$ for all $m+1\le i\le o$ implies that heavy-tailed noise facilitates the emergence of spurious signals. However, the convolution of the stochastic components $U$ and $D$ may obscure the boundary between outliers (spurious signals) and bulk components (non-signals), potentially leading to the misclassification of the latter. To achieve a more refined estimation of the number of common factors, we introduce an auxiliary algorithm that serves as the centerpiece of the second-step testing procedure. 

        Fifth, in numerical experiments, we adopt a more stringent threshold $\mathsf{L}_i$, particularly in the presence of heavy tails. This conservative choice prioritizes the protection of the null hypothesis, reflecting the principle that overestimation is generally more tolerable than underestimation.

            
            Finally, we anticipate that the proposed fluctuation magnification strategy is applicable to spurious signal detection in more general settings,  as long as the real signals are significantly large.
        \end{remark}

\textbf{2. Second step detection.} 
First step detection aims to identify the most prominent spurious signals. The locations of the detected spurious signals provide useful information for separating potential real signals from noise contamination. When the data do not contain heavy-tailed noise, the first step procedure may incorrectly classify certain bulk eigenvalues as spurious. 
To mitigate this issue, we subsequently apply a refined factor‑number detection method exclusively to the corresponding potential real signals, which yields more accurate estimates and avoids over‑estimation. In this paper, we adopt the detection procedures of \cite{onatski2010determining} and \cite{Dobriban} as representative examples to obtain a more accurate estimate of the number of common factors in elliptical factor models that may exhibit heavy-tailed randomness. Algorithm~\ref{alg_secondround_estrealsignal} proposed below builds on the first-step results together with the robustness properties of established factor selection methods, thereby effectively addressing the challenges posed by heavy-tailed distributions.



\begin{algorithm}[htbp]
\caption{Estimation for the number of common factors}\label{alg_secondround_estrealsignal}
\normalsize
\begin{flushleft}
\noindent\textbf{Inputs:} $r^*$, ${\mathsf{f}}^*$ and the leading $o+1$ eigenvalues of $S$: $\lambda_1\ge\lambda_2\ge\dots\ge\lambda_{o+1}$. 

\noindent\textbf{Step One:} Compute an initial estimate $k$ using a consistent factor‑number detection method (e.g., the ED estimator of \cite{onatski2010determining} or the DDPA+ procedure of \cite{Dobriban}) applied to $S$.

\noindent\textbf{Step Two:} Set $\widehat{r}=k$ and $\widehat{\mathsf{f}}=0$ if $k < r^*$ (no spurious signal occurs); otherwise, set $\widehat{r}=r^*-1$ (a spurious signal is detected at $r^* \le k$, so the estimate is truncated to $r^*-1$) and $\widehat{\mathsf{f}}={\mathsf{f}}^*$.

\noindent\textbf{Output:} The estimated number of common factors $\widehat{r}$, and reject $\mathbf{H}_0$ if $\widehat{\mathsf{f}}>0$.
\end{flushleft}
\end{algorithm}

    \begin{lemma}\label{lem_secondround_consistency}
        Let $\widehat{r}$ be obtained by Algorithms \ref{alg_firstround_bootstrap} and \ref{alg_secondround_estrealsignal}.  Under the assumptions in Propositions \ref{col_realsignals_bootstrapped} and \ref{lem_fakesignals_bootstrapped}, we have that
        \begin{equation}
            \lim_{n\rightarrow\infty}\mathbb{P}^*(\widehat{r}=m)=1.
        \end{equation}
    \end{lemma}
    \begin{remark}
    Several remarks are in order. First, Algorithm \ref{alg_secondround_estrealsignal} incorporates the methods developed in \cite{onatski2010determining} and \cite{Dobriban} as auxiliary inputs. This serves as a correction to our initial detection procedure when the first potential spurious signal is identified at some position $r^*$. In such cases, applying the methods from \cite{onatski2010determining} or \cite{Dobriban} to the original sample yields an initial estimate $k$. The final estimate is then obtained by setting $\widehat{r}=k$ if $k \le r^*-1$ (indicating that no spurious signal occurs before or at $k$), and $\widehat{r}=r^*-1$ otherwise (where the first spurious signal at $r^*$ is excluded). The consistency of this overall two-step methodology is established in Lemma \ref{lem_secondround_consistency}. Second, the choice of the specific methods in \cite{onatski2010determining} and \cite{Dobriban} for Algorithm \ref{alg_secondround_estrealsignal} is primarily due to their simplicity and established theoretical properties. They can readily be replaced by other well-known consistent procedures in the literature, such as those in \cite{bai2018consistency}.
    \end{remark}

\section{Simulation}\label{sec_simulation}
In this section, we conduct simulation studies to validate two main aspects:
(1) The overestimation of the number of common factors in EFM when noise exhibits heavy-tailed randomness, confirming the existence of spurious signals.
(2) The effectiveness of our fluctuation magnification approach (referred to as the detection algorithm in Section 5.2) in distinguishing spurious signals and correctly identifying the true number of common factors in such cases. 
Consequently, two scenarios of heavy-tailed data $\mathbf{y}$ from the model \eqref{eq_intro_elldistr} are analyzed:
(I) A typical heavy-tailed case, where $\mathbf{y}$ follows a distribution with a polynomial decay tail satisfying $\alpha \in (2, +\infty)$.
(II) A serious heavy-tailed case, where $\mathbf{y}$ follows a distribution with a polynomial decay tail satisfying $\alpha \in (1, 2]$. 
We focus on the strength of the heavy tail that arises from polynomial decay tails rather than exponential decay tails, as the former allows easier adjustment of the strength of the heavy tail via the parameter $\alpha$.

\subsection{Comparison of methodologies in common factor selection: Typical heavy-tailed case}

We first establish the settings for the typical heavy-tailed case for $\mathbf{y}$ in \eqref{eq_intro_elldistr}, characterized by a polynomial decay tail with $\alpha \in (2, +\infty)$. To represent various scenarios of heavy-tailed data, we employ multivariate t-distributions with degrees of freedom set at 4.3, 4.8, and 5.3 (i.e., $t(4.3)$, $t(4.8)$, and $t(5.3)$, respectively), all of which satisfy the condition $\alpha > 2$. We consider a high-dimensional framework with $p=1000$ (number of variables) and $n=1000$ (number of observations). Our analysis focuses on two population covariance matrices $\Sigma$:
\begin{itemize}
    \item[(i)] $\Sigma_{\mathrm{I}} = \operatorname{diag}\{16, 8, 1, \ldots, 1\}$
    \item[(ii)] $\Sigma_{\mathrm{II}} = \operatorname{diag}\{24, 16, 8, 1, \ldots, 1\}$.
\end{itemize}
Here, the larger diagonal components represent the distinct common factors. It is clear that $\Sigma_{I}$ encompasses two common factors induced by $\{16, 8\}$, while $\Sigma_{II}$ encompasses three common factors induced by $\{24, 16, 8\}$.

In our comparative study, we consider two baseline factor-number detection methods: the ED estimator of \cite{onatski2010determining} (denoted as ``Onta") and the DDPA+ procedure of \cite{Dobriban} (denoted as ``DDPA+"). Their versions enhanced by the Fluctuation Magnification technique proposed in Section~\ref{sec_testing_algorithm} are referred to as ``Onta with MF''  and ``DDPA+ with MF'', respectively. In our simulations, to reduce computational load, we set the magnifiers $w_{i} \sim U[0.1, 1.9]$ in Algorithm \ref{alg_firstround_bootstrap} and $K=1000$ in Algorithm \ref{alg_secondround_estrealsignal}, which yielded sufficiently good results. Detailed experimental results are presented in Figure \ref{eig4.3}, and Tables \ref{tab:af_comparison1} and \ref{tab:af_comparison2}.

Specifically, Figure~\ref{eig4.3} illustrates the performance of the methods for a representative case (Case (i)) using the ``Onta" estimator under $t(4.3)$ as an example. Panel (a) displays the first 50 eigenvalues of the sample covariance matrix. Panel (b) shows the estimated factor number $k$ obtained from the standard ``Onta" method. Panel (c) demonstrates the identification of spurious signals via our proposed statistics $\mathbb{T}_i$ from Algorithm~\ref{alg_firstround_bootstrap}; the statistics corresponding to spurious signals are markedly larger than those associated with real signals. These observations highlight the effectiveness of Algorithm~\ref{alg_firstround_bootstrap} in detecting spurious signals.

\begin{figure*}[htbp]
\centering
\begin{minipage}{0.32\linewidth}
\vspace{1pt}
\includegraphics[width=\textwidth]{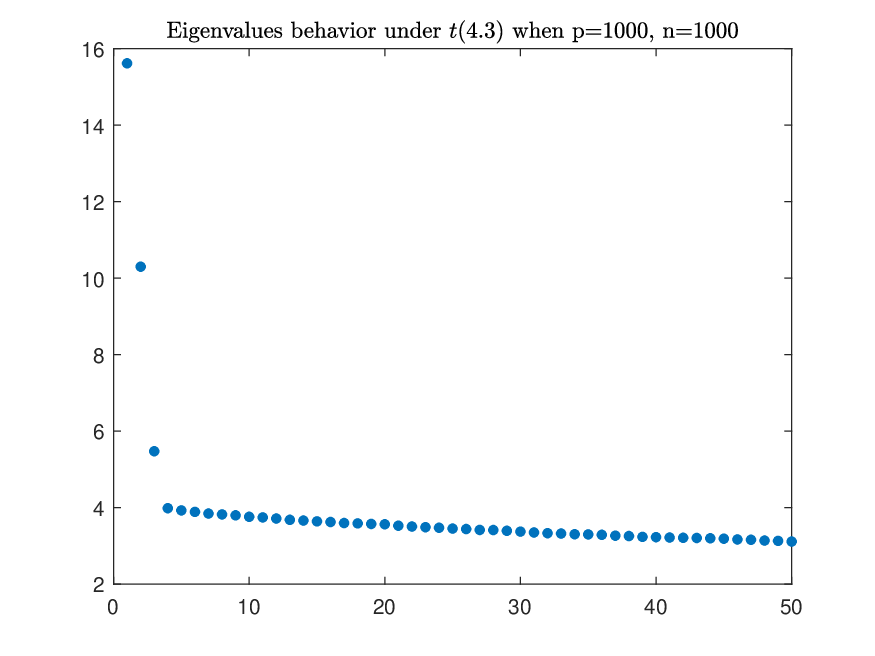} 
\centerline{(a)}
\end{minipage}
\begin{minipage}{0.32\linewidth}
\vspace{1pt}
\includegraphics[width=\textwidth]{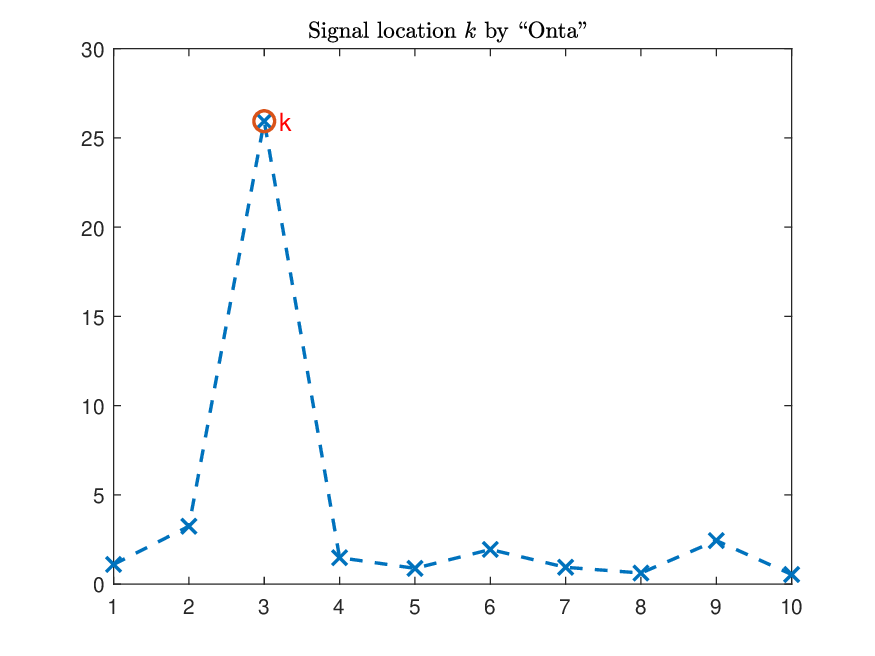}
\centerline{(b)}
\end{minipage}
\begin{minipage}{0.32\linewidth}
\vspace{1pt}
\includegraphics[width=\textwidth]{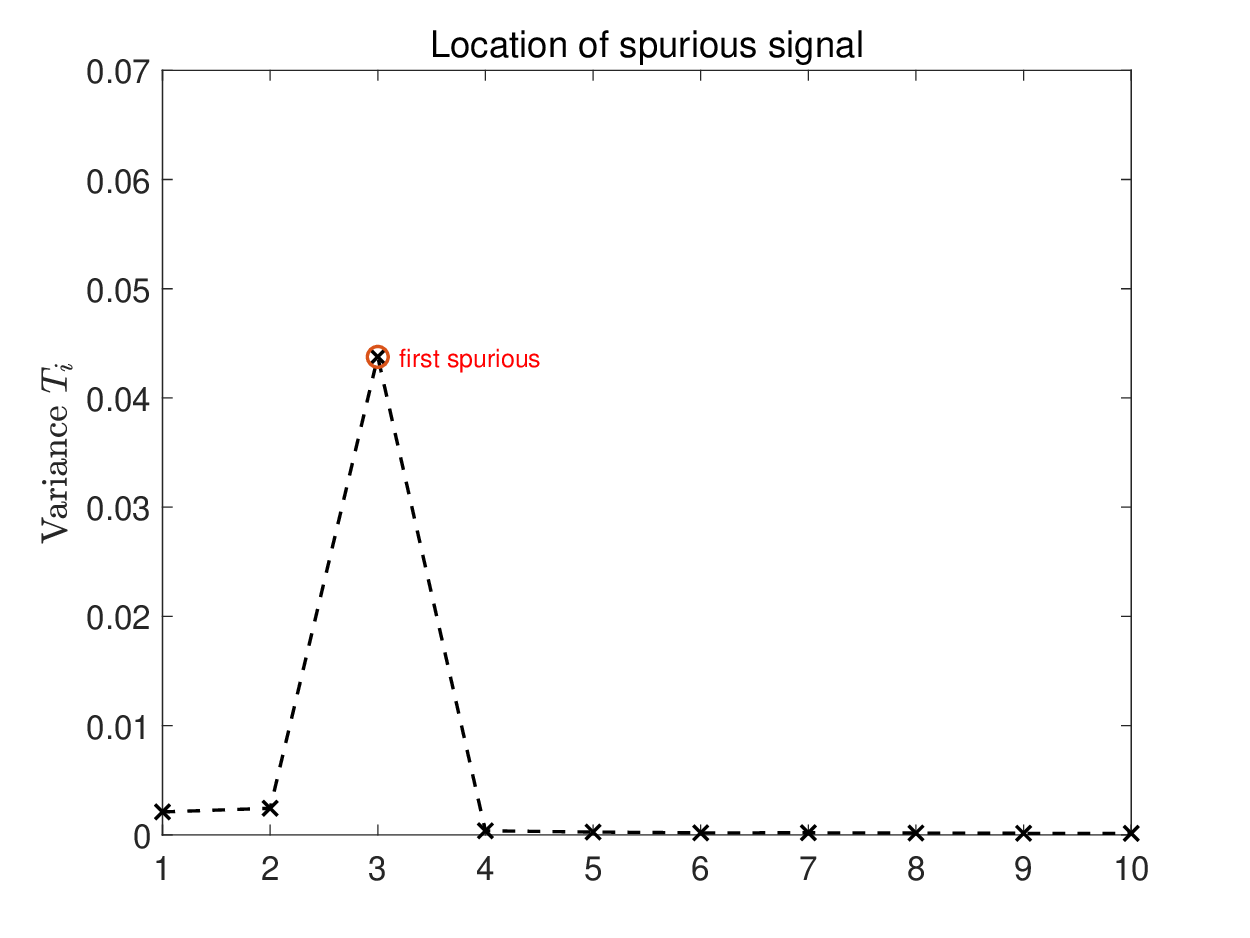}
\centerline{(c)}
\end{minipage}
\caption{Detecting behavior under case (i) with $\mathbf{y}\sim t(4.3)$. Screening and ``On" in (a) and (b) indicate that $k = 3$, suggesting the emergence of spurious signals that lead to an overestimation of the true value, $2$. Algorithm \ref{alg_firstround_bootstrap} accurately identifies the initial spurious signal at $3$ by detecting abnormal variance. }
\label{eig4.3}
\end{figure*}

To compare the performance of Onta and DDPA+ with and without MF technique under Case~(i) and Case~(ii), we conducted 500 repeated experiments for each setting. The effective sample size (ESS) reported in Tables~\ref{tab:af_comparison1} and~\ref{tab:af_comparison2} corresponds to the number of valid replications retained after filtering out cases where the eigenvalue ratio between the last real and first spurious signal is too close. Specifically, for Case~(i) we exclude replications with \(\lambda_2/\lambda_3 < 1.1\) to avoid the second and third eigenvalues being too close; for Case~(ii) we exclude those with \(\lambda_3/\lambda_4 < 1.1\) to avoid the third and fourth eigenvalues being too close. This filtering ensures a clear distinction between spurious and real signals, preventing ambiguity that could otherwise distort the comparison. The tables present the underestimation (Under), overestimation (Over), and correct estimation (Right) rates for three \(t\)-distributions with varying degrees of freedom.

\begin{table}[htbp]
\centering
\small
\begin{tabular}{@{} l l l ccc ccc @{}} 
\toprule
\multicolumn{3}{c}{} & \multicolumn{3}{c}{\textbf{Without MF}} & \multicolumn{3}{c}{\textbf{With MF}} \\
\cmidrule(lr){4-6} \cmidrule(l){7-9}
\textbf{Dist.} & \textbf{ESS} & \textbf{Method} & \textbf{Under} & \textbf{Over} & \textbf{Right} & \textbf{Under} & \textbf{Over} & \textbf{Right} \\
\midrule
\multirow{2}{*}{$t(4.3)$} & \multirow{2}{*}{499} & \textbf{Onta}   & 1.8\% & 10.2\% & 88.0\% & 3.0\% & 2.4\%  & 94.6\% \\
                           &                      & \textbf{DDPA+}  & 0.0\%  & 12.4\% & 87.6\%  & 1.6\%  & 4.6\%  & 93.8\% \\
\addlinespace[0.05cm]
\multirow{2}{*}{$t(4.8)$} & \multirow{2}{*}{499} & \textbf{Onta}   & 0.6\%  & 2.4\% & 97.0\% & 0.8\%  & 0.2\%  & 99.0\% \\
                           &                      & \textbf{DDPA+}  & 0.0\%  & 2.8\% & 97.2\%  & 0.2\%  & 0.8\%  & 99.0\% \\
\addlinespace[0.05cm]
\multirow{2}{*}{$t(5.3)$} & \multirow{2}{*}{500} & \textbf{Onta}   & 0.0\%  & 0.8\%  & 99.2\% & 0.0\%  & 0.0\%  & 100\%  \\
                           &                      & \textbf{DDPA+}  & 0.0\%  & 1.0\% & 99.0\% & 0.0\%  & 0.0\%  & 100\% \\
\bottomrule
\end{tabular}
\caption{Comparison of Onta and DDPA+ methods with and without the MF tool Under Case (i). The table shows the percentage of underestimation (Under), overestimation (Over), and correct estimation (Right) for three $t$-distributions with varying degrees of freedom. (Dist. = Distribution, ESS = Effective Sample Size}
\label{tab:af_comparison1}
\end{table}

\begin{table}[htbp]
\centering
\small
\begin{tabular}{@{} l l l ccc ccc @{}} 
\toprule
\multicolumn{3}{c}{} & \multicolumn{3}{c}{\textbf{Without MF}} & \multicolumn{3}{c}{\textbf{With MF}} \\
\cmidrule(lr){4-6} \cmidrule(l){7-9}
\textbf{Dist.} & \textbf{ESS} & \textbf{Method} & \textbf{Under} & \textbf{Over} & \textbf{Right} & \textbf{Under} & \textbf{Over} & \textbf{Right} \\
\midrule
\multirow{2}{*}{$t(4.3)$} & \multirow{2}{*}{498} & \textbf{Onta}   & 0.8\% & 9.4\% & 89.8\% & 2.6\% & 9.0\%  & 95.4\% \\
                           &                      & \textbf{DDPA+}  & 0.0\%  & 13.1\% & 86.9\%  & 2.2\%  & 5.0\%  & 92.8\% \\
\addlinespace[0.05cm]
\multirow{2}{*}{$t(4.8)$} & \multirow{2}{*}{500} & \textbf{Onta}   & 0.4\%  & 3.0\% & 96.6\% & 0.4\%  & 0.0\%  & 99.6\% \\
                           &                      & \textbf{DDPA+}  & 0.2\%  & 4.4\% & 95.4\%  & 0.2\%  & 1.2\%  & 98.6\% \\
\addlinespace[0.05cm]
\multirow{2}{*}{$t(5.3)$} & \multirow{2}{*}{500} & \textbf{Onta}   & 0.0\%  & 1.2\%  & 98.8\% & 0.0\%  & 0.0\%  & 100\%  \\
                           &                      & \textbf{DDPA+}  & 0.0\%  & 1.6\% & 98.4\% & 0.0\%  & 0.2\%  & 99.8\% \\
\bottomrule
\end{tabular}
\caption{Comparison of Onta and DDPA+ methods with and without the MF tool Under Case (i). The table shows the percentage of underestimation (Under), overestimation (Over), and correct estimation (Right) for three $t$-distributions with varying degrees of freedom. (Dist. = Distribution, ESS = Effective Sample Size}
\label{tab:af_comparison2}
\end{table}

From Figure \ref{eig4.3}, and Tables \ref{tab:af_comparison1} and \ref{tab:af_comparison2}, we can draw the following conclusions.

\begin{itemize}
    \item Spurious signals do exist, as (a) exhibits more spiked eigenvalues compared to $\Sigma$. In such cases, traditional first-order estimation tends to overestimate when spurious signals are present.

    \item When spurious signals emerge, the fluctuation magnification algorithm (Algorithm \ref{alg_firstround_bootstrap}) can accurately identify them. This is evidenced by the significant increase in magnified variance in (c) at the locations corresponding to the first spurious signal.

    \item As the degrees of freedom increase (i.e., the tails become lighter), the detection performance of all methods improves. However, the approaches incorporating our MF tool consistently achieve higher accuracy than their original counterparts, with the most substantial gains observed at lower degrees of freedom.

    \item It is noteworthy that a higher underestimation rate occurring under low degrees of freedom is, paradoxically, a desirable outcome. From Tables \ref{tab:af_comparison1} and \ref{tab:af_comparison2}, we observe that when the degrees of freedom are low, the effective sample size falls short of 500, indicating that the last real signal and the first spurious signal are very close. Consequently, cases where the first spurious signal even surpasses the last real signal naturally arise. In such scenarios, the underestimation precisely underscores the sensitivity of our method in detecting spurious signals.
\end{itemize}

\subsection{Comparison of methodologies in common factor selection: Serious heavy-tailed case}
Next, we evaluate the detection behavior under serious heavy-tailed settings for $\mathbf{y}$ described in \eqref{eq_intro_elldistr}, characterized by polynomially decayed tails with $\alpha \in (1, 2]$. The remaining settings align with those in the Typical Heavy-tailed Case; only the differences are described here. We use multivariate t-distributions with degrees of freedom 2.5, 3.0, and 3.5 (denoted as $t(2.5)$, $t(3.0)$, and $t(3.5)$, respectively) to model the polynomial decay tail with $\alpha \in (1, 2]$. Additionally, the covariance matrix $\Sigma$ is configured with larger spiked eigenvalues as follows:
\begin{itemize}
    \item[(iii)] $\Sigma_{\mathrm{III}} = \operatorname{diag}\{240, 120, 1, \ldots, 1\}$
    \item[(iv)] $\Sigma_{\mathrm{IV}}= \operatorname{diag}\{360, 240, 120, 1, \ldots, 1\}$
\end{itemize}

Figure~\ref{eig2.5} illustrates the performance of the methods for a representative case (Case (iii)) using the ``Onta" estimator under $t(2.5)$ as an example.

\begin{figure*}[htbp]
\centering
\begin{minipage}{0.32\linewidth}
\vspace{1pt}
\includegraphics[width=\textwidth]{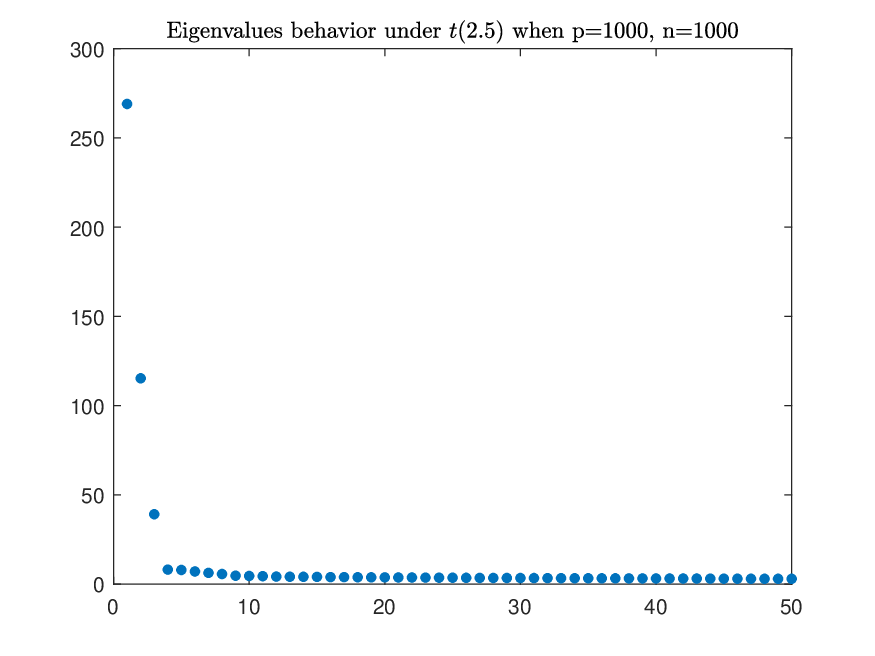} 
\centerline{(a)}
\end{minipage}
\begin{minipage}{0.32\linewidth}
\vspace{1pt}
\includegraphics[width=\textwidth]{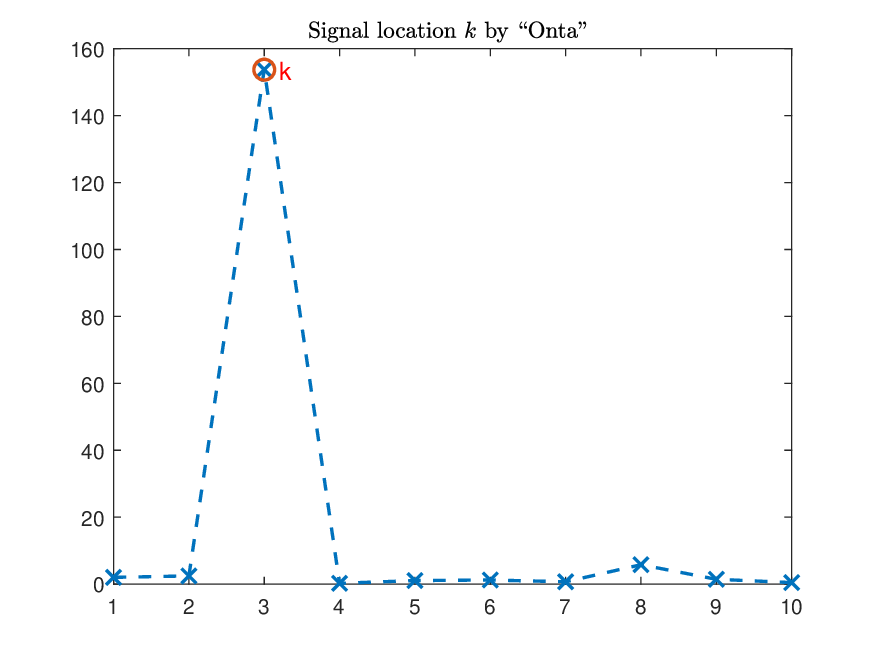}
\centerline{(b)}
\end{minipage}
\begin{minipage}{0.32\linewidth}
\vspace{1pt}
\includegraphics[width=\textwidth]{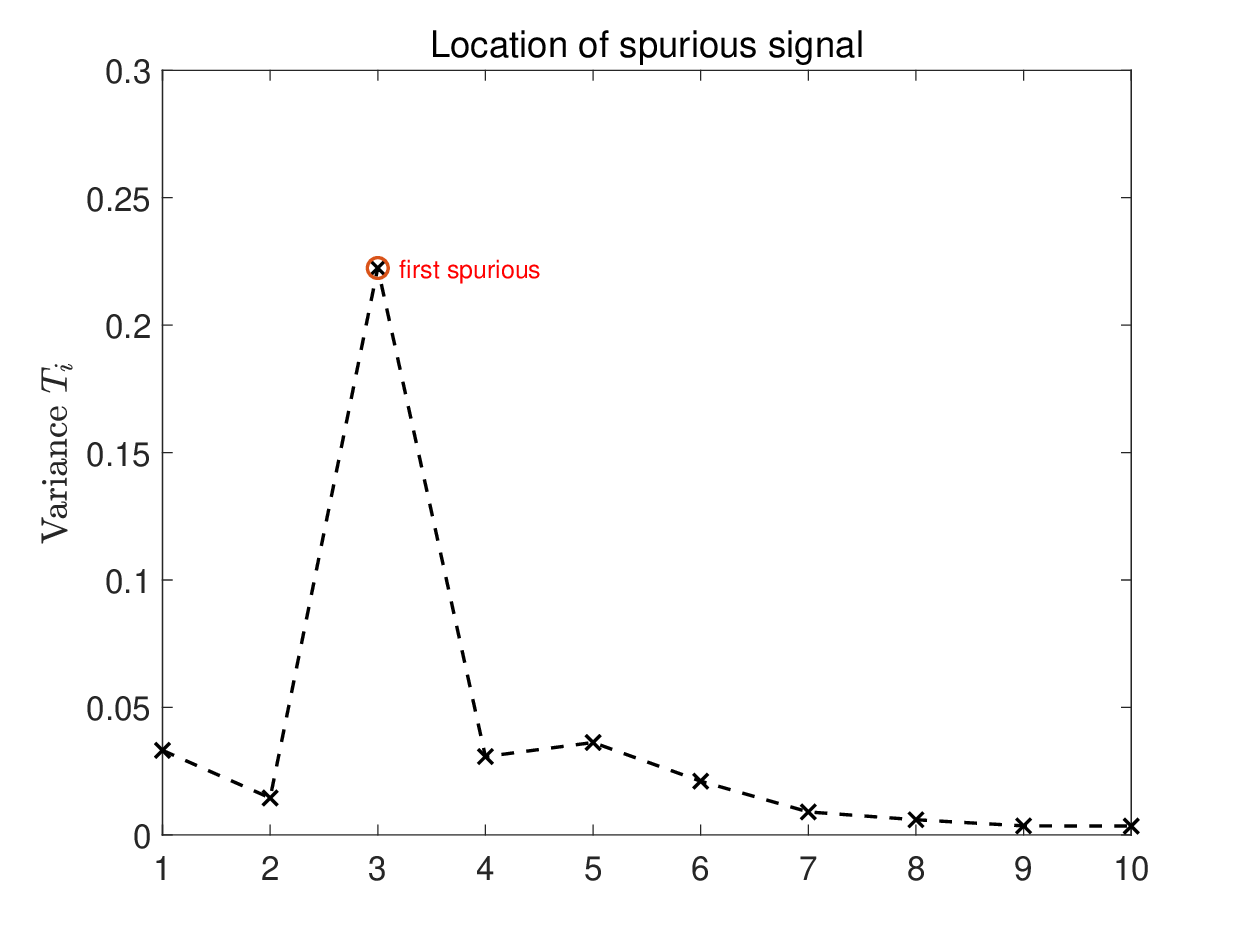}
\centerline{(c)}
\end{minipage}
\caption{Detecting behavior under case (iii) with $\mathbf{y}\sim t(2.5)$. Screening and ``On" in (a) and  (b) indicate $\widehat{r} = 3$, suggesting the emergence of spurious signals that lead to an overestimation of the true value, $2$. Algorithm \ref{alg_firstround_bootstrap} accurately identifies the initial spurious signal at $3$ by detecting abnormal variance. }
\label{eig2.5}
\end{figure*}

Using the same evaluation framework as in the typical heavy-tailed case, we assess the underestimation, overestimation, and correct estimation rates for three $t$-distributions with varying degrees of freedom.

\begin{table}[htbp]
\centering
\small
\begin{tabular}{@{} l l l ccc ccc @{}} 
\toprule
\multicolumn{3}{c}{} & \multicolumn{3}{c}{\textbf{Without MF}} & \multicolumn{3}{c}{\textbf{With MF}} \\
\cmidrule(lr){4-6} \cmidrule(l){7-9}
\textbf{Dist.} & \textbf{ESS} & \textbf{Method} & \textbf{Under} & \textbf{Over} & \textbf{Right} & \textbf{Under} & \textbf{Over} & \textbf{Right} \\
\midrule
\multirow{2}{*}{$t(2.5)$} & \multirow{2}{*}{490} & \textbf{Onta}   & 11.4\% & 30.2\% & 58.4\% & 13.3\% & 2.4\%  & 84.3\% \\
                           &                      & \textbf{DDPA+}  & 0.4\%  & 98.4\% & 1.2\%  & 2.2\%  & 11.1\%  & 86.7\% \\
\addlinespace[0.05cm]
\multirow{2}{*}{$t(3.0)$} & \multirow{2}{*}{499} & \textbf{Onta}   & 1.2\%  & 9.1\% & 89.7\% & 1.2\%  & 0.2\%  & 98.6\% \\
                           &                      & \textbf{DDPA+}  & 0.0\%  & 94.4\% & 5.6\%  & 0.2\%  & 13.6\%  & 86.2\% \\
\addlinespace[0.05cm]
\multirow{2}{*}{$t(3.5)$} & \multirow{2}{*}{500} & \textbf{Onta}   & 0.4\%  & 2.8\%  & 96.8\% & 0.4\%  & 0.0\%  & 99.6\%  \\
                           &                      & \textbf{DDPA+}  & 0.0\%  & 66.8\% & 33.2\% & 0.0\%  & 15.8\%  & 84.2\% \\
\bottomrule
\end{tabular}
\caption{Comparison of Onta and DDPA+ methods with and without the MF tool Under Case (iii). The table shows the percentage of underestimation (Under), overestimation (Over), and correct estimation (Right) for three $t$-distributions with varying degrees of freedom. (Dist. = Distribution, ESS = Effective Sample Size)}
\label{tab:af_comparison3}
\end{table}

\begin{table}[htbp]
\centering
\small
\begin{tabular}{@{} l l l ccc ccc @{}} 
\toprule
\multicolumn{3}{c}{} & \multicolumn{3}{c}{\textbf{Without MF}} & \multicolumn{3}{c}{\textbf{With MF}} \\
\cmidrule(lr){4-6} \cmidrule(l){7-9}
\textbf{Dist.} & \textbf{ESS} & \textbf{Method} & \textbf{Under} & \textbf{Over} & \textbf{Right} & \textbf{Under} & \textbf{Over} & \textbf{Right} \\
\midrule
\multirow{2}{*}{$t(2.5)$} & \multirow{2}{*}{491} & \textbf{Onta}   & 14.1\% & 30.8\% & 55.2\% & 15.5\% & 5.1\%  & 79.4\% \\
                           &                      & \textbf{DDPA+}  & 0.2\%  & 97.4\% & 2.4\%  & 1.4\%  & 9.0\%  & 89.6\% \\
\addlinespace[0.05cm]
\multirow{2}{*}{$t(3.0)$} & \multirow{2}{*}{499} & \textbf{Onta}   & 5.4\%  & 6.6\% & 88.0\% & 5.4\%  & 0.2\%  & 94.4\% \\
                           &                      & \textbf{DDPA+}  & 0.0\%  & 93.6\% & 6.4\%  & 0.0\%  & 12.4\%  & 87.6\% \\
\addlinespace[0.05cm]
\multirow{2}{*}{$t(3.5)$} & \multirow{2}{*}{500} & \textbf{Onta}   & 1.4\%  & 3.2\%  & 95.4\% & 1.4\%  & 0.0\%  & 98.6\%  \\
                           &                      & \textbf{DDPA+}  & 0.0\%  & 64.8\% & 35.2\% & 0.0\%  & 13.0\%  & 87.0\% \\
\bottomrule
\end{tabular}
\caption{Comparison of Onta and DDPA+ methods with and without the MF tool Under Case (iv). The table shows the percentage of underestimation (Under), overestimation (Over), and correct estimation (Right) for three $t$-distributions with varying degrees of freedom. (Dist. = Distribution, ESS = Effective Sample Size)}
\label{tab:af_comparison4}
\end{table}

From Figure \ref{eig2.5} and Tables \ref{tab:af_comparison3} and \ref{tab:af_comparison4}, we observe that the standalone DDPA+ method essentially fails under the Serious Heavy-tailed Case. Integrating our MF tool successfully restores its performance, yielding relatively accurate estimations. Moreover, the improvement achieved by employing the MF tools is even more pronounced in this case, highlighting their enhanced effectiveness under more challenging heavy-tailed settings.

\section{Real data analysis}\label{sec_realdata}

In this section, the proposed methods are applied to the real-world FRED-MD dataset, which was previously studied in \cite{XIA2017235} and \cite{YU2019104543}. This dataset, introduced in \cite{McCracken01102016}, is publicly available on the St. Louis Fed's website: \url{https://www.stlouisfed.org/research/economists/mccracken/fred-databases}. It consists of monthly data for 128 macroeconomic variables. For this analysis, we focused on the 786 observations covering the period from March 1959 to June 2024. The raw data is non-stationary and contains missing values. The first step is to transform the series into a stationary one using the code provided on the aforementioned website. After this preprocessing, the first two observations were eliminated due to the application of differencing operators, resulting in a $784 \times 128$ panel. The website also offers code to replace outliers with ``reasonable" values, but we chose to omit this step, as extreme observations are inherent in data drawn from heavy-tailed distributions. Missing values were imputed with the sample mean of the corresponding non-missing entries in the same column.

We first analyzed the entire panel to determine the number of common factors. Our detection algorithm estimates $\widehat{r} = 3$, consistent with the result in \cite{XIA2017235} and slightly smaller than the estimate $\widehat{r} = 4$ proposed by \cite{YU2019104543}. We consider $\widehat{r} = 3$ more reasonable than $\widehat{r} = 4$, though both studies retain outliers and recognize heavy-tailed distributions (as noted by \cite{YU2019104543}). This conclusion stems from temporal dynamics analysis: rolling 4-year windows (Feb 1992-Feb 2024) reveal that spurious factors intermittently emerge in certain subperiods. These spurious factors cause an estimate of $\widehat{r} = 4$ to potentially overestimate the true factor number. Thus $\widehat{r} = 3$ provides a more robust full-sample representation.





The choice to begin this time-varying analysis in February 1992 is necessitated by data availability: Prior to this date, datasets for ACOGNO, AMDMNOx, ANDENOx, and AMDMUOx were completely missing. This extensive data gap before 1992 presents a significant challenge for our 48-month window analysis: Filling such large amounts of missing data within these relatively small observation windows would inevitably introduce substantial inaccuracies. Table \ref{First order} shows $\mathbb{G}_i$ (``On" estimation), $i=1,\ldots,10$ of every 48 months data matrices. 

\begin{table}[htpb]
\caption{First-order estimation $\mathbb{G}_i$ for $i=1,\ldots,10$}
\centering
\begin{tabular}{lllllllllllll}
\hline
Date&1&2&3&4&5&6&7&8&9&10\\
\hline
1992-1996&1.20 & 53.52 & 1.32 & 1.05 & 2.96 & 2.72 & 2.30 & 0.38 & 2.16 & 1.88 \\
1996-2000&1.95 & 28.46 & 1.34 & 3.95 & 7.38 & 0.72 & 1.79 & 0.54 & 1.01 & 9.45 \\
2000-2004&2.21 & 22.65 & 2.48 & 0.46 & 26.72 & 1.10 & 0.67 & 9.77 & 0.12 & 4.93 \\
2004-2008&2.91 & 54.15 & 0.82 & 6.24 & 0.10 & 13.10 & 5.74 & 0.92 & 0.22 & 2.83 \\
2008-2012&2.64 & 10.07 & 2.94 & 1.33 & 3.08 & 3.69 & 0.92 & 1.31 & 9.23 & 0.08\\
2012-2016&3.00 & 68.32 & 0.47 & 6.57 & 1.28 & 2.52 & 2.06 & 0.59 & 2.71 & 1.05 \\
2016-2020&2.72 & 155.09 & 0.23 & 6.80 & 0.38 & 11.58 & 0.38 & 2.20 & 4.96 & 0.69\\
2020-2004&7.98 & 18.41 & 1.38 & 4.44 & 0.65 & 2.18 & 2.94 & 0.49 & 9.15 & 0.25\\
\hline
\end{tabular}
\label{First order}
\end{table}

We set $\mathbb{G}_i>9$ as the threshold for testing the factor, and derive that the periods ``1992-1996", ``1996-2000", and ``2012-2016" possess two common factors. However, the period ``2000-2004" may possess either two or five factors, ``2004-2008" may possess either two or six factors, ``2008-2012" may possess either two or nine factors, ``2016-2020" may possess either two or six factors, and ``2020-2024" may possess either two or nine factors. Now we conduct a fluctuation magnification algorithm on these uncertain periods. Figure \ref{f2000-2004} illustrates the variance after fluctuation magnification for the period 2000-2004. The repetition in the magnification algorithm for variance calculation is set to $K=200$. 


From Figure \ref{f2000-2004}, we observe no significant fluctuations (increases) in variance at \(i=6\), indicating that there is no substantial change in the factors (i.e., no shift from spiked to bulk). In comparison, the variance shows a slight increase from \(i=2\) to \(i=3\). This suggests that the variance remains relatively stable (the variance remains close to 0.02), making the judgment of \(i=2\) more reasonable. This is not unexpected, as fluctuation magnification for variance primarily serves as an auxiliary tool to confirm the choice of \(i=2\), and the observed stability further supports this decision. However, when we examine the period ``2004-2008" in Figure \ref{f2004-2008}, we observe an unusual pattern that diverges from the previous trends.

\begin{figure*}[htbp]
\centering

\subfloat[Variance on 2000-2004]{\label{f2000-2004}
\centering
\includegraphics[width=0.315\textwidth]{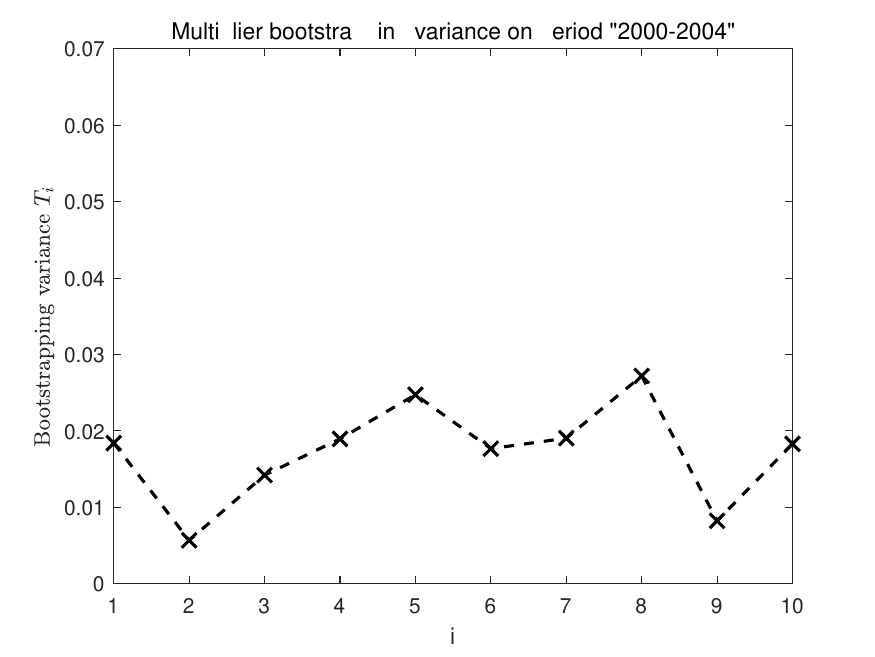}
 \captionsetup{font=small} 
}
\hfill 
\subfloat[Variance on 2004-2008]{\label{f2004-2008}
\centering
\includegraphics[width=0.315\textwidth]{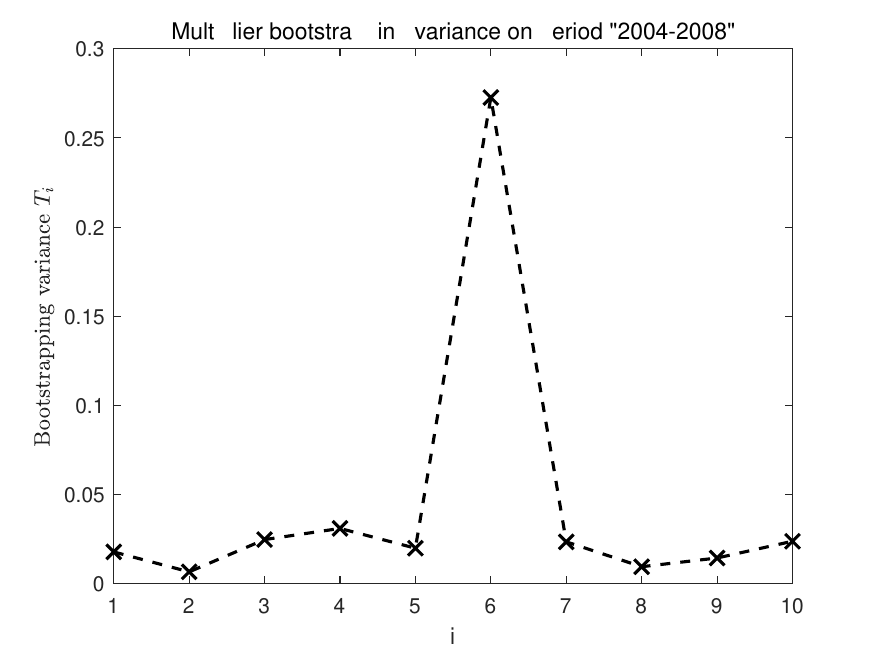}
 \captionsetup{font=small} 
}
\hfill 
\subfloat[Factors change every 4 years]{\label{fperiod}
\centering
\includegraphics[width=0.315\textwidth]{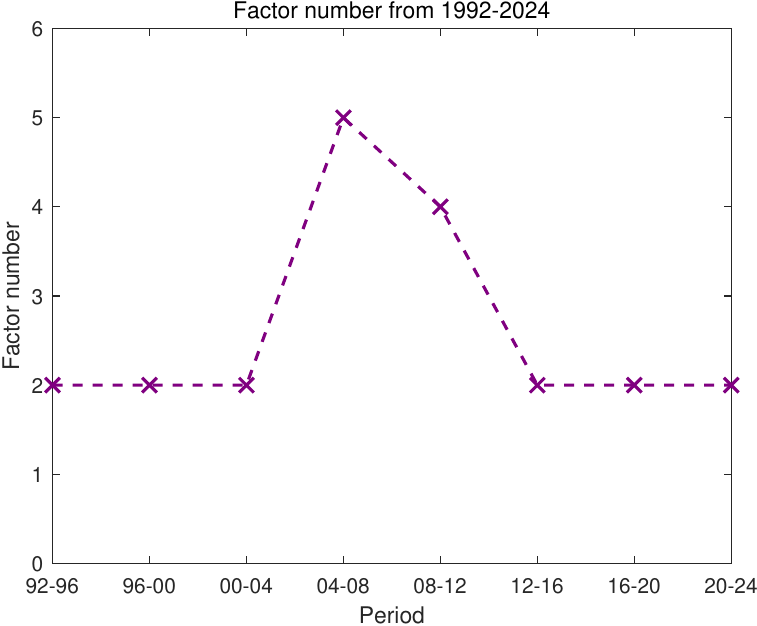}
 \captionsetup{font=small} 
}

\caption{
  (a) Fluctuation in variance on period ``2000-2004". 
  (b) Fluctuation in variance on period ``2004-2008". 
  (c) Change in the number of factors from 1992 to 2024 every 48 months.
}
\label{fig:master} 
\end{figure*}

From Figure \ref{f2004-2008}, we observe significant fluctuations (from 0.02 to 0.28) in variance at \(i=6\), while the fluctuations at other values remain relatively small. This suggests that \(i=6\) may represent a spurious signal (noting that spikes larger than the first spurious signal are real signals), and the true number of factors is likely five. This outcome is not surprising, as the large-scale financial crisis of 2008 led to substantial disruptions in the global economy. Such economic shocks can result in changes in the underlying structure of the data, influencing a number of factors. The following Figure \ref{fperiod} shows the change in the number of factors from 1992 to 2024.

\section{Sketch of proof strategy}\label{sketch for proof strategy}

In this section, we outline the main ideas of the proof and full technical details are deferred to the Appendix. Our proof proceeds in three steps. 
\begin{itemize}
	\item 
We first show that, in an elliptical factor model, heavy-tailed randomness in the noise component can generate additional outlying eigenvalues beyond those induced by divergent factor signals. This phenomenon is established in Lemma \ref{lem_fakesignal_compare_nonspikes} and Theorem \ref{thm_fakesignal_limit}.

The starting point is that the sample covariance matrix $S$ admits a separable structure under the elliptical factor model. When the radii in $D$ exhibit heavy-tailed decay, the associated eigenvalues may diverge. Our analysis relies on a perturbative argument; however, it differs substantially from the classical treatment in spiked covariance models. In the present setting, the spectral distribution is unbounded and the eigen-gaps are not necessarily large, particularly when the heavy-tailed noise is moderate. Consequently, a more delicate local-scale analysis is required. To this end, we employ tools from random matrix theory to derive a self-consistent system and establish local laws for the sample covariance matrix $S_2$ (the component without divergent factors), as developed in Theorems \ref{lem_solutionsystem} and \ref{thm_prf_fakesignals_locallaw}. On the other hand, the elliptical structure introduces nonlinear dependence across columns, which necessitates new concentration inequalities beyond the standard i.i.d. framework. Based on these preparations, we modify the perturbation arguments by isolating $\mathbf{y}_{i,2}$ corresponding to the largest noise radius $\xi_{(1)}^2$ from the matrix $Y_2$ as in \eqref{eq_def_S2}. We introduce a real auxiliary quantity $\mu_1>0$ to be the largest solution of $1+(\xi^2_{(1)}+\mathsf{q})m_{1n}(\mu_1)=0$ with $\mathsf{q}/\xi^2_{(1)}=\mathrm{o}_{\mathbb{P}}(1)$. As can be seen in the proof of Theorem \ref{thm_fakesignal_limit}, $\mu_1$ connects with $\xi^2_{(1)}$ naturally by $\mu_1/(\xi^2_{(1)}+\mathsf{q})=\bar{\sigma}+\mathrm{o}_{\mathbb{P}}(1)$. Moreover, it can establish a connection between $\mu_1$ and $\lambda_{1,2}:=\lambda_1(S_2)$ using a modified perturbation argument  as in Theorem \ref{thm_prf_fakesignals_eigenvaluerigidity}. Specifically, as we define a determinant function $M(\cdot)$ as in \eqref{eq_defnmlambda}. $\lambda_{1,2}$ can be uniquely characterized by the equation $M(\lambda_{1,2})=0$. By our established local laws, we can further demonstrate that $1+(\xi^2_{(1)}+\mathsf{q})m_{1n}(\lambda_{1,2})\approx0$. Subsequently, a detailed continuity and stability analysis finally shows that $\lambda_{2,1}/\xi^2_{(1)}=\bar{\sigma}+\mathrm{o}_{\mathbb{P}}(1)$. Thus, the leading eigenvalues of $S_2$ inherit the divergence rate of the largest noise radii, confirming that heavy-tailed noise generates spurious outliers.

	\item We next analyze the leading eigenvalues of $S$ associated with divergent factor signals. Our approach builds on the perturbative framework developed in \cite{cai2020limiting}, suitably generalized to accommodate elliptical noise with heavy tails.

The key observation is that if the signal strengths dominate the heavy-tailed noise, the perturbative argument remains valid. A central role is played by the random quantity $\zeta_1$ defined in \eqref{eq_def_zetai}, which captures the randomness of $\lambda_1/\theta_1$ induced by the noise radii ${\xi_i^2}$. Then, if the noise radii have polynomial decay with index $\alpha \ge 2$ or exponential decay, asymptotic normality follows from a classical central limit theorem applied to $\zeta_1$. If $\alpha \in (1,2)$, asymptotic normality fails, but the limiting mean and variance can still be characterized. Furthermore, we show that, beyond the first $m$ spikes, the leading eigenvalues of $S$ are asymptotically close to those of $S_2$, which are driven by heavy-tailed noise. This yields a complete description of the limiting behavior of eigenvalues corresponding to both true and spurious signals.

	\item In the final step, we introduce a fluctuation magnification mechanism to distinguish genuine factors from noise-induced outliers. We perturb the data matrix $Y$ using carefully chosen multipliers that preserve the ordering of the largest noise radii.

For the magnified matrix, the perturbative analysis for true signals continues to hold, and a law-of-large-numbers-type argument shows that the rescaled variance of genuine signals converges to zero. The proof proceeds conditionally on the original data matrix and relies on a refined fluctuation analysis. In contrast, the spurious eigenvalues induced by heavy-tailed noise exhibit amplified fluctuations under the multipliers. Their limiting variance does not shrink, as it is driven by the variance of the multipliers themselves. This behavior is established by linking the spurious eigenvalues to the order statistics of the magnified radii. Consequently, the asymptotic variance of spurious components remains nondegenerate, while that of true signals vanishes. This variance separation provides the theoretical foundation for our testing procedure, which consistently distinguishes genuine factors from heavy-tail-induced artifacts.
\end{itemize}

\section{Conclusion}\label{sec_conclusion}
This paper has confronted the critical ``spurious factor dilemma" in EFMs, where heavy-tailed randomness, prevalent in economic and financial data, can generate noise-induced eigenvalues that masquerade as real factors. We introduce a novel theory-based fluctuation magnification algorithm, which uniquely leverages the differential stability of real versus spurious factors under targeted perturbations: real factor signals exhibit resilience, while spurious ones betray their noisy origins through amplified volatility. This advancement significantly enhances robust factor analysis, offering a more reliable foundation for modeling and forecasting in high-dimensional, heavy-tailed distribution, thereby improving the fidelity of economic and financial decision-making.

While classical RMT often relies on assumptions of finite higher-order moments, which can be challenged by heavy-tailed distributions, this paper indicates that the conceptual framework and analytical power of RMT remain remarkably insightful and adaptable for understanding complex systems. Indeed, ongoing research continues to extend RMT's reach, developing new results and specialized techniques that successfully characterize the spectral properties of matrices with heavy-tailed entries. These advancements, by providing a deeper understanding of how eigenvalues and eigenvectors behave under non-standard conditions, are proving instrumental in developing robust methodologies capable of distinguishing real signals from noise and making reliable inferences from heavy-tailed data, thereby underscoring RMT's enduring effectiveness and its evolving role in the modern high-dimensional world.

\appendix
\section{Appendix}\label{appendix}
\subsection{Preliminary results}
In this section, we collect several useful definitions and tools. 
Throughout this article, we use the following notion to provide a simple way of systematizing and making precise statements for two families of random variables $A,B$
	of the form ``$A$ is bounded with high probability by $B$ up to
	small powers of $n$'' .
	\begin{definition}[Stochastic domination]\mbox{}{\newline}
		(a). For two families of nonnegative random variables
		\[
		A=\{A_{n}(t):n\in\mathbb{Z}_{+},t\in T_{n}\},\qquad B=\{B_{n}(t):n\in\mathbb{Z}_{+},t\in T_{n}\},
		\]
		where $T_{n}$ is
		a possibly $n$-dependent parameter set, we say that $A$ is stochastically
		dominated by $B$, uniformly in $t$ if for all (small) $\varepsilon>0$
		and (large) $D>0$ there exists $n_{0}(\varepsilon,D)\in\mathbb{Z}_{+}$
		such that 	as $n\ge n_{0}(\varepsilon,D)$,
		\[
		\sup_{t\in T_{n}}\mathbb{P}\big(A_{n}(t)>n^{\varepsilon}B_{n}(t)\big)\le n^{-D}.
		\]
		If $A$ is stochastically dominated
		by $B$, uniformly in $t$, we use notation $A\prec B$ or $A=O_{\prec}(B)$.
		Moreover, for some complex family $A$ if $|A|\prec B$ we also write $A=O_{\prec}(B)$. \\
		(b). Let $A$ be a family of random matrices and $\zeta$ be a family of nonnegative random variables. Then we denote $A=O_{\prec}(\zeta)$ if $A$ is dominated under weak operator norm sense, i.e. $|\langle\mathbf{v},A\mathbf{w}\rangle|\prec\zeta\|\mathbf{v}\|_2\|\mathbf{w}\|_2$  for any deterministic vectors $\mathbf{v}$ and $\mathbf{w}$.\\
		(c). For two sequences of numbers $\{a_{n}\}_{n=1}^{\infty}$,
		$\{b_{n}\}_{n=1}^{\infty}$, $a_{n}\prec b_{n}$ if for all $\epsilon>0$, $a_n\leq n^{\epsilon}b_n$.
	\end{definition}

         We also use the following definition to control random events,
         \begin{definition}[High probability event]
		We say that an $n$-dependent event $\Omega$ holds with high probability if there exists constant $c>0$ independent of $n$, such that
		\begin{equation}
		\mathbb{P}(\Omega)\geqslant 1-\exp{(-n^c)},
		\end{equation}
		for all sufficiently large $n$.
	\end{definition}

 Now, we give a description of the structure of $D^2$ firstly. Recall Assumption \ref{ass_xi} and $D^2=\operatorname{diag}(\xi_1^2,\dots,\xi^2_n)$. We define the order statistics of $\{\xi^2_i\}$ as $\xi^2_{(1)}\ge\xi^2_{(2)}\ge\dots\ge\xi^2_{(n)}$. We say that $\xi^2_{(i)}$'s are of ``good configuration" if they satisfy the following definition.
        \begin{definition}\label{def_goodconfiguration}
            Let $\Omega \equiv \Omega_n$ be the event on $\{\xi_i^2\}$ so that the following conditions hold:
            \begin{itemize}
                \item [] {\bf (a). Polynomial decay.} When $\{\xi_i^2\}$ has polynomial decay tail as in \eqref{ass_xi_poly}, we assume that for all $\epsilon\in(0,1/\alpha)$, $b\in(1/2,1]$  and some constants $C,C_1,C_2, c>1,$ the following holds on $\Omega$
        \begin{equation}
        \begin{aligned}
        & \xi^2_{(1)}-\xi^2_{(2)}\ge C_1^{-1}n^{1/\alpha}\log^{-C}n, \  C_2^{-1}n^{1/\alpha}\log^{-1}n\le\xi^2_{(1)}\le C_2n^{1/\alpha}\log n,  \\
        &  \xi^2_{(1)}-\xi^2_{(\lceil n^b \rceil)}\ge c^{-1}n^{1/\alpha}\log^{-1} n, \ 
          \frac{1}{n}\sum_{i=1}^n\xi^2_i< \infty. 
        \end{aligned}
        \end{equation} 
        \item[] {\bf (b). Exponential decay.} When $\{\xi_i^2\}$ has exponential decay tail as in \eqref{ass_xi_exp}, we assume that for some constant $C_1,C_2>1, C_3>C_1$, $0<c<\min\{s(C_1^{-1}-C_3^{-1})^{\beta},1\}$ and small constant $\epsilon>0$,  
        the following holds on $\Omega$
        \begin{equation}
        \begin{aligned}
        & \xi^2_{(1)}-\xi^2_{(2)}\ge C_1^{-1}\log^{1/\beta-1-\epsilon} n, \   C_2^{-1}\log^{1/\beta} n\le\xi^2_{(1)}\le C_2\log^{1/\beta} n,\\
        &\xi^2_{(1)}-\xi^2_{(\lceil n^{1-c}\rceil)}\ge C_3^{-1}\log^{1/\beta}n,\   \frac{1}{n}\sum_{i=1}^n\xi^2_i<\infty. 
        \end{aligned}
        \end{equation}
            \end{itemize}
        \end{definition}

        The following lemma shows that under Assumption \ref{ass_xi}, the probability event $\Omega$ happens with high probability.
        \begin{lemma}\label{lem_goodconfiguration}
            Let $\Omega$ be the event defined in Definition \ref{def_goodconfiguration}. Under Assumption \ref{ass_xi} holds, we have for $n$ sufficiently large that 
            \begin{equation*}
                \mathbb{P}(\Omega)=1-\mathrm{O}(n^{-C}),
            \end{equation*}
            for some constant $C>0$.
        \end{lemma}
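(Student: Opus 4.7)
The plan is to verify each of the individual conditions in Definition \ref{def_goodconfiguration} holds with probability at least $1-\mathrm{O}(n^{-C})$ and then combine via a union bound. The two regimes (polynomial vs.\ exponential decay) are handled by the same architecture with different tail inputs.

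For the polynomial-decay case, I would start with the extremal bounds on $\xi^2_{(1)}$. The upper tail follows directly from the union bound: $\mathbb{P}(\xi^2_{(1)} > C n^{1/\alpha}\log n) \le n\,\mathbb{P}(\xi^2 > C n^{1/\alpha}\log n)$, which is small by Assumption \ref{ass_xi}(i) and the slow variation of $L$ upon choosing $C$ sufficiently large. The lower bound uses independence: $\mathbb{P}(\xi^2_{(1)} < C^{-1}n^{1/\alpha}\log^{-1}n) \le \exp\bigl(-n\,\mathbb{P}(\xi^2 > C^{-1}n^{1/\alpha}\log^{-1}n)\bigr)$, and the exponent diverges as $n\to\infty$ by the tail assumption. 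The aggregate bound $n^{-1}\sum_i\xi^2_i<\infty$ is obtained by truncating at level $n^{1/\alpha}\log n$ and combining Chebyshev's inequality on the truncated sum (using $\mathbb{E}\xi^2=1$, plus a weak-LLN argument when $\alpha\in(1,2]$) with the already-established high-probability bound on $\xi^2_{(1)}$ to control the tail.

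The gap conditions are the main technical content. I would invoke a R\'enyi-type representation: writing $F$ for the CDF of $\xi^2$ and $Q(u):=F^{-1}(1-u)$ for the tail quantile function, the decreasing order statistics satisfy $\xi^2_{(i)} \stackrel{d}{=} Q(\Gamma_i/\Gamma_{n+1})$, where $\Gamma_k:=E_1+\cdots+E_k$ for i.i.d.\ $E_j\sim\mathrm{Exp}(1)$. Under Assumption~\ref{ass_xi}(i), $Q(u)=u^{-1/\alpha}\tilde{L}(1/u)$ with $\tilde L$ slowly varying, so by the mean value theorem the gap $\xi^2_{(i)}-\xi^2_{(i+1)}$ is of the order $\alpha^{-1}(\Gamma_i/n)^{-1/\alpha-1}\cdot n^{-1}E_{i+1}$ up to slowly varying factors. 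Since $\Gamma_i\asymp i$ and $E_{i+1}\gtrsim \log^{-1}n$ except on an event of polynomially small probability, the pointwise bound is $n^{1/\alpha}i^{-1-1/\alpha}\log^{-1}n$, which is $\ge C^{-1}n^{\epsilon}\log^{-1}n$ for $i<\sqrt n$ provided $\epsilon<1/\alpha$. Union bounding over the $\sqrt n$ indices preserves polynomial probability. The top gap $\xi^2_{(1)}-\xi^2_{(2)}$ is the $i=1$ case, and the gap $\xi^2_{(1)}-\xi^2_{(\lceil n^b\rceil)}$ in condition (a)(iv) follows by telescoping, or more efficiently by directly applying the mean value theorem over the interval $[\Gamma_1/n, \Gamma_{\lceil n^b\rceil}/n]$ and using $\Gamma_{\lceil n^b\rceil}/n\ll 1$.

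The exponential decay case is strictly simpler. Markov's inequality applied to Assumption \ref{ass_xi}(ii) yields $\mathbb{P}(\xi^2 > x) \le e^{-sx^{\beta}}\cdot \mathbb{E}e^{s\xi^{2\beta}}$, from which the two-sided bound on $\xi^2_{(1)}$ by $\log^{1/\beta}n$ follows by the same union-bound/independence pair as above, with sub-exponential (rather than polynomial) error probabilities. The single gap $\xi^2_{(1)}-\xi^2_{(2)}\gtrsim \log^{1/\beta}n$ is obtained from the R\'enyi representation above, noting that $Q'(u)$ now grows like $u^{-1}\log^{1/\beta-1}(1/u)$ near zero, which combined with $E_2\gtrsim \log^{-1}n$ w.h.p.\ gives the claim. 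The sum bound again reduces to Chebyshev since all moments of $\xi^2$ are finite in this regime. The principal obstacle is the simultaneous gap control in condition (a)(iii): the uniformity over $i<\sqrt n$ forces one to track the slowly varying factor $\tilde L$ carefully, and the $n^{\epsilon}$ slack (for $\epsilon<1/\alpha$) is precisely what is needed to absorb this nuisance while leaving polynomial probability in the union bound.
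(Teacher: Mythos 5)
First, note that the paper does not actually prove this lemma: it is imported verbatim by citation to Lemma 3 of \cite{ding2023extreme}, so your self-contained argument via union bounds for the extremes and the R\'enyi/quantile representation $\xi^2_{(i)}\overset{d}{=}Q(\Gamma_i/\Gamma_{n+1})$ for the gaps is the natural route and is architecturally the standard one. However, as written it contains quantitative gaps that prevent it from delivering the claimed rate $1-\mathrm{O}(n^{-C})$. The most basic one is the upper tail of $\xi^2_{(1)}$: the union bound gives $\mathbb{P}(\xi^2_{(1)}>Cn^{1/\alpha}\log n)\le n\,\mathbb{P}(\xi^2>Cn^{1/\alpha}\log n)\asymp C^{-\alpha}(\log n)^{-\alpha}L(\cdot)$, which is only logarithmically small; enlarging $C$ shrinks the constant, not the order in $n$. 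A polynomial failure probability requires polynomial slack in the threshold (e.g.\ $n^{1/\alpha+\epsilon}$), so either the constants in Definition \ref{def_goodconfiguration} must be read with that slack or the target rate weakened. The same defect appears in your gap argument: you use ``$E_{i+1}\gtrsim\log^{-1}n$ except on an event of polynomially small probability,'' but $\mathbb{P}(E_{i+1}<\log^{-1}n)=1-e^{-1/\log n}\asymp\log^{-1}n$, which again is not $\mathrm{O}(n^{-C})$.

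The second, more structural problem is the uniform gap claim for $1\le i<\sqrt{n}$. Your mean-value computation correctly gives a typical gap of order $n^{1/\alpha}\,i^{-1-1/\alpha}$ (up to $E_{i+1}$ and slowly varying factors), but this quantity is decreasing in $i$ and satisfies $n^{1/\alpha}i^{-1-1/\alpha}\ge n^{\epsilon}$ only for $i\le n^{(1-\epsilon\alpha)/(\alpha+1)}$. Since $\alpha>1$ forces $(1-\epsilon\alpha)/(\alpha+1)<1/2$, the bound cannot hold uniformly up to $i=\sqrt{n}$; indeed at $i\asymp\sqrt{n}$ the typical gap is $n^{(1-\alpha)/(2\alpha)}\to 0$. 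So the assertion ``which is $\ge C^{-1}n^{\epsilon}\log^{-1}n$ for $i<\sqrt{n}$ provided $\epsilon<1/\alpha$'' is an arithmetic error, and the condition as you interpret it would in fact fail with overwhelming probability for $i$ near $\sqrt{n}$. To repair the proof one must restrict the range of $i$ for which a growing lower bound on the gaps is claimed (to $i\le n^{\delta}$ for suitably small $\delta$ depending on $\epsilon$ and $\alpha$), which is what the cited source effectively does; the remaining components of your argument (the exponential lower bound for $\xi^2_{(1)}$, the truncation-plus-Chebyshev treatment of $n^{-1}\sum_i\xi^2_i$, and the Markov-inequality treatment of the exponential-tail case) are sound.
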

        \begin{remark}
            We remark that on the event $\Omega$, according to $(a)$ and $(b)$ in Definition \ref{def_goodconfiguration}, we see that the first few largest $\xi^2_i$ are divergent and well separated from each other. Second, the proof of Lemma \ref{lem_goodconfiguration} can be found in \cite[Lemma 3]{ding2023extreme}. We do not intend to repeat it here for simplification. Third, our discussion below will reply on the global event $\Omega$. Note that under $\Omega$, it is equivalent to that we conduct truncation for $\{\xi^2_i\}$, where $\{\xi^2_i\}$ has deterministic upper bound $\mathrm{T}$.
        \end{remark}

        The following large deviation inequality describes the concentration for certain kinds of random variables.
        \begin{lemma}[Large deviation bounds]\label{lem_largedeviationbound}
Let $\mathbf{u}=(u_1, u_2, \cdots, u_p)^{\prime}, \widetilde{\mathbf{u}}=(\widetilde{u}_1, \widetilde{u}_2, \cdots, \widetilde{u}_p)^{\prime}\in \mathbb{R}^p$  be two independent real random vectors. Moreover, let $A$ be a $p \times p$ matrix independent of the above vectors. Then the following holds. 
\begin{enumerate}
\item[(1).] When the entries of the random vectors are centered i.i.d. random variables with variance $p^{-1}$ and  $\mathbb{E}|\sqrt{p} v_{i}|^k \leq C_k,$ where $v_i=u_i, \widetilde{u}_i, 1 \leq i \leq p,$  we have that 
		\begin{align*}
		|\widetilde{\mathbf{u}}^{*}\mathbf{u}|  \prec  \sqrt{\frac{\|\mathbf{u}\|^{2}}{p},} \ \ \left|\mathbf{u}^{*}A\tilde{\mathbf{u}}\right| \prec  \frac{1}{p}\|A\|_{F}, \ \
		|\mathbf{u}^{*}A\mathbf{u}-\frac{1}{p}{\rm Tr}A|  \prec  \frac{1}{p}\|A\|_{F}. 		
		\end{align*}
		\item[(2).] When the random vectors	are sampled from  $\mathrm{U}(\mathbb{S}^{p-1}),$ we have that 
				\begin{align*}
		|\widetilde{\mathbf{u}}^{*}\mathbf{u}|  \prec  \sqrt{\frac{\|\mathbf{u}\|^{2}}{p},} \ \ \left|\mathbf{u}^{*}A\tilde{\mathbf{u}}\right| \prec  \frac{1}{p}\|A\|_{F}, \ \
		|\mathbf{u}^{*}A\mathbf{u}-\frac{1}{p}{\rm Tr}A|  \prec  \frac{1}{p}\|A\|_{F}. 		
		\end{align*}
\end{enumerate}
\end{lemma}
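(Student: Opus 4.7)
The statement is the standard large-deviation inequality for linear, bilinear, and quadratic forms familiar from isotropic local laws, so the plan is to apply the moment method for part (1) and then transfer to the spherical case via the Gaussian normalization for part (2).

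For part (1), I would use conditioning together with a Markov inequality at order $2k$. Conditional on $\mathbf{u}$, the inner product $\widetilde{\mathbf{u}}^{\prime}\mathbf{u}=\sum_i \widetilde{u}_i u_i$ is a sum of independent centred variables with variance $p^{-1}u_i^2$ and bounded rescaled moments $\mathbb{E}|\sqrt{p}\widetilde{u}_i|^k\le C_k$. A Rosenthal/Burkholder-type bound yields
\[
\mathbb{E}\bigl[\,|\widetilde{\mathbf{u}}^{\prime}\mathbf{u}|^{2k}\,\big|\,\mathbf{u}\bigr]\le C_k\Bigl(\frac{\|\mathbf{u}\|^2}{p}\Bigr)^{k},
\]
and Markov at the threshold $n^{\varepsilon}\sqrt{\|\mathbf{u}\|^2/p}$ gives the estimate $\mathbb{P}(|\widetilde{\mathbf{u}}^{\prime}\mathbf{u}|>n^{\varepsilon}\sqrt{\|\mathbf{u}\|^2/p})\le n^{-2k\varepsilon}C_k$; choosing $k$ large depending on $(\varepsilon,D)$ yields $|\widetilde{\mathbf{u}}^{\prime}\mathbf{u}|\prec \sqrt{\|\mathbf{u}\|^2/p}$. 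For the bilinear form $\mathbf{u}^{\prime}A\widetilde{\mathbf{u}}=\sum_{i,j}A_{ij}u_i\widetilde{u}_j$ with $A$ independent of the vectors, I would condition on $A$ and $\mathbf{u}$, apply the inner-product estimate to the linear functional $\sum_j(\sum_i A_{ij}u_i)\widetilde{u}_j$ to obtain a conditional bound in terms of $\|A^{\prime}\mathbf{u}\|^2/p$, and then apply the forthcoming quadratic-form bound to the matrix $AA^{\prime}/p$ to control $\|A^{\prime}\mathbf{u}\|^2/p$ by $\|A\|_F^2/p^2$. For the centred quadratic form, decompose $\mathbf{u}^{\prime}A\mathbf{u}-p^{-1}\operatorname{tr}A=\sum_i A_{ii}(u_i^2-p^{-1}) + \sum_{i\neq j}A_{ij}u_iu_j$ and expand the $2k$-th moment by combinatorial index pairing, where the bounded-moment assumption controls each matching and the leading contribution is $(\|A\|_F^2/p^2)^k$; Markov again closes the argument.

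For part (2), I would exploit the standard representation $\mathbf{u}\stackrel{d}{=}\mathbf{g}/\|\mathbf{g}\|$ and $\widetilde{\mathbf{u}}\stackrel{d}{=}\widetilde{\mathbf{g}}/\|\widetilde{\mathbf{g}}\|$, where $\mathbf{g},\widetilde{\mathbf{g}}$ are i.i.d. $\mathcal{N}(0,p^{-1}I_p)$. The entries of $\mathbf{g}$ and $\widetilde{\mathbf{g}}$ satisfy the hypotheses of part (1) with all $C_k$ finite, while a standard $\chi^2$ tail bound gives $\|\mathbf{g}\|^2=1+O_{\prec}(p^{-1/2})$ and analogously for $\widetilde{\mathbf{g}}$. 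Applying part (1) to $(\mathbf{g},\widetilde{\mathbf{g}})$ and dividing by $\|\mathbf{g}\|\|\widetilde{\mathbf{g}}\|$ (respectively $\|\mathbf{g}\|^2$) absorbs the factor $1+O_{\prec}(p^{-1/2})$ into the $\prec$ estimate, producing the three claims. For the centred quadratic form one additionally uses that $p^{-1}\operatorname{tr}A\cdot(\|\mathbf{g}\|^{-2}-1)=O_{\prec}(\|A\|_F/(p\sqrt{p}))$, which is dominated by $\|A\|_F/p$.

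The main obstacle is obtaining the moment bounds with constants $C_k$ that do not grow faster than polynomially in $k$: stochastic domination requires the small-probability estimate at the $n^{\varepsilon}$ scale for arbitrary $\varepsilon>0$, which forces $k$ to be chosen of order $D/\varepsilon$ before applying Markov. Standard index-pairing graph estimates (or Rosenthal's inequality with $C_k\le (Ck)^{k/2}$) handle this, but the combinatorial bookkeeping for the bilinear and quadratic cases is the only genuinely non-trivial ingredient; once it is in place, both parts follow by routine conditioning and normalisation.
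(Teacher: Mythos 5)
Your proposal is the standard moment-method proof of these well-known large deviation bounds (conditioning plus Rosenthal/Markov for the linear form, index-pairing for the quadratic form, and the Gaussian representation $\mathbf{u}\overset{d}{=}\mathbf{g}/\|\mathbf{g}\|$ for the spherical case), and it is correct; the paper itself states this lemma as a known preliminary without giving a proof, so there is no alternative argument to compare against. One minor inaccuracy: in part (2) the cross term is $p^{-1}\operatorname{tr}A\cdot(\|\mathbf{g}\|^{-2}-1)$, and since only $|\operatorname{tr}A|\le\sqrt{p}\,\|A\|_F$ holds in general, this term is $\mathrm{O}_{\prec}(p^{-1}\|A\|_F)$ rather than $\mathrm{O}_{\prec}(p^{-3/2}\|A\|_F)$ as you wrote --- still within the target bound, so the conclusion is unaffected.
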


\subsection{Proof of the results in Section \ref{sec_main_realsignals}}
\subsubsection{Proof of Theorem \ref{lem_realsignal_preratio}}
We only show the result of $\lambda_1$ while other cases can be handled similarly. Recall the decomposition
\begin{equation*}
    \mathcal{S}=DU^{\prime}\Sigma_1UD+DU^{\prime}\Sigma_2UD:=\mathcal{S}_1+\mathcal{S}_2.
\end{equation*}
Since $\|\Sigma_2\|\le c^{-1}$ from Assumption \ref{ass_sigma}, we have 
\begin{equation}
\|\mathcal{S}_2\|\le\|U^{\prime}\Sigma_2U\|\cdot\|D^2\|\le\mathrm{O}(\xi^2_{(1)}),
\end{equation}
where we borrowed the result from \cite{Wen2021} with $\|U^{\prime}\Sigma_2U\|\le C$ for some constant $C>0$. Beside, notice that $\xi^2_{(1)}\lesssim\mathsf{T}$ by Lemma \ref{lem_goodconfiguration}. Then, from Assumption \ref{ass_sigma} and Weyl's theorem, we can conclude that 
\begin{equation}\label{eq_prf_realsignals_preratio_1}
    \big|\frac{\lambda_1}{\sigma_1}-\frac{\lambda_1(\mathcal{S}_1)}{\sigma_1}\big|\le\frac{\|\mathcal{S}_2\|}{\sigma_1}=\mathrm{O}_{\mathbb{P}}(\frac{\mathsf{T}}{\sigma_1}).
\end{equation}
It remains is to consider $\lambda_1(\mathcal{S}_1)$, or equivalently $\lambda_1(S_1)$ where $S_1:=\Sigma^{1/2}_1UD^2U^{\prime}\Sigma^{1/2}_1$. Denote $\mathscr{H}(x):=x\Lambda_1^{-1}-\Gamma_1^{\prime}UD^2U^{\prime}\Gamma_1$, where we decompose $\Sigma_1=\Gamma_1\Lambda_1\Gamma_1^{\prime}$ and $\Lambda_1=\operatorname{diag}(\sigma_1,\dots, \sigma_m)\in\mathbb{R}^{m\times m}$, $\Gamma_1=(\mathbf{e}_1,\dots, \mathbf{e}_m)\in\mathbb{R}^{p\times m}$ consists of $m$ orthogonal basis vectors, say $\mathbf{e}_i:=(0,\dots, 0,1,0,\dots,0)^{\prime}\in\mathbb{R}^p$ with $i$-th position being one and others being zero. By definition, $\operatorname{det}\big[\mathscr{H}\big(\lambda_1(S_1)\big)\big]=0$. Note that $\Gamma_1^{\prime}UD^2U^{\prime}\Gamma_1$ is an $m\times m$ matrix with diagonal entries satisfying
\begin{align*}
        &\mathbf{e}_i^{\prime}UD^2U^{\prime}\mathbf{e}_i=1+\frac{1}{p}\sum_i(\xi^2_i-\phi)+\sum_k(u_{ik}^2-\frac{1}{p})\xi^2_k\\
        &=1+\frac{1}{p}\sum_i(\xi^2_{i}-\phi)+\mathrm{O}_{\mathbb{P}}\Big(\frac{1}{\sqrt{p}}+\sqrt{\frac{\mathsf{T}}{n}}\cdot\mathbbm{1}(\alpha\in(1,2])\Big)\\
        &=1+\mathrm{O}_{\mathbb{P}}\Big(\frac{1}{\sqrt{n}}+\sqrt{\frac{\mathsf{T}}{n}}\cdot\mathbbm{1}(\alpha\in(1,2])\Big),\quad 1\le i\le m,
\end{align*}
where we used Lemma \ref{lem_largedeviationbound} and we used the notation $\mathbbm{1}(\alpha\in(1,2])$ to denote the case \eqref{ass_xi_poly} with $\alpha\in(1,2]$ in Assumption \ref{ass_xi}.  Assumption \ref{ass_xi} and basic observations that
\begin{equation*}
    \mathbb{E}\big[\frac{1}{p}\sum_{i}(\xi^2_i-\phi)\big]^2=\frac{1}{p^2}\sum_i\mathbb{E}(\xi^2_i-\phi)^2\lesssim\frac{2}{p}\sum_{i}\mathbb{E}\xi^2_i\times\mathrm{O}(\frac{\mathsf{T}}{p}).
\end{equation*}

Parallel, we have the off-diagonal entries 
\begin{equation*}
    \mathbf{e}_i^{\prime}UD^2U^{\prime}\mathbf{e}_j=\mathrm{O}_{\mathbb{P}}\Big(\frac{1}{\sqrt{n}}+\sqrt{\frac{\mathsf{T}}{n}}\cdot\mathbbm{1}(\alpha\in(1,2])\Big),\quad 1\le i\neq j\le m.
\end{equation*}
Then, define the error $\mathsf{r}_n:=\frac{1}{\sqrt{n}}+\sqrt{\frac{\mathsf{T}}{n}}\cdot\mathbbm{1}(\alpha\in(1,2])$. The matrix $\mathscr{H}(x)$ can be written as 
\begin{equation*}
    \begin{pmatrix}
        \dfrac{x}{\sigma_1}-1+\mathrm{O}_{\mathbb{P}}(\mathsf{r}_n)\quad &\mathrm{O}_{\mathbb{P}}(\mathsf{r}_n)\quad &\dots \quad &\mathrm{O}_{\mathbb{P}}(\mathsf{r}_n)\\
        \mathrm{O}_{\mathbb{P}}(\mathsf{r}_n)\quad &\dfrac{x}{\sigma_2}-1+\mathrm{O}_{\mathbb{P}}(\mathsf{r}_n)\quad &\dots \quad &\mathrm{O}_{\mathbb{P}}(\mathsf{r}_n)\\
        \vdots\quad &\vdots \quad &\ddots \quad &\vdots\\
        \mathrm{O}_{\mathbb{P}}(\mathsf{r}_n)\quad &\mathrm{O}_{\mathbb{P}}(\mathsf{r}_n)\quad &\cdots \quad &\dfrac{x}{\sigma_m}-1+\mathrm{O}_{\mathbb{P}}(\mathsf{r}_n)
    \end{pmatrix}.
\end{equation*}
Let $x=\sigma_1(1-c)$ for some small $c>0$. Since $\sigma_1/\sigma_i>1$ for $i=2,\dots, m$, we find that 
\begin{equation*}
    \frac{x}{\sigma_i}=\frac{x}{\sigma_1}\cdot\frac{\sigma_1}{\sigma_i}>1,
\end{equation*}
for sufficiently small $c>0$. This implies that $\operatorname{det}[\mathscr{H}(x)]<0$ with probability tending to one. On the other hand, if $x=\sigma_1(1+c)$ for some sufficiently small $c>0$, we can also conclude that $\operatorname{det}[\mathscr{H}(x)]>0$ with probability tending to one. Therefore, there must be an eigenvalue in $[\sigma_1(1-c),\sigma_1(1+c)]$, which is $\lambda_1(S_1)$ (or $\lambda_1(\mathcal{S}_1)$). Further, by Leibniz's formula for determinant, we have
\begin{equation*}
    \operatorname{det}\big[\mathscr{H}\big(\lambda_1(S_1)\big)\big]=\frac{\lambda_1(\mathcal{S}_1)}{\sigma_1}-1+\mathrm{O}_{\mathbb{P}}\Big(\frac{1}{\sqrt{n}}+\sqrt{\frac{\mathsf{T}}{n}}\cdot\mathbbm{1}(\alpha\in(1,2])\Big)=0.
\end{equation*}
That is, $\lambda_1(\mathcal{S}_1)/\sigma_1=1+\mathrm{O}_{\mathbb{P}}\Big(\frac{1}{\sqrt{n}}+\sqrt{\frac{\mathsf{T}}{n}}\cdot\mathbbm{1}(\alpha\in(1,2])\Big)$. Similarly, we can conclude that $\lambda_i(\mathcal{S}_1)/\sigma_i=1+\mathrm{O}_{\mathbb{P}}\Big(\frac{1}{\sqrt{n}}+\sqrt{\frac{\mathsf{T}}{n}}\cdot\mathbbm{1}(\alpha\in(1,2])\Big)$ for any $2\le i\le m$. Combined with \eqref{eq_prf_realsignals_preratio_1}, we conclude this theorem.

\subsubsection{Proof of Lemma \ref{lem_realsignal_secondratio1}}
We start with $\theta_i/\sigma_i$. By definition, for $1\le i\le m$,
\begin{equation*}
    \frac{\theta_i}{\sigma_i}-1=\frac{\theta_i}{\sigma_i}\times\frac{1}{p\theta_i}\sum_{k=1}^{p-m}\frac{\sigma_{m+k}}{1-\sigma_i^{-1}\sigma_{m+k}}=\mathrm{O}(\frac{\operatorname{tr}\Sigma_2}{p\sigma_i}).
\end{equation*}

Next, the definition of $\zeta_i$ and Lemma \ref{lem_goodconfiguration} give the existence and uniqueness of $\zeta_i$ in $[p^{-1}\operatorname{tr}D^2,2p^{-1}\operatorname{tr}D^2]$ by mean value theorem for all $1\le i\le m$. Moreover, by definition of $\theta_i/\sigma_i$
\begin{align*}
        \zeta_i-\frac{\theta_i}{\sigma_i}&=\frac{1}{p}\sum_{j=1}^n\frac{\theta_i}{\sigma_i}\frac{\xi_j^2-\phi-\frac{\xi^2_j}{p\theta_i}\sum_{k=1}^{p-m}\big(\frac{\sigma_{m+k}}{1-\theta_i^{-1}\sigma_{m+k}\zeta_i}-\frac{\sigma_{m+k}}{1-\sigma_i^{-1}\sigma_{m+k}}\big)}{1-\frac{\xi^2_j}{p\theta_i}\sum_{k=1}^{p-m}\frac{\sigma_{m+k}}{1-\theta_i^{-1}\sigma_{m+k}\zeta_i}}\\
        &=\frac{1}{p}\sum_{j=1}^n\frac{\theta_i}{\sigma_i}\frac{\xi^2_j-\phi}{1-\frac{\xi^2_j}{p\theta_i}\sum_{k=1}^{p-m}\frac{\sigma_{m+k}}{1-\theta_i^{-1}\sigma_{m+k}\zeta_i}}\\
        &-\frac{\zeta_i-\theta_i/\sigma_i}{\theta_i}\times\frac{1}{p}\sum_{j=1}^n\frac{\theta_i}{\sigma_i}\frac{\frac{\xi^2_j}{p\theta_i}\sum_{k=1}^{p-m}\big(\frac{\sigma_{m+k}^2}{(1-\theta_i^{-1}\sigma_{m+k}\zeta_i)(1-\sigma_i^{-1}\sigma_{m+k})}\big)}{1-\frac{\xi^2_j}{p\theta_i}\sum_{k=1}^{p-m}\frac{\sigma_{m+k}}{1-\theta_i^{-1}\sigma_{m+k}\zeta_i}}\\
        &=\frac{\theta_i}{\sigma_i}\times\frac{1}{p}\sum_{j=1}^n\frac{\xi^2_j-\phi}{1-\frac{\xi^2_j}{p\theta_i}\sum_{k=1}^{p-m}\frac{\sigma_{m+k}}{1-\theta_i^{-1}\sigma_{m+k}\zeta_i}}-\frac{\zeta_i-\theta_i/\sigma_i}{\theta_i}\times\mathrm{O}_{\mathbb{P}}(\theta_i^{-1}).
\end{align*}
We need to concern the first term. Notice that 
\begin{equation*}
    \frac{1}{p\theta_i}\sum_{k=1}^{p-m}\frac{\sigma_{m+k}}{1-\theta_i^{-1}\sigma_{m+k}\zeta_i}=\frac{\operatorname{tr}\Sigma_2}{p\theta_i}\times(1+\mathrm{o}(1)).
\end{equation*}
Using Taylor's expansion for the function $f(x)=(1-x)^{-1}$,
\begin{align}\label{eq_prf_realsignals_secondratio1_1}
        &\frac{1}{p}\sum_{j=1}^n\frac{\xi^2_j-\phi}{1-\frac{\xi^2_j}{p\theta_i}\sum_{k=1}^{p-m}\frac{\sigma_{m+k}}{1-\theta_i^{-1}\sigma_{m+k}\zeta_i}}\nonumber\\
        &=\frac{1}{p}\sum_{j=1}^n(\xi^2_j-\phi)+\frac{1}{p}\sum_{j=1}^n\xi^2_j(\xi^2_j-\phi)\times\big(\frac{\operatorname{tr}\Sigma_2}{p\theta_i}(1+\mathrm{o}(1))\big)\nonumber\\
        &+\frac{1}{p}\sum_{j=1}^n\xi^4_j(\xi^2_j-\phi)\times\big(\frac{\operatorname{tr}\Sigma_2}{p\theta_i}(1+\mathrm{o}(1))\big)^2+\frac{1}{p}\sum_{j=1}^n\xi^6_j(\xi^2_j-\phi)\times (\frac{\operatorname{tr}\Sigma_2}{p\theta_i})^3(1+\mathrm{o}(1))\nonumber\\
        &=\frac{1}{p}\sum_{j=1}^n(\xi^2_j-\phi)\times(1+\mathrm{O}(\frac{1}{\theta_i}))+\frac{1}{p}\sum_{j=1}^n(\xi^2_j-\phi)^2\times\frac{\operatorname{tr}\Sigma_2}{p\theta_i}\times(1+\mathrm{o}(1))\nonumber\\
        &+\frac{1}{p}\sum_{j=1}^n\xi^2_j(\xi^2_j-\phi)^2(\frac{\operatorname{tr}\Sigma_2}{p\theta_i})^2(1+\mathrm{o}(1))+\frac{1}{p}\sum_j\xi^4_j(\xi^2_j-\phi)^2(\frac{\operatorname{tr}\Sigma_2}{p\theta_i})^3(1+\mathrm{o}(1))\nonumber\\
        &:=I_0+I_1+I_2+I_3,
\end{align}
where in the second step, we used that fact that, under event $\Omega$, $\xi^2_j/\theta_i=\mathrm{o}(1)$ from Lemma \ref{lem_goodconfiguration}, the first statement of Lemma \ref{lem_realsignal_secondratio1} and Assumption \ref{ass_sigma}. 
Moreover, we see that 
\begin{align*}
    I_1&=\frac{1}{p}\sum_{j=1}^n(\xi^2_j-\phi)^2\times\frac{\operatorname{tr}\Sigma_2}{p\theta_i}\times(1+\mathrm{o}(1))\\
    &=\frac{\bar{\sigma}}{\theta_ip}\sum_{j=1}^n\big((\xi^2_j-\phi)^2-\mathbb{E}(\xi^2_j-\phi)^2\big)\times(1+\mathrm{o}(1))+\frac{\bar{\sigma}}{\theta_ip}\sum_{j=1}^n\mathbb{E}(\xi^2_j-\phi)^2\times(1+\mathrm{o}(1))\\
    &=\frac{\bar{\sigma}}{\theta_ip}\sum_{j=1}^n\mathbb{E}(\xi^2_j-\phi)^2\times(1+\mathrm{o}(1))+\mathrm{o}_{\mathbb{P}}\Big(\frac{1}{\sqrt{n}}+\sqrt{\frac{\mathsf{T}}{n}}\cdot\mathbbm{1}(\alpha\in(1,2])\Big),
\end{align*}
where the last error bound comes from the strong law of large numbers for $p^{-1}\sum_{j=1}^n\big((\xi^2_j-\phi)^2-\mathbb{E}(\xi^2_1-\phi)^2\big)$. Moreover, under event $\Omega$, we have that
\begin{align*}
        &\mathbb{E}\Big[\frac{\bar{\sigma}}{\theta_ip}\sum_{j=1}^n\big((\xi^2_j-\phi)^2-\mathbb{E}(\xi^2_j-\phi)^2\big)\Big]^2=\frac{\bar{\sigma}^2}{\theta_i^2p^2}\sum_{j=1}^n\mathbb{E}\Big[(\xi^2_j-\phi)^2-\mathbb{E}(\xi^2_j-\phi)^2\big)\Big]^2\\
        &=\frac{\bar{\sigma}^2}{\theta_i^2p^2}\sum_{j=1}^n\mathbb{E}\big[(\xi^2_j-\phi)^4\big]-\frac{\bar{\sigma}^2}{\theta_i^2p}\big(\mathbb{E}(\xi^2_1-\phi)^2\big)^2\\
        &=\frac{\bar{\sigma}^2}{p^2}\sum_{j=1}^n\mathbb{E}\big[(\xi^2_j-\phi)^2\big]\times\mathrm{O}(\frac{\xi^4_1}{\theta_i^2})-\frac{\bar{\sigma}^2}{\theta_i^2p}\big(\mathbb{E}(\xi^2_1-\phi)^2\big)^2,
\end{align*}
where we used the fact that $\theta_i\gg\xi^2_{(1)}$. Notice that for $\alpha\in(2,\infty)$ in Assumption \ref{ass_xi}, we have from the above calculation that
\begin{align*}
        \mathbb{E}\Big[\frac{\bar{\sigma}}{\theta_ip}\sum_{j=1}^n\big((\xi^2_j-\phi)^2-\mathbb{E}(\xi^2_j-\phi)^2\big)\Big]^2=\mathrm{o}(\frac{1}{n}),
\end{align*}
since $\mathbb{E}(\xi^2_1-\phi)^2<\infty$. On the other hand, in case that $\alpha\in(1,2]$, due to $\xi^2_{(1)}<\mathsf{T}$ under $\Omega$, we have 
\begin{align*}
        &\mathbb{E}\Big[\frac{\bar{\sigma}}{\theta_ip}\sum_{j=1}^n\big((\xi^2_j-\phi)^2-\mathbb{E}(\xi^2_j-\phi)^2\big)\Big]^2\le\frac{\bar{\sigma}^2}{p^2}\sum_{j=1}^n\mathbb{E}\big[(\xi^2_j-\phi)^2\big]\mathrm{o}(1)-\frac{\bar{\sigma}^2}{p}\mathbb{E}(\xi^2_1-\phi)^2\mathrm{o}(1)\\
        &\lesssim\frac{\bar{\sigma}^2}{p}\sum_{j=1}^n\mathbb{E}\xi_j^2\frac{\mathsf{T}}{p}\mathrm{o}(1)-\mathrm{o}(\frac{\mathsf{T}}{p})\lesssim\mathrm{o}(\frac{\mathsf{T}}{n}).
\end{align*}
Then, we can conclude that 
\begin{equation*}
    p^{-1}\sum_{j=1}^n\big((\xi^2_j-\phi)^2-\mathbb{E}(\xi^2_1-\phi)^2\big)=\mathrm{o}_{\mathbb{P}}\Big(\frac{1}{\sqrt{n}}+\sqrt{\frac{\mathsf{T}}{n}}\cdot\mathbbm{1}(\alpha\in(1,2])\Big).
\end{equation*}

 For $I_2$, we observe that 
\begin{align*}
        I_2&=\frac{1}{p}\sum_{j=1}^n\xi^2_j(\xi^2_j-\phi)^2(\frac{\operatorname{tr}\Sigma_2}{p\theta_i})^2(1+\mathrm{o}(1))\\
        &=\frac{\bar{\sigma}^2}{\theta_i^2p}\sum_{j=1}^n(\xi^2_j-\phi)^3(1+\mathrm{o}(1))+\frac{\phi\bar{\sigma}^2}{\theta_i^2p}\sum_{j=1}^n(\xi^2_j-\phi)^2(1+\mathrm{o}(1))\\
        &=\frac{\bar{\sigma}^2}{\theta_i^2p}\sum_{j=1}^n\big[(\xi^2_j-\phi)^3-\mathbb{E}(\xi^2_j-\phi)^3\big](1+\mathrm{o}(1))+\frac{\bar{\sigma}^2}{\theta_i^2p}\sum_{j=1}^n\mathbb{E}(\xi^2_j-\phi)^3(1+\mathrm{o}(1))\\
        &+\frac{\phi\bar{\sigma}^2}{\theta_i^2p}\sum_{j=1}^n(\xi^2_j-\phi)^2(1+\mathrm{o}(1)):=I_{21}+I_{22}+I_{23}.
\end{align*}
For $I_{21}$, we observe that $\mathbb{E}[I_{21}]=0$ with 
\begin{align*}
        \frac{1}{p\theta_i^4}\mathbb{E}(\xi^2_j-\phi)^6\lesssim\mathrm{o}(\frac{1}{p})\mathbb{E}(\xi^2_j-\phi)^2=\mathrm{o}\Big(\frac{1}{\sqrt{n}}+\sqrt{\frac{\mathsf{T}}{n}}\cdot\mathbbm{1}(\alpha\in(1,2])\Big)
\end{align*}
by the fact that $\xi^2_j/\theta_i=\mathrm{o}(1)$ and repeating the argument in $I_1$. Then the strong law of large numbers implies that $I_{21}=\mathrm{o}_{\mathbb{P}}(\frac{1}{\sqrt{n}}+\sqrt{\frac{\mathsf{T}}{n}}\cdot\mathbbm{1}(\alpha\in(1,2]))$. As for $I_{22}$, one may easily find that 
\begin{equation*}
    I_{22}=\mathrm{o}(1)\times\frac{\bar{\sigma}\phi}{\theta_ip}\sum_{j=1}^n\mathbb{E}(\xi^2_j-\phi)^2(1+\mathrm{o}(1)),
\end{equation*}
which is much smaller than the leading term in expansion of $I_1$. Furthermore, one may repeat the parallel argument in $I_1$ for $I_{23}$ and conclude that $I_{23}$ is much smaller than $I_1$.

Similar argument can be applied to $I_3$, which will indicate that $I_3$ is a high order smaller term than $I_1$. We omit further details here. Then, \eqref{eq_prf_realsignals_secondratio1_1} reads 
\begin{align*}
    &\frac{1}{p}\sum_{j=1}^n\frac{\xi^2_j-\phi}{1-\frac{\xi^2_j}{p\theta_i}\sum_{k=1}^{p-m}\frac{\sigma_{m+k}}{1-\theta_i^{-1}\sigma_{m+k}\zeta_i}}\\
    &=\frac{1}{p}\sum_{j=1}^n(\xi^2_j-\phi)(1+\mathrm{o}(1))+\frac{\bar{\sigma} }{\theta_i\phi}\mathbb{E}(\xi^2_1-\phi)^2(1+\mathrm{o}(1))+\mathrm{o}_{\mathbb{P}}\Big(\frac{1}{\sqrt{n}}+\sqrt{\frac{\mathsf{T}}{n}}\cdot\mathbbm{1}(\alpha\in(1,2])\Big).
\end{align*}
Plugging this estimation into the expression of $\zeta_i-\theta_i/\sigma_i$, and deploying the estimation $\theta_i/\sigma_i=1+\mathrm{o}(1)$, we conclude that 
\begin{align*}
    &\big(\zeta_i-\frac{\theta_i}{\sigma_i}\big)(1+\mathrm{o}_{\mathbb{P}}(1))\\
    =&\frac{1}{p}\sum_{j=1}^n(\xi^2_j-\phi)(1+\mathrm{o}(1))+\frac{\bar{\sigma}}{\theta_i\phi}\mathbb{E}(\xi^2_1-\phi)^2(1+\mathrm{o}(1))+\mathrm{o}_{\mathbb{P}}\Big(\frac{1}{\sqrt{n}}+\sqrt{\frac{\mathsf{T}}{n}}\cdot\mathbbm{1}(\alpha\in(1,2])\Big).
\end{align*}
We can conclude this lemma from the above expression.

\subsubsection{Proof of Theorem \ref{thm_realsignal_secondratio2}}
In this subsection, we show the limiting representation of $\lambda_i/\theta_i$. To shorten the length of the proof, we only discuss the detail for $\lambda_1$ while other cases can be handled similarly. The proof technique is similar to \cite{DXYZspiked,yu2024testing}, which can be regarded as special cases from the current scenario. By definition, $\lambda_1$ should satisfy 
\begin{equation*}
    \operatorname{det}\big(\lambda_1I-DU^{\prime}(\Sigma_1+\Sigma_2)UD\big)=0.
\end{equation*}
From Theorem \ref{lem_realsignal_preratio}, we know that $\lambda_1$ is not an eigenvalue of $\mathcal{S}_2$ with probability tending to one. Then, $\operatorname{det}(\lambda_1I-\mathcal{S}_2)\neq 0$ with probability tending to one. It follows that we can rewrite the determinant as 
\begin{equation}\label{eq_prf_realsignals_secondratio2_1}
    \operatorname{det}\big(\Lambda_1^{-1}-\Gamma_1^{\prime}UD(\lambda_1I-\mathcal{S}_2)^{-1}DU^{\prime}\Gamma_1\big)=0.
\end{equation}
We first change $(\lambda_1I-\mathcal{S}_2)^{-1}$ into $(\theta_iI-\mathcal{S}_2)^{-1}$ via matrix inverse formula
\begin{equation}\label{eq_prf_realsignals_secondratio2_matrixinverseformula}
    A^{-1}=B^{-1}-B^{-1}(A-B)A^{-1},
\end{equation}
for any inversible matrices $A,B$ with the same size. Denote $\Delta_i:=(\lambda_i-\theta_i)/\theta_i$, $\mathscr{K}(x):=(I-x^{-1}\mathcal{S}_2)^{-1}$ and 
\begin{equation*}
    \mathscr{M}(\theta_1):=\theta_1\Lambda_1^{-1}-\Gamma_1^{\prime}UD\mathscr{K}(\theta_1)DU^{\prime}\Gamma_1+\Delta_1\Gamma_1^{\prime}UD\mathscr{K}(\lambda_1)\mathscr{K}(\theta_1)DU^{\prime}\Gamma_1.
\end{equation*}
Then \eqref{eq_prf_realsignals_secondratio2_1} is equivalent to $\operatorname{det}(\mathscr{M}(\theta_1))=0$. To proceed, we need the following two lemmas,
\begin{lemma}\label{lem_prf_realsignals_secondratio2_1}
    Under Assumptions \ref{ass_sigma}, \ref{ass_xi} and event $\Omega$, it holds that for $1\le i\le m$,
    \begin{align*}
        &\big\|\Gamma^{\prime}_1UD\mathscr{K}(\theta_i)DU^{\prime}\Gamma_1-\big(\Gamma^{\prime}_1UD^2U^{\prime}\Gamma_1-(\frac{1}{p}\operatorname{tr}D^2)I_{m\times m}\big)-\zeta_iI_{m\times m}\big\|_{\infty}\\
        &=\mathrm{O}_{\mathbb{P}}\Big(\frac{1}{\sqrt{n}}\times\frac{\mathsf{T}}{\theta_i}+\frac{1}{n}+\frac{\mathsf{T}}{n}\times\mathbbm{1}(\alpha\in(1,2])\Big).
    \end{align*}
\end{lemma}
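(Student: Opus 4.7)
The plan is to use a one-step resolvent expansion to isolate the correction matrix, exploit a hidden conditional-independence structure to compute its first two moments, and then match the resulting trace to $\zeta_i$ via the self-consistent fixed-point equation \eqref{eq_def_zetai}. First, applying the resolvent identity $\mathscr{K}(\theta_i) = I + \theta_i^{-1}\mathcal{S}_2\mathscr{K}(\theta_i)$ gives the exact decomposition
$$\Gamma_1^{\prime}UD\mathscr{K}(\theta_i)DU^{\prime}\Gamma_1 - \Gamma_1^{\prime}UD^2U^{\prime}\Gamma_1 = \Gamma_1^{\prime}UMU^{\prime}\Gamma_1, \quad M := \theta_i^{-1}D\mathcal{S}_2\mathscr{K}(\theta_i)D = D\mathscr{K}(\theta_i)D - D^2,$$
so the lemma reduces to showing $\Gamma_1^{\prime}UMU^{\prime}\Gamma_1 \approx (\zeta_i - p^{-1}\operatorname{tr}D^2)I_m$ entrywise. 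The crucial structural fact is that $\Sigma_2$ annihilates $\Gamma_1$ (by the diagonal, spiked structure in Assumption \ref{ass_sigma}), so $\mathcal{S}_2 = DU^{\prime}\Sigma_2 UD$ and hence $\mathscr{K}(\theta_i)$ depend on $U$ only through rows $m+1,\ldots,p$; in particular, $M$ is independent of the first $m$ rows $\Gamma_1^{\prime}U$.

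Next, I would condition on the lower block $U_{m+1:p,\cdot}$ (equivalently on $M$) and compute the entries of $\Gamma_1^{\prime}UMU^{\prime}\Gamma_1$. Conditionally, each column of $U$ has its top $m$-block uniform on a sphere of radius $r_k = \sqrt{1 - \sum_{a>m}u_{ak}^{2}}$ in $\mathbb{R}^m$, with $r_k^{2}/m = 1/p + \mathrm{O}_{\mathbb{P}}(1/p)$ uniformly in $k$. Spherical symmetry kills the conditional mean of off-diagonal entries, and the diagonal entries satisfy
$$\mathbb{E}\bigl[\mathbf{e}_i^{\prime}UMU^{\prime}\mathbf{e}_i \mid U_{m+1:p,\cdot}\bigr] = \sum_k M_{kk}\cdot \frac{r_k^{2}}{m} = \frac{\operatorname{tr} M}{p} + \mathrm{O}_{\mathbb{P}}(1/n).$$
Independence across columns (conditionally) and the Gaussian representation $\mathbf{u}_k = \mathbf{z}_k/\|\mathbf{z}_k\|$ then allow Lemma \ref{lem_largedeviationbound}(2) to furnish the fluctuation bound $\mathbf{e}_i^{\prime}UMU^{\prime}\mathbf{e}_j - \delta_{ij}\operatorname{tr}(M)/p \prec \|M\|_F/p$. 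On $\Omega$ one has $\|M\|_F \lesssim \theta_i^{-1}\mathsf{T}\sqrt{n}$ to leading order, with an extra $\sqrt{\mathsf{T}}$ factor when $\alpha\in(1,2]$ coming from $\sum_k \xi_k^{4}\lesssim \mathsf{T}n$; these produce the $\mathsf{T}/(\theta_i\sqrt{n})$ and $\mathsf{T}/n$ pieces of the error, while the $1/n$ piece tracks the $r_k^{2}/m - 1/p$ correction.

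The final step is the identity $p^{-1}\operatorname{tr}M = p^{-1}\operatorname{tr}(D\mathscr{K}(\theta_i)D) - p^{-1}\operatorname{tr}(D^2)$, which reduces the lemma to the deterministic equivalent $p^{-1}\operatorname{tr}(D\mathscr{K}(\theta_i)D) = p^{-1}\sum_k \xi_k^{2}\mathscr{K}(\theta_i)_{kk} \approx \zeta_i$. Here I would use a leave-one-out / Sherman--Morrison argument on $\mathcal{S}_2 = \sum_k \xi_k^{2}\mathbf{v}_k\mathbf{v}_k^{\prime}$ (with $\mathbf{v}_k = \Sigma_2^{1/2}\mathbf{u}_k$) to obtain
$$\mathscr{K}(\theta_i)_{kk} = \Bigl(1 - \theta_i^{-1}\xi_k^{2}\mathbf{v}_k^{\prime}(I - \theta_i^{-1}\mathcal{S}_2^{(k)})^{-1}\mathbf{v}_k\Bigr)^{-1},$$
where $\mathcal{S}_2^{(k)}$ is the rank-one deletion. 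Concentrating the inner quadratic form against its trace via Lemma \ref{lem_largedeviationbound}(2) on the $\mathbf{u}_k$-side, and performing a second leave-one-out on the $p$-dimensional resolvent, produces precisely the denominator $p^{-1}\sum_{\ell}\sigma_{m+\ell}/(1-\theta_i^{-1}\sigma_{m+\ell}\zeta_i)$ appearing in \eqref{eq_def_zetai}. Existence, uniqueness, and stability of the resulting fixed point, guaranteed by Lemma \ref{lem_realsignal_secondratio1} on the event $\Omega$, close the argument at the required rate.

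The main obstacle is this last step: producing the local-law-type deterministic equivalent $p^{-1}\operatorname{tr}(D\mathscr{K}(\theta_i)D) \to \zeta_i$ with the advertised rate in the heavy-tail regime. The classical Marchenko--Pastur local-law machinery does not apply verbatim, since the Sherman--Morrison quadratic forms carry unbounded $\xi_k^{2}$ factors; restriction to the configuration event $\Omega$ of Lemma \ref{lem_goodconfiguration} (on which $\max_k\xi_k^{2}\lesssim \mathsf{T}$) is what tames them, and stability of the self-consistent equation then has to be checked uniformly in the heavy-tail index $\alpha$. Once this deterministic equivalent is in hand, the entrywise bound established above delivers the stated conclusion.
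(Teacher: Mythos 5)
Your decomposition $\Gamma_1^{\prime}UD\mathscr{K}(\theta_i)DU^{\prime}\Gamma_1-\Gamma_1^{\prime}UD^2U^{\prime}\Gamma_1=\Gamma_1^{\prime}UMU^{\prime}\Gamma_1$ with $M=\theta_i^{-1}D\mathcal{S}_2\mathscr{K}(\theta_i)D$, combined with conditioning on the rows $m+1,\dots,p$ of $U$, is a genuinely different route from the paper's for the fluctuation part. The paper instead runs a column-by-column martingale decomposition $\sum_j(\mathbb{E}_j-\mathbb{E}_{j-1})$ with leave-one-out resolvent expansions inside each increment (Step 1), then a Lindeberg swap of the spherical columns to Gaussian columns (Step 2) precisely so that $\mathbf{e}_i^{\prime}W$ becomes exactly independent of $\mathscr{K}^w(\theta_i)$ when computing the expectation. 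Your observation that $\Sigma_2\Gamma_1=0$ already decouples $M$ from the top $m$-block of $U$ \emph{conditionally} is the structural fact that lets you skip both the martingale bookkeeping and the Gaussian swap, which is an attractive simplification. Two caveats on this part: the flat claim that $M$ is independent of $\Gamma_1^{\prime}U$ is false (the unit-norm constraint couples the blocks), though your subsequent conditioning with radius $r_k$ repairs it — but then Lemma \ref{lem_largedeviationbound} does not apply verbatim, since the $i$-th row of $U$ is not a spherical vector and its entries have unequal conditional variances $r_k^2/m$; you need a conditional Hanson--Wright variant. Also your bound $\|M\|_F\lesssim\theta_i^{-1}\mathsf{T}\sqrt{n}$ is not right: the diagonal $M_{kk}\asymp\theta_i^{-1}\xi_k^4\bar\sigma$ contributes $\theta_i^{-1}(\sum_k\xi_k^8)^{1/2}$, which for polynomial tails is of order $\theta_i^{-1}\mathsf{T}^2$; the resulting error $\theta_i^{-1}\mathsf{T}^2/n$ still fits inside the stated budget for all admissible $\alpha$, but only after redoing this computation correctly.

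The genuine gap is the final step. The lemma's content is not just that $\mathbf{e}_i^{\prime}UMU^{\prime}\mathbf{e}_i$ concentrates around $p^{-1}\operatorname{tr}M$, but that $p^{-1}\operatorname{tr}\big(D\mathscr{K}(\theta_i)D\big)$ equals the specific random quantity $\zeta_i$ solving \eqref{eq_def_zetai} up to an error $\mathrm{o}(n^{-1/2})$ — note $\zeta_i-\theta_i/\sigma_i$ itself carries the $\mathrm{O}_{\mathbb{P}}(n^{-1/2})$ fluctuation $p^{-1}\sum_j(\xi_j^2-\phi)$, so the trace equivalent must be accurate to a smaller order than that fluctuation. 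This is where the paper spends essentially all of Steps 2 and 3: the Schur-complement identity $[\mathscr{K}^w]_{jj}=1+\theta_i^{-1}\xi_j^2\mathbf{w}_j^{\prime}\Sigma_2^{1/2}\mathcal{K}^w\Sigma_2^{1/2}\mathbf{w}_j$, the leave-one-out concentration of the quadratic forms with the unbounded weights $\xi_j^2$ tamed by $\Omega$, the martingale bound on $\operatorname{tr}(\Sigma_2\mathcal{K}^w)-\mathbb{E}\operatorname{tr}(\Sigma_2\mathcal{K}^w)$, and the stability of the coupled fixed-point system. Your proposal names the right tools (Sherman--Morrison, restriction to $\Omega$, stability of the self-consistent equation) but defers all of the quantitative work, and this is the majority of the proof rather than a closing formality. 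As written, the proposal is a credible and partly novel strategy, not a complete argument.
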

\begin{lemma}\label{lem_prf_realsignals_secondratio2_2}
    Under Assumptions \ref{ass_sigma}, \ref{ass_xi} and event $\Omega$, it holds that for $1\le i\le m$,
    \begin{equation*}
        \|\Delta_i\Gamma_1^{\prime}UD\mathscr{K}(\lambda_i)\mathscr{K}(\theta_i)DU^{\prime}\Gamma_1-\Delta_iI_{m\times m}\|_{\infty}=\mathrm{o}_{\mathbb{P}}(\Delta_i).
    \end{equation*}    
\end{lemma}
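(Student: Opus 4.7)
Since $\Delta_i$ is nonzero with probability tending to one and satisfies $\Delta_i = \orderP(1)$ by Lemma \ref{lem_realsignal_preratio}, the stated bound is equivalent to
\begin{equation*}
\|A_i - I_{m\times m}\|_{\infty} = \orderP(1),\qquad A_i:=\Gamma_1^{\prime}UD\mathscr{K}(\lambda_i)\mathscr{K}(\theta_i)DU^{\prime}\Gamma_1.
\end{equation*}
The plan is to apply the triangle inequality $\|A_i-I\|_{\infty}\le \|A_i-\Gamma_1^{\prime}UD^2U^{\prime}\Gamma_1\|_{\infty}+\|\Gamma_1^{\prime}UD^2U^{\prime}\Gamma_1-I\|_{\infty}$ and show each summand is $\orderP(1)$.

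For the second summand, observe that the $(i,j)$-entry of $\Gamma_1^{\prime}UD^2U^{\prime}\Gamma_1$ is the quadratic form $\mathbf{e}_i^{\prime}UD^2U^{\prime}\mathbf{e}_j=\sum_t u_{it}u_{jt}\xi_t^2$. By Lemma \ref{lem_largedeviationbound} the off-diagonals concentrate at $0$ via the near-orthogonality of distinct rows of $U$, while the diagonals concentrate around $\tfrac{1}{p}\operatorname{tr}D^2=1+\orderP(1)$ (as already recorded in the proof of Lemma \ref{lem_realsignal_preratio}). Since $m$ is fixed, this entrywise convergence also implies the key operator-norm estimate $\|\Gamma_1^{\prime}UD\|_{\mathrm{op}}^2=\|\Gamma_1^{\prime}UD^2U^{\prime}\Gamma_1\|_{\mathrm{op}}=\OrderP(1)$.

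For the first summand, expand the product into its Neumann series $\mathscr{K}(\lambda_i)\mathscr{K}(\theta_i)-I=\sum_{j\ge 1}b_j\mathcal{S}_2^j$ with coefficients satisfying $|b_j|\le(j+1)\min(\lambda_i,\theta_i)^{-j}$. Combining $\|\mathcal{S}_2\|_{\mathrm{op}}=\OrderP(\mathsf{T})$ with Assumption \ref{ass_sigma}(2)---which yields $\mathsf{T}/\theta_i=\orderP(1)$---the series converges in operator norm with
\begin{equation*}
\|\mathscr{K}(\lambda_i)\mathscr{K}(\theta_i)-I\|_{\mathrm{op}}=\OrderP(\mathsf{T}/\theta_i)=\orderP(1).
\end{equation*}
Sandwiching this estimate between the factor $\Gamma_1^{\prime}UD$ and its transpose---whose squared operator norm was just shown to be $\OrderP(1)$---yields $\|A_i-\Gamma_1^{\prime}UD^2U^{\prime}\Gamma_1\|_{\mathrm{op}}=\orderP(1)$, which dominates the entrywise $\infty$-norm.

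The main obstacle is to avoid the crude bound $\|\Gamma_1^{\prime}UD\|_{\mathrm{op}}^2\le\|U\|_{\mathrm{op}}^2\|D\|_{\mathrm{op}}^2=\OrderP(\mathsf{T})$, which would deliver only $\OrderP(\mathsf{T}^2/\theta_i)$---not necessarily $\mathrm{o}(1)$ under Assumption \ref{ass_sigma}(2). Replacing it by the sharper identity $\|\Gamma_1^{\prime}UD\|_{\mathrm{op}}^2=\|\Gamma_1^{\prime}UD^2U^{\prime}\Gamma_1\|_{\mathrm{op}}=\OrderP(1)$, which exploits the low-rank sandwich structure induced by the fixed-size projector $\Gamma_1$, is the single technical crux of the argument.
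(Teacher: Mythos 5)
Your proposal is correct, and it takes a genuinely different route from the paper. The paper first splits $\mathscr{K}(\lambda_i)\mathscr{K}(\theta_i)$ via the resolvent identity into $\mathscr{K}^2(\theta_i)-\Delta_i\mathscr{K}(\lambda_i)\mathscr{K}^2(\theta_i)$, disposes of the second piece using $\Delta_i=\OrderP(\mathsf{T}/\sigma_i)$, and then establishes $\Gamma_1^{\prime}UD\mathscr{K}^2(\theta_i)DU^{\prime}\Gamma_1-I_{m\times m}=\orderP(1)$ by re-running the entire martingale-decomposition and Lindeberg-replacement machinery of Lemma \ref{lem_prf_realsignals_secondratio2_1} with $\mathscr{K}^2$ in place of $\mathscr{K}$. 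You bypass all of that: a Neumann expansion of $\mathscr{K}(\lambda_i)\mathscr{K}(\theta_i)-I$ gives an operator-norm bound $\OrderP(\mathsf{T}/\theta_i)=\orderP(1)$ directly (valid with probability tending to one since $\|\mathcal{S}_2\|=\OrderP(\mathsf{T})$ and $\lambda_i,\theta_i\asymp\sigma_i\gg\mathsf{T}$), and you sandwich it with the sharp identity $\|\Gamma_1^{\prime}UD\|_{\mathrm{op}}^2=\|\Gamma_1^{\prime}UD^2U^{\prime}\Gamma_1\|_{\mathrm{op}}=\OrderP(1)$, which follows from the entrywise concentration of the fixed-size matrix $\Gamma_1^{\prime}UD^2U^{\prime}\Gamma_1$ already proved for Lemma \ref{lem_realsignal_preratio}. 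Your closing remark is well taken: the naive bound $\|\Gamma_1^{\prime}UD\|_{\mathrm{op}}^2\le\|D^2\|=\OrderP(\mathsf{T})$ would only yield $\OrderP(\mathsf{T}^2/\sigma_i)$, which Assumption \ref{ass_sigma}(2) does not control, and the same sharp bound is in fact needed (implicitly) to justify the paper's own claim that $\Delta_i\Gamma_1^{\prime}UD\mathscr{K}(\lambda_i)\mathscr{K}^2(\theta_i)DU^{\prime}\Gamma_1=\orderP(1)$. The trade-off is that the paper's heavier route yields a second-order characterization of $\Gamma_1^{\prime}UD\mathscr{K}^2(\theta_i)DU^{\prime}\Gamma_1$ as a by-product, whereas your argument delivers exactly the $\orderP(1)$ statement the lemma requires with far less machinery; since only $\orderP(1)$ is needed here, your approach is the more economical one.
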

The proof of Lemmas \ref{lem_prf_realsignals_secondratio2_1} and \ref{lem_prf_realsignals_secondratio2_2} will be postponed to subsection \ref{subsec_proof_realsignals_secondratio2_1} and \ref{subsec_proof_realsignals_secondratio2_2}. Now, using these results, the off-diagonal entries of $\mathscr{M}(\theta_1)$ can be bounded as $\mathrm{O}_{\mathbb{P}}(n^{-1/2})+\mathrm{o}_{\mathbb{P}}(\Delta_1)$.  The first diagonal entry of $\mathscr{M}(\theta_1)$ can be written as 
\begin{equation*}
    \frac{\theta_1}{\sigma_1}-\mathbf{e}_1^{\prime}UD^2U^{\prime}\mathbf{e}_1+\frac{1}{p}\operatorname{tr}D^2-\zeta_1+\mathrm{O}_{\mathbb{P}}\Big(\frac{1}{\sqrt{n}}\times\frac{\mathsf{T}}{\theta_i}+\frac{1}{n}+\frac{\mathsf{T}}{n}\times\mathbbm{1}(\alpha\in(1,2]))\Big)+\Delta_1(1+\mathrm{o}_{\mathbb{P}}(1)),
\end{equation*}
while for other diagonal entries, note that $\theta_1/\sigma_i-\theta_1/\sigma_1>0$  for $2\le i\le m$, therefore,
\begin{equation*}
    [\mathscr{M}(\theta_1)]_{ii}\ge c+\mathrm{o}_{\mathbb{P}}(1),\quad 2\le i\le m,
\end{equation*}
for some positive constant $c$. As a consequence, equation $\operatorname{det}\big(\mathscr{M}(\theta_1)\big)=0$, together with  Leibniz's formula for determinant and the fact that $m$ is a fixed integer, implies that  
\begin{equation*}
    \Delta_1(1+\mathrm{o}_{\mathbb{P}}(1))=\mathbf{e}_1^{\prime}UD^2U^{\prime}\mathbf{e}_1-\frac{1}{p}\operatorname{tr}D^2-\frac{\theta_1}{\sigma_1}+\zeta_1+\mathrm{O}_{\mathbb{P}}\Big(\frac{1}{\sqrt{n}}\times\frac{\mathsf{T}}{\theta_i}+\frac{1}{n}+\frac{\mathsf{T}}{n}\times\mathbbm{1}(\alpha\in(1,2]))\Big).
\end{equation*}
This concludes the proof of Theorem \ref{thm_realsignal_secondratio2}.

\subsubsection{Proof of Theorem \ref{thm_realsignal_clt}}\label{subsec_proof_realsignal_clt}
We may firstly observe from Theorem \ref{thm_realsignal_secondratio2} that if we set $\alpha\in(2,\infty)$, the error rate will improve to 
\begin{equation*}
    \frac{\lambda_i}{\theta_i}-1=\mathbf{e}_i^{\prime}UD^2U^{\prime}\mathbf{e}_i-\frac{1}{p}\operatorname{tr}D^2-\frac{\theta_i}{\sigma_i}+\zeta_i+\mathrm{O}_{\mathbb{P}}(\frac{\mathsf{T}}{\sqrt{n}\sigma_i}+\frac{1}{n}).
\end{equation*}
Then, conditional on the sample $\{\xi_1^2,\dots,\xi_n^2\}$, we have
\begin{gather*}
    \mathbb{E}\big(\mathbf{e}^{\prime}_iUD^2U^{\prime}\mathbf{e}_i-\frac{1}{p}\operatorname{tr}D^2\big)=0,\quad\mathbb{E}\big(\mathbf{e}^{\prime}_iUD^2U^{\prime}\mathbf{e}_i-\frac{1}{p}\operatorname{tr}D^2\big)^2=\frac{2}{p^2}\sum_i\xi_i^4.
\end{gather*}
By CLT for summation of i.i.d. random variables, we obtain  
\begin{equation*}
    \sqrt{n}\big(\mathbf{e}^{\prime}_iUD^2U^{\prime}\mathbf{e}_i-\frac{1}{p}\operatorname{tr}D^2\big)\overset{d}{\rightarrow}\mathrm{N}(0,\frac{2}{n}\sum_i\xi_i^4).
\end{equation*}
Next, for $\{\xi_1^2,\dots,\xi_n^2\}$, we observe that 
\begin{equation*}
    \mathbb{E}\big(\frac{1}{p}\sum_{j=1}^n(\xi_j^2-\phi)\big)=0,\quad \mathbb{E}\big(\frac{1}{p}\sum_{j=1}^n(\xi_j^2-\phi)\big)^2=\frac{1}{p^2}\sum_j\mathbb{E}\xi_j^4-\frac{1}{n}.
\end{equation*}
Again, by CLT, we have
\begin{equation*}
    \sqrt{n}\big(\frac{1}{p}\sum_{j=1}^n(\xi_j^2-\phi)\big)\overset{d}{\rightarrow}\mathrm{N}(0,\mathbb{E}\xi_1^4-1).
\end{equation*}
Then, using Slutsky's theorem with Theorem \ref{lem_realsignal_preratio} and Theorem \ref{thm_realsignal_secondratio2}, combining the above two procedures and noticing that $n^{-1}\operatorname{tr}D^2$ is exactly $n^{-1}\sum_j\xi_j^2$,

\subsubsection{Proof of Theorem \ref{thm_realsignal_extremeheavy}}
The proof is mostly similar to the one in Section \ref{subsec_proof_realsignal_clt}. We only need to notice that in the case of $\alpha\in(1,2]$, it holds conditional on sample $\{\xi^2_1,\dots,\xi^2_n\}$ that
\begin{equation*}
    \mathbb{E}(\mathbf{e}_i^{\prime}UD^2U^{\prime}\mathbf{e}_i-\frac{1}{p}\operatorname{tr}D^2)^2=\frac{2}{p^2}\sum_i\xi_i^4=\mathrm{O}(\frac{\xi^2_{(1)}}{p}\times \frac{2}{p}\sum_i\xi^2_i),
\end{equation*}
and 
\begin{equation*}
    \mathbb{E}\Big(\frac{1}{p}\sum_{j}(\xi^2_j-\phi)\Big)^2=\frac{1}{p^2}\sum_j\mathbb{E}\xi^4_j-\frac{1}{n}=\mathrm{O}(\frac{\mathsf{T}}{n}-\frac{1}{n}).
\end{equation*}
One can conclude this theorem by combining the above observations.

\subsubsection{Proof of Lemma \ref{lem_prf_realsignals_secondratio2_1}}\label{subsec_proof_realsignals_secondratio2_1}
Before we go to the details of the proof, we summarize some prior observation. First, recall that we have the global event $\Omega$ which gives a rough estimation of the typical order of $\{\xi^2_i\}$. Next, by \cite{Wen2021}, there exists a large constant $C>0$ such that $\mathbb{P}(\|U^{\prime}U\|>C)\le n^{-c}$, for any constant $c>0$. We define the high probability event $\Xi_0:=\{\|U^{\prime}U\|\le C(1+\sqrt{\phi})^2\}$. In the sequel, we always take $\Omega$ and $\Xi_0$ as given, since $1-\mathbb{P}(\Omega\bigcap\Xi_0)\le 1-\mathbb{P}(\Omega)+1-\mathbb{P}(\Xi_0)\le n^{-c^{\prime}}$, for any constant $0<c^{\prime}\le c$. So, such an argument only contributes negligible error.

\textbf{Step 1: Limiting representation.}\newline
In this part, we provide an asymptotic representation for $\mathbf{e}_i^{\prime}UD\mathscr{K}(\theta_i)DU^{\prime}\mathbf{e}_i$ for any $1\le i\le m$. To this end, we approximate $\mathbf{e}_i^{\prime}UD\mathscr{K}(\theta_i)DU^{\prime}\mathbf{e}_i$ by its conditional expectation $\mathbb{E}\big[\mathbf{e}_i^{\prime}UD\mathscr{K}(\theta_i)DU^{\prime}\mathbf{e}_i|D^2\big]$ under the events $\Omega,\Xi_0$. Define the conditional expectation $\mathbb{E}_j:=\mathbb{E}(\cdot|\mathbf{u}_1,\dots,\mathbf{u}_j,D^2),1\le j\le n$ and $U_j$ as the matrix replacing $U$'s $j$-th column with zero vector. Then 
\begin{align*}
        &\mathbf{e}_i^{\prime}UD\mathscr{K}(\theta_i)DU^{\prime}\mathbf{e}_i-\mathbb{E}[\mathbf{e}_i^{\prime}UD\mathscr{K}(\theta_i)DU^{\prime}\mathbf{e}_i]\\
        &=\sum_{j=1}^n(\mathbb{E}_j-\mathbb{E}_{j-1})\mathbf{e}_i^{\prime}U_jD\mathscr{K}(\theta_i)DU_j^{\prime}\mathbf{e}_i+\sum_{j=1}^n(\mathbb{E}_j-\mathbb{E}_{j-1})\mathbf{e}^{\prime}_iU_jD\mathscr{K}(\theta_i)D\mathbf{e}_j\mathbf{u}_j^{\prime}\mathbf{e}_i\\
        &+\sum_{j=1}^n(\mathbb{E}_j-\mathbb{E}_{j-1})\mathbf{e}_i^{\prime}\mathbf{u}_j\mathbf{e}_j^{\prime}D\mathscr{K}(\theta_i)DU_j^{\prime}\mathbf{e}_i+\sum_{j=1}^n(\mathbb{E}_j-\mathbb{E}_{j-1})\mathbf{e}_i^{\prime}\mathbf{u}_j\mathbf{e}_j^{\prime}D\mathscr{K}(\theta_i)D\mathbf{e}_j\mathbf{u}_j^{\prime}\mathbf{e}_i\\
        &:=I_1+I_2+I_3+I_4.
\end{align*}
We will show that the error is negligible if we replace $\mathscr{K}(\theta_i)$ with $I_{n\times n}$ in each $I_k,\;k=1,\dots,4$.

We first look at $I_1$. Note that we still have the prior bound $\|U_jU_j^{\prime}\|\le C(1+\sqrt{\phi})^2$ by the same reason as in $\Xi_0$. We decompose $I_1$ into,
\begin{equation}\label{eq_prf_realsignals_secondratio2_step1_decompositionI1}
    I_1=\sum_{j=1}^n(\mathbb{E}_j-\mathbb{E}_{j-1})\mathbf{e}_i^{\prime}U_jD[\mathscr{K}(\theta_i)-\mathscr{K}_j(\theta_i)]DU_j^{\prime}\mathbf{e}_i+\sum_{j=1}^n(\mathbb{E}_j-\mathbb{E}_{j-1})\mathbf{e}_i^{\prime}U_jD\mathscr{K}_j(\theta_i)DU_j^{\prime}\mathbf{e}_i,
\end{equation}
where $\mathscr{K}_j(\theta_i):=(I-\theta_i^{-1}DU_j^{\prime}\Sigma_2U_jD)^{-1}$. Notice that the second term is zero under the conditional expectation $(\mathbb{E}_j-\mathbb{E}_{j-1})$. We only need to concern with the first term. Rewrite 
\begin{align*}
    \mathcal{S}_2
    &=DU_j^{\prime}\Sigma_2U_jD+D\mathbf{e}_j\mathbf{u}_j^{\prime}\Sigma_2U_jD+DU_j^{\prime}\Sigma_2\mathbf{u}_j\mathbf{e}_j^{\prime}D+D\mathbf{e}_j\mathbf{u}_j^{\prime}\Sigma_2\mathbf{u}_j\mathbf{e}_j^{\prime}D\\
    &:=\mathcal{S}_{2j}+\mathcal{S}_{2r_1}+\mathcal{S}_{2r_2}+\mathcal{S}_{2r_3}.
\end{align*}
By matrix inverse formula, one has $\mathscr{K}(x)=\mathscr{K}_j(x)+x^{-1}\mathscr{K}(x)(\mathcal{S}_{2r_1}+\mathcal{S}_{2r_2}+\mathcal{S}_{2r_3})\mathscr{K}_j(x)$. Applying this relationship for the first term in \eqref{eq_prf_realsignals_secondratio2_step1_decompositionI1}, we have
\begin{align*}
        I_1&=\sum_{k=1}^3\frac{1}{\theta_i}\sum_{j=1}^n(\mathbb{E}_j-\mathbb{E}_{j-1})\mathbf{e}_i^{\prime}U_jD\mathscr{K}(\theta_i)\mathcal{S}_{2r_k}\mathscr{K}_j(\theta_i)DU_j^{\prime}\mathbf{e}_i:=\sum_{k=1}^3I_{1k}.
\end{align*}
Now, we consider $I_{1k}, k=1,2,3$ one by one. Notice that $D,\Sigma_2$ are diagonal matrices and the $j$-th column of $U_j$ is zero. Then the $j$-th row and column of $\mathcal{S}_{2j}$ are all equal to zero, which results that the $j$-th row and column of $\mathscr{K}_j(\theta_i)$ are equal to $\mathbf{e}_j$. Therefore, $\mathbf{e}_j^{\prime}D\mathscr{K}_j(\theta_i)DU^{\prime}_j\mathbf{e}_i=0$, which implies that $I_{12}=I_{13}=0$. It remains to calculate $I_{11}$. We rewrite $I_{11}$ as the following,
\begin{align*}
    I_{11}
    &=\frac{1}{\theta_i}\sum_{j=1}^n(\mathbb{E}_j-\mathbb{E}_{j-1})[\mathfrak{i}_{e_ie_j}\times\mathfrak{i}_{u_je_i}],
\end{align*}
where $\mathfrak{i}_{e_ie_j}=\mathbf{e}_i^{\prime}U_jD\mathscr{K}(\theta_i)D\mathbf{e}_j$ and $\mathfrak{i}_{u_je_i}=\mathbf{u}_j^{\prime}\Sigma_2U_jD\mathscr{K}_j(\theta_i)DU_j^{\prime}\mathbf{e}_i$. We aim to change $\mathscr{K}(\theta_i)$ with $\mathscr{K}_j(\theta_i)$ in $\mathfrak{i}_{e_ie_j}$. Before that, using the fact $\mathbf{e}_i^{\prime}U_jD\mathscr{K}(\theta_i)(I-\theta_i^{-1}\mathcal{S}_2)D\mathbf{e}_j=\mathbf{e}_i^{\prime}U_jD^2\mathbf{e}_j=0$, we obtain the relationship
\begin{align*}
   & \mathfrak{i}_{e_ie_j}=\frac{1}{\theta_i}\mathbf{e}_i^{\prime}U_jD\mathscr{K}(\theta_i)DU^{\prime}\Sigma_2UD^2\mathbf{e}_j\\
    =&\frac{\xi^2_j}{\theta_i}\big(\mathbf{e}_i^{\prime}U_jD\mathscr{K}(\theta_i)DU^{\prime}_j\Sigma_2\mathbf{u}_j+\mathbf{e}_i^{\prime}U_jD\mathscr{K}(\theta_i)D\mathbf{e}_j\mathbf{u}_j^{\prime}\Sigma_2\mathbf{u}_j\big)=\frac{\xi^2_j}{\theta_i}(\mathfrak{i}_{e_iu_j}+\mathfrak{i}_{e_ie_j}\mathbf{u}_j^{\prime}\Sigma_2\mathbf{u}_j),
\end{align*}
where we used the fact $U_j\mathbf{e}_j=0$ and $\mathfrak{i}_{e_iu_j}:=\mathbf{e}_i^{\prime}U_jD\mathscr{K}(\theta_i)DU_j\Sigma_2\mathbf{u}_j$. Besides, replacing $\mathscr{K}(\theta_i)$ with $\mathscr{K}_j(\theta_i)$ in $\mathfrak{i}_{e_iu_j}$, we have 
\begin{align*}       \mathfrak{i}_{e_iu_j}&=\mathbf{e}_i^{\prime}U_jD\mathscr{K}_j(\theta_i)DU^{\prime}_j\Sigma_2\mathbf{u}_j+\frac{1}{\theta_i}\mathfrak{i}_{e_ie_j}\mathbf{u}_j^{\prime}\Sigma_2U_jD\mathscr{K}_j(\theta_i)DU_j^{\prime}\Sigma_2\mathbf{u}_j:=\mathfrak{i}^{(j)}_{e_iu_j}+\frac{1}{\theta_i}\mathfrak{i}_{e_ie_j}\mathfrak{i}_{u_ju_j}.
\end{align*}
Therefore, we can calculate that 
\begin{equation*}
    \mathfrak{i}_{e_ie_j}=a_j^{-1}\times\frac{\xi^2_j}{\theta_i}\mathfrak{i}_{e_iu_j}^{(j)},\quad a_j=1-\frac{\xi^2_j}{\theta_i^2}\mathfrak{i}_{u_ju_j}-\frac{\xi^2_j}{\theta_i}\mathbf{u}_j^{\prime}\Sigma_2\mathbf{u}_j.
\end{equation*}
Since under $\Omega$ and $\Xi_0$, we have $\|\mathscr{K}_j(\theta_i)\|\le\mathrm{O}(1)$ and $|\mathfrak{i}_{u_ju_j}|\le C\|U_j^{\prime}\Sigma_2^2U_j\|\times\|D^2\|\times\|\mathscr{K}_j(\theta_i)\|\le\mathrm{O}(\xi^2_{(1)})$, $\mathbf{u}_j^{\prime}\Sigma_2\mathbf{u}_j\sim\Bar{\sigma}$. Then, $|a_j^{-1}|\le\mathrm{O}(1)$ uniformly over $1\le j\le n$. Using Burkholder's equality,
\begin{align*}
        \mathbb{E}|I_{11}|^2&\le C\sum_{j=1}^n\mathbb{E}\big|\frac{1}{\theta_i}\mathfrak{i}_{e_ie_j}\mathfrak{i}_{u_je_i}\big|^2
        \le \frac{C}{\theta_i^4}\sum_{j=1}^n\mathbb{E}\big|\xi^2_j\times D^4\times(\mathbf{u}_j^{\prime}\mathbf{e}_i)^2\big|^2\cdot\|U_j^{\prime}\Sigma_2 U_j\|^4\\
        &\le\frac{C\xi^8_{(1)}\mathbb{E}\xi^4_1}{n\theta_i^4}\lesssim\big(\frac{\mathsf{T}}{\theta_i}\big)^4\times\Big(\frac{1}{n}+\frac{\mathsf{T}}{n}\times\mathbbm{1}(\alpha\in(1,2])\Big),
\end{align*}
where the last inequality depends on the assumption of $\alpha$. Therefore, we can conclude that $I_{11}=\mathrm{o}_{\mathbb{P}}(\frac{1}{n}+\frac{\mathsf{T}}{n}\times\mathbbm{1}(\alpha\in(1,2]))$ and further $I_1=\mathrm{o}_{\mathbb{P}}(\frac{1}{n}+\frac{\mathsf{T}}{n}\times\mathbbm{1}(\alpha\in(1,2]))$. Moreover, it implies that $I_1-\sum_{j=1}^n(\mathbb{E}_j-\mathbb{E}_{j-1})\mathbf{e}_i^{\prime}U_jD^2U_j^{\prime}\mathbf{e}_i=\mathrm{o}_{\mathbb{P}}(\frac{1}{n}+\frac{\mathsf{T}}{n}\times\mathbbm{1}(\alpha\in(1,2]))$, since the second term on the left-hand-side of the above equation is zero. Consequently, the error is negligible after replacing $\mathscr{K}(\theta_i)$ with $I_{n\times n}$. For $I_k,\;k=2,3,4$, the proof strategy is the same, so we omit their details here. We conclude directly that 
\begin{gather*}
    I_2-\sum_{j=1}^n(\mathbb{E}_j-\mathbb{E}_{j-1})\mathbf{e}_i^{\prime}U_jD^2\mathbf{e}_j\mathbf{u}_j^{\prime}\mathbf{e}_i=\mathrm{o}_{\mathbb{P}}\Big(\frac{1}{n}+\frac{\mathsf{T}}{n}\times\mathbbm{1}(\alpha\in(1,2])\Big),\\
    I_3-\sum_{j=1}^n(\mathbb{E}_j-\mathbb{E}_{j-1})\mathbf{e}_i^{\prime}\mathbf{u}_j\mathbf{e}_j^{\prime}D^2U_j^{\prime}\mathbf{e}_i=\mathrm{o}_{\mathbb{P}}\Big(\frac{1}{n}+\frac{\mathsf{T}}{n}\times\mathbbm{1}(\alpha\in(1,2])\Big),\\
    I_4-\sum_{j=1}^n(\mathbb{E}_j-\mathbb{E}_{j-1})\mathbf{e}_i^{\prime}U_jD^2U_j^{\prime}\mathbf{e}_i=\mathrm{o}_{\mathbb{P}}\Big(\frac{1}{n}+\frac{\mathsf{T}}{n}\times\mathbbm{1}(\alpha\in(1,2])\Big).
\end{gather*}
Therefore, we have 
\begin{align*}
        &\mathbf{e}_i^{\prime}UD\mathscr{K}(\theta_i)DU^{\prime}\mathbf{e}_i-\mathbb{E}[\mathbf{e}_i^{\prime}UD\mathscr{K}(\theta_i)DU^{\prime}\mathbf{e}_i]\\
        &=\mathbf{e}_i^{\prime}UD^2U^{\prime}\mathbf{e}_i-\mathbb{E}\big(\mathbf{e}_i^{\prime}UD^2U^{\prime}\mathbf{e}_i|D^2\big)+\mathrm{o}_{\mathbb{P}}(\frac{1}{n}+\frac{\mathsf{T}}{n}\times\mathbbm{1}(\alpha\in(1,2]))\\
        &=\mathbf{e}_i^{\prime}UD^2U^{\prime}\mathbf{e}_i-\frac{1}{p}\operatorname{tr}D^2+\mathrm{o}_{\mathbb{P}}(\frac{1}{n}+\frac{\mathsf{T}}{n}\times\mathbbm{1}(\alpha\in(1,2])).
\end{align*}
It remains to calculate $\mathbb{E}[\mathbf{e}_i^{\prime}UD\mathscr{K}(\theta_i)DU^{\prime}\mathbf{e}_i]$.

\textbf{Step 2: Changing randomness $U$ with Gaussian randomness.}\newline
We denote $W$ as the $p\times n$ random matrix whose entries are i.i.d. Gaussian random variables scaled by $p^{-1/2}$, and $\mathbf{w}_j, 1\le j\le n$ as the columns of $W$. We need to show that 
\begin{equation*}
    \sqrt{n}\mathbb{E}\big[\mathbf{e}_i^{\prime}UD(I-\theta_i^{-1}DU^{\prime}\Sigma_2UD)^{-1}DU^{\prime}\mathbf{e}_i-\mathbf{e}_i^{\prime}WD(I-\theta_i^{-1}DW^{\prime}\Sigma_2WD)^{-1}DW^{\prime}\mathbf{e}_i\big]\rightarrow0.
\end{equation*}
The reason why we change $U$ into $W$ will be illustrated at the beginning of \textbf{Step 3}. To handle the above error, we use the Lindeberg replacement strategy. Denote 
\begin{gather*}
    Z^k:=\sum_{i=1}^k\mathbf{u}_i\mathbf{e}_i^{\prime}+\sum_{j=k+1}^n\mathbf{w}_j\mathbf{e}_j^{\prime},\quad Z_0^k:=\sum_{i=1}^{k-1}\mathbf{u}_i\mathbf{e}_i^{\prime}+\sum_{j=k+1}^n\mathbf{w}_j\mathbf{e}_j^{\prime}\\
    \mathscr{K}^k(\theta_i):=(I-\theta_i^{-1}D(Z^k)^{\prime}\Sigma_2Z^kD)^{-1},\quad \mathscr{K}^k_0(\theta_i):=(I-\theta_i^{-1}D(Z^k_0)^{\prime}\Sigma_2Z^k_0D)^{-1},
\end{gather*}
with $Z^n=U$, $Z^0=W$. Similarly to \textbf{Step 1}, we define the event $\Xi^0:=\{\|ZZ^{\prime}\|\le C(1+\sqrt{\phi})^2\}$, $\Xi^0_k:=\{\|Z^k_0(Z^k_0)^{\prime}\|\le C(1+\sqrt{\phi})^2\}$. It is easy to see that $\Xi^0$ and $\Xi^0_k$, for $1\le k\le n$, hold with high probability. Then, in the sequel, we always assume that $\Omega$, $\Xi^0$, $\Xi^0_k$ hold without further explanation. By the Lindeberg replacement argument, it suffices to consider
\begin{align*}
        &\mathbb{E}\big[\mathbf{e}_i^{\prime}UD(I-\theta_i^{-1}DU^{\prime}\Sigma_2UD)^{-1}DU^{\prime}\mathbf{e}_i-\mathbf{e}_i^{\prime}WD(I-\theta_i^{-1}DW^{\prime}\Sigma_2WD)^{-1}DW^{\prime}\mathbf{e}_i\big]\\
        &=\sum_{k=1}^n\Big(\mathbb{E}[\mathbf{e}_i^{\prime}Z^kD\mathscr{K}^k(\theta_i)D(Z^k)^{\prime}\mathbf{e}_i]-\mathbb{E}[\mathbf{e}_i^{\prime}Z^{k-1}D\mathscr{K}^{k-1}(\theta_i)D(Z^{k-1})^{\prime}\mathbf{e}_i]\Big).
\end{align*}
Noticing that $Z^k=Z^k_0+\mathbf{u}_k\mathbf{e}_k^{\prime}$, we have 
\begin{align*}
    &\mathbb{E}[\mathbf{e}_i^{\prime}Z^kD\mathscr{K}^k(\theta_i)D(Z^k)^{\prime}\mathbf{e}_i]=\mathbb{E}[\mathbf{e}_i^{\prime}(Z_0^k+\mathbf{u}_k\mathbf{e}^{\prime}_k)D\mathscr{K}^k(\theta_i)D(Z_0^k+\mathbf{u}_k\mathbf{e}^{\prime}_k)^{\prime}\mathbf{e}_i]\\
        &=\mathbb{E}[\mathbf{e}_i^{\prime}Z_0^kD\mathscr{K}^k(\theta_i)D(Z_0^k)^{\prime}\mathbf{e}_i+\mathbf{e}_i^{\prime}Z_0^kD\mathscr{K}^k(\theta_i)D\mathbf{e}_k\mathbf{u}_k^{\prime}\mathbf{e}_i+\mathbf{e}_i^{\prime}\mathbf{u}_k\mathbf{e}_k^{\prime}D\mathscr{K}^k(\theta_i)D(Z_0^k)^{\prime}\mathbf{e}_i\\
        &+\mathbf{e}_i^{\prime}\mathbf{u}_k\mathbf{e}_k^{\prime}D\mathscr{K}^k(\theta_i)D\mathbf{e}_k\mathbf{u}_k^{\prime}\mathbf{e}_i]:=II^k_1+II^k_2+II^k_3+II^k_4.
\end{align*}
Meanwhile, $Z^{k-1}=Z^k_0+\mathbf{w}_k\mathbf{e}_k^{\prime}$, so one has
\begin{align*}
        &\mathbb{E}[\mathbf{e}_i^{\prime}Z^{k-1}D\mathscr{K}^{k-1}(\theta_i)D(Z^{k-1})^{\prime}\mathbf{e}_i]=\mathbb{E}[\mathbf{e}_i^{\prime}(Z^{k}_0+\mathbf{w}_k\mathbf{e}_k^{\prime})D\mathscr{K}^{k-1}(\theta_i)D(Z^k_0+\mathbf{w}_k\mathbf{e}_k^{\prime})^{\prime}\mathbf{e}_i]\\
        &=\mathbb{E}[\mathbf{e}_i^{\prime}Z_0^kD\mathscr{K}^{k-1}(\theta_i)D(Z_0^k)^{\prime}\mathbf{e}_i+\mathbf{e}_i^{\prime}Z_0^kD\mathscr{K}^{k-1}(\theta_i)D\mathbf{e}_k\mathbf{w}_k^{\prime}\mathbf{e}_i\\
        &+\mathbf{e}_i^{\prime}\mathbf{w}_k\mathbf{e}_k^{\prime}D\mathscr{K}^{k-1}(\theta_i)D(Z_0^k)^{\prime}\mathbf{e}_i+\mathbf{e}_i^{\prime}\mathbf{w}_k\mathbf{e}_k^{\prime}D\mathscr{K}^k(\theta_i)D\mathbf{e}_k\mathbf{w}_k^{\prime}\mathbf{e}_i]\\
        &:=II^{k-1}_1+II^{k-1}_2+II^{k-1}_3+II^{k-1}_4.
\end{align*}
We are going to show that $\sum_{k=1}^n\mathbb{E}(II^k_j-II^{k-1}_j)=\mathrm{o}_{\mathbb{P}}(n^{-1/2}),\quad j=1,\dots,4$. In what follows, we only show the case $j=1$, while the other cases can be handled similarly. We start with $II^k_1$, one has to replace $\mathscr{K}^k(\theta_i)$ with $\mathscr{K}^k_0(\theta_i)$. As in \textbf{Step 1}, using matrix inverse formula, $\mathscr{K}^k(x)=\mathscr{K}^k_0(x)+x^{-1}\mathscr{K}^k(x)D\mathbf{e}_k\mathbf{u}_k^{\prime}\Sigma_2Z^k_0D\mathscr{K}^k_0(x)$, where we used again the fact that the $k$-th row and column of $\mathscr{K}^k_0(x)$ are equal to $\mathbf{e}_k$ and therefore $\mathbf{e}_k^{\prime}D\mathscr{K}^k_0(x)D(Z^k_0)^{\prime}\mathbf{e}_i=0$. Then, one has
\begin{align*}
    &\sum_{k=1}^nII_1^k-\sum_{k=1}^n\mathbb{E}[\mathbf{e}_i^{\prime}Z_0^kD\mathscr{K}^k_0(\theta_i)D(Z^k_0)^{\prime}\mathbf{e}_i]\\
    &=\sum_{k=1}^n\frac{1}{\theta_i}\mathbb{E}[\mathbf{e}_i^{\prime}Z_0^kD\mathscr{K}^k(\theta_i)D\mathbf{e}_k\mathbf{u}_k^{\prime}\Sigma_2Z_0^kD\mathscr{K}^k_0(\theta_i)D(Z_0^k)^{\prime}\mathbf{e}_i]=\frac{1}{\theta_i}\sum_{k=1}^n\mathbb{E}[\mathfrak{ii}_{e_ie_k}\times \mathfrak{ii}_{u_ke_i}].
\end{align*}
Then applying a similar argument as in \textbf{Step 1}, we have $\mathfrak{ii}_{e_ie_k}=b_k^{-1}\times\frac{\xi^2_k}{\theta_i}\mathfrak{ii}_{e_iu_k}^{(k)}$, $b_k=1-\frac{\xi^2_k}{\theta_i^2}\mathfrak{ii}_{u_ku_k}-\frac{\xi^2_k}{\theta_i}\mathbf{u}_k^{\prime}\Sigma_2\mathbf{u}_k,$ where $\mathfrak{ii}_{e_iu_k}^{(k)}:=\mathbf{e}_i^{\prime}Z_0^kD\mathscr{K}^k_0(\theta_i)D(Z^k_0)^{\prime}\Sigma_2\mathbf{u}_k$ and $\mathfrak{ii}_{u_ku_k}:=\mathbf{u}_k^{\prime}\Sigma_2Z^k_0D\mathscr{K}^k_0(\theta_i)D(Z^k_0)^{\prime}\Sigma_2\mathbf{u}_k$. Then we can write 
\begin{equation*}
    \frac{1}{\theta_i}\sum_{k=1}^n\mathbb{E}[\mathfrak{ii}_{e_ie_k}\times\mathfrak{ii}_{u_ke_i}]=\frac{1}{\theta_i}\sum_{k=1}^n\mathbb{E}[\frac{\xi^2_kb_k^{-1}}{\theta_i}\mathfrak{ii}^{(k)}_{e_iu_k}\mathfrak{ii}_{u_ke_i}].
\end{equation*}
Define $\Bar{b}_k=1-\frac{\xi^2_k}{\theta_i^2}\Bar{\mathfrak{ii}}_{u_ku_k}-\frac{\xi^2_k}{\theta_ip}\operatorname{tr}\Sigma_2$, where $\Bar{\mathfrak{ii}}_{u_ku_k}:=p^{-1}\operatorname{tr}(\Sigma_2Z_0^kD\mathscr{K}^k_0(\theta_i)D(Z^k_0)^{\prime}\Sigma_2)$. 
Thus, $b_k^{-1}-\Bar{b}_k^{-1}=-b_k^{-1}\Bar{b}_k^{-1}\big(\frac{\xi^2_k}{\theta_i^2}(\mathfrak{ii}_{u_ku_k}-\Bar{\mathfrak{ii}}_{u_ku_k})-\frac{\xi^2_k}{\theta_i}(\mathbf{u}_k^{\prime}\Sigma_2\mathbf{u}_k-\operatorname{tr}\Sigma_2)\big),$ and $\mathbb{E}[\frac{\xi^2_kb_k^{-1}}{\theta_i}\mathfrak{ii}^{(k)}_{e_iu_k}\mathfrak{ii}_{u_ke_i}]=\mathbb{E}\big[\frac{\xi^2_k}{\theta_i}(\Bar{b}_k^{-1}+b_k^{-1}-\Bar{b}_k^{-1})\mathfrak{ii}^{(k)}_{e_iu_k}\mathfrak{ii}_{u_ke_i}\big].$
It is easy to see that $|b_k^{-1}|\le C$ and $|\Bar{b}_k^{-1}|\le C$ under the events $\Omega$, $\Xi_0$, $\Xi^0_k$. It follows that 
\begin{align*}
        &\mathbb{E}\big[(b_k^{-1}-\Bar{b}_k^{-1})\mathfrak{ii}^{(k)}_{e_iu_k}\mathfrak{ii}_{u_ke_i}\big]\le C\sqrt{\mathbb{E}|b_k^{-1}-\Bar{b}_k^{-1}|^2\times\mathbb{E}(\mathfrak{ii}^{(k)}_{e_iu_k}\mathfrak{ii}_{u_ke_i})^2}\le C\sqrt{\frac{\xi^8_{(1)}}{\theta_i^4p^3}}\le \frac{C\xi^4_{(1)}}{\theta_i^2p^{3/2}}.
\end{align*}
Therefore, $\sum_{k=1}^n\mathbb{E}[\frac{\xi^2_kb_k^{-1}}{\theta_i^2}\mathfrak{ii}^{(k)}_{e_i2_k}\mathfrak{ii}_{u_ke_i}]=\sum_{k=1}^n\mathbb{E}[\frac{\xi^2_i\Bar{b}_k^{-1}}{\theta_i^2}\mathfrak{ii}^{(k)}_{e_iu_k}\mathfrak{ii}_{u_ke_i}]+\mathrm{o}_{\mathbb{P}}(n^{-1/2}).$ It implies that 
\begin{equation*}
    \sum_{k=1}^nII_1^k-\sum_{k=1}^n\mathbb{E}[\mathbf{e}_i^{\prime}Z^k_0D\mathscr{K}^k_0(\theta_i)D(Z^k_0)^{\prime}\mathbf{e}_i]=\sum_{k=1}^n\mathbb{E}[\frac{\xi^2_i\Bar{b}_k^{-1}}{\theta_i^2}\mathfrak{ii}^{(k)}_{e_iu_k}\mathfrak{ii}_{u_ke_i}]+\mathrm{o}_{\mathbb{P}}(n^{-1/2}).
\end{equation*}
A parallel argument will show that 
\begin{equation*}
    \sum_{k=1}^nII_1^{k-1}-\sum_{k=1}^n\mathbb{E}[\mathbf{e}_i^{\prime}Z^k_0D\mathscr{K}^k_0(\theta_i)D(Z^k_0)^{\prime}\mathbf{e}_i]=\sum_{k=1}^n\mathbb{E}[\frac{\xi^2_i\Bar{b}_k^{-1}}{\theta_i^2}\mathfrak{ii}^{(k)}_{e_iw_k}\mathfrak{ii}_{w_ke_i}]+\mathrm{o}_{\mathbb{P}}(n^{-1/2}).
\end{equation*}
Notice that $\mathbb{E}\mathbf{u}_k^{\prime}A\mathbf{u}_k=\mathbb{E}\mathbf{w}_k^{\prime}A\mathbf{w}_k,$ for any $p\times p$ matrix independent with $\mathbf{u}_k,\mathbf{w}_k$. We find that 
$ \sum_{k=1}^n\mathbb{E}[\frac{\xi^2_i\Bar{b}_k^{-1}}{\theta_i^2}\mathfrak{ii}^{(k)}_{e_iu_k}\mathfrak{ii}_{u_ke_i}]=\sum_{k=1}^n\mathbb{E}[\frac{\xi^2_i\Bar{b}_k^{-1}}{\theta_i^2}\mathfrak{ii}^{(k)}_{e_iw_k}\mathfrak{ii}_{w_ke_i}].$ Combining the above analysis, we finally reach that $\sum_{k=1}^nII^k_1-\sum_{k=1}^nII^{k-1}_1=\mathrm{o}_{\mathbb{P}}(n^{-1/2}).$ Applying the same strategies to $j=2,3,4$, we can actually obtain that $\sum_{k=1}^nII^k_j-\sum_{k=1}^nII^{k-1}_j=\mathrm{o}_{\mathbb{P}}(n^{-1/2}),\quad j=1,\dots,4,$ which indicates that changing randomness $U$ with Gaussian randomness $W$ only introduces negligible error. We omit further details here.

\textbf{Step 3: Calculating limiting expectation.}\newline
In \textbf{Step 1}, we deferred the computation of $\mathbb{E}[\mathbf{e}_i^{\prime}UD\mathscr{K}(\theta_i)DU^{\prime}\mathbf{e}_i],\;1\le i\le m$. In \textbf{Step 2}, we successfully reformulated the target expression as $\mathbb{E}[\mathbf{e}_i^{\prime}WD\mathscr{K}^w(\theta_i)DW^{\prime}\mathbf{e}_i]$ where $\mathscr{K}^w(\theta_i):=(I-\theta_i^{-1}DW^{\prime}\Sigma_2WD)^{-1}$. The advantage that we replace $U$ with $W$ is that $\mathbf{e}_i^{\prime}WD^2W^{\prime}\mathbf{e}_i$ is independent with $\mathscr{K}^w(\theta_i)$ since $\mathbf{e}_i\;1\le i\le m$ are orthogonal to the column space of $\Sigma_2$. Then we observe that $\mathbb{E}[\mathbf{e}_i^{\prime}WD\mathscr{K}^w(\theta_i)DW^{\prime}\mathbf{e}_i]=\frac{1}{p}\mathbb{E}[\operatorname{tr}(D^2\mathscr{K}^w(\theta_i))]=\frac{1}{p}\sum_{j=1}^n\xi^2_j\mathbb{E}[\mathscr{K}^w(\theta_i)]_{jj}.$ In order to obtain $\mathbb{E}[\mathscr{K}^w(\theta_i)]_{jj}$, we consider the complement matrix of $\mathscr{K}^w(\theta_i)$ defined as
\begin{equation*}
    \mathcal{K}^w(\theta_i):=(I-\theta_i^{-1}\Sigma_2^{1/2}WD^2W^{\prime}\Sigma_2^{1/2})^{-1}.
\end{equation*}
Then by Schur's complement formula, we have for $1\le j\le n$,
\begin{align*}
       & [\mathscr{K}^w(\theta_i)]_{jj}
        =
        \begin{pmatrix}
            I_{p\times p}\quad &\frac{1}{\sqrt{\theta_i}}DW^{\prime}\Sigma_2^{1/2}\\
            \frac{1}{\sqrt{\theta_i}}\Sigma_2^{1/2}WD\quad &I_{n\times n}
        \end{pmatrix}^{-1}_{jj}\\
        =&1+\big(\frac{1}{\theta_i}DW^{\prime}\Sigma^{1/2}_2\mathcal{K}^w(\theta_i)\Sigma^{1/2}_2WD\big)_{jj}=1+\frac{\xi^2_j}{\theta_i}\mathbf{w}_j^{\prime}\Sigma_2^{1/2}\mathcal{K}^w(\theta_i)\Sigma^{1/2}_2\mathbf{w}_j.
\end{align*}
Again, to take the expectation of $\mathbf{w}_j\Sigma_2^{1/2}\mathcal{K}^w(\theta_i)\Sigma^{1/2}_2\mathbf{w}_j$, we first replace $\mathcal{K}^w(\theta_i)$ with $\mathcal{K}_j^w(\theta_i)$ via
\begin{equation}\label{eq_prf_realsignals_secondratio2_step3_matrixinverse}
    \mathcal{K}^w(\theta_i)=\mathcal{K}_j^w(\theta_i)+\mathcal{K}^w(\theta_i)\frac{\xi^2_j}{\theta_i}\Sigma_2^{1/2}\mathbf{w}_j\mathbf{w}_j^{\prime}\Sigma_2^{1/2}\mathcal{K}_j^w(\theta_i).
\end{equation}
That is, we change $W$ into $W_j$ with $j$-th column of $W$ being zero in $\mathcal{K}^w_j(\theta_i)$. Then one has
\begin{align*}
      &  \mathbf{w}_j^{\prime}\Sigma^{1/2}_2\mathcal{K}^w(\theta_i)\Sigma^{1/2}_2\mathbf{w}_j
        =\mathbf{w}_j^{\prime}\Sigma_2^{1/2}\mathcal{K}^w_j(\theta_i)\Sigma_2^{1/2}\mathbf{w}_j(1+\frac{\xi^2_j}{\theta_i}\mathbf{w}_j^{\prime}\Sigma_2^{1/2}\mathcal{K}^w(\theta_i)\Sigma^{1/2}_2\mathbf{w}_j)\\
        =&\frac{{\theta_i}}{{\theta_i}-{\xi^2_j}\mathbf{w}_j^{\prime}\Sigma_2^{1/2}\mathcal{K}_j^w(\theta_i)\Sigma^{1/2}_2\mathbf{w}_j}\mathbf{w}_j^{\prime}\Sigma^{1/2}_2\mathcal{K}_j^w(\theta_i)\Sigma^{1/2}_2\mathbf{w}_j.
\end{align*}
Under $\Omega$, we find that $\|\mathcal{K}^w_j(\theta_i)\|\le C$ and consequently $ \frac{\xi^2_j}{\theta_i}\mathbf{w}_j^{\prime}\Sigma_2^{1/2}\mathcal{K}^w_j(\theta_i)\Sigma_2^{1/2}\mathbf{w}_j\rightarrow0,$ as $n,p\rightarrow\infty$. Therefore, one can see that 
\begin{align*}
        &\mathbb{E}\Big[\frac{\xi^2_j\mathbf{w}_j^{\prime}\Sigma_2^{1/2}\mathcal{K}^w_j(\theta_i)\Sigma^{1/2}_2\mathbf{w}_j-p^{-1}\operatorname{tr}\big(\Sigma_2\mathcal{K}^w_j(\theta_i)\big)}{\theta_i-{\xi^2_j}\mathbf{w}_j^{\prime}\Sigma_2^{1/2}\mathcal{K}_j^w(\theta_i)\Sigma^{1/2}_2\mathbf{w}_j}\Big]
        \le C\frac{\xi^2_j\|\Sigma_2\|_F\|\mathcal{K}^2_j(\theta_i)\|}{p\theta_i}\le\mathrm{o}(p^{-1/2}),
\end{align*}
where we used the large deviation inequality Lemma \ref{lem_largedeviationbound} for Gaussian random vectors. Furthermore, we rewrite the denominator as
\begin{align*}
        &\mathbb{E}\Big|\frac{{\xi^2_j}}{{p\theta_i}-p{\xi^2_j}\mathbf{w}_j^{\prime}\Sigma^{1/2}_2\mathcal{K}^w_j(\theta_i)\Sigma^{1/2}_2\mathbf{w}_j}-\frac{{\xi^2_j}}{{p\theta_i}-{\xi^2_j}\operatorname{tr}(\Sigma_2\mathcal{K}^w_j(\theta_i))}\Big|\operatorname{tr}\big(\Sigma_2\mathcal{K}^w_j(\theta_i)\big)\\
        &\le\frac{C\xi^4_j}{\theta^2_i}\mathbb{E}\Big|\mathbf{w}_j^{\prime}\Sigma^{1/2}_2\mathcal{K}^w_j(\theta_i)\Sigma^{1/2}_2\mathbf{w}_j-p^{-1}\operatorname{tr}(\Sigma_2\mathcal{K}^w_j(\theta_i))\Big|\le\mathrm{o}_{\mathbb{P}}(n^{-1/2}).
\end{align*}
Consequently, we have for all $1\le j\le n$ that 
\begin{align*}
        \mathbb{E}[\mathscr{K}^w(\theta_i)]_{jj}
        &=\mathbb{E}\Big[\frac{{p\theta_i}}{{p\theta_i}-{\xi^2_j}\operatorname{tr}(\Sigma_2\mathcal{K}^w_j(\theta_i))}\Big]+\mathrm{o}_{\mathbb{P}}(n^{-1/2}).
\end{align*}
Further, notice that
\begin{align*}
    &\mathbb{E}\Big|\frac{{p\theta_i}}{{p\theta_i}-{\xi^2_j}\operatorname{tr}(\Sigma_2\mathcal{K}^w_j(\theta_i))}-\frac{{p\theta_i}}{{p\theta_i}-{\xi^2_j}\operatorname{tr}(\Sigma_2\mathcal{K}^w(\theta_i))}\Big|\\
    &=\mathbb{E}\Big|{p{\xi^4_j} \mathbf{w}_j^{\prime}\mathcal{K}^w(\theta_i)\Sigma_2\mathcal{K}^w_j(\theta_i)\mathbf{w}_j}{\big[{p\theta_i}-{\xi^2_j}\operatorname{tr}(\Sigma_2\mathcal{K}^w_j(\theta_i))\big]\big[{p\theta_i}-{\xi^2_j}\operatorname{tr}(\Sigma_2\mathcal{K}^w(\theta_i))\big]}\Big|\le\mathrm{o}_{\mathbb{P}}(n^{-1/2}).
\end{align*}
Therefore, we obtain that $ \mathbb{E}[\mathbf{e}_i^{\prime}WD\mathscr{K}^w(\theta_i)DW^{\prime}\mathbf{e}_i]=\frac{1}{p}\sum_{j=1}^p\mathbb{E}\Big[\frac{{p\theta_i}\xi^2_j}{{p\theta_i}-{\xi^2_j}\operatorname{tr}(\Sigma_2\mathcal{K}^w(\theta_i))}\Big]+\mathrm{o}_{\mathbb{P}}(n^{-1/2}).$ Moreover, to calculate the expectation on the right-hand-side of the above equation, we observe that 
\begin{align}\label{eq_prf_realsignals_secondratio2_step3_traceexpectation}
        &\frac{1}{p}\sum_{j=1}^p\mathbb{E}\Big|\frac{{p\theta_i}\xi^2_j}{{p\theta_i}-{\xi^2_j}\operatorname{tr}(\Sigma_2\mathcal{K}^w(\theta_i))}-\frac{{p\theta_i}\xi^2_j}{{p\theta_i}-{\xi^2_j}\mathbb{E}\operatorname{tr}(\Sigma_2\mathcal{K}^w(\theta_i))}\Big| \nonumber\\
        &\le C\sqrt{\mathbb{E}\big|\frac{1}{p\theta_i}\big(\operatorname{tr}\Sigma_2\mathcal{K}^w(\theta_i)-\mathbb{E}[\operatorname{tr}\Sigma_2\mathcal{K}^w(\theta_i)]\big)\big|^2}.
\end{align}
Now, in order to handle $(p\theta_i)^{-1}\big(\operatorname{tr}\Sigma_2\mathcal{K}^w(\theta_i)-\mathbb{E}[\operatorname{tr}\Sigma_2\mathcal{K}^w(\theta_i)]\big)$, we regard it as a martingale sequence. Define $\mathbb{E}_j^w:=\mathbb{E}(\cdot|\mathbf{w}_1,\dots,\mathbf{w}_j,D^2)$. Write
\begin{align}\label{eq_prf_realsignals_secondratio2_step3_martingale}
        &\frac{1}{p\theta_i}\big(\operatorname{tr}\Sigma_2\mathcal{K}^w(\theta_i)-\mathbb{E}[\operatorname{tr}\Sigma_2\mathcal{K}^w(\theta_i)]\big)=\sum_{j=1}^n(\mathbb{E}_j^w-\mathbb{E}_{j-1}^w)\frac{1}{p\theta_i}\operatorname{tr}\Sigma_2\mathcal{K}^w(\theta_i) \nonumber\\
        &=\sum_{j=1}^n(\mathbb{E}^w_j-\mathbb{E}^w_{j-1})\frac{\xi^2_j}{p\theta_i^2}\mathbf{w}_j^{\prime}\mathcal{K}^w_j(\theta_i)\Sigma_2\mathcal{K}^w(\theta_i)\mathbf{w}_j,
\end{align}
where in the second step, we used \eqref{eq_prf_realsignals_secondratio2_step3_matrixinverse}. Then, plugging \eqref{eq_prf_realsignals_secondratio2_step3_martingale} back to \eqref{eq_prf_realsignals_secondratio2_step3_traceexpectation}, by Burkholder's inequality, we have
\begin{align*}
        &\mathbb{E}\big|\frac{1}{p\theta_i}(\operatorname{tr}\Sigma_2\mathcal{K}^w(\theta_i)-\mathbb{E}\operatorname{tr}(\Sigma_2\mathcal{K}^w(\theta_i)))\big|^2\le\sum_{j=1}^n\mathbb{E}^w_j\big|\frac{\xi^2_j}{p\theta_i^2}\mathbf{w}_j^{\prime}\mathcal{K}^w_j(\theta_i)\Sigma_2\mathcal{K}^w(\theta_i)\mathbf{w}_j\big|^2\le\frac{C}{p\theta_i^4}.
\end{align*}
Therefore, we conclude that $\mathbb{E}[\mathbf{e}_i^{\prime}WD\mathscr{K}^w(\theta_i)DW^{\prime}\mathbf{e}_i]=\frac{1}{p}\sum_{j=1}^n\frac{\xi^2_j}{1-\frac{\xi^2_j}{p\theta_i}\mathbb{E}[\operatorname{tr}(\Sigma_2\mathcal{K}^w(\theta_i)])}+\mathrm{o}_{\mathbb{P}}(n^{-1/2}).$ Finally, it only remains to calculate $p^{-1}\mathbb{E}[\operatorname{tr}(\Sigma_2\mathcal{K}^w(\theta_i))]$. We observe that $\Sigma_2$ plays the same structural role in $p^{-1}\mathbb{E}[\operatorname{tr}\Sigma_2\mathcal{K}^w(\theta_i)]$ as $D^2$ does in $p^{-1}\mathbb{E}[\operatorname{tr}(D^2\mathscr{K}^w(\theta_i))]$. Consequently, a parallel argument to give $ p^{-1}\mathbb{E}[\operatorname{tr}(\Sigma_2\mathcal{K}^w(\theta_i))]=\frac{1}{p}\sum_{k=1}^{p-m}\frac{\sigma_{m+k}}{1-\frac{\sigma_{m+k}}{p\theta_i}\mathbb{E}[\operatorname{tr}(D^2\mathscr{K}^w(\theta_i))]}+\mathrm{o}_{\mathbb{P}}(n^{-1/2}).$ Thereafter, we have 
$$
\frac{1}{p}\mathbb{E}[\operatorname{tr}(\Sigma_2\mathscr{K}^w(\theta_i))]=\frac{1}{p}\sum_{j=1}^m\frac{\xi^2_j}{1-{\xi^2_j}\sum_{k=1}^{p-m}\frac{\sigma_{m+k}}{{p\theta_i}-{\sigma_{m+k}}\mathbb{E}[\operatorname{tr}(\Sigma_2\mathscr{K}^w(\theta_i))]}}+\mathrm{o}_{\mathbb{P}}(n^{-1}).
$$ 
Recall the definition of $\zeta_i$ in \eqref{eq_def_zetai}. Such equation is stable in the sense that 
\begin{align*}
        &\zeta_i-\frac{1}{p}\mathbb{E}[\operatorname{tr}(\Sigma_2\mathscr{K}^w(\theta_i))]-\mathrm{o}_{\mathbb{P}}(n^{-1/2})\\
        &=\frac{1}{p}\sum_{j=1}^n\frac{{\xi^4_j}\sum_{k=1}^{p-m}\frac{{\sigma_{m+k}^2}}{({\theta_i}-\sigma_{m+k}\zeta_i)({p\theta_i}-{\sigma_{m+k}}\mathbb{E}[\operatorname{tr}(\Sigma_2\mathscr{K}^w(\theta_i))])}}{(1-\frac{\xi^2_j}{p}\sum_{k=1}^{p-m}\frac{\sigma_{m+k}}{\theta_i-\sigma_{m+k}\zeta_i})(1-{\xi^2_j}\sum_{k=1}^{p-m}\frac{\sigma_{m+k}}{{p\theta_i}-{\sigma_{m+k}}\mathbb{E}[\operatorname{tr}(\Sigma_2\mathscr{K}^w(\theta_i))]})}\\
        &\times(\zeta_i-\frac{1}{p}\mathbb{E}[\operatorname{tr}(\Sigma_2\mathscr{K}^w(\theta_i))])=\mathrm{o}_{\mathbb{P}}(1)\times(\zeta_i-\frac{1}{p}\mathbb{E}[\operatorname{tr}(\Sigma_2\mathscr{K}^w(\theta_i))]).
\end{align*}
It gives that $\zeta_i-\frac{1}{p}\mathbb{E}[\operatorname{tr}(\Sigma_2\mathscr{K}^w(\theta_i))]=\mathrm{o}_{\mathbb{P}}(n^{-1/2}).$ Finally, summarising the results in \textbf{Step 1} to \textbf{Step 3}, we have 
\begin{align*}
    &\Gamma_1^{\prime}UD\mathscr{K}(\theta_i)DU^{\prime}\Gamma_1
    =\Gamma_1^{\prime}UD^2U^{\prime}\Gamma_1-\frac{1}{p}\operatorname{tr}D^2I_{m\times m}+\zeta_iI_{m\times m}+\mathrm{o}_{\mathbb{P}}\Big(\frac{1}{n}+\frac{\mathsf{T}}{n}\times\mathbbm{1}(\alpha\in(1,2])\Big).
\end{align*}
We finish the proof of this lemma.

\subsubsection{Proof of Lemma \ref{lem_prf_realsignals_secondratio2_2}}\label{subsec_proof_realsignals_secondratio2_2}
Firstly, by matrix inverse formula \eqref{eq_prf_realsignals_secondratio2_matrixinverseformula}, we have
\begin{align*}
        &\Gamma_1^{\prime}UD\mathscr{K}(\lambda_i)\mathscr{K}(\theta_i)DU^{\prime}\Gamma_1=\Gamma_1^{\prime}UD\mathscr{K}(\theta_i)\mathscr{K}(\theta_i)DU^{\prime}\Gamma_1-\Delta_i\Gamma_1^{\prime}UD\mathscr{K}(\lambda_i)\mathscr{K}^2(\theta_i)DU^{\prime}\Gamma_1.
\end{align*}
By Theorem \ref{lem_realsignal_preratio} and Lemma \ref{lem_realsignal_secondratio1}, $\Delta_i=\mathrm{O}_{\mathbb{P}}(\mathsf{T}/\sigma_i)$. Therefore,  $\Delta_i\Gamma_1^{\prime}UD\mathscr{K}(\lambda_i)\mathscr{K}^2(\theta_i)DU^{\prime}\Gamma_1=\mathrm{o}_{\mathbb{P}}(1).$ On the other hand, repeating the proof for Lemma \ref{lem_prf_realsignals_secondratio2_1}, one has $
    \Gamma_1^{\prime}UD\mathscr{K}^2(\theta_i)DU^{\prime}\Gamma_1-I_{m\times m}=\mathrm{o}_{\mathbb{P}}(1),$ where we used the fact that $\|\Gamma_1^{\prime}UD^2U^{\prime}\Gamma_1-p^{-1}\operatorname{tr}D^2I_{m\times m}\|_{\infty}=\mathrm{O}_{\mathbb{P}}(n^{-1/2})$ and $\zeta_i=1+\mathrm{o}_{\mathbb{P}}(1)$. Then, we can conclude the proof of this lemma.

\subsection{Proof of the results in Section \ref{sec_main_fakesignals}}
In this section, we prove Lemma \ref{lem_fakesignal_compare_nonspikes} and Theorem \ref{thm_fakesignal_limit}. At the very beginning, we prepare
several preliminary results.
\subsubsection{Technique preparation}
Recalling the sample covariance matrices $S_2$ and $\mathcal{S}_2$ in \eqref{eq_def_S2}, we denote their empirical spectral distributions (ESDs) as
        \begin{equation*}
            \mu_{S_2}:=\frac{1}{p}\sum_{i=1}^p\delta_{\lambda_i(S_2)},\quad \mu_{\mathcal{S}_2}:=\frac{1}{n}\sum_{j=1}^n\delta_{\lambda_j(\mathcal{S}_2)}.
        \end{equation*}
        For any $z=E+\mathrm{i}\eta\in\mathbb{C}_{+}$, denote their resolvents by
        \begin{equation*}
            G(z):=(S_2-zI)^{-1}\in\mathbb{C}^{p\times p},\quad \mathcal{G}(z):=(\mathcal{S}_2-zI)^{-1}\in\mathbb{C}^{n\times n}.
        \end{equation*}
        Correspondingly, we can define their Stieltjes transforms as
        \begin{equation}\label{eq_def_mS2}
            m_{S_2}:=\int\frac{1}{x-z}\mu_{S_2}=\frac{1}{p}\operatorname{tr}\big(G(z)\big),\quad m_{\mathcal{S}_2}:=\int\frac{1}{x-z}\mu_{\mathcal{S}_2}=\frac{1}{n}\operatorname{tr}\big(\mathcal{G}(z)\big).
        \end{equation}

        In order to characterize the limits of $\mu_{S_2}$ or $\mu_{\mathcal{S}_2}$, we define the following equation system,
        \begin{definition}[System of consistent equations]\label{def_prf_fakesignals_systemequation} 
        For $z \in \mathbb{C}_+,$ we define the triplets $(m_{1n}, m_{2n}, m_n) \in \mathbb{C}^3_+,$  via the following system of equations.
\begin{align}\label{eq_systemequations}
    &m_{1n}(z)=\frac{1}{p}\sum_{i=1}^{p-m}\frac{\sigma_{m+i}}{-z(1+\sigma_{m+i}  m_{2n}(z))},\quad m_{2n}(z)=\frac{1}{p}\sum_{i=1}^n\frac{\xi_i^2}{-z(1+\xi^2_im_{1n}(z))},\nonumber\\
   & m_{n}(z)=\frac{1}{p}\sum_{i=1}^{p-m}\frac{1}{-z(1+\sigma_{m+i}  m_{2n}(z))}. 
\end{align}
\end{definition}

For sufficiently large $n,$ we find that $\mu_{S_2}$ has a nonrandom deterministic equivalent and can be uniquely characterized by the above consistent equations. This is summarized by the following theorem.  
\begin{theorem}\label{lem_solutionsystem}
Suppose Assumptions \ref{ass_xi} and \ref{ass_sigma}. Then conditional on event $\Omega$, for any $z \in \mathbb{C}_+,$ when $n$ is sufficiently large, there exists a unique solution $(m_{1n}(z), m_{2n}(z), m_{n}(z)) \in \mathbb{C}_+^3$ to the system of equations in \eqref{eq_systemequations}. Moreover, $m_n(z)$ is the Stieltjes transform of some probability density function $\rho \equiv \rho_n$ defined on $\mathbb{R}$ which can be obtained using the inversion formula.
\end{theorem}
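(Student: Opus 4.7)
The plan is to decouple the system in \eqref{eq_systemequations} into a single scalar fixed-point equation for $m_{2n}$, solve it via a compactness/contraction argument on $\mathbb{C}_+$, and then verify the Nevanlinna-type conditions that identify $m_n(z)$ as the Stieltjes transform of a finite positive measure. Specifically, I would substitute the first equation of \eqref{eq_systemequations} into the second to obtain
\begin{equation*}
    m_{2n}(z) \;=\; \frac{1}{p}\sum_{i=1}^n \frac{\xi_i^2}{-z\Bigl(1 + \xi_i^2 \cdot \tfrac{1}{p}\sum_{j=1}^{p-m}\tfrac{\sigma_{m+j}}{-z(1+\sigma_{m+j}\,m_{2n}(z))}\Bigr)},
\end{equation*}
after which $m_{1n}(z)$ and $m_n(z)$ are determined explicitly. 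Call the right-hand side $\Phi_z(m_{2n})$, with $z$ and the data $\{\xi_i^2\}, \{\sigma_{m+j}\}$ --- bounded on the good event $\Omega$ --- treated as parameters.

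For existence I would first verify that $\Phi_z$ is a holomorphic self-map of $\mathbb{C}_+$ by tracking imaginary parts through each level of the iterated fraction: if $\Im m_{2n}>0$, then $1+\sigma_{m+j}m_{2n}\in\mathbb{C}_+$, so $-z(1+\sigma_{m+j}m_{2n})$ has negative imaginary part, and the reciprocal again lies in $\mathbb{C}_+$; positive averaging preserves the half-plane, and the outer loop repeats the argument. Conditional on $\Omega$, where $\max_i \xi_i^2\lesssim \mathsf{T}$ and $\|\Sigma_{err}\|$ is bounded, direct estimates yield $|\Phi_z(m)|\le C(\mathsf{T},\Im z)$ together with $\Im\Phi_z(m)\ge c(\mathsf{T},\Im z)>0$ on an appropriate sublevel set of $\mathbb{C}_+$. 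Hence $\Phi_z$ leaves a compact convex subset $\mathcal{D}\subset\mathbb{C}_+$ invariant, and Brouwer's fixed-point theorem supplies at least one solution.

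For uniqueness I would use that $\Phi_z$ is a holomorphic self-map of $\mathbb{C}_+$ and invoke the Denjoy--Wolff/Schwarz--Pick framework: such a map is either an isometry of the hyperbolic metric on $\mathbb{C}_+$ or a strict contraction. The isometry case is ruled out because $\Phi_z$ is not a M\"obius transformation (it involves a genuine average over the indices $i$ with non-degenerate weights), so $\Phi_z$ is a strict contraction and admits at most one fixed point. Alternatively, subtracting the defining equations for two putative solutions $m,\widetilde m\in\mathbb{C}_+$ and factoring $m-\widetilde m$ yields an identity $(m-\widetilde m)(1-R(z,m,\widetilde m))=0$ with an explicit rational remainder $R$; showing $|R|<1$ first for $\Im z$ large and then propagating by analytic continuation delivers the same conclusion.

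Finally, to conclude that $m_n(z)$ is the Stieltjes transform of a finite positive measure $\rho$ on $\mathbb{R}$, I would check the three Herglotz conditions. Holomorphy of $m_n$ on $\mathbb{C}_+$ follows from analyticity of \eqref{eq_systemequations} and the implicit function theorem applied at the unique fixed point $m_{2n}(z)$; positivity $\Im m_n(z)>0$ is inherited from $\Im m_{2n}(z)>0$ via the explicit formula; and the large-$|z|$ asymptotic $\lim_{y\to\infty}(-iy)m_n(iy)=(p-m)/p$ follows from $m_{2n}(iy)=\mathrm{O}(1/y)$. The Nevanlinna representation theorem then produces the measure $\rho$, and the Stieltjes inversion formula recovers its density away from atoms. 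The main obstacle I anticipate is the uniqueness step in the heavy-tailed regime: entries $\xi_i^2$ as large as $\mathsf{T}$ weaken naive metric-contraction estimates on $\Phi_z$, so leveraging the Schwarz--Pick structure of holomorphic self-maps --- rather than hand-computed Lipschitz constants --- is what makes the argument robust across both the polynomial and exponential tail regimes of Assumption \ref{ass_xi}.
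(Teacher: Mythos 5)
The paper does not actually prove this theorem: it defers verbatim to El Karoui (2009), Paul--Silverstein (2009) and Ding--Yang (2021), and your proposal is a reconstruction of exactly the fixed-point/Herglotz architecture used in those references (reduce to a scalar equation for $m_{2n}$, establish an invariant domain, prove contraction-type uniqueness, then verify the Nevanlinna conditions for $m_n$). The Herglotz verification at the end, including the normalization $\lim_{y\to\infty}(-\mathrm{i}y)m_n(\mathrm{i}y)=(p-m)/p$, is correct.

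There is, however, one genuine flaw on which both your existence and uniqueness steps rest: the claim that $\Phi_z$ is a holomorphic self-map of $\mathbb{C}_+$, justified by ``if $\Im m_{2n}>0$ then $-z(1+\sigma_{m+j}m_{2n})$ has negative imaginary part.'' This is false for general $m\in\mathbb{C}_+$: one has
\begin{equation*}
\Im\bigl(-z(1+\sigma m)\bigr)=-\Im z-\sigma\,\Im(zm),
\end{equation*}
and $\Im(zm)$ can be negative even when $\Im z>0$ and $\Im m>0$ (take $z=\mathrm{i}$, $m=-1+\mathrm{i}$, so $\Im(zm)=-1$; with $\sigma>1$ the right-hand side is positive). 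The correct invariant set is the wedge $\mathcal{W}_z:=\{m:\Im m>0,\ \Im(zm)>0\}$, and one must track \emph{both} imaginary parts through each level of the iterated fraction; on $\mathcal{W}_z$ the sign computation you intend does go through. This repair matters downstream: Brouwer must be applied to a compact convex subset of $\mathcal{W}_z$ (not of $\mathbb{C}_+$), and the Denjoy--Wolff/Schwarz--Pick argument must be run on $\mathcal{W}_z$ (which is conformally a half-plane, so the hyperbolic-metric argument survives, but the verification you wrote is for the wrong domain). Separately, your fallback uniqueness route --- proving $|R|<1$ for large $\Im z$ and ``propagating by analytic continuation'' --- is not sound as stated: uniqueness is a pointwise algebraic statement and cannot be continued analytically without first knowing that every admissible solution is holomorphic in $z$, which is part of what is being proved. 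The standard resolution, and the one used in the cited references, is to derive the identity $(m-\widetilde m)(1-R)=0$ and bound $|R|<1$ for \emph{every} $z\in\mathbb{C}_+$ directly, via Cauchy--Schwarz applied to the exact imaginary-part identities for the two putative solutions; on the event $\Omega$ the weights $\xi_i^2\lesssim\mathsf{T}$ are bounded, so this estimate is uniform and no continuation argument is needed.
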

\begin{proof}
The proofs can be obtained by following lines of the arguments  of \cite[Theorem 2]{Karoui2009} and \cite[Theorem 1]{paul2009no}, or \cite[Theorem 2.4]{ding2021spiked} verbatim. We omit the details. 
\end{proof}
\begin{remark}
    we can prove an unconditional counterpart for Theorem \ref{lem_solutionsystem} by integrating out $\xi^2.$ Denote $F(x)$ as the CDF of $\xi^2.$ We find the unconditional system of equations as follows
\begin{align}\label{eq_systemequationsintergrate}
   & m_{1n,c}(z)=\frac{1}{n}\sum_{i=1}^{p-m}\frac{\sigma_{m+i}}{-z(1+\sigma_im_{2n,c}(z))},\quad m_{2n,c}(z)=\int_0^{+\infty} \frac{s}{-z(1+s m_{1n,c}(z))} \mathrm{d} F(s),\nonumber\\
   & m_{n,c}(z)=\frac{1}{p}\sum_{i=1}^{p-m}\frac{1}{-z(1+\sigma_{m+i}m_{2n,c}(z))}. 
\end{align}
In this setting, $(m_{1n,c}, m_{2n,c}, m_{n,c})$ is always deterministic.
\end{remark}

Thanks to Theorem \ref{lem_solutionsystem}, it is easy to see that the study of \eqref{eq_systemequations} can be reduced to the analysis of 
\begin{equation}\label{eq_functionFequal}
F_n(m_{1n}(z),z)=0, \quad z\in\mathbb{C}_{+},
\end{equation}
where $F_n(\cdot, \cdot)$ is defined as 
\begin{equation}\label{eq_F(m,z)1}
    F_n(m_{1n}(z),z)
    =\frac{1}{p}\sum_{i=1}^{p-m}{\sigma_{m+i}}\left({-z+\frac{\sigma_{m+i}}{p}\sum_{j=1}^n\frac{\xi^2_j}{1+\xi^2_j m_{1n}(z)}}\right)^{-1}-m_{1n}(z).
\end{equation} 
By the form of $m_{1n}(z)$, $m_{2n}(z)$ in \eqref{eq_systemequations}, it is more convenient to study the following quantities
\begin{equation}\label{eq_def_m12}
   m_1(z):=\frac{1}{p}{\rm tr}\left(G(z)\Sigma_2\right), \quad  m_2(z):=\frac{1}{p}\sum_{i=1}^n\xi^2_i\mathcal G_{ii}(z).
\end{equation}
Below, we will see that $m_1(z)$ and $m_2(z)$ approximate $m_{1n}(z)$ and $m_{2n}(z)$ well, respectively.

        Throughout this section, we will frequently use the minors of a matrix.  For the sample covariance matrix $S_2$, we denote $S_2=Y_2Y_2^{\prime}$ where $Y_2:=\Sigma_2^{1/2}UD$. Denote the index set ${\mathcal I}=\{1,\dots,n\}$ and given an index set $\mathcal{T}\subset{\mathcal I}$, we introduce
	the notation $Y_2^{(\mathcal{T})}$ to denote the $p \times(n-|\mathcal{T}|)$ minor of $Y_2$
	obtained from removing all the $i$-th columns of $Y_2$ for $i\in \mathcal{T}$ and keeping the original indices of $Y_2$. In particular, $Y_2^{(\emptyset)}=Y_2$. For convenience, we briefly write $(\{i\})$, $(\{i,j\})$ and $\{i,j\}\cup \mathcal{T}$ as $(i)$, $(i,j)$
	and $(ij\mathcal{T})$ respectively. Correspondingly, we denote sample covariance matrices and resolvents of the minors as 
	\begin{equation}\label{eq_defnminor}
	S_2^{(\mathcal{T})}=(Y_2^{(\mathcal{T})}) (Y_2^{(\mathcal{T})})^{\prime},\ {\mathcal S}_2^{(\mathcal{T})}=(Y_2^{(\mathcal{T})})^{\prime} (Y_2^{(\mathcal{T})}). 
	\end{equation}
	and
	\begin{equation}\label{eq_defnminorG}
	G^{(\mathcal{T})}(z)=(S_2^{(\mathcal{T})}-zI)^{-1}, \quad{\mathcal G}^{(\mathcal{T})}(z)=({\mathcal S}_2^{(\mathcal{T})}-zI)^{-1}.
	\end{equation}
    Moreover, we can define $m_{S_2}^{(\mathcal{T})}(z)$, $m_{\mathcal{S}_2}^{(\mathcal{T})}(z)$, $m_{1n}^{(\mathcal{T})}(z)$, $m_{2n}^{(\mathcal{T})}(z)$, $m_{n}^{(\mathcal{T})}(z)$, $m_1^{(\mathcal{T})}(z)$, $m_2^{(\mathcal{T})}(z)$ accordingly, via $G^{(\mathcal{T})}(z)$ and $\mathcal{G}^{(\mathcal{T})}(z)$.

Next, we collect several identities which are useful in our discussion.

 \begin{lemma}[Resolvent identities]\label{lem_resolvent}
        Let $\{\mathbf{y}_{i,2}\} \subset \mathbb{R}^p$ be the columns of $Y_2$. Then we have that 
        \begin{align*}
                \mathcal{G}_{ii}(z)&=-\frac{1}{z+z\mathbf{y}_{i,2}^{*}G^{(i)}(z)\mathbf{y}_{i,2}},\ ~~
                \mathcal{G}_{ij}(z)=z\mathcal{G}_{ii}(z)\mathcal{G}^{(i)}_{jj}(z)\mathbf{y}_{i,2}^{*}G^{(ij)}(z)\mathbf{y}_{i,2}\quad (i\neq j)\\
                \mathcal G_{ij}(z) & =\mathcal G_{ij}^{(k)}(z)+\frac{\mathcal G_{ik}(z)\mathcal G_{kj}(z)}{\mathcal G_{kk}(z)}\quad (i,j\ne k).
        \end{align*}
        \end{lemma}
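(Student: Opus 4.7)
My plan is to derive all three identities from Schur-complement and block-inversion manipulations, powered by the single SVD identity that for any rectangular matrix $A$,
\begin{equation*}
A(A^* A - zI)^{-1} A^* = I + z(AA^* - zI)^{-1}.
\end{equation*}
This lets me pass freely between the $(n-|\mathcal{T}|)\times(n-|\mathcal{T}|)$ resolvent $\mathcal{G}^{(\mathcal{T})}(z)$ and the $p\times p$ resolvent $G^{(\mathcal{T})}(z)$ without tracking singular vectors.

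For the first identity, I would permute the rows and columns of $\mathcal{S}_2 - zI$ so that index $i$ is brought to the last position. The resulting bottom-right scalar is $\|\mathbf{y}_{i,2}\|^2 - z$, and the Schur-complement formula gives
\begin{equation*}
\mathcal{G}_{ii}(z) = \bigl(\|\mathbf{y}_{i,2}\|^2 - z - \mathbf{y}_{i,2}^* Y_2^{(i)}(\mathcal{S}_2^{(i)} - zI)^{-1}(Y_2^{(i)})^* \mathbf{y}_{i,2}\bigr)^{-1}.
\end{equation*}
Applying the SVD identity with $A = Y_2^{(i)}$ converts the middle inverse into $I + zG^{(i)}(z)$, so the bilinear form becomes $\|\mathbf{y}_{i,2}\|^2 + z\mathbf{y}_{i,2}^* G^{(i)}(z)\mathbf{y}_{i,2}$. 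The two $\|\mathbf{y}_{i,2}\|^2$ terms cancel, producing the claimed formula.

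For the second identity with $i\neq j$, I would instead permute both indices to the last two positions and Schur-complement the $(n-2)\times(n-2)$ principal minor. The resulting $2\times 2$ Schur complement has entries of the form $\|\mathbf{y}_{\cdot,2}\|^2 - z$ on the diagonal and $\mathbf{y}_{i,2}^*\mathbf{y}_{j,2}$ off the diagonal, minus $\mathbf{y}_{\cdot,2}^* Y_2^{(ij)}(\mathcal{S}_2^{(ij)} - zI)^{-1}(Y_2^{(ij)})^* \mathbf{y}_{\cdot,2}$ terms. Invoking the SVD identity with $A=Y_2^{(ij)}$ causes the raw inner products to cancel, leaving the off-diagonal equal to $-z\mathbf{y}_{i,2}^* G^{(ij)}(z)\mathbf{y}_{j,2}$. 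Inverting the $2\times 2$ block explicitly and then reading off the $(1,2)$ entry, I would identify the two diagonal factors as $\mathcal{G}_{ii}(z)^{-1}$ and $\mathcal{G}_{jj}^{(i)}(z)^{-1}$ respectively, via the first identity applied once to $\mathcal{S}_2$ and once to $\mathcal{S}_2^{(i)}$. (I would also note in passing that the final factor in the stated lemma is a typo for $\mathbf{y}_{j,2}$.) The third identity is the classical minor formula: partitioning $\mathcal{S}_2 - zI$ into the block indexed by $\{k\}$ and its complement and writing the inverse of the full matrix in $2\times 2$ block-inversion form, one reads off $[\mathcal{G}^{(k)}(z)]_{ij} = \mathcal{G}_{ij}(z) - \mathcal{G}_{ik}(z)\mathcal{G}_{kj}(z)/\mathcal{G}_{kk}(z)$ for $i,j\neq k$, which rearranges to the claim.

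None of these steps is genuinely difficult — they are purely algebraic consequences of block linear algebra combined with the SVD identity. The main obstacle, such as it is, is index bookkeeping: keeping straight the permutations sending $i$ (or $\{i,j\}$) to the last positions, verifying that those permutations conjugate out of the scalar identity being proved, and consistently distinguishing among $\mathcal{S}_2$, $\mathcal{S}_2^{(i)}$, $\mathcal{S}_2^{(ij)}$ and their corresponding resolvents $G$ and $\mathcal{G}$ throughout the reductions.
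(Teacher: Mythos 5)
Your proof is correct and is essentially the standard Schur-complement/block-inversion argument; the paper itself omits the proof and simply cites \cite[Lemmas 4--5]{ding2023extreme}, where these identities are established in exactly this way. You are also right that the final factor in the second identity should read $\mathbf{y}_{j,2}$ rather than $\mathbf{y}_{i,2}$ — that is a typo in the statement, and your derivation via the $2\times 2$ Schur complement (identifying the diagonal entries with $\mathcal{G}_{ii}(z)^{-1}$ and $\mathcal{G}_{jj}^{(i)}(z)^{-1}$) recovers the correct form.
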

        \begin{lemma}[Matrix identities]\label{lem_matrixidentities}
            For any finite subset $\mathcal{T}\subset\{1,\dots,n\}$, we have that 
        \begin{equation}
        \left\|G^{(\mathcal{T})}\Sigma_2^{1/2} \right \|^2_F=\eta^{-1}\operatorname{Im}\operatorname{tr} \left(G^{(\mathcal{T})} \Sigma_2 \right). 
        \end{equation}
        Moreover, we have that 
	\begin{gather}
			\left|{\rm tr}(\mathcal{G}^{(i)}-\mathcal{G}) \right|  \le  \eta^{-1}, \ 			\left|{\rm tr}({ G}^{(i)}-{ G}) \right| \le  |z|^{-1}+\eta^{-1}, \ 
			\left|\operatorname{Im}{\rm tr}({G}^{(i)}-{\mathcal G})\right|  \le  \eta|z|^{-2}+\eta^{-1}. 
		\end{gather}
        \end{lemma}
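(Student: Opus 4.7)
I would handle the Frobenius-norm identity and the three trace bounds separately, each reducing to a standard resolvent or interlacing manipulation. For the Frobenius-norm identity, the starting point is the Hermitian-resolvent identity
\begin{equation*}
G^{(\mathcal{T})} - (G^{(\mathcal{T})})^{*} = (z - \bar z)\, G^{(\mathcal{T})}(G^{(\mathcal{T})})^{*} = 2i\eta\, G^{(\mathcal{T})}(G^{(\mathcal{T})})^{*},
\end{equation*}
obtained from $(A - zI)^{-1} - (A - \bar zI)^{-1} = (A - zI)^{-1}(z - \bar z)(A - \bar zI)^{-1}$ applied to the Hermitian matrix $A = S_2^{(\mathcal{T})}$. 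Rearranging gives $G^{(\mathcal{T})}(G^{(\mathcal{T})})^{*} = \eta^{-1}\operatorname{Im} G^{(\mathcal{T})}$, so
\begin{equation*}
\|G^{(\mathcal{T})}\Sigma_2^{1/2}\|_F^2 = \operatorname{tr}\bigl(G^{(\mathcal{T})}\Sigma_2(G^{(\mathcal{T})})^{*}\bigr) = \operatorname{tr}\bigl(\Sigma_2\, G^{(\mathcal{T})}(G^{(\mathcal{T})})^{*}\bigr) = \eta^{-1}\operatorname{Im}\operatorname{tr}\bigl(\Sigma_2\, G^{(\mathcal{T})}\bigr),
\end{equation*}
using that $\Sigma_2$ is real symmetric, so $\operatorname{Im}$ commutes with multiplication by $\Sigma_2$ inside the trace.

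For the first trace bound $|\operatorname{tr}(\mathcal{G}^{(i)} - \mathcal{G})| \le \eta^{-1}$, I would invoke Cauchy interlacing between $\mathcal{S}_2$ and its principal $(n-1)\times(n-1)$ minor $\mathcal{S}_2^{(i)}$: the signed counting function $N_{\mathcal{S}_2} - N_{\mathcal{S}_2^{(i)}}$ takes values in $\{-1,0,1\}$ by interlacing. Writing both traces as Stieltjes integrals of $f(t) = (t - z)^{-1}$ against the corresponding counting measures and integrating by parts converts the difference into $\int f'(t)\bigl[N_{\mathcal{S}_2^{(i)}}(t) - N_{\mathcal{S}_2}(t)\bigr]\,dt$, whose modulus is at most $\int |t - z|^{-2}\,dt \lesssim \eta^{-1}$. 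The second bound $|\operatorname{tr}(G^{(i)} - G)| \le |z|^{-1} + \eta^{-1}$ then follows from the shared-nonzero-spectrum identity $\operatorname{tr}(\mathcal{G}) - \operatorname{tr}(G) = (n - p)/(-z)$ (and its $(i)$-analogue): subtracting yields $\operatorname{tr}(G^{(i)} - G) = \operatorname{tr}(\mathcal{G}^{(i)} - \mathcal{G}) + 1/z$, and the triangle inequality closes the argument. Equivalently, Sherman--Morrison applied to the rank-one update $S_2 = S_2^{(i)} + \mathbf{y}_{i,2}\mathbf{y}_{i,2}^{*}$ gives $\operatorname{tr}(G^{(i)} - G) = v^{*}(G^{(i)})^{2}v / (1 + v^{*}G^{(i)}v)$ with $v = \mathbf{y}_{i,2}$, and the quotient is bounded by $\eta^{-1}$ via the standard estimates $|v^{*}(G^{(i)})^{2}v| \le \|G^{(i)}v\|^{2} = \eta^{-1}\operatorname{Im}(v^{*}G^{(i)}v)$ together with $|1 + v^{*}G^{(i)}v| \ge \operatorname{Im}(v^{*}G^{(i)}v)$.

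For the third bound on $|\operatorname{Im}\operatorname{tr}(G^{(i)} - \mathcal{G})|$, I would split $G^{(i)} - \mathcal{G} = (G^{(i)} - G) + (G - \mathcal{G})$. The first summand is controlled at modulus level by the $\eta^{-1}$ bound of the previous step; the second summand is the explicit scalar $(p - n)/z$, and $\operatorname{Im}(1/z) = -\eta/|z|^{2}$ produces the $\eta|z|^{-2}$ contribution once the bounded proportional factor $|p - n|/n$ from Assumption \ref{ass_sigma}(3) is absorbed into constants. The main place where some care is required is sign bookkeeping when converting between $G$, $\mathcal{G}$, and their minors, and extending from a single removed index $i$ to a general finite set $\mathcal{T}$: this is routine, as both the Sherman--Morrison identity and Cauchy interlacing iterate element by element and the accumulated error is of order $|\mathcal{T}|$, which is fixed in the applications.
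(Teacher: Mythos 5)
The paper never proves this lemma: it is quoted verbatim from \cite[Lemmas 4--5]{ding2023extreme}, so there is no in-paper argument to compare against. Your route is the standard one used in that literature (Ward identity for the Frobenius-norm identity; interlacing or Sherman--Morrison for the minor trace differences; the exact relation $\operatorname{tr}G-\operatorname{tr}\mathcal{G}=(n-p)/z$ from the shared nonzero spectrum of $Y_2Y_2^{\prime}$ and $Y_2^{\prime}Y_2$). The Frobenius-norm identity and the first two trace bounds are essentially fine. One cosmetic point: the interlacing-plus-integration-by-parts step yields the constant $\pi$ in front of $\eta^{-1}$, not $1$; to get the bound exactly as stated you need the sharper identity $\operatorname{tr}\mathcal{G}-\operatorname{tr}\mathcal{G}^{(i)}=\partial_z\log\mathcal{G}_{ii}=(\mathcal{G}^2)_{ii}/\mathcal{G}_{ii}$ together with $|(\mathcal{G}^2)_{ii}|\le\eta^{-1}\operatorname{Im}\mathcal{G}_{ii}\le\eta^{-1}|\mathcal{G}_{ii}|$, which is precisely the Sherman--Morrison variant you already give for the $p\times p$ resolvent, so this is easily repaired. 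You should also fix the convention for $\mathcal{G}^{(i)}$ once and for all (the paper pads it to an $n\times n$ matrix with $\mathcal{G}^{(i)}_{ii}=-z^{-1}$, which is where the extra $|z|^{-1}$ in the second bound comes from); your sign bookkeeping there is loose but not wrong in substance.

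The genuine gap is in the third bound. You write $\operatorname{tr}(G-\mathcal{G})$ as the explicit scalar $(p-n)/z$ and then claim the factor $|p-n|$ can be absorbed into constants because $|p-n|/n$ is bounded. It cannot: under Assumption \ref{ass_sigma}(3) one has $p-n\asymp n$ whenever $\phi\neq1$, so
\begin{equation*}
\operatorname{Im}\operatorname{tr}\left(G-\mathcal{G}\right)=(p-n)\,\frac{\eta}{|z|^{2}}
\end{equation*}
is of order $n\eta|z|^{-2}$, which is not dominated by $\eta|z|^{-2}+\eta^{-1}$ in general (take $\eta\asymp|z|\asymp1$). Your argument actually establishes the bound $|n-p|\,\eta|z|^{-2}+\eta^{-1}$, which is how this estimate is normally stated in the random-matrix literature; either the displayed inequality in the lemma is missing the factor $|n-p|$, or it silently assumes $|n-p|=\mathrm{O}(1)$. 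Either restore that factor (and check the downstream uses tolerate it) or justify why it is harmless; as written, ``absorbed into constants'' is a false step, not a routine one.
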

        The proof of the above two lemmas can be found in \cite[Lemma 4-5]{ding2023extreme}. We omit them here.

\subsubsection{Proof of Lemma \ref{lem_fakesignal_compare_nonspikes}}
The proof of the Lemma \ref{lem_fakesignal_compare_nonspikes} is the same as the one in the proof of statement $(ii)$ in  \cite[Theorem 4]{ding2023extreme}. To shorten the space, we omit the details here. 

\subsubsection{Proof of Theorem \ref{thm_fakesignal_limit}}
The proof of Theorem \ref{thm_fakesignal_limit} consists of two steps. Our discussion will focus on the largest eigenvalue $\lambda_{1,2}$ of $S_2$ while the rest leading $k$-th eigenvalues $\lambda_{k,2}$'s can be handled similarly. We begin from the equation system \eqref{eq_systemequations}, where we introduce a real auxiliary quantity $\mu_1>0$ governed by $F_n(m_{1n}(\mu_1),\mu_1)$. More specifically, since the pair $(\mu_1,m_{1n}(\mu_1))$ satisfies $F_n(m_{1n}(\mu_1), \mu_1)=0$, we set $\mu_1$ to be the largest solution of the following equation
\begin{equation}\label{eq_prf_fakesignals_defmu1}
    1+(\xi^2_{(1)}+\mathsf{q})m_{1n}(\mu_{1})=0,
\end{equation}
where for fixed sufficiently small constant $\epsilon>0,$ $\mathsf{q}$ is defined in \eqref{eq_def_q}. In the first step, we give several prior estimations of $m_{1n}(z)$ (also $m_{2n}(z)$, $m_{n}(z)$) around the target region. Also, we need to confirm the existence of $\mu_1$ in $F_n(m_{1n}(\mu_1), \mu_1)=0$. After that, we use a perturbation argument to characterize $\lambda_{1,2}$ via the equation $M(\lambda_{1,2})=0$ defined in \textbf{Step 2}. We wish to establish $M(\mu_1)\approx0$ from \eqref{eq_prf_fakesignals_defmu1} to link $\mu_1$ with $\lambda_{1,2}$. To this end, we need to control the fluctuation of $m_{1}(z)$ as well as $m_{2}(z)$. Then, we have to establish a kind of local law for $m_{1}(z)$, $m_{2}(z)$ to control their randomness around $m_{1n}(z)$ and $m_{2n}(z)$. We present such results also in the second step. In the second step, we conduct perturbation argument to show that $1+(\xi^2_{(1)}+\mathsf{q})m_{1n}(\lambda_{1,2})\approx 0$. Using \eqref{eq_prf_fakesignals_defmu1}, by a continuity
and stability analysis, we conclude the proof of this theorem. Below, we give the details of each step.
\textbf{Step 1: Asymptotic properties}\newline
Recalling the definition of $\mu_1$ in \eqref{eq_prf_fakesignals_defmu1}, we conduct the discussion on the following domain,
\begin{equation}\label{eq_def_spectraldomain}
    \mathbf{D}_{\mu}\equiv \mathbf{D}_{\mu}( C):=\left\{z=E+\mathrm{i}\eta \in \mathbb{C}_+: |E-\mu_1| \le C \mathsf{T}_{2}, \ n^{-2/3}\le\eta\le C \mu_1\right\},
\end{equation}
for some large constant $ C>0$. Define the control parameter $\mathsf{c}$ as follows
\begin{equation}\label{eq_def_controlparameter}
\mathsf{c}:=
\begin{cases}
\frac{\log n}{n^{1/\alpha}}, & \ \text{if (\ref{ass_xi_poly}) holds}; \\
\frac{1}{\log^{1/\beta-\epsilon} n}, & \ \text{if (\ref{ass_xi_exp}) holds}, 
\end{cases}
\end{equation} 
for some sufficient small constant $0<\epsilon<1/\beta$. The following asymptotic properties for $m_{1n}(z)$ and $m_{2n}(z)$ hold on $\mathbf{D}_{\mu}$.
\begin{lemma}\label{lem_prf_fakesignals_asymptoticproperties}
Suppose Assumptions \ref{ass_xi}, \ref{ass_sigma} hold. For any fixed realization $\{\xi_i^2\} \in \Omega$, we have  
\begin{enumerate}
\item For $z \in \mathbf{D}_{\mu}$ and some constants $C_1, C_2>0,$ 
\begin{equation*}
\operatorname{Re} m_{1n}(z) \asymp -E^{-1}, \ C_1 \eta E^{-2} \le \operatorname{Im} m_{1n}(z) \le C_2 \eta E^{-1}.  
\end{equation*}
\item When $|E-\mu_1| \leq C \mathsf{q}$ for some sufficiently large constant $C>0,$ letting $m_{1n}(E)=\lim_{\eta \downarrow 0} m_{1n}(E+\mathrm{i} \eta),$ we have that 
\begin{equation*}
    m_{1n}(E) \asymp -E^{-1}.
\end{equation*}
\item For $z \in \mathbf{D}_{u}$ and $\mathsf{c}$ defined in \eqref{eq_def_controlparameter}, we have that 
\begin{gather*}
    |m_{2n}(z)|=\mathrm{O}(\mathsf{c}),\quad |m_n(z)|=\mathrm{O}(E^{-1}),\\
    \operatorname{Im}m_{2n}(z)=\mathrm{O}(\eta E^{-1}),\quad \operatorname{Im}m_n(z)=\mathrm{O}(\eta E^{-2}).
\end{gather*}
\end{enumerate} 
\end{lemma}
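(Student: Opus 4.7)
My plan is to work entirely from the self-consistent equations \eqref{eq_systemequations}, combined with the defining equation $1+(\xi^2_{(1)}+\mathsf{q})m_{1n}(\mu_1)=0$ for $\mu_1$, and the tail control of $\{\xi_i^2\}$ that Lemma~\ref{lem_goodconfiguration} provides on the event $\Omega$. A first task is to locate $\mu_1$ on the correct scale: combining $m_{1n}(\mu_1)=-(\xi^2_{(1)}+\mathsf{q})^{-1}$ with the identity
\begin{equation*}
    m_{1n}(z)=\frac{1}{p}\sum_{i=1}^{p-m}\frac{\sigma_{m+i}}{-z(1+\sigma_{m+i}m_{2n}(z))}
\end{equation*}
and the bound $|\sigma_{m+i} m_{2n}(z)|=\mathrm{o}(1)$ (which I will verify a posteriori), one obtains $\mu_1\asymp \bar\sigma\,\xi^2_{(1)}\asymp \mathsf{T}$ on $\Omega$. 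This pins down the scale on which the whole domain $\mathbf{D}_\mu$ sits and makes the parameter $\mathsf{T}_2$ (the separation of the outlier from the bulk edge) compatible with $\mathsf{q}$.

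For claim (1), I would treat the equation $F_n(m_{1n}(z),z)=0$ as an implicit definition of $m_{1n}$ in a neighborhood of $(\mu_1,m_{1n}(\mu_1))$. The key quantitative input is that for $z\in\mathbf{D}_\mu$ and every $i$, $|1+\xi_i^2 m_{1n}(z)|\gtrsim (\xi^2_{(1)}-\xi^2_i+\mathsf{q})/\xi^2_{(1)}$, which is strictly positive on $\Omega$ by the separation of the ordered statistics in Definition~\ref{def_goodconfiguration}. This yields $|m_{2n}(z)|\lesssim E^{-1}\bar{\sigma}\ll 1$ and, after plugging back, $\operatorname{Re}m_{1n}(z)\asymp -E^{-1}$. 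The imaginary-part bound follows from the Stieltjes-transform identity: since $\operatorname{Im}m_{1n}=\operatorname{Im}(\frac{1}{p}{\rm tr}(G\Sigma_2))$ and the resolvent satisfies the Ward identity $\|G\Sigma_2^{1/2}\|_F^2=\eta^{-1}\operatorname{Im}{\rm tr}(G\Sigma_2)$ (Lemma~\ref{lem_matrixidentities}), differentiating the self-consistent equation in $\eta$ and using the above lower bound on $|1+\xi_i^2 m_{1n}(z)|$ gives $\operatorname{Im}m_{1n}\asymp \eta E^{-2}$ with matching upper and lower constants.

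For claim (2), the nontriviality lies in the continuous extension to the real axis: I would verify that on the interval $|E-\mu_1|\le C\mathsf{q}$ the derivative $\partial_m F_n(m,E)$ is bounded away from $0$ uniformly, so that the implicit function theorem extends $m_{1n}(\cdot)$ analytically across the real axis and yields $\operatorname{Im}m_{1n}(E+\mathrm{i}0)=0$, while the order $\operatorname{Re}m_{1n}(E)\asymp -E^{-1}$ is preserved by continuity from the interior estimate in (1). The non-degeneracy of $\partial_m F_n$ is precisely the statement that $\mu_1$ lies outside the limiting spectrum of $S_2$ with a gap of size $\mathsf{q}$, which is guaranteed by the definition of $\mathsf{q}$ and the configuration inequalities in Definition~\ref{def_goodconfiguration}.

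For claim (3), the bounds on $m_{2n}$ and $m_n$ are now direct consequences of (1) plugged into their defining equations in \eqref{eq_systemequations}. For $|m_{2n}(z)|$ I would split the sum into the contribution of $\xi^2_{(1)}$, where $|1+\xi^2_{(1)}m_{1n}(z)|$ is of order $\mathsf{q}/\xi^2_{(1)}$ and thus gives a bounded term of size $\mathrm{O}(\mathsf{c})$ after dividing by $z\asymp\mathsf{T}$, and the remaining terms where the denominator is of order unity, yielding an average of order $\bar\xi^2/E\lesssim \mathsf{c}$. The same split, with $\sigma_{m+i}$ replaced by $1$ inside the sum, gives $|m_n(z)|=\mathrm{O}(E^{-1})$. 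The imaginary-part estimates follow from $\operatorname{Im}m_{1n}\lesssim \eta E^{-2}$ by differentiating the relevant equations in $\eta$. The main obstacle throughout is maintaining the lower bound $|1+\xi_i^2 m_{1n}(z)|\gtrsim \mathsf{q}/\xi^2_{(1)}$ uniformly in $i$ and $z\in\mathbf{D}_\mu$; everything else is bookkeeping once this separation estimate is secured.
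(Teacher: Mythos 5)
The paper does not actually prove this lemma in-house: it states that the proof ``can be found in \cite{ding2023extreme}'', so there is no internal argument to compare against line by line. Judged on its own merits, your sketch identifies the correct mechanism: the $\mathsf{q}$-regularization in the definition of $\mu_1$ together with the gap $\xi^2_{(1)}-\xi^2_{(2)}\gtrsim n^{1/\alpha}\log^{-1}n$ on $\Omega$ yields the uniform lower bound $|1+\xi_i^2 m_{1n}(z)|\gtrsim(\xi^2_{(1)}+\mathsf{q}-\xi_i^2)/\xi^2_{(1)}$, which when fed back into the system \eqref{eq_systemequations} produces exactly the orders $\mu_1\asymp\bar\sigma\,\xi^2_{(1)}$, $m_{1n}\asymp-E^{-1}$, $|m_{2n}|=\mathrm{O}(\mathsf{c})$, $|m_n|=\mathrm{O}(E^{-1})$; this is consistent with the computations the paper itself performs later (e.g.\ \eqref{eq: 2.2 beta=2} and \eqref{eq_mu1part}).

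Two points need repair before this is a proof. First, the argument as written is circular: the separation bound on $|1+\xi_i^2m_{1n}(z)|$ presupposes $m_{1n}(z)\approx m_{1n}(\mu_1)$ uniformly over $\mathbf{D}_\mu$, the real-part estimate for $m_{1n}$ uses the smallness of $m_{2n}$ (your claim (3)), and the imaginary-part estimate for $m_{1n}$ uses $\operatorname{Im}m_{2n}=\mathrm{O}(\eta E^{-1})$, while (3) is in turn derived from (1). Saying you will ``verify a posteriori'' is not enough; you need an explicit non-circular scheme, e.g.\ establish the bounds at the anchor $\eta=C\mu_1$ where they are trivial, then propagate down in $\eta$ by a continuity/lattice argument using the stability of $F_n(m,z)=0$ (the same device the paper uses for Lemma \ref{eq_propooursidebulk}), or set up the coupled inequalities for $(\operatorname{Re}m_{1n},\operatorname{Im}m_{1n},m_{2n})$ as a contraction and solve them simultaneously. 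Second, the appeal to the Ward identity of Lemma \ref{lem_matrixidentities} is misplaced: that identity concerns the random resolvent $G(z)$, whereas $m_{1n}$ is the deterministic solution of the system. The correct route is to take imaginary parts of the defining equations directly, which gives $\operatorname{Im}m_{1n}=\frac{1}{p}\sum_i\sigma_{m+i}\frac{\eta(1+\sigma_{m+i}\operatorname{Re}m_{2n})+E\sigma_{m+i}\operatorname{Im}m_{2n}}{|z|^2|1+\sigma_{m+i}m_{2n}|^2}$ and the analogous formula for $\operatorname{Im}m_{2n}$; solving this $2\times2$ linear system (the off-diagonal coefficients being $\mathrm{o}(1)$ by the separation bound) yields the two-sided estimate $\operatorname{Im}m_{1n}\asymp\eta E^{-2}$, which implies the stated bounds. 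With these two fixes the plan goes through.
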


The following theorem gives the averaged local law for $m_1(z)$ and $m_2(z)$ on $\mathbf{D}_{\mu}$,
\begin{theorem}[Averaged local laws] \label{thm_prf_fakesignals_locallaw} 
Suppose Assumptions \ref{ass_xi}, \ref{ass_sigma} hold. For any fixed realization $\{\xi_i^2\} \in \Omega$ where $\Omega$ is introduced in Definition \ref{def_goodconfiguration}, let $m_1^{(1)}(z)$ and $m_{1n}^{(1)} (z)$ be defined by removing the column or entries associated with $\xi_{(1)}^2.$ The following holds uniformly for $z \in \mathbf{D}_{u}$ in \eqref{eq_def_spectraldomain}
\begin{enumerate}
\item If \eqref{ass_xi_poly} of  Assumption \ref{ass_xi} holds, we have that 
\begin{equation*}
m_{1}^{(1)}(z)=m_{1n}^{(1)}(z)+\mathrm{O}_{\prec}\left(n^{-1/2-2/\alpha}\right). 
\end{equation*}
\item If \eqref{ass_xi_exp} of Assumption \ref{ass_xi} holds, we have that 
\begin{equation*}
m_{1}^{(1)}(z)=m_{1n}^{(1)}(z)+\mathrm{O}_{\prec}\left(n^{-1/2}\right).
\end{equation*}
\end{enumerate}  
\end{theorem}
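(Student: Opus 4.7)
The plan is to prove Theorem \ref{thm_prf_fakesignals_locallaw} via the standard two-step strategy used in proofs of local laws for sample covariance matrices: first derive an approximate self-consistent equation for $m_1^{(1)}(z)$ with an explicit random error, then use the stability of the deterministic equation \eqref{eq_F(m,z)1} (as encoded in Lemma \ref{lem_prf_fakesignals_asymptoticproperties}) to convert this into the desired bound. Throughout I work conditionally on a fixed realization $\{\xi_i^2\} \in \Omega$, so that Definition \ref{def_goodconfiguration} and Lemma \ref{lem_goodconfiguration} yield a deterministic control $\xi_i^2 \le \mathsf{T}$ for all indices, and in addition the second-largest radius is of the same typical order as in the definition of $\mathsf{q}$.

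The first step is to use Schur's complement together with Lemma \ref{lem_resolvent} to write, for each $i$ not equal to the index of $\xi^2_{(1)}$,
\begin{equation*}
\mathcal{G}_{ii}^{(1)}(z) = -\frac{1}{z\bigl(1 + \xi_i^2\, \mathbf{u}_i^{*}\Sigma_2^{1/2} G^{(i,1)}(z)\Sigma_2^{1/2}\mathbf{u}_i\bigr)}.
\end{equation*}
By the large deviation bound in Lemma \ref{lem_largedeviationbound}(2) for vectors uniform on the sphere, combined with the Frobenius identity $\|G^{(i,1)}\Sigma_2^{1/2}\|_F^2 = \eta^{-1}\operatorname{Im}\operatorname{tr}(G^{(i,1)}\Sigma_2)$ from Lemma \ref{lem_matrixidentities}, the quadratic form concentrates around $p^{-1}\operatorname{tr}(\Sigma_2 G^{(i,1)})$ with fluctuation of size $\prec (\operatorname{Im} m_1^{(1)}/(n\eta))^{1/2}$. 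A similar Schur identity applied to the $p\times p$ resolvent $G^{(1)}$, using that the $k$-th row of $Y_2^{(1)}$ has i.i.d.\ entries scaled by $\sigma_{m+k}^{1/2}$ from the spherical components, yields a parallel representation for $m_1^{(1)}(z)$ in terms of $m_2^{(1)}(z)$. Substituting the two resulting identities produces a scalar equation of the form $F_n^{(1)}(m_1^{(1)}(z), z) = \varepsilon(z)$, where $F_n^{(1)}$ is the analogue of \eqref{eq_F(m,z)1} with the $\xi^2_{(1)}$-term removed, and $\varepsilon(z)$ is a sum of the fluctuation errors above plus the negligible rank-one perturbations bounded via Lemma \ref{lem_matrixidentities}.

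The second step is the quantitative bound on $\varepsilon(z)$, which is where the two regimes in the theorem split. In the polynomial case each $\xi_i^2 \le \mathsf{T} = n^{1/\alpha}\log n$; more importantly, the summation $p^{-1}\sum_{i\neq(1)} \xi_i^2\cdot(\cdot)$ pairs heavy radii with spherical fluctuations of order $n^{-1/2}$, and using the moment identity $p^{-1}\sum_i \xi_i^4 \asymp \mathsf{T}\cdot p^{-1}\sum_i \xi_i^2 \asymp \mathsf{T}$ on $\Omega$, a Cauchy--Schwarz argument refines the naive bound $\mathsf{T} n^{-1/2}$ into $n^{-1/2}\,\mathsf{T}^{1/2} \prec n^{-1/2-2/\alpha+\epsilon}$ once the scale-matching with $m_{1n}^{(1)}$ is done (since division by $|z|\asymp \mu_1 \asymp \xi^2_{(1)}\mathsf{T}\asymp n^{2/\alpha}$ absorbs one extra factor of $\mathsf{T}$). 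The exponential case is easier: $\mathsf{T}$ is only polylogarithmic, $|z|$ stays of order one, and the plain spherical concentration bound gives $\varepsilon(z) \prec n^{-1/2}$ directly.

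Finally, I convert the equation error into a sup-norm error using stability. Lemma \ref{lem_prf_fakesignals_asymptoticproperties}(1) gives $\operatorname{Re} m_{1n}^{(1)} \asymp -E^{-1}$ and controls $\operatorname{Im} m_{1n}^{(1)}$, which is exactly the information needed to show that $\partial_{m} F_n^{(1)}(m_{1n}^{(1)}(z), z)$ is bounded away from zero uniformly on $\mathbf{D}_{\mu}$. A standard continuity/bootstrap argument, starting from large $\eta$ where $|m_1^{(1)}-m_{1n}^{(1)}|$ is small by trivial norm bounds and propagating down to $\eta \ge n^{-2/3}$ along a fine grid, then yields $|m_1^{(1)}(z)-m_{1n}^{(1)}(z)| \prec |\varepsilon(z)|$ uniformly on $\mathbf{D}_{\mu}$, which is the statement of the theorem. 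The main obstacle is the polynomial case: because the tail index $\alpha$ can be as close to $1$ as desired, the quadratic-form concentration errors naively blow up like $\mathsf{T}$, and the improvement to $n^{-1/2-2/\alpha}$ hinges on carefully exploiting the cancellations between the size of $|z|\asymp \mu_1$ in the denominator of each Schur expansion and the size of $\xi_i^2$ in the numerator, which in turn requires the removal of $\xi^2_{(1)}$ so that no single summand dominates the sum.
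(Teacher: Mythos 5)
Your overall architecture (approximate self-consistent equation via concentration, then stability plus a continuity bootstrap in $\eta$) matches the paper's strategy, but three concrete gaps remain. First, and most seriously, you never establish that the spectral domain $\mathbf{D}_{\mu}$ lies outside the spectrum of $S_2^{(1)}$. The domain $\mathbf{D}_{\mu}$ reaches real parts below $\mu_1$ and down to $\eta=n^{-2/3}$, so your bootstrap "down in $\eta$ along a fine grid" must survive real parts that could a priori sit on top of $\lambda_1^{(1)}$, where the resolvent and hence all your concentration errors blow up. The paper breaks this circularity with a two-stage argument: it first proves the local law for the \emph{full} matrix $S_2$ on the restricted domain $\widetilde{\mathbf{D}}_{\mu}$ to the right of $\mu_1$ (Lemma \ref{eq_propooursidebulk}, where the trivial bound $\|\mathcal{G}\|\le\eta^{-1}$ suffices at $\eta=\mathrm{C}\mu_1$), deduces eigenvalue rigidity (Lemma \ref{lem: upper bound for eigenvalues}), obtains the gap $\mu_1>\lambda_{1,2}>\mu_2>\lambda_1^{(1)}$ in \eqref{eq_keyused}, and only then works with the minor on all of $\mathbf{D}_{\mu}$ using the deterministic bound $\|G^{(1\mathcal{T})}\|\lesssim n^{-1/\alpha}\log n$ from \eqref{eq: est of G^(1)}. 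Your single-pass argument is missing this input and, as written, cannot close at small $\eta$.

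Second, your route to the equation linking $m_1^{(1)}$ and $m_2^{(1)}$ assumes "the $k$-th row of $Y_2^{(1)}$ has i.i.d.\ entries," which is false in the elliptical model: the \emph{columns} $\xi_i\Sigma_2^{1/2}\mathbf{u}_i$ are independent, but the rows are not, so a row-wise Schur complement for the $p\times p$ resolvent is unavailable. The paper instead uses the column-based decomposition $S_2-zI=\sum_i\mathbf{y}_{i,2}\mathbf{y}_{i,2}^{\prime}+zm_2\Sigma_2-z(I+m_2\Sigma_2)$ together with a resolvent expansion and the Sherman--Morrison formula \eqref{eq_usefulformula} to obtain \eqref{eq: m_1 by m_2}. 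Third, your quantitative bookkeeping in the polynomial case is off: $\mu_1\asymp\xi^2_{(1)}\asymp n^{1/\alpha}$ up to logarithms (see \eqref{eq_mu1part}), not $n^{2/\alpha}$, and the intermediate bound $n^{-1/2}\mathsf{T}^{1/2}=n^{-1/2+1/(2\alpha)}(\log n)^{1/2}$ is \emph{larger} than $n^{-1/2}$, so it cannot be "$\prec n^{-1/2-2/\alpha+\epsilon}$." The correct mechanism is that each error term of relative size $\xi_i^2/(\sqrt{n}\eta)$ with $\eta\asymp|z|\asymp n^{1/\alpha}$ yields $n^{-1/2-1/\alpha}$ for $m_2$, and the additional prefactor $|z|^{-1}\asymp n^{-1/\alpha}$ in the representation of $m_1$ supplies the second factor, giving $n^{-1/2-2/\alpha}$. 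You land on the right exponent, but the stated intermediate scales would not produce it.
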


The proof of Lemma \ref{lem_prf_fakesignals_asymptoticproperties} can be found in \cite{ding2023extreme} and the proof of Theorem \ref{thm_prf_fakesignals_locallaw} will be postponed in Section \ref{sec_proof_thmfake_locallaw}.

\textbf{Step 2: Locations for largest eigenvalue of $S_2$}\newline
In this part, we study the convergent limits of the largest eigenvalues of $S_2,$ i.e., $\lambda_{k,2}.$ For simplification, we only focus on $\lambda_{1,2}$ while the rest can be handled similarly.

In order to quantify the location of $\lambda_{1,2} \equiv \lambda_1(S_2),$ we need to introduce several auxiliary quantities. Recall $\mu_1$ defined in \eqref{eq_prf_fakesignals_defmu1}. Similarly, we denote $\mu_2$ by replacing $\xi^2_{(1)}$ with $\xi^2_{(2)}$ in \eqref{eq_prf_fakesignals_defmu1}.  Moreover,  for $\mathsf{q}$ and the sufficiently small constant $\epsilon>0$ in \eqref{eq_def_q}, we denote 
\begin{equation}\label{eq_prf_fakesignals_locationdefinition}
\mu_1^{\pm}:=\mu_1 \pm n^{-1/2+2 \epsilon} \mathsf{q} ,
\end{equation}
and notice that 
\begin{equation}\label{eq_qremoveonecolumn}
S_2^{(1)}:=S_2-\mathbf{y}_{(1),2}\mathbf{y}^{\prime}_{(1),2},
\end{equation}
where $\mathbf{y}_{(1),2}$ is the column of $Y_2$ associated with $\xi^2_{(1)}$. Accordingly, we denote the largest eigenvalue of $S_2^{(1)}$ as $\lambda_{1,2}^{(1)} \equiv \lambda_1(S_2^{(1)}).$ The main result of this part is summarised in Theorem \ref{thm_prf_fakesignals_eigenvaluerigidity} below.
\begin{theorem}\label{thm_prf_fakesignals_eigenvaluerigidity}
Suppose Assumptions \ref{ass_xi} and \ref{ass_sigma} hold. For some sufficiently small constant $\epsilon>0$ and $\mu_1^{\pm}$ defined in \eqref{eq_prf_fakesignals_locationdefinition}, conditioned on $\Omega$ in Lemma \ref{lem_goodconfiguration}, with high probability, we have that 
$
\lambda_{1,2} \in [\mu_1^-, \mu_1^+]. 
$
\end{theorem}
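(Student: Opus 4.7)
My plan is to derive a master equation whose largest root is $\lambda_{1,2}$ and compare it with the defining equation (\ref{eq_prf_fakesignals_defmu1}) of $\mu_1$. By the rank-one decomposition (\ref{eq_qremoveonecolumn}) and the Cauchy interlacing theorem, with probability one $\lambda_{1,2}>\lambda_{1,2}^{(1)}$; the Sherman--Morrison/matrix determinant identity then yields the characterization
\begin{equation*}
    1 + \mathbf{y}_{(1),2}^{\prime}\, G^{(1)}(\lambda_{1,2})\, \mathbf{y}_{(1),2} = 0, \qquad \mathbf{y}_{(1),2}=\xi_{(1)}\,\Sigma_2^{1/2}\mathbf{u}_{(1)}.
\end{equation*}
The whole task reduces to showing that the only $z\in\mathbf{D}_\mu$ satisfying this equation lies in $[\mu_1^-,\mu_1^+]$.

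\textbf{Approximation of the master equation.} Since $\mathbf{u}_{(1)}$ is uniform on $\mathbb{S}^{p-1}$ independent of $G^{(1)}(z)$, Lemma \ref{lem_largedeviationbound} applied to $\Sigma_2^{1/2}G^{(1)}(z)\Sigma_2^{1/2}$ gives, uniformly in $z\in\mathbf{D}_\mu$,
\begin{equation*}
    \mathbf{y}_{(1),2}^{\prime} G^{(1)}(z)\mathbf{y}_{(1),2} = \xi_{(1)}^2\, m_1^{(1)}(z) + \mathrm{O}_{\prec}\!\Bigl(\tfrac{\xi_{(1)}^2}{p}\bigl\|\Sigma_2^{1/2}G^{(1)}(z)\Sigma_2^{1/2}\bigr\|_F\Bigr),
\end{equation*}
and the Hilbert--Schmidt norm is bounded through Lemma \ref{lem_matrixidentities} by the imaginary-part estimate in Lemma \ref{lem_prf_fakesignals_asymptoticproperties}(1). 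Applying the averaged local law (Theorem \ref{thm_prf_fakesignals_locallaw}) replaces $m_1^{(1)}(z)$ by $m_{1n}^{(1)}(z)$. Finally, the bookkeeping between the \emph{removed} system $(m_{1n}^{(1)},m_{2n}^{(1)},m_n^{(1)})$ and the full system (\ref{eq_systemequations}) contributes a multiplicative correction of order $\xi_{(1)}^2/p$ to $m_{2n}$, which propagates through the fixed point to produce the shift $\xi_{(1)}^2\mapsto \xi_{(1)}^2+\mathsf{q}$; the extra slack in $\mathsf{q}$ is chosen precisely to absorb the concentration error and the local-law error above. Consequently, on a high-probability event,
\begin{equation*}
    M(\lambda_{1,2}) := 1 + (\xi_{(1)}^2+\mathsf{q})\, m_{1n}(\lambda_{1,2}) = \mathrm{O}_{\prec}\!\bigl(n^{-1/2+\epsilon}\,\mathsf{q}/\mu_1\bigr).
\end{equation*}

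\textbf{Stability and conclusion.} By the definition of $\mu_1$, $M(\mu_1)=0$, and Lemma \ref{lem_prf_fakesignals_asymptoticproperties}(1)--(2) yields $M^{\prime}(z)\asymp (\xi_{(1)}^2+\mathsf{q})/z^2 \gtrsim 1/\mu_1$ on a neighborhood of $\mu_1$ of radius $\asymp \mathsf{q}$, together with a sign change: $M(\mu_1^{-})>0>M(\mu_1^{+})$. Invoking this monotonicity together with the previous display, the stochastic equation $1+\mathbf{y}_{(1),2}^{\prime}G^{(1)}(z)\mathbf{y}_{(1),2}=0$ also changes sign across $[\mu_1^-,\mu_1^+]$ on the same high-probability event, and its unique root $\lambda_{1,2}$ in $\mathbf{D}_\mu$ must lie in this interval. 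The factor $n^{2\epsilon}$ (versus $n^{\epsilon}$) in the definition of $\mu_1^\pm$ provides the usual buffer to turn the $\prec$ estimates into high-probability bounds.

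\textbf{Main obstacle.} The most delicate ingredient is propagating the averaged local law of Theorem \ref{thm_prf_fakesignals_locallaw} \emph{uniformly} up to the real axis at distance only $n^{-1/2+2\epsilon}\mathsf{q}$ from the candidate location $\mu_1$. Near this edge, the deterministic density and its Stieltjes transform have Hölder-$1/2$ behavior, so $\operatorname{Im} m_{1n}(z)$ can degenerate and the large-deviation bound for the quadratic form must be coupled carefully with a Lipschitz-continuity-in-$z$ argument on a fine grid in $\mathbf{D}_\mu$, followed by a union bound. Controlling simultaneously (i) the heavy-tailed size of $\xi_{(1)}^2$, (ii) the local-law fluctuation, and (iii) the exact identification of the shift $\mathsf{q}$ from removing the top radius column is what forces the specific form of $\mathsf{q}$ in (\ref{eq_def_q}) and the factor $n^{-1/2+2\epsilon}$ in (\ref{eq_prf_fakesignals_locationdefinition}).
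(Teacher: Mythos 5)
Your overall strategy is the same as the paper's: characterize $\lambda_{1,2}$ as a root of $M(x)=1+\mathbf{y}_{(1),2}'G^{(1)}(x)\mathbf{y}_{(1),2}$, approximate the quadratic form by $\xi_{(1)}^2 m_{1n}^{(1)}$ via Lemma \ref{lem_largedeviationbound} and Theorem \ref{thm_prf_fakesignals_locallaw}, and locate the root by a sign-change argument around $\mu_1$. However, there is a genuine gap in the identification step. Showing that $M$ changes sign on $[\mu_1^-,\mu_1^+]$ only gives you \emph{some} eigenvalue of $S_2$ in that interval (and even this requires knowing $G^{(1)}$ has no pole there, i.e.\ $\lambda_{1,2}^{(1)}<\mu_1^-$, so that $M$ is continuous on the interval). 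To conclude that this eigenvalue is the \emph{largest} one, and that $\lambda_{1,2}$ is not elsewhere, you need the two facts the paper proves and you omit: (i) $\lambda_{2,2}\le\lambda_{1,2}^{(1)}<\mu_2$ (Weyl plus the a priori location bound \eqref{eq_keyused}, which rests on Lemma \ref{lem: upper bound for eigenvalues}), and (ii) the separation $\mu_1-\mu_2\gtrsim n^{1/\alpha}\log^{-1}n$, which is proved by a careful comparison of the two fixed-point equations defining $\mu_1$ and $\mu_2$ (equations \eqref{eq_mu1part}--\eqref{eq_bbbbb}). Without these, "its unique root $\lambda_{1,2}$ in $\mathbf{D}_\mu$" is an unproved assertion: a priori the interval could contain $\lambda_{2,2}$ instead, or $\lambda_{1,2}$ could sit above $\mu_1^+$. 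The paper also needs the quantitative statement $1+(\xi_{(1)}^2+\mathsf{q})m_{1n}^{(1)}(\mu_1^{\pm})\asymp \mp n^{-1/2+\epsilon}$, i.e.\ that the deterministic variation over the interval strictly dominates the $\mathrm{O}_{\prec}(n^{-1/2-\epsilon_0})$ fluctuation error; your sketch asserts a derivative bound but never carries out this comparison, which is where the specific exponent $-1/2+2\epsilon$ in \eqref{eq_prf_fakesignals_locationdefinition} is actually used.

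Two smaller points. First, your signs are reversed: since $m_{1n}$ is negative and increasing to the right of the spectrum (Lemma \ref{lem_prf_fakesignals_asymptoticproperties}), one has $M(\mu_1^-)<0<M(\mu_1^+)$, not the opposite; this does not affect the intermediate-value argument but should be fixed. Second, your "main obstacle" misidentifies the difficulty: $\mu_1$ is an isolated outlier location separated from the rest of the spectrum by $\gtrsim n^{1/\alpha}\log^{-1}n$, so there is no square-root edge degeneracy of $\operatorname{Im}m_{1n}$ to contend with; likewise, $\mathsf{q}$ is not "produced" by removing the top column (the paper shows $(\xi_{(1)}^2+\mathsf{q})(m_{1n}^{(1)}(\mu_1)-m_{1n}(\mu_1))=\mathrm{O}(n^{-1})$, i.e.\ that difference is negligible); rather, $\mathsf{q}$ is a regularization inserted into the definition \eqref{eq_prf_fakesignals_defmu1} so that $1+\xi_{(1)}^2 m_{1n}(\mu_1)$ stays bounded away from zero in the self-consistent equations.
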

\begin{proof}[\bf Proof of Theorem \ref{thm_prf_fakesignals_eigenvaluerigidity}]
In what follows, we restrict the discussion on the probability event $\Omega$ in Lemma \ref{lem_goodconfiguration}. By Weyl's inequality, we have that $\lambda_{2,2} \leq \lambda_{1,2}^{(1)}.$ Moreover, by (\ref{eq_keyused}), we see that with high probability $\lambda_{1,2}^{(1)}<\mu_2.$ The rest of the proof focuses on the following two claims: 
\begin{equation}\label{eq_mu1mu2bound}
\mu_1-\mu_2 \geq n^{1/\alpha} \log^{-1}n,
\end{equation}
and for $S_2^{(1)}$ in (\ref{eq_qremoveonecolumn}),
\begin{equation}\label{eq_defnmlambda}
M(x)=1+\mathbf{y}^{\prime}_{(1),2}G^{(1)}(x)\mathbf{y}_{(1),2}, \ G^{(1)}(x)=(S_2^{(1)}-x I)^{-1},
\end{equation}
$M(x)$ changes sign with high probability at $\mu_1^-$ and $\mu_1^+.$  In fact, $\lambda_{1,2}$ satisfies the following equation with high probability
\begin{equation}\label{eq_masterequation}
{\rm det}(\lambda_{1,2}I-\mathbf{y}_{(1),2}\mathbf{y}^{\prime}_{(1),2}-S_2^{(1)})=0\Rightarrow M(\lambda_{1,2})=0,
\end{equation}
as long as $\lambda_{1,2}>\lambda_{1,2}^{(1)}.$ On the other hand, if $M(x)$ changes sign at $\mu_1^{\pm},$ then by continuity, there must at least be one eigenvalue of $S_2$ within the interval $[\mu_1^-, \mu_1^+].$ If (\ref{eq_mu1mu2bound}) holds, then combining the above arguments, the only possible eigenvalue of $S_2$ in this interval is $\lambda_{1,2}$ and moreover, $\lambda_{1,2}>\lambda_{1,2}^{(1)}.$  

We first justify \eqref{eq_mu1mu2bound}. Recall that $\mu_1$ is defined in \eqref{eq_prf_fakesignals_defmu1} according to 
\begin{equation*}
1+(\xi^2_{(1)}+\mathsf{q})m_{1n}(\mu_1)=0.
\end{equation*}
Together with \eqref{eq_functionFequal} and \eqref{eq_F(m,z)1}, we readily obtain that 
\begin{equation*}
1=\frac{1}{p}\sum_{i=1}^{p-m}\frac{\sigma_{m+i}}{\frac{\mu_1}{\xi^2_{(1)}+\mathsf{q}}-\frac{\sigma_{m+i}}{n}\sum_{j=1}^n\frac{\xi_{(j)}^2}{\xi^2_{(1)}+\mathsf{q}-\xi^2_{(j)}}}.
\end{equation*}
Recall \eqref{eq_def_controlparameter}. Using the definition of $\mathsf{q}$, we see that on $\Omega,$ for some constant $C>0$

\begin{align}\label{eq: 2.2 beta=2}
&\frac{1}{p}\sum_{j=1}^n\frac{\xi_{(j)}^2}{\xi^2_{(1)}+\mathsf{q}-\xi^2_{(j)}}=\frac{1}{p}\frac{\xi_{(1)}^2}{\mathsf{q}}+\frac{1}{p}\sum_{j=2}^{n}\frac{\xi_{(j)}^2}{\xi^2_{(1)}+\mathsf{q}-\xi^2_{(j)}}\nonumber\\
   &\le \frac{Cn^{\epsilon}\log n}{p}+\frac{1}{p}\sum_{j=2}^{n}\frac{\xi_{(j)}^2}{\xi^2_{(1)}+\mathsf{q}-\xi^2_{(2)}}
   \le\frac{Cn^{\epsilon}\log n}{n}+\frac{C\log n}{n^{1/\alpha}}=C \mathsf{c}.
\end{align}

The definition of $\Bar{\sigma}$ and
the above arguments imply that on $\Omega$
\begin{equation}\label{eq_mu1part}
{\mu_1}/({\xi_{(1)}^2+\mathsf{q}})=\Bar{\sigma}+\mathrm{O}(\mathsf{c}).  
\end{equation}

By an analogous argument, we have that for some constants $C_k>0, k=1,2,3,$ 
\begin{align*}
&\frac{1}{p}\left(\sum_{j=1}^n \frac{\xi^2_{(j)}}{\xi^2_{(1)}+\mathsf{q}-\xi^2_{(j)}}-\sum_{j=2}^n\frac{\xi^2_{(j)}}{\xi^2_{(2)}+\mathsf{q}-\xi^2_{(j)}}\right)
\geq C_1\mathsf{c}-\frac{1}{p}\sum_{j=2}^n\frac{\xi^2_{(j)}(\xi^2_{(1)}-\xi^2_{(2)})}{(\xi^2_{(1)}+\mathsf{q}-\xi^2_{(j)})(\xi^2_{(2)}+\mathsf{q}-\xi^2_{(j)})}\\&\ge C_1 \mathsf{c}-\frac{1}{n}\sum_{j=2}^n \frac{\xi^2_{(j)}}{\xi^2_{(2)}+\mathsf{q}-\xi^2_{(j)}}\geq C_2 \mathsf{c},
\end{align*}
and on the other hand
\begin{align*}
    &\frac{1}{p}\left(\sum_{j=1}^n \frac{\xi^2_{(j)}}{\xi^2_{(1)}+\mathsf{q}-\xi^2_{(j)}}-\sum_{j=2}^n \frac{\xi^2_{(j)}}{\xi^2_{(2)}+\mathsf{q}-\xi^2_{(j)}}\right)\\
    &\le \frac{1}{p}\left(\frac{\xi^2_{(1)}}{d_{1}}+\sum_{j=2}\frac{\xi^2_{(j)}}{\xi^2_{(1)}+\mathsf{q}-\xi^2_{(j)}}+\sum_{j=2}\frac{\xi^2_{(j)}}{\xi^2_{(2)}+\mathsf{q}-\xi^2_{(j)}}\right) \leq C_3\mathsf{c}.
\end{align*}
Using the above control, the definition of $\mu_2$ and a discussion similar to (\ref{eq_mu1part}), we can prove that 
\begin{equation}\label{eq_mu2part}
{\mu_2}/({\xi_{(2)}^2+\mathsf{q}})=\phi \bar{\sigma}+\mathrm{O}(\mathsf{c}).  
\end{equation}

Combining (\ref{eq_mu1part}) and (\ref{eq_mu2part}), we immediately see that  
\begin{equation}\label{eq_bbbbb}
\frac{\mu_1}{\xi^2_{(1)}+\mathsf{q}}-\frac{\mu_2}{\xi^2_{(2)}+\mathsf{q}}=\mathrm{O}(\mathsf{c}).
\end{equation}
This implies that 
\begin{align*}
    \mu_1-\mu_2&=(\xi^2_{(1)}+\mathsf{q})\left(\frac{\mu_1}{\xi^2_{(1)}+\mathsf{q}}-\frac{\mu_2}{\xi^2_{(2)}+\mathsf{q}}\right)+\mu_2\left(\frac{\xi^2_{(1)}+\mathsf{q}}{\xi^2_{(2)}+\mathsf{q}}-1\right)\\
    &=(\xi^2_{(1)}+\mathsf{q})\left(\frac{\mu_1}{\xi^2_{(1)}+\mathsf{q}}-\frac{\mu_2}{\xi^2_{(2)}+\mathsf{q}}\right)+\frac{\mu_2}{\xi^2_{(2)}+\mathsf{q}}\left(\xi^2_{(1)}-\xi^2_{(2)}\right)\ge n^{2/\alpha}\log^{-1}n,
\end{align*}
where in the third step we used (\ref{eq_bbbbb}), Lemma \ref{lem_goodconfiguration} and the definition of $\mathsf{c}$ in (\ref{eq_def_controlparameter}). This completes the proof of (\ref{eq_mu1mu2bound}). 

Next, we will show that 
\begin{equation*}
M(\mu_1^-)<0,  \ M(\mu_1^+)>0. 
\end{equation*}
Due to similarity, in what follows, we focus on the first inequality. Note that $\mathbf{y}_{1,2}^{\prime} G^{(1)}(\mu_1^-) \mathbf{y}_{1,2}= \xi_{(1)}^2  \mathbf{u}_1^{\prime}\Sigma_2^{1/2} G^{(1)}(\mu_1^-) \Sigma_2^{1/2} \mathbf{u}_1.$ Moreover, recall that $m_1^{(1)}(\mu_1^-)=n^{-1} \operatorname{tr}(\Sigma_2^{1/2}G^{(1)}(\mu_1^-)\Sigma_2^{1/2}).$ Then according to Lemma \ref{lem_largedeviationbound}, we have that 
\begin{align}\label{eq_conconconone}
\mathbf{y}_{1,2}^{\prime} G^{(1)}(\mu_1^-) \mathbf{y}_{1,2}& 
 =\xi^2_{(1)} m_1^{(1)}(\mu_1^-)+\mathrm{O}_{\prec} ( {\xi^2_{(1)}}{n^{-1/2-1/\alpha}}), 
\end{align}
where we used (\ref{eq_mu1mu2bound}) and the fact $\mu_2>\lambda_1^{(1)}$.  Moreover, for some sufficiently small constant $\epsilon_0>0$ and $z_0=\mu_1^-+\mathrm{i} n^{-1/2-\epsilon_0},$ we can obtain that  

\begin{align}\label{eq_bigexpansion}
m_1^{(1)}(\mu_1^-)& =\left[m_1^{(1)}(\mu_1^-)-m_1^{(1)}(z_0) \right]+\left[m_1^{(1)}(z_0)-m_{1n}^{(1)}(z_0)\right]\\
&+\left[m_{1n}^{(1)}(z_0)-m_{1n}^{(1)}(\mu_1^-)\right]+m_{1n}^{(1)}(\mu_1^-) =\mathsf{P}_1+\mathsf{P}_2+\mathsf{P}_3+m_{1n}^{(1)}(\mu_1^-).
\end{align}
First, by Theorem \ref{thm_prf_fakesignals_locallaw}, we have that $\mathsf{P}_2 \prec n^{-1/2-2/\alpha}$. Second, let $\{\mathbf{v}^{(1)}_i\}$ be the eigenvectors of $S_2^{(1)}$ associated with the eigenvalues $\{\lambda_{i,2}^{(1)}\}.$ Then we have that
\begin{align*}
|\mathsf{P}_1|&\le\frac{1}{p}\sum_{i=1}^n|T\mathbf{v}^{(1)}_i|^2\left|\frac{1}{\lambda^{(1)}_{i,2}-\mu_1^-}-\frac{1}{\lambda^{(1)}_{i,2}-z_0}\right|=\frac{1}{p}\sum_{i=1}^n|T\mathbf{v}^{(1)}_i|^2\left|\frac{\mathrm{i}n^{-1/2-\epsilon_0}}{(\lambda^{(1)}_{i,2}-\mu_1^-)(\lambda^{(1)}_{i,2}-z_0)}\right|\\
  &\le\frac{1}{p}\sum_{i=1}^n|T\mathbf{v}^{(1)}_i|^2\Big|\frac{\mathrm{i}n^{-1/2-\epsilon_0}+\mathrm{O}_{\prec}(n^{-2/\alpha-1/2}\log^2n)}{|\lambda^{(1)}_{i,2}-z_0|^2}\Big|\\
  &\prec\operatorname{Im}(m_1^{(1)}(z_0))\times \mathrm{O}_{\prec}(n^{-1/2-\epsilon_0})
\prec n^{-1-1/\alpha},
\end{align*}
where in the third step we used (\ref{eq_mu1mu2bound}) and the fact $\mu_2>\lambda_1^{(1)}$ and in the last step we used Lemmas \ref{lem_prf_fakesignals_asymptoticproperties}, \ref{lem_goodconfiguration} and (\ref{eq_mu1part}). Third, according to Lemma \ref{lem_solutionsystem},  we can obatin that 
\begin{align*}
  \mathsf{P}_3 =\mathcal{L}_1(z_0,z_0)-\mathcal{L}_1(\mu_1^-,\mu_1^-)&=[\mathcal{L}_1(z_0,z_0)-\mathcal{L}_1(z_0,\mu_1^-)]+[\mathcal{L}_1(z_0,\mu_1^-)-\mathcal{L}_1(\mu_1^-,\mu_1^-)]\\
    &:=\mathcal{M}^{(1)}_{31}+\mathcal{M}^{(1)}_{32},
\end{align*}
where binary function $\mathcal{L}_1(z_0,\mu_1^-)=\frac{1}{p}\sum_{i=1}^{p-m}\frac{\sigma_{m+i}}{-z_0(1-\frac{\sigma_{m+i}}{n}\sum_{j=2}^n{\xi^2_{(j)}}[{\mu_1^-(1+m^{(1)}_{1n}(\mu_1^-)\xi^2_{(j)})}]^{-1})}$.
According to (\ref{eq_mu1part}), (\ref{eq_mu1mu2bound}), and Lemmas \ref{lem_goodconfiguration}, \ref{lem_prf_fakesignals_asymptoticproperties}, we conclude that with high probability $|1+\sigma_{m+i} m_{2n}^{(1)}(z_0)|, |1+\sigma_{m+i} m_{2n}^{(1)}(\mu_1^-)|  \sim 1.$ For $\mathcal{M}^{(1)}_{31}$, using the definition in \eqref{eq_systemequations} and the above bounds,  we have that with high probability
\begin{align} \label{eq_m11control}
  &  \mathcal{M}^{(1)}_{31}=\frac{1}{p}\sum_{i=1}^{p-m} \frac{\sigma^2_{m+i}}{-z_0(1+\sigma_{m+i}m^{(1)}_{2n}(z))(1+\sigma_{m+i}m^{(1)}_{2n}(\mu_1^-))}\big(m^{(1)}_{2n}(\mu_1^-)-m^{(1)}_{2n}(z)\big) \nonumber  \\
    &=\mathrm{O}(|z_0|^{-1})\times\frac{1}{n}\sum_{j=2}^n \left(\frac{\xi^2_{(j)}}{-\mu_1^-(1+\xi^2_{(j)}m_{1n}^{(1)}(\mu_1^-))}-\frac{\xi^2_{(j)}}{-z(1+\xi^2_{(j)}m_{1n}^{(1)}(z_0))}\right) \nonumber \\
    &=\mathrm{O}(|z_0|^{-1})\times\frac{1}{n}\sum_{j=2}^n \left(\frac{\xi^2_{(j)}}{-\mu_1^-(1+\xi^2_{(j)}m^{(1)}_{1n}(\mu_1^-))}-\frac{\xi^2_{(j)}}{-\mu_1^-(1+\xi^2_{(j)}m^{(1)}_{1n}(z_0))}\right) \nonumber \\
    &+\mathrm{O}(|z_0|^{-1})\times\frac{1}{n}\sum_{j=2}^n \left(\frac{\xi^2_{(j)}}{-\mu_1^-(1+\xi^2_{(j)}m^{(1)}_{1n}(z))}-\frac{\xi^2_{(j)}}{-z(1+\xi^2_{(j)}m_{1n}^{(1)}(z_0))}\right) \nonumber \\ 
    &=\mathrm{O}(|z_0|^{-1})\times\frac{1}{n}\sum_{j=2}^n \frac{\xi^4_{(j)}(m^{(1)}_{1n}(z_0)-m_{1n}^{(1)}(\mu_1^-))}{-\mu_1^-(1+\xi^2_{(j)}m^{(1)}_{1n}(\mu_1^-))(1+\xi^2_{(j)}m^{(1)}_{1n}(z_0))}+\mathrm{O}(|z_0|^{-1}) \nonumber\\
    &\times\frac{1}{n}\sum_{j=2}^n \frac{\xi^2_{(j)}(\mu_1^--z_0)}{z\mu_1^-(1+\xi^2_{(j)}m^{(1)}_{1n}(z_0))} \nonumber \\
    &=\mathrm{o}(1)\times(m^{(1)}_{1n}(z_0)-m^{(1)}_{1n}(\mu_1^-))+\mathrm{O}(n^{-2/\alpha-1/2-\epsilon_0}), 
\end{align}
where in the second to last step we used Lemma \ref{lem_prf_fakesignals_asymptoticproperties} and in the last step we used Lemmas \ref{lem_prf_fakesignals_asymptoticproperties}  and \ref{lem_goodconfiguration}. This implies with high probability 
\begin{equation*}\label{control_m11}
\mathcal{M}_{31}^{(1)}=\mathrm{O} \left( n^{-2/\alpha-1/2-\epsilon_0} \right).  
\end{equation*}

Similarly, for $\mathcal{M}^{(1)}_{32}$, we have that 
\begin{align}\label{eq_controlm12}
    \mathcal{M}^{(1)}_{32}&=\frac{1}{p}\sum_{i=1}^{p-m} \frac{\sigma_{m+i}(z_0-\mu_1^-)}{z_0\mu_1^-(1+\sigma_{m+i}m^{(1)}_{2n}(\mu_1^-))}=\mathrm{O}(n^{-2/\alpha-1/2-\epsilon_0}).
\end{align}
Combining the above arguments, we have that $\mathsf{P}_3=\mathrm{O}\left( n^{-2/\alpha-1/2-\epsilon_0} \right). $

Inserting the bounds of $\mathsf{P}_k, 1 \leq k \leq 3$ into (\ref{eq_bigexpansion}), we conclude that $|m^{(1)}_1(\mu_1^-)-m^{(1)}_{1n}(\mu_1^-)|\prec n^{-1/\alpha-1/2-\epsilon_0}.$ Together with (\ref{eq_conconconone}) and Lemma \ref{lem_goodconfiguration}, this yields that  
\begin{equation}\label{eq_M1reducedcase}
  M(\mu_1^-)=1+(\xi^2_{(1)}+\mathsf{q})m^{(1)}_{1n}(\mu_1^-)+\mathrm{O}_{\prec}(n^{-1/2-\epsilon_0}).
\end{equation}

In what follows, we study $1+(\xi^2_{(1)}+\mathsf{q})m^{(1)}_{1n}(\mu_1^-)$. We rewrite that
\begin{align}\label{eq_reducedcontrol}
    (\xi^2_{(1)}+\mathsf{q})m^{(1)}_{1n}(\mu_1^-)&=(\xi^2_{(1)}+\mathsf{q})m^{(1)}_{1n}(\mu_1)-(\xi^2_{(1)}+\mathsf{q})\big(m^{(1)}_{1n}(\mu_1)-m^{(1)}_{1n}(\mu_1^-)\big).
\end{align}
Using the definition for $\mu_1$ that $1+(\xi^2_{(1)}+\mathsf{q})m_{1n}(\mu_1)=0$ and the definitions in \eqref{eq_systemequations}, by a discussion similar to (\ref{eq_m11control}),  we have that for some constant $C>0$
\begin{align}\label{eq: m^s_1_1n-m_1n}
    &1+(\xi^2_{(1)}+\mathsf{q})m^{(1)}_{1n}(\mu_1) \nonumber\\
    &=1+(\xi^2_{(1)}+\mathsf{q})m_{1n}(\mu_1)+(\xi^2_{(1)}+\mathsf{q})\big(m_{1n}^{(1)}(\mu_1)-m_{1n}(\mu_1)\big)\nonumber\\
    &=(\xi^2_{(1)}+\mathsf{q})\frac{1}{p}\sum_{i=1}^{p-m}\left(\frac{\sigma_{m+i}}{-\mu_1(1+\sigma_{m+i} m_{2n}^{(1)}(\mu_1))}-\frac{\sigma_{m+i}}{-\mu_1(1+\sigma_{m+i} m_{2n}(\mu_1))}\right)\nonumber\\
    &=(\xi^2_{(1)}+\mathsf{q}) \left[\frac{1}{p}\sum_{i=1}^{p-m}\frac{\sigma^2_{m+i}}{-\mu_1(1+\sigma_{m+i} m_{2n}^{(1)}(\mu_1))(1+\sigma_{m+i} m_{2n}(\mu_1))}\right]\big(m_{2n}(\mu_1)-m_{2n}^{(1)}(\mu_1)\big)\nonumber\\
    & \leq C \frac{(\xi^2_{(1)}+\mathsf{q})}{\mu_1}\times \left|\frac{1}{p}\sum_{j=1}^n\frac{\xi^2_{(j)}}{-\mu_1(1+\xi^2_{(j)}m_{1n}(\mu_1))}-\frac{1}{p}\sum_{j=2}^n\frac{\xi^2_{(j)}}{-\mu_1(1+\xi^2_{(j)}m^{(1)}_{1n}(\mu_1))}\right|\nonumber\\
    &\leq C \frac{(\xi^2_{(1)}+\mathsf{q})}{\mu_1}\times\xi^2_{(2)}|m_{1n}^{(1)}(\mu_1)-m_{1n}(\mu_1)|+n^{-1},
\end{align}
where in the last step we again used Lemma \ref{lem_goodconfiguration}. This yields that 
\begin{equation*}
|(\xi^2_{(1)}+\mathsf{q})\big(m_{1n}^{(1)}(\mu_1)-m_{1n}(\mu_1)\big)| \leq C \frac{(\xi^2_{(1)}+\mathsf{q})}{\mu_1}\times\xi^2_{(2)}|m_{1n}^{(1)}(\mu_1)-m_{1n}(\mu_1)|+n^{-1}, 
\end{equation*}
which implies that $(\xi^2_{(1)}+\mathsf{q})\big(m^{(1)}_{1n}(\mu_1)-m_{1n}(\mu_1)\big)=\mathrm{O}(n^{-1})$. Together with (\ref{eq_reducedcontrol}), we have 
\begin{equation}\label{eq_finalpartone}
1+(\xi^2_{(1)}+\mathsf{q})m^{(1)}_{1n}(\mu_1^-)=-(\xi^2_{(1)}+\mathsf{q})\big(m^{(1)}_{1n}(\mu_1)-m^{(1)}_{1n}(\mu_1^-)\big)+\mathrm{O}(n^{-1}).
\end{equation} 

Recall that we have established $\mu_1, \mu_1^->\lambda_1^{(1)}.$ Then by Theorem \ref{thm_prf_fakesignals_locallaw} and the monotonicity of $m_1^{(1)}$ outside the bulk, the first term on the right-hand side of (\ref{eq_finalpartone}) is negative. In order to show $M(\mu_1^-)<0,$ in light of (\ref{eq_M1reducedcase}), it suffices to show that its magnitude is much larger than $\mathrm{O}(n^{-1/2-\epsilon_0}).$  To see this, we decompose that
\begin{align*}
    &m^{(1)}_{1n}(\mu_1)-m^{(1)}_{1n}(\mu_1^-)    =\frac{1}{p}\sum_{i=1}^{p-m}\frac{\sigma_{m+i}}{-\mu_1(1+\sigma_{m+i}m^{(1)}_{2n}(\mu_1))}-\frac{1}{p}\sum_{i=1}^{p-m}\frac{\sigma_{m+i}}{-\mu_1^-(1+\sigma_{m+i}m^{(1)}_{2n}(\mu_1^-))}\\
 &   =[\mathcal{L}_2(\mu_1,\mu_1)-\mathcal{L}_2(\mu_1,\mu_1^-)]+[\mathcal{L}_2(\mu_1,\mu_1^-)-\mathcal{L}_2(\mu_1^-,\mu_1^-)]
    :=\tilde{\mathcal{M}}_{11}^{(1)}+\tilde{\mathcal{M}}_{12}^{(1)},
\end{align*}
where binary function  $\mathcal{L}_2(\mu_1,\mu_1^-)=\frac{1}{p}\sum_{i=1}^{p-m}\frac{\sigma_{m+i}}{-\mu_1(1-\frac{\sigma_{m+i}}{n}\sum_{j=2}^n{\xi^2_{(j)}}[{\mu_1(1+\xi^2_{(j)}m_{1n}^{(1)}(\mu_1^-))}]^{-1})}$.

Similarly to the discussion of (\ref{eq_m11control}), we have that,
\begin{align*}
    &\tilde{\mathcal{M}}_{11}^{(1)}
    =\mathrm{O}(\frac{1}{\mu_1})\times\frac{1}{n}\sum_{j=2}^n\frac{\xi^4_{(j)}(m^{(1)}_{1n}(\mu_1)-m^{(1)}_{1n}(\mu_1^-))}{-\mu_1(1+\xi^2_{(j)}m_{1n}^{(1)}(\mu_1^-))(1+\xi^2_{(j)}m_{1n}^{(1)}(\mu_1))}\\
    =&\mathrm{o}(1)\times(m^{(1)}_{1n}(\mu_1)-m^{(1)}_{1n}(\mu_1^-)).
\end{align*}
Moreover, similarly to (\ref{eq_controlm12}), for $\tilde{\mathcal{M}}_{12}^{(1)}$ we have that with high probability
\begin{align*}
    \tilde{\mathcal{M}}_{12}^{(1)}
    =&\sum_{i=1}^{p-m}\frac{\frac{\sigma_{m+i}(\mu_1-\mu_1^-)}{p\mu_1\mu_1^-}}{(1+\frac{\sigma_{m+i}}{n}\sum_{j=2}^n\frac{\xi^2_{(j)}}{-\mu_1(1+\xi^2_{(j)}m_{1n}^{(1)}(\mu_1^-))})(1+\frac{\sigma_{m+i}}{n}\sum_{j=2}^n\frac{\xi^2_{(j)}}{-\mu_1^-(1+\xi^2_{(j)}m_{1n}^{(1)}(\mu_1^-))})}\\
     \asymp& \frac{\mu_1-\mu_1^-}{\mu_1^-\mu_1}  \asymp n^{-1/2-1/\alpha+\epsilon}.
\end{align*}
This implies that $m^{(1)}_{1n}(\mu_1)-m^{(1)}_{1n}(\lambda) \asymp n^{-1/2-1/\alpha+\epsilon}.$ Together with (\ref{eq_finalpartone}), the definition of $\mathsf{q}$ and Lemma \ref{def_goodconfiguration}, we readily see that 
\begin{equation}\label{eq: lambda-lambda_(1)}
   1+(\xi^2_{(1)}+\mathsf{q})m^{(1)}_{1n}(\mu_1^-)) \asymp -n^{-1/2+\epsilon},
\end{equation}
which concludes the proof of $M(\mu_1^-)<0$ when $n$ is sufficiently large. Similarly, we can prove that $M(\mu_1^+)>0.$ 
\end{proof}
\subsubsection{Proof Theorem \ref{thm_prf_fakesignals_locallaw}}\label{sec_proof_thmfake_locallaw}
In this section, we will prove Theorem \ref{thm_prf_fakesignals_locallaw}. The proof contains two steps. In the first step, we will establish the local scale results of $S_2$ on the domain $\widetilde{\mathbf{D}}_{\mu}$, where $\widetilde{\mathbf{D}}_{\mu}$ is denoted as follows   
\begin{equation}\label{eq_spectraldomainonereduced}
\widetilde{\mathbf{D}}_{\mu} \equiv \widetilde{\mathbf{D}}_{\mu}(\mathrm{C}_0):=\left\{z=E+\mathrm{i}\eta: 0<E-\mu_1 \leq  \mathrm{C}_0 d_{1}, \ n^{-2/3}\le\eta\le \mathrm{C_0} \mu_1 \right\},
\end{equation}
for some properly chosen constant $\mathrm{C}_0>0$. Then, we will establish the following proposition. 
\begin{proposition}\label{eq_propooursidebulk} 
Under the assumptions of Theorem \ref{thm_prf_fakesignals_locallaw}, the following results hold uniformly on the spectral domain $\widetilde{\mathbf{D}}_{\mu}$ in (\ref{eq_spectraldomainonereduced}) when conditional on the event $\Omega$ in Lemma \ref{lem_goodconfiguration}. 
\begin{enumerate}
\item[(1).] If \eqref{ass_xi_poly} holds, we have that 
\begin{align}\label{eq_entrywiselocallaw}
\mathcal{G}_{ij}(z)=-{\delta_{ij}}[{z(1+m_{1n}(z)\xi_i^2)}]^{-1}+\mathrm{O}_{\prec} \left( n^{-1/2-1/\alpha} \right),
\end{align}
where $\delta_{ij}$ is the Dirac delta function so that $\delta_{ij}=1$ when $i=j$ and $\delta_{ij}=0$ when $i \neq j.$ 
Moreover, we have that 
\begin{equation*}
m_1(z)=m_{1n}(z)+\mathrm{O}_{\prec}\left( n^{-1/2-2/\alpha}\right),\quad  m_2(z)=m_{2n}(z)+\mathrm{O}_{\prec}\left( n^{-1/2-1/\alpha}\right), 
\end{equation*}
and 
\begin{equation}\label{eq_averagedlawused}
m_{S_2}(z)=m_{n}(z)+\mathrm{O}_{\prec} \left( n^{-1/2-2/\alpha} \right).
\end{equation}
\item[(2).] If \eqref{ass_xi_exp} holds, we have that the results in part (1) hold by setting $\alpha=\infty.$
\end{enumerate}  
\end{proposition}
\quad Once Proposition \ref{eq_propooursidebulk} is proved, we can quantify the rough locations of the eigenvalues of $S_2$ as summarized in the following lemma. 

\begin{lemma}\label{lem: upper bound for eigenvalues}
Suppose Assumptions \ref{ass_sigma} and \ref{ass_xi} hold. For some sufficiently large constant $C>0,$ with high probability, for any fixed realization $\{\xi_i^2\} \in \Omega$ where $\Omega$ is introduced in Lemma \ref{lem_goodconfiguration}, for all $1 \leq i \leq \min\{p-m,n\},$ we have that 
\begin{equation}\label{eq_boundestimateone}
\lambda_i(S_2) \notin (\mu_1, C n^{1/\alpha} \log n), \ \text{if \eqref{ass_xi_poly} holds},  
\end{equation}
and 
\begin{equation}\label{eq_boundestimatetwo}
\lambda_i(S_2) \notin (\mu_1, C  \log^{1/\beta} n), \ \text{if \eqref{ass_xi_exp} holds}.  
\end{equation}
\end{lemma}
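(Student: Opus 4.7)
The argument is a standard contradiction via the averaged local law. The key inputs are the averaged estimate \eqref{eq_averagedlawused} from Lemma \ref{eq_propooursidebulk} and the bound $\operatorname{Im} m_n(z) = O(\eta E^{-2})$ from Lemma \ref{lem_prf_fakesignals_asymptoticproperties}(3), both valid on $\widetilde{\mathbf{D}}_{\mu}$. The plan is to show that at the scale $\eta_0 = n^{-2/3}$ the quantity $\operatorname{Im} m_{S_2}(E+\mathrm{i}\eta_0)$ is too small for an eigenvalue to lie within distance $\eta_0$ of any $E$ in the target interval. Combining this with a net argument concludes the lemma.

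Concretely, for any $E \in (\mu_1, C\mathsf{T})$, fix $z_0 = E + \mathrm{i}\eta_0$, which lies in $\widetilde{\mathbf{D}}_{\mu}$ once the constant $\mathrm{C}_0$ in \eqref{eq_spectraldomainonereduced} is taken large enough (legal because $\mu_1 \lesssim \xi^2_{(1)} \lesssim \mathsf{T}$ by \eqref{eq_mu1part} and Lemma \ref{lem_goodconfiguration}). Assembling \eqref{eq_averagedlawused} with Lemma \ref{lem_prf_fakesignals_asymptoticproperties}(3) gives
\begin{equation*}
\operatorname{Im} m_{S_2}(z_0) \;\le\; C\frac{\eta_0}{E^{2}} + O_{\prec}\!\bigl(n^{-1/2-2/\alpha}\bigr).
\end{equation*}
On the other hand, the spectral decomposition yields the deterministic identity
\begin{equation*}
p\eta_0\operatorname{Im} m_{S_2}(z_0) \;=\; \sum_{i=1}^{p}\frac{\eta_0^{2}}{(\lambda_i(S_2)-E)^{2}+\eta_0^{2}},
\end{equation*}
so the existence of an eigenvalue in $[E-\eta_0,E+\eta_0]$ forces $p\eta_0\operatorname{Im} m_{S_2}(z_0)\ge 1/2$. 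However, substituting $\eta_0=n^{-2/3}$ and $E\gtrsim \mu_1 \gtrsim n^{1/\alpha}$ into the upper bound yields $p\eta_0\operatorname{Im} m_{S_2}(z_0)\lesssim n^{-1/3-2/\alpha}+n^{-1/6-2/\alpha+\epsilon}=o(1)$, a contradiction. Hence no eigenvalue lies within $\eta_0$ of $E$, with probability $\ge 1-n^{-D}$ for any prescribed $D>0$.

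To upgrade this point-wise statement to a uniform one, I introduce a net $\{E_k\}$ in $(\mu_1,C\mathsf{T})$ with spacing $\eta_0/2$; its cardinality is at most $O(\mathsf{T}\,n^{2/3})=O(n^{2/3+1/\alpha+o(1)})$. Taking $D$ strictly larger than the net exponent in the stochastic-domination bound and applying a union bound rules out an eigenvalue near any $E_k$; since every putative eigenvalue in the target interval lies within $\eta_0$ of some $E_k$, this rules out \emph{all} such eigenvalues simultaneously. The exponential-tail case \eqref{ass_xi_exp} is handled verbatim upon formally setting $\alpha=\infty$, $\mathsf{T}=\log^{1/\beta}n$, and using the sharper local-law rate $n^{-1/2}$ from part (2) of Lemma \ref{eq_propooursidebulk}.

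\textbf{Expected obstacle.} The only delicate point is ensuring the uniformity of the local law across the net, which is automatic from the stochastic-domination framework provided Lemma \ref{eq_propooursidebulk} is itself uniform in $z\in\widetilde{\mathbf{D}}_{\mu}$ (as stated). A secondary subtlety is the calibration of $\eta_0$: it must be small enough for $p\eta_0^2/\mu_1^{2}$ and $p\eta_0\cdot n^{-1/2-2/\alpha+\epsilon}$ to be $o(1)$, yet at least $n^{-2/3}$ to remain in $\widetilde{\mathbf{D}}_{\mu}$; the choice $\eta_0=n^{-2/3}$ meets both with room to spare.
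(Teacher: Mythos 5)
Your proposal is correct and follows essentially the same route as the paper: a contradiction argument combining the averaged local law \eqref{eq_averagedlawused}, the bound $\operatorname{Im} m_n(z)=\mathrm{O}(\eta E^{-2})$, and the lower bound $\operatorname{Im} m_{S_2}(E+\mathrm{i}\eta_0)\gtrsim (n\eta_0)^{-1}$ forced by a nearby eigenvalue at $\eta_0=n^{-2/3}$. The only cosmetic difference is that the paper evaluates directly at $E=\widehat{\lambda}$ (the putative eigenvalue), leaning on the uniformity of the local law over $\widetilde{\mathbf{D}}_{\mu}$, whereas you make that uniformity explicit via a net and union bound.
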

\begin{proof}
    Due to similarity, we focus our arguments on (\ref{eq_boundestimateone}). We prove the results by an indirect argument. Assume there is an eigenvalue of $S_2$ in the interval as in (\ref{eq_boundestimateone}), denoted as $\widehat{\lambda}$. Let $z=\hat{\lambda}+\mathrm{i} n^{-2/3}.$ Since $z\in\widetilde{\mathbf{D}}_{\mu} \subset \mathbf{D}_{\mu}$ as in (\ref{eq_def_spectraldomain}), by Lemma \ref{lem_prf_fakesignals_asymptoticproperties}, we obtain $\operatorname{Im}m_n(z)=\eta \widehat{\lambda}^{-2}$. According to \eqref{eq_mu1part} and Lemma \ref{lem_goodconfiguration}, we have that on the event $\Omega$
\begin{equation}\label{eq_lowerboundmu1}
\mu_1 \gtrsim n^{1/\alpha} \log^{-1} n.
\end{equation} 
Together with (\ref{eq_averagedlawused}), we readily see that 
\begin{align}\label{eq: value of m_W outside spectrum}
        \operatorname{Im}m_{S_2}(z)&=\operatorname{Im}m_n(z)+\operatorname{Im}(m_{S_2}(z)-m_n(z))\prec n^{-1/2-2/\alpha}.
      \end{align}
On the other hand, we have
\begin{equation*}
\operatorname{Im}m_{S_2}(z)=\frac{1}{n}\sum_i\frac{\eta}{(\lambda_i-\widehat{\lambda})^2+\eta^2}\ge\frac{1}{n\eta}=n^{-1/3},
\end{equation*}
which contradicts \eqref{eq: value of m_W outside spectrum}. Therefore, there is no eigenvalue in this interval. Similarly, we can prove (\ref{eq_boundestimatetwo}). The only difference is that (\ref{eq_lowerboundmu1}) should be replaced by $\mu_1 \gtrsim  \log^{1/\beta} n$ according to Lemma \ref{lem_goodconfiguration} so that the error rate in (\ref{eq: value of m_W outside spectrum}) should be updated to $n^{-1/2}.$ This completes the proof. 
\end{proof}

\quad Armed with the above lemma, we can proceed to the second step to conclude the proof of Theorem \ref{thm_prf_fakesignals_locallaw}.  In what follows, we first provide the proof of Proposition \ref{eq_propooursidebulk} in Section \ref{prop_subsubsubsubsubsub}. After that, we prove Theorem \ref{thm_prf_fakesignals_locallaw} in Section \ref{sec_proofa2}.

\subsubsection{Proof of Proposition \ref{eq_propooursidebulk}}\label{prop_subsubsubsubsubsub}

We first prepare two lemmas. The first one is to establish Proposition \ref{eq_propooursidebulk} for large $\eta.$

\begin{lemma}[Averaged local law for large $\eta$]\label{lem: average local law for large eta} Proposition \ref{eq_propooursidebulk} holds when $\eta=\mathrm{C} \mu_1.$ for some $\mathrm{C}>0$.
\end{lemma}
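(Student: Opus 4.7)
The plan is to exploit that when $\eta = \mathrm{C}\mu_1$ the resolvents are well separated from the spectrum, so all bounds can be extracted from Schur complement identities plus large deviation bounds for the uniform-sphere vectors $\mathbf{u}_i$. Specifically, on $\Omega$ we have $\xi_i^2\le \mathsf{T}\lesssim \mu_1$ by \eqref{eq_mu1part}, and $\|G(z)\|,\|\mathcal{G}(z)\|\le \eta^{-1}\lesssim \mu_1^{-1}$, so $|\xi_i^2\,\mathbf{y}_{i,2}^\ast G^{(i)}\mathbf{y}_{i,2}|$ and the relevant denominators $|1+\xi_i^2 m_{1n}(z)|$ stay of order one uniformly in $i$ (by Lemma \ref{lem_prf_fakesignals_asymptoticproperties}). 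I therefore begin in the large-$\eta$ regime where the self-consistent equation can be closed without any continuity/bootstrapping.

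First, for the diagonal entries I apply the Schur identity
\[
\mathcal{G}_{ii}(z)=-\frac{1}{z\bigl(1+\xi_i^2\,\mathbf{u}_i^\ast\Sigma_2^{1/2}G^{(i)}(z)\Sigma_2^{1/2}\mathbf{u}_i\bigr)},
\]
then use Lemma \ref{lem_largedeviationbound}(2) to replace $\mathbf{u}_i^\ast\Sigma_2^{1/2}G^{(i)}\Sigma_2^{1/2}\mathbf{u}_i$ by $p^{-1}\operatorname{tr}(\Sigma_2 G^{(i)})=m_1^{(i)}(z)$, with fluctuation bounded by $p^{-1}\|\Sigma_2^{1/2}G^{(i)}\Sigma_2^{1/2}\|_F$. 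The key identity $\|G^{(i)}\Sigma_2^{1/2}\|_F^2=\eta^{-1}\operatorname{Im}\operatorname{tr}(G^{(i)}\Sigma_2)$ from Lemma \ref{lem_matrixidentities}, combined with the a priori bound $\operatorname{Im} m_1(z)\lesssim \eta E^{-1}\lesssim 1$, then yields a fluctuation of order $p^{-1/2}\mu_1^{-1/2}$; multiplied by the relevant $\xi_i^2\lesssim n^{1/\alpha}\log n$, this gives the advertised $n^{-1/2-1/\alpha}$ rate. Switching $m_1^{(i)}\to m_1$ costs an additional $O((n\mu_1)^{-1})$ by the trace perturbation bound in Lemma \ref{lem_matrixidentities}. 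This produces the entrywise identity \eqref{eq_entrywiselocallaw} with $m_1(z)$ in place of $m_{1n}(z)$. Averaging with the weights $\xi_i^2$ and closing against the analogous derivation for $m_1(z)=p^{-1}\operatorname{tr}(G(z)\Sigma_2)$ (where a mirror Schur computation on $G$ produces the companion equation), I obtain the pair
\[
m_2(z)=\tfrac{1}{p}\sum_{i=1}^n\tfrac{\xi_i^2}{-z(1+\xi_i^2 m_1(z))}+O_\prec(n^{-1/2-1/\alpha}),\qquad m_1(z)=\tfrac{1}{p}\sum_{i=1}^{p-m}\tfrac{\sigma_{m+i}}{-z(1+\sigma_{m+i} m_2(z))}+O_\prec(n^{-1/2-2/\alpha}).
\]
Comparing with the system \eqref{eq_systemequations} and invoking the quantitative stability of that system on $\widetilde{\mathbf{D}}_\mu$ (which holds at $\eta\asymp\mu_1$ because $|1+\sigma_{m+i}m_{2n}|$ and $|1+\xi_i^2 m_{1n}|$ are bounded away from zero by Lemma \ref{lem_prf_fakesignals_asymptoticproperties}), I convert the approximate fixed-point identity into $m_1-m_{1n}=O_\prec(n^{-1/2-2/\alpha})$ and $m_2-m_{2n}=O_\prec(n^{-1/2-1/\alpha})$. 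The off-diagonal bound $\mathcal{G}_{ij}=O_\prec(n^{-1/2-1/\alpha})$ comes from the identity $\mathcal{G}_{ij}=z\mathcal{G}_{ii}\mathcal{G}_{jj}^{(i)}\mathbf{y}_{i,2}^\ast G^{(ij)}\mathbf{y}_{j,2}$ and the bilinear large-deviation estimate in Lemma \ref{lem_largedeviationbound}. Finally \eqref{eq_averagedlawused} follows from the relation $p\,m_{S_2}(z)-n\,m_{\mathcal{S}_2}(z)=(p-n)/(-z)$ together with the just-proved control on $m_1,m_2$, which rewrites $m_{\mathcal{S}_2}$ through the $\xi_i^2$-averaging and matches it to $m_n(z)$ via the third line of \eqref{eq_systemequations}.

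The main obstacle I expect is not any single calculation but the quantitative bookkeeping of the error rates under the two tail regimes. Classically, the local-law error scales as $(p\eta)^{-1/2}$, which here translates to $n^{-1/2-1/\alpha}$ because $\eta\asymp\mu_1\asymp n^{1/\alpha}$; the sharper $n^{-1/2-2/\alpha}$ rate for the averaged quantity $m_1$ (and hence $m_{S_2}$) requires an additional cancellation from the average—one exploits the definition $m_1(z)=p^{-1}\operatorname{tr}(G(z)\Sigma_2)$ and expands Schur complements in matrix form to gain an extra $\sigma_{m+i}/\mu_1\lesssim n^{-1/\alpha}$ factor. Verifying that these gains survive the heavy-tailed weights $\xi_i^2$ uniformly—and that the contraction constant in the stability step is bounded below one uniformly on $\widetilde{\mathbf{D}}_\mu$ at $\eta=\mathrm{C}\mu_1$—is the delicate part; the choice of the threshold $\mathsf{q}=n^{1/\alpha-\epsilon}$ and the separation $\mu_1-\mu_2\gtrsim n^{2/\alpha}\log^{-1} n$ established in the proof of Theorem \ref{thm_prf_fakesignals_eigenvaluerigidity} are what make this uniform control possible. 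The exponential-tail case follows the same template with $\alpha=\infty$, giving the $n^{-1/2}$ rate.
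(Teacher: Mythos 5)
Your proposal follows essentially the same route as the paper: Schur-complement/resolvent identities together with the large-deviation bounds of Lemma \ref{lem_largedeviationbound} to concentrate the quadratic forms, closure of the approximate self-consistent pair for $(m_1,m_2)$, a stability (contraction) comparison against \eqref{eq_systemequations}, and the standard identities for the off-diagonal entries and $m_{S_2}$. One quantitative correction: at $\eta=\mathrm{C}\mu_1$ you should invoke the trivial bound $\operatorname{Im}m_1\le \eta^{-1}$ (equivalently $\|G^{(i)}\|\le \eta^{-1}$, as the paper does) rather than $\operatorname{Im}m_1\lesssim \eta E^{-1}$, which yields the fluctuation $p^{-1/2}\mu_1^{-1}$ for the quadratic form instead of your $p^{-1/2}\mu_1^{-1/2}$; with your weaker bound the rate bookkeeping does not close to $n^{-1/2-1/\alpha}$, whereas with the correct one (and the extra $|z|^{-1}\asymp n^{-1/\alpha}$ prefactor from expanding the denominator) it does.
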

\begin{proof}
To simplify notation, in what follows, without loss of generality, we assume that $\xi_1^2 \geq \xi_2^2 \geq \cdots \geq \xi_n^2.$ 
 
According to (\ref{eq_mu1part}), Lemma \ref{lem_goodconfiguration} and the definition of $\mathsf{q}$ in \eqref{eq_def_q}, on the event $\Omega,$ we have that 
\begin{equation}\label{eq_Econtrol}
E \asymp \xi_1^2. 
\end{equation}
When $\eta=\mathrm{C} E$, $\max\left\{\|G^{(\mathcal{T})}\|,\|\mathcal{G}^{(\mathcal{T})}\|\right\}\le\eta^{-1}=\mathrm{C}^{-1} E^{-1}$ for any finite $\mathcal{T}\subset\{1,\dots, n\}$ by definition.  The main idea is to explore the relation of $m_1$ and $m_2$ using Lemma \ref{lem_resolvent}. We start with $m_2.$ By Lemma \ref{lem_resolvent} and definition of $m_2$ in \eqref{eq_def_m12}, we have
\begin{gather}\label{eq: decomp m_2}
    {
    m_2=\frac{1}{p}\sum_{i=1}^n\frac{\xi^2_i}{-z-z\mathbf{y}^{\prime}_{i,2} G^{(i)}\mathbf{y}_{i,2}}=\frac{1}{p}\sum_{i=1}^n\frac{\xi^2_i}{-z(1+\xi^2_ip^{-1}{\rm tr}G^{(i)}\Sigma_2+Z_i)},}\\
    Z_i=\mathbf{y}^{\prime}_{i,2} G^{(i)}\mathbf{y}_{i,2}-\xi^2_ip^{-1}{\rm tr}G^{(i)}\Sigma_2. \nonumber
\end{gather}
As $\mathbf{y}_{i,2}$ is independent of $G^{(i)}$, by $(2)$ of Lemma \ref{lem_largedeviationbound}, we see that 
\begin{equation}\label{eq: bound for Z}
    Z_i\prec\frac{\xi^2_i}{p}\|G^{(i)}\Sigma_2\|_F\le\frac{\xi^2_i}{p}\|G^{(i)}\|\|\Sigma_2\|_F\prec\frac{\xi^2_i}{\sqrt{n}\eta}.
\end{equation}
Moreover, using the definition of $m_1$ in \eqref{eq_def_m12} and the second resolvent identity, we readily obtain that for some constant $C>0$ 
\begin{equation}\label{eq_m1approximate}
    \frac{1}{p}{\rm tr}(G^{(i)}\Sigma_2)-m_1(z)=\frac{1}{p}\mathbf{y}_{i,2}^{\prime}G \Sigma_2 G^{(i)}\mathbf{y}_{i,2}\le C\frac{\xi^2_i}{p \eta^2}.
\end{equation}

Moreover, by (\ref{eq_Econtrol}) and the form of $\eta,$ we find that for some constant $C>0$, 
$
|1+\xi^2_i m_1|\ge 1-C \mathrm{C}^{-1}>0, 
$ 
when $\mathrm{C}>0$ is chosen to be sufficiently large. 
Together with (\ref{eq: decomp m_2}), we obtain
\begin{equation}\label{eq: m_2 by m_1}
    m_2=\frac{1}{p}\sum_{i=1}^n\frac{\xi^2_i}{-z(1+\xi^2_im_1+\mathrm{O}_{\prec}({\xi^2_i}{{n}^{-1/2}\eta^{-1}}))}=\frac{1}{p}\sum_{i=1}^n\frac{\xi^2_i}{-z(1+\xi^2_im_1)}+\mathrm{O}_{\prec}(n^{-1/2-1/\alpha}),
\end{equation}
where we used again Lemma \ref{lem_goodconfiguration}. 

Then we work with $m_1.$ Decompose $S_2-zI$ as
\[
S_2-zI=\sum_{i=1}^n\mathbf{y}_{i,}\mathbf{y}_{i,2}^{\prime}+zm_2(z)\Sigma_2-z(I+m_2(z)\Sigma_2).
\]
Applying resolvent expansion to the order one, we obtain that
\[
G=-z^{-1}(I+m_2(z)\Sigma)^{-1}+z^{-1}G(\sum_{i=1}^n\mathbf{y}_{i,2}\mathbf{y}_{i,2}^{\prime}+zm_2(z)\Sigma_2)(I+m_2(z)\Sigma_2)^{-1}.
\]
Furthermore, using the Shernman-Morrison formula, we have that
\begin{equation}\label{eq_usefulformula}
G\mathbf{y}_{i,2}={G^{(i)}\mathbf{y}_{i,2}}({1+\mathbf{y}_{i,2}^{\prime} G^{(i)}\mathbf{y}_{i,2}})^{-1}.
\end{equation}
Combining the above two identities and Lemma \ref{lem_resolvent}, we can further write
\begin{align}\label{eq: decomp of mathcal_G}
    G&=-z^{-1}(I+m_2(z)\Sigma_2)^{-1}+\left[z^{-1}\sum_{i=1}^n\frac{G^{(i)}(\mathbf{y}_{i,2}\mathbf{y}_{i,2}^{\prime}-p^{-1}\xi^2_i\Sigma_2)}{1+\mathbf{y}_{i,2}^{\prime}G^{(i)}\mathbf{y}_{i,2}}(I+m_2(z)\Sigma_2)^{-1} \right] \nonumber\\
    &+\left[z^{-1}\frac{1}{p}\sum_{i=1}^n\frac{(G^{(i)}-G)\xi^2_i\Sigma_2}{1+\mathbf{y}_{i,2}^{\prime}G^{(i)}\mathbf{y}_{i,2}}(I+m_2(z)\Sigma_2)^{-1} \right]\nonumber\\
    &:=-z^{-1}(I+m_2(z)\Sigma_2)^{-1}+R_1+R_2.
\end{align}
In what follows, we control the two error terms $R_1,R_2.$ For $R_1$, we notice that 
\begin{align}\label{eq: decomp R_1}
    &\frac{z}{p}{\rm tr}(R_1\Sigma_2)=\frac{1}{p}\sum_i{\rm tr}\left(\frac{G^{(i)}(\mathbf{y}_{i,2}\mathbf{y}_{i,2}^{\prime}p^{-1}\xi^2_i\Sigma_2)}{1+\mathbf{y}_{i,2}^{\prime} G^{(i)}\mathbf{y}_{i,2}}(I+m_2^{(i)}\Sigma_2)^{-1}\Sigma_2\right) \nonumber\\
    &+\frac{1}{p}\sum_i{\rm tr}\left(\frac{ G^{(i)}(\mathbf{y}_{i,2}\mathbf{y}_{i,2}^{\prime}-p^{-1}\xi^2_i\Sigma_2)}{1+\mathbf{y}_{i,2}^{\prime} G^{(i)}\mathbf{y}_{i,2}}(I+m_2\Sigma_2)^{-1}(m_2^{(i)}-m_2)\Sigma_2(I+m^{(i)}_2\Sigma_2)^{-1}\Sigma_2\right)\nonumber\\
    &:=\mathtt R_{11}+\mathtt R_{12}.
\end{align}
Since $\|\mathcal G^{(i)}\| \le\eta^{-1}$, using (\ref{eq_Econtrol}), with high probability, we have that for some constant $C>0,$ 
\begin{equation}\label{eq_eqcontrolcontrolcontrolcontrol}
|m_2^{(i)}(z)|\le\frac{1}{p}\sum_{j\neq i}\xi^2_j| \mathcal G^{(i)}_{jj}|\le C \frac{\log^2 n}{n^{1/\alpha}}. 
\end{equation}
Moreover, according to (\ref{eq: bound for Z}), with high probability, when $n$ is sufficiently large, we have that for some constant $c>0$
\begin{equation}\label{eq_controlower}
|1+\mathbf{y}_{i,2}^{\prime} G^{(i)}\mathbf{y}_{i,2}| \asymp  |1+\xi_i^2 p^{-1} {\rm tr} G^{(i)} \Sigma_2 | \geq 1- c C^{-1}>0,
\end{equation}
whenever $C>0$ is chosen sufficiently large. Consequently, for all $i$, we have that 
\begin{align}\label{eq_controncontrolcontrol}
&   {\rm tr}\left(\frac{ G^{(i)}(\mathbf{y}_{i,2}\mathbf{y}_{i,2}^{\prime}-p^{-1}\xi^2_i\Sigma_2)}{1+\mathbf{y}_{i,2}^{\prime} G^{(i)}\mathbf{y}_{i,2}}(I+m_2^{(i)}\Sigma_2)^{-1}\Sigma_2\right) \nonumber\\
&\asymp {\rm tr} \left(\xi^2_i G^{(i)}(\mathbf{u}_i\mathbf{u}_i^{\prime}-p^{-1}I) (I+m_2^{(i)}\Sigma_2)^{-1} \Sigma_2^2 \right)= \xi_i^2\mathbf{u}_i^{\prime} G^{(i)} (I+m_2^{(i)}\Sigma_2)^{-1} \Sigma_2^2 \mathbf{u}_i \nonumber\\&~~~~~~~~~~~~-\xi_i^2p^{-1} {\rm tr} \left( G^{(i)} (I+m_2^{(i)}\Sigma_2)^{-1} \Sigma_2^2 \right)  \prec \xi_i^2 {\eta^{-1}{p}^{-1/2}}, 
\end{align}
where in the third step we used (1) of Lemma \ref{lem_largedeviationbound}.  Together with (\ref{eq_Econtrol}), we find that 
\begin{equation*}
\mathtt{R}_{11} \prec n^{-1/2-1/\alpha},
\end{equation*}
where we used the fact that $p/n\rightarrow\phi<c^{-1}$ from Assumption \ref{ass_sigma}. For $\mathtt R_{12}$, using the definition in Lemma \ref{lem_resolvent} and the definition of $\mathcal{G}^{(i)}$, we see that  
\begin{equation}\label{eq_decompositionleavoneout}
m_2(z)-m^{(i)}_2(z)=\frac{1}{p}\sum_{j=1}^n \xi^2_j(\mathcal G_{jj}-\mathcal G^{(i)}_{jj})=\frac{1}{p}\sum_{j\neq i}\xi^2_j\frac{ \mathcal G_{ji} \mathcal G_{ij}}{\mathcal G_{ii}}+\frac{\xi_i^2(\mathcal{G}_{ii}-|z|^{-1})}{p}.
\end{equation}
In addition, using Lemma \ref{lem_resolvent} and a discussion similar to (\ref{eq_controlower}), we conclude that  
\begin{equation*}
 {\mathcal G_{ii}^{-1}(z)}=-z-z\mathbf{y}_{i,2}^{\prime} G^{(i)}\mathbf{y}_{i,2}\prec |z|.
\end{equation*}
Moreover, by Lemmas \ref{lem_resolvent} and \ref{lem_largedeviationbound}, we have that 
\begin{equation*}  
  \mathcal G_{ij}(z)=z\mathcal G_{ii}(z)\mathcal G^{(i)}_{jj}(z)\mathbf{y}_{i,2}^{\prime} G^{(ij)}\mathbf{y}_{j,2}\prec |z|\eta^{-2} |\xi_i\xi_j| p^{-1}\|\mathcal{G}^{(ij)}\|_F\prec p^{-1/2}|z|\eta^{-3}|\xi_i\xi_j| ,\quad i\neq j.  
\end{equation*}
Combining the above bounds with the deterministic bounds for $\xi_i$'s under $\Omega$, we see that $m_2(z)-m^{(i)}_2(z) \prec p^{-1-1/\alpha}.$ Together with (\ref{eq_controlower}), (\ref{eq_controncontrolcontrol}) and noticing that $p/n\rightarrow\phi$ we arrive at 
\begin{equation*}
\mathtt R_{12} \prec  {\eta^{-2} n^{-3/2}}.  
\end{equation*}
Using the above bounds, we see that $\frac{z}{p}{\rm tr}(R_1\Sigma)\prec n^{-1/2-1/\alpha}.$ For $R_2$, applying the Sherman–Morrison formula to $((G^{(i)})^{-1}+\mathbf{y}_{i,2} \mathbf{y}_{i,2}^{\prime})^{-1},$ we obtain that 
\begin{align}\label{eq_simimimimimi}
    \frac{1}{p}\left|{\rm tr}\left(\frac{(G^{(i)}-G)\Sigma_2(I+m_2\Sigma_2)^{-1}\Sigma_2}{1+\mathbf{y}_{i,2}^{\prime} G^{(i)}\mathbf{y}_{i,2}}\right) \right|&=\frac{1}{p}\left|\frac{\mathbf{y}_{i,2}^{\prime} G^{(i)}\Sigma_2(I+m_2\Sigma_2)^{-1}\Sigma_2 G \mathbf{y}_{i,2}}{1+\mathbf{y}_{i,2}^{\prime}G^{(i)}\mathbf{y}_{i,2}}\right|\prec \frac{\xi_i^2}{p \eta^2},
\end{align}
where in the second step we used Lemma \ref{lem_largedeviationbound} and (\ref{eq_controlower}) and a discussion similar to (\ref{eq_eqcontrolcontrolcontrolcontrol}). Together with the definition of $R_2$ in (\ref{eq: decomp of mathcal_G}), under $\Omega$, we find that
\[
\begin{split}
  \frac{z}{p} \left|{\rm tr}(R_2\Sigma_2) \right|
  \le\frac{1}{p^2}\sum_i\frac{\xi^2_i}{\eta^2 }\prec n^{-1-2/\alpha}.
\end{split}
\]
As a result, in light of the definition $m_1$ in \eqref{eq_def_m12},
we have
\begin{align}\label{eq: m_1 by m_2}
    m_1&=\frac{1}{p}{\rm tr}(G(z)\Sigma)=-z^{-1}\frac{1}{p}{\rm tr}((I+m_2(z)\Sigma_2)^{-1}\Sigma_2)+\mathrm{O}_{\prec}(n^{-1/2-2/\alpha})\nonumber\\
    &=-\frac{1}{p}\sum_{i=1}^{p-m}\frac{\sigma_{m+i}}{z(1+m_2\sigma_{m+i})}+\mathrm{O}_{\prec}(n^{-1/2-2/\alpha}).
\end{align}

We first control $m_2(z)-m_{2n}(z).$ Recall the definition of $m_{2n}(z)$ in \eqref{eq_systemequations}. Combing \eqref{eq: m_2 by m_1} and \eqref{eq: m_1 by m_2},  we have that 
\begin{align}\label{eq_closenessm2m2n}
    &m_2(z)-m_{2n}(z)\nonumber\\
    =&\frac{1}{p}\sum_{i=1}^n \left(\frac{\xi^2_i}{-z(1+\xi^2_im_1)}+\frac{\xi^2_i}{z(1+\xi^2_im_{1n})} \right)+\mathrm{O}_{\prec}(n^{-1/2-1/\alpha})\\
    =&\frac{1}{p}\sum_{i=1}^n\frac{\xi^4_i(m_1-m_{1n})}{z(1+\xi^2_im_1)(1+\xi^2_im_{1n})}+\mathrm{O}_{\prec}(n^{-1/2-1/\alpha})\nonumber \\
    =&\left(\frac{1}{p}\sum_{i=1}^n\frac{\xi^4_i}{z(1+\xi^2_im_1)(1+\xi^2_im_{1n})}\right)\left(\frac{1}{p}\sum_{i=1}^{p-m}\frac{\sigma_{m+i}^2(m_2-m_{2n})}{z(1+\sigma_{m+i}m_2)(1+\sigma_{m+i}m_{2n})}\right)\\
    &+\mathrm{O}_{\prec}(n^{-1/2-1/\alpha}). \nonumber
\end{align}
By a discussion similar to (\ref{eq_eqcontrolcontrolcontrolcontrol}) and (\ref{eq_controlower}) with high probability, when $n$ is sufficiently large, we have that  
\[
\begin{split}
  |m_2-m_{2n}|&=\mathrm{O}\left(\frac{1}{p}\sum_{i=1}^n\frac{\xi^4_i}{z^2}|m_2-m_{2n}| \right)+\mathrm{O}_{\prec}(n^{-1/2-1/\alpha})\\
  &=\mathrm{O}(n^{-2/\alpha}|m_2-m_{2n}|)+\mathrm{O}_{\prec}(n^{-1/2-1/\alpha}),
\end{split}
\]
where in the second step we used the upper bounds for $\xi^2_i$'s under $\Omega$ and the fact that $p/n\rightarrow\phi$. Then we can conclude that $m_2-m_{2n}\prec n^{-1/2-1/\alpha}$. By a similar procedure, we also have $m_1-m_{1n}\prec n^{-1/2-2/\alpha}$.

Armed with the above two results, we proceed to finish the rest of the proof. Recall $m_{S_2}$ in \eqref{eq_def_mS2}. Using (\ref{eq: decomp of mathcal_G}) and a discussion similar to \eqref{eq: m_1 by m_2}, one can see that 
\begin{align}\label{eq_verbalttt}
    m_{S_2}&=\frac{1}{p}{\rm tr}(G(z))=-\frac{1}{p}\sum_{i=1}^{p-m}\frac{1}{z(1+m_2\sigma_{m+i})}+\mathrm{O}_{\prec}(n^{-1/2-2/\alpha}) \nonumber\\
    &=-\frac{1}{p}\sum_{i=1}^{p-m}\frac{1}{z(1+m_{2n}\sigma_{m+i})}+\frac{1}{p}\sum_{i=1}^{p-m}\frac{(m_2-m_{2n})\sigma_{m+i}}{z(1+m_2\sigma_{m+i})(1+m_{2n}\sigma_{m+i})}+\mathrm{O}_{\prec}(n^{-1/2-2/\alpha}) \nonumber\\
    &=m_{n}+\mathrm{O}_{\prec}(n^{-1/2-2/\alpha}),
\end{align}
where in the last step we recall $m_n(z)$ in \eqref{eq_systemequations}. Finally, for the diagonal entries of $\mathcal{G}$, by Lemma \ref{lem_resolvent} and a discussion similar to  \eqref{eq: bound for Z} and \ref{eq_m1approximate},  we have
\begin{align*}
   1/ \mathcal G_{ii}&=-{z(1+\mathbf{y}^{\prime}_{i,2} G^{(i)}\mathbf{y}_{i,2})}=-{z(1+\xi^2_ip^{-1}{\rm tr} G^{(i)}\Sigma_2+\mathrm{O}_{\prec}(\frac{\xi^2_i}{\sqrt{p}\eta}))}\\
    &=-\frac{1}{z(1+\xi^2_im_1+\mathrm{O}_{\prec}(\frac{\xi^2_i}{\sqrt{q}\eta}))}=-\frac{1}{z(1+\xi^2_im_{1n})}+\mathrm{O}_{\prec}(p^{-1/2-1/\alpha}).
\end{align*}
For off-diagonal entries, together with Lemmas \ref{lem_resolvent} and \ref{lem_largedeviationbound}, we have
\[
\begin{split}
    |\mathcal G_{ij}|&\le|z||\mathcal G_{ii}||\mathcal G^{(i)}_{ii}||\mathbf{y}^{\prime}_{i,2} G^{(ij)}\mathbf{y}_{j,2}| \prec n^{-1/2-2/\alpha}. 
\end{split}
\]
This completes the proof. For the case \eqref{ass_xi_exp}, the main difference is to use the estimates $(b)$ in Definition \ref{def_goodconfiguration} instead of $(a)$ whenever it is needed, for example, (\ref{eq: decomp m_2}). We omit further details.
\end{proof}

The second component is to prove Proposition \ref{eq_propooursidebulk} under a priori control of the resolvent which is summarized in the following lemma. 

\begin{lemma} \label{lem: self improvement}
Proposition \ref{eq_propooursidebulk} holds if (\ref{eq_entrywiselocallaw}) holds uniformly for $z \in \widetilde{\mathbf{D}}_{\mu}.$ 
\end{lemma}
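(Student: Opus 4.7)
The plan is to upgrade the assumed entrywise law to averaged estimates for $m_2, m_1, m_{S_2}$. The overall scheme mirrors the proof of Lemma~\ref{lem: average local law for large eta}, with one essential difference: the trivial bound $\|G(z)\|\le\eta^{-1}$ exploited there is unavailable on the full domain $\widetilde{\mathbf{D}}_\mu$ (where $\eta$ may be as small as $n^{-2/3}$), so each occurrence of $\|\mathcal{G}\|$ or $\|G\|$ must be replaced by the pointwise diagonal control granted by the assumption together with the structural bounds of Lemma~\ref{lem_prf_fakesignals_asymptoticproperties}.

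The estimate on $m_2$ is almost immediate. Since $m_2(z)=p^{-1}\sum_i\xi_i^2\mathcal{G}_{ii}(z)$, I substitute the diagonal form of \eqref{eq_entrywiselocallaw}: the main term is $m_{2n}(z)$ by definition \eqref{eq_systemequations}, and the residual is bounded by $(p^{-1}\sum_i\xi_i^2)\cdot\mathrm{O}_\prec(n^{-1/2-1/\alpha})=\mathrm{O}_\prec(n^{-1/2-1/\alpha})$ on $\Omega$.

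For $m_1$ and $m_{S_2}$ I reprise the resolvent expansion \eqref{eq: decomp of mathcal_G} and show that the remainders $R_1,R_2$ are $\mathrm{O}_\prec(n^{-1/2-2/\alpha})$. The crucial input is a quantitative lower bound on the denominator $|1+\mathbf{y}_{i,2}^\prime G^{(i)}(z)\mathbf{y}_{i,2}|\asymp|1+\xi_i^2 m_1(z)|$. Via Lemma~\ref{lem_resolvent} and the entrywise law, this quantity is comparable to $|1+\xi_i^2 m_{1n}(z)|$; Lemma~\ref{lem_prf_fakesignals_asymptoticproperties}(1) together with the defining relation \eqref{eq_prf_fakesignals_defmu1} of $\mu_1$ then give $|1+\xi_i^2 m_{1n}(z)|\gtrsim \mathsf{q}/\xi_{(1)}^2\gtrsim n^{-\epsilon}$ in the worst case $i=(1)$, which is harmless for the $\mathrm{O}_\prec$ bookkeeping since extra $n^\epsilon$ factors are absorbed. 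Granted this, Lemma~\ref{lem_largedeviationbound} and the Sherman--Morrison step \eqref{eq_simimimimimi} control $R_1,R_2$ as in the large-$\eta$ proof, yielding
\begin{equation*}
  m_1(z)=-\frac{1}{p}\sum_{i=1}^{p-m}\frac{\sigma_{m+i}}{z(1+m_2(z)\sigma_{m+i})}+\mathrm{O}_\prec(n^{-1/2-2/\alpha})
\end{equation*}
and the analogous identity for $m_{S_2}(z)$ with $\sigma_{m+i}$ removed from the numerator.

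Finally, I substitute $m_2=m_{2n}+\mathrm{O}_\prec(n^{-1/2-1/\alpha})$ into these two identities and Taylor-expand; the required bound $|1+m_{2n}(z)\sigma_{m+i}|\gtrsim 1$ follows from Lemma~\ref{lem_prf_fakesignals_asymptoticproperties}(3) and $\sigma_{m+i}=\mathrm{O}(1)$ in Assumption~\ref{ass_sigma}. Matching the leading terms to $m_{1n}(z)$ and $m_n(z)$ via \eqref{eq_systemequations} concludes the proof. The main obstacle, and the only place where the argument genuinely departs from the large-$\eta$ case, is the denominator non-degeneracy just described: without an $\eta$-cushion one must lean entirely on the outside-the-bulk structure of $m_{1n}$ from Lemma~\ref{lem_prf_fakesignals_asymptoticproperties} and on the $\mathsf{q}$-shift in \eqref{eq_prf_fakesignals_defmu1} to keep the relevant quotients stable, a subtlety that does not arise in the regime covered by Lemma~\ref{lem: average local law for large eta}.
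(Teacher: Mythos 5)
Your architecture is correct and your derivation of the $m_2$ estimate by summing the assumed diagonal law \eqref{eq_entrywiselocallaw} against $p^{-1}\xi_i^2$ (using $p^{-1}\sum_i\xi_i^2=\mathrm{O}(1)$ on $\Omega$) is a legitimate shortcut. The gap is in the step you dispose of with ``control $R_1,R_2$ as in the large-$\eta$ proof.'' Every fluctuation bound in that proof --- $Z_i$ in \eqref{eq: bound for Z}, the quantity $T_i=p^{-1}\operatorname{tr}G^{(i)}\Sigma_2-m_1$, and the traces against $R_1,R_2$ --- comes from Lemma \ref{lem_largedeviationbound} in the form $\prec \xi_i^2 p^{-1}\|G^{(i)}\Sigma_2\|_F$, after which the Frobenius norm is bounded by $\sqrt{p}\,\|G^{(i)}\|\le \sqrt{p}\,\eta^{-1}$. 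On $\widetilde{\mathbf{D}}_{\mu}$, where $\eta$ can be as small as $n^{-2/3}$, this is useless, and nothing in your write-up replaces it. The paper's resolution is the computation in \eqref{eq_standarddiscussion}: since $G^{(i)}$ and $\mathcal{G}^{(i)}$ share nonzero eigenvalues, $\|G^{(i)}\|_F^2\le \|\mathcal{G}^{(i)}\|_F^2+|n-p|\,|z|^{-2}$, and $\|\mathcal{G}^{(i)}\|_F$ is then computed entry by entry from the assumed law (one special diagonal entry, $n$ diagonal entries of size $n^{-1/\alpha}$, $n^2$ off-diagonal entries of size $n^{-1/2-1/\alpha}$), yielding $Z_i\prec \xi_i^2 n^{-1/2-1/\alpha}$ and $T_i\prec\xi_i^2 n^{-1-2/\alpha}$ as in \eqref{eq_cccccc11111}. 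Without this, your bounds on $R_1$, $R_2$ and on the error in the self-consistent equation do not close. Note also that your comparison $|1+\mathbf{y}_{i,2}'G^{(i)}\mathbf{y}_{i,2}|\asymp|1+\xi_i^2 m_{1n}|$ itself presupposes that $Z_i$ and $T_i$ are small, so the Frobenius-norm control is logically prior to the denominator discussion that you single out as ``the only place'' the argument departs from Lemma \ref{lem: average local law for large eta}.

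A secondary imprecision: the loss of a factor $n^{\epsilon}$ from the degenerate denominator at the index carrying $\xi_{(1)}^2$, where $|1+\xi_{(1)}^2 m_{1n}|\asymp \mathsf{q}/\xi_{(1)}^2$, is not ``absorbed'' by $\mathrm{O}_{\prec}$: stochastic domination absorbs arbitrarily small powers of $n$, not a fixed power tied to the $\epsilon$ in the definition \eqref{eq_def_q} of $\mathsf{q}$. The reason this term is harmless is the one used in \eqref{eq_modeltemp}: it is a single summand carrying the prefactor $p^{-1}$, so it must be split off and tracked separately, while the remaining $n-1$ indices have denominators bounded below by a constant. As written, your uniform-in-$i$ treatment of the denominators would not survive the bookkeeping for that one index.
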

\begin{proof}
According to the priori control (\ref{eq_entrywiselocallaw}), we have that for $1 \leq i \neq j \leq n$ 
\begin{gather}\label{eq_expansionone}
\mathcal G_{ii}=\frac{1}{z(1+\xi^2_im_{1n}(z)}+\mathrm{O}_{\prec}(n^{-1/2-1/\alpha}), \quad \mathcal G_{ij} =\mathrm{O}_{\prec}(n^{-1/2-1/\alpha}).
\end{gather}
For the diagonal entries, when $i=1,$ using \eqref{eq_prf_fakesignals_defmu1}, we observe that
\begin{align}\label{eq_g11control}
        \mathcal G_{11}&=-\frac{1}{z(1+\xi^2_{1}m_{1n}(z))}+\mathrm{O}_{\prec}(n^{-1/2-1/\alpha}) \nonumber\\
        &=-\frac{1}{z(1+\xi^2_{1}m_{1n}(\mu_1))}+\frac{z\xi_{1}^2(m_{1n}(z)-m_{1n}(\mu_1))}{(z(1+\xi^2_{1}m_{1n}(\mu_1)))(z(1+\xi^2_{1}m_{1n}(z)))}+\mathrm{O}_{\prec}(n^{-1/2-1/\alpha}) \nonumber\\
        &=\frac{1}{z\mathsf{q}m_{1n}(\mu_1)}-\frac{z\xi_{1}^2(m_{1n}(z)-m_{1n}(\mu_1))}{z\mathsf{q}m_{1n}(\mu_1)}\left(\mathcal G_{11}+\mathrm{O}_{\prec}(n^{-1/2-2/\alpha})\right)+\mathrm{O}_{\prec}(n^{-1/2-1/\alpha}) \nonumber\\
        &=\frac{1}{z\mathsf{q}m_{1n}(\mu_1)}-\frac{z\xi_{1}^2}{z\mathsf{q}m_{1n}(\mu_1)}(\mathcal G_{11}+\mathrm{O}_{\prec}(n^{-1/2-1/\alpha}))\times \mathrm{O}_{\prec}(n^{-1/\alpha})+\mathrm{O}_{\prec}(n^{-1/2-1/\alpha}),
\end{align}
where in the fourth step we used Lemma \ref{lem_prf_fakesignals_asymptoticproperties}. By Lemma \ref{lem_prf_fakesignals_asymptoticproperties}, (\ref{eq_mu1part}) and under $\Omega$,  this yields that for some constant $C>0$
\begin{align}\label{eq_g11}
    |\mathcal G_{11}|&=\frac{1}{|z\mathsf{q}m_{1n}(\mu_1)|}+\mathrm{O}_{\prec}(n^{-1/2-1/\alpha})
    =\frac{C}{\mathsf{q}}+\mathrm{O}_{\prec}(n^{-1/2-1/\alpha}).
\end{align}

Similarly, when $2 \leq i \leq n,$ by the definition of $\mathsf{q},$ using Lemma \ref{lem_prf_fakesignals_asymptoticproperties}, we see that
\begin{align}\label{eq_expansiontwo}
\mathcal{G}_{ii}=\mathrm{O}_{\prec}(n^{-1/\alpha}). 
\end{align}

We also provide some basic controls for the matrix $\mathcal G^{(i)}$ for all $1 \leq i \leq n.$ By definition and an elementary calculation, it is not hard to see that
\begin{align}\label{eq_trivialcontrol}
\mathcal G_{ii}^{(i)}=-z^{-1}; \ \mathcal{G}_{i k}^{(i)}=0, \  1 \leq k \neq i \leq n. 
\end{align}  
Moreover, using (\ref{eq_expansionone}), (\ref{eq_expansiontwo}) and the third identity of Lemma \ref{lem_resolvent}, we find that for $1 \leq i \leq n,$ 
\begin{align}\label{eq_nontrivialcontrol}
\mathcal G_{kk}^{(i)}=\mathrm{O}_{\prec}(n^{-1/\alpha}),  \ k \neq i; \ \mathcal G_{kl}^{(i)}=\mathrm{O}_{\prec}(n^{-1/2-1/\alpha}), \ k,l \neq i.
\end{align}

With the above preparations in place, we now proceed to control $Z_i$ in (\ref{eq: decomp m_2}). Unlike the approach in (\ref{eq: bound for Z}), we exploit the fact that $S_2$ and $\mathcal{S}_2$ share the same non-zero eigenvalues, allowing us to control $Z_i$ using the bounds established above.
\begin{align}\label{eq_standarddiscussion}
     Z_i&\prec\frac{\xi^2_i}{p}\| G^{(i)}\Sigma_2\|_F\prec\frac{\xi^2_i}{p}\|G^{(i)}\|_F=\frac{\xi^2_i}{p}(\operatorname{tr}((G^{(i)})^2))^{1/2}\le\frac{\xi^2_i}{p}\operatorname{tr}((\mathcal G^{(i)})^2)^{1/2}+\frac{\xi_i^2}{p}\frac{\sqrt{|n-p|}}{|z|} \nonumber\\
    & \asymp \frac{\xi^2_i}{p}\|\mathcal G^{(i)}\|_F+\frac{\xi^2_i}{p}\frac{n^{1/2}}{p^{1/\alpha}}=\frac{\xi^2_i}{p} \left((\mathcal G_{ii}^{(i)})^2+\sum_{j\neq i}(\mathcal G^{(i)}_{jj})^2+\sum_{j\neq k\neq i}(\mathcal G^{(i)}_{jk})^2 \right)^{1/2}+\frac{\xi^2_i}{p}\frac{n^{1/2}}{p^{1/\alpha}}  \nonumber\\
& \prec        \frac{\xi_1^2}{p}\left( |z|^{-2}+n p^{-2/\alpha}+n^2 p^{-1-2/\alpha}\right)^{1/2}+\frac{\xi^2_i}{p}\frac{n^{1/2}}{p^{1/\alpha}}
      \prec\frac{\xi_i^2}{n^{1/2+1/\alpha}},
\end{align}
where in the third and fourth steps we used (\ref{eq_trivialcontrol}) and (\ref{eq_nontrivialcontrol}). 

Besides, by a discussion similar to (\ref{eq_standarddiscussion}),  we now have from Lemma \ref{lem_largedeviationbound} that 
\begin{align}\label{eq_cccccc11111}
   T_i:&=\frac{1}{p}{\rm tr} G^{(i)}\Sigma_2-m_1(z)=\frac{1}{p}\mathbf{y}_{i,2}^{\prime}G \Sigma_2 G^{(i)}\mathbf{y}_{i,2}=\frac{1}{p}\frac{\mathbf{y}^{\prime}_{i,2} G^{(i)}\Sigma_2 G^{(i)}\mathbf{y}_{i,2}}{1+\mathbf{y}^{\prime}_{i,2} G^{(i)}\mathbf{y}_{i,2}} \nonumber\\
    &\prec\frac{\xi^2_i}{p}\frac{p^{-1}\| G^{(i)}\|^2_F}{|1+\mathbf{y}^{\prime}_{i,2} G^{(i)}\mathbf{y}_{i,2}|}\prec\frac{\xi^2_i}{p^2}|z||\mathcal G_{ii}|\left(\|\mathcal G^{(i)}\|^2_F+\frac{n}{p^{2/\alpha}} \right) \prec \frac{\xi_i^2}{n^{1+2/\alpha}},
\end{align}
where in the second step we used the relation (\ref{eq_usefulformula}), in the third step we used Lemma \ref{lem_resolvent} and in the last two steps we used a discussion similar to (\ref{eq_standarddiscussion}). 

With the above control, we now use an idea similar to the proof of  Lemma \ref{lem: average local law for large eta} to conclude the proof. The key ingredient is to explore the relation of $m_1$ and $m_2.$ We start with $m_2.$ Using the above notations and Lemma \ref{lem_resolvent}, we find that  
\begin{equation}\label{eq_eeeone}
{-z(1+\xi_i^2 m_1(z))}={\mathcal{G}_{ii}^{-1}+z(Z_i+T_i)}.
\end{equation}
Consequently, by (\ref{eq_g11}), (\ref{eq_expansiontwo}), (\ref{eq_standarddiscussion}) and (\ref{eq_cccccc11111}), we see that 
\begin{equation}\label{eq_eeetwo}
\frac{1}{-z(1+\xi_i^2 m_1(z))} \prec n^{-1/\alpha}. 
\end{equation}
Then using the decomposition as in (\ref{eq: decomp m_2}), we have that 
\begin{align}\label{eq_modeltemp}
        m_2&=\frac{1}{p}\frac{\xi^2_1}{-z(1+\xi^2_1p^{-1}\operatorname{tr} G^{(1)}\Sigma_2+Z_1)}+\frac{1}{p}\sum_{i=2}^n\frac{\xi^2_i}{-z(1+\xi^2_ip^{-1}\operatorname{tr} G^{(i)}\Sigma_2+Z_i)} \nonumber\\
        &=\frac{1}{p}\frac{\xi^2_1}{-z(1+\xi^2_1m_1(z)+\xi_1^2 n^{-1-2/\alpha}+Z_1)}+\frac{1}{p}\sum_{i=2}^n\frac{\xi^2_i}{-z(1+\xi^2_im_1(z)+\xi_i^2 n^{-1-2/\alpha}+Z_i)}\nonumber\\
        &=\frac{1}{p}\sum_{i=1}^n\frac{\xi^2_i}{-z(1+\xi^2_im_1(z))}+\mathrm{O}_{\prec}\left(n^{-3/2}+\frac{1}{p}\sum_{i=2}^n\frac{\xi^4_i}{|z|n^{1/2+1/\alpha}}\right)\nonumber\\
        &=\frac{1}{p}\sum_{i=1}^n\frac{\xi^2_i}{-z(1+\xi^2_im_1(z))}+\mathrm{O}_{\prec}(n^{-1/2-1/\alpha}),
\end{align}
where in the second step we used (\ref{eq_cccccc11111}), in the third step we used a discussion similar to (\ref{eq_g11control}) and in the last step we used the bounds for $\{\xi^2_i\}$'s under $\Omega$. Then we study $m_1$ using the arguments between (\ref{eq: decomp of mathcal_G}) and (\ref{eq: m_1 by m_2}). We provide the key ingredients as follows. First, for $\mathtt{R}_{11}$ in (\ref{eq: decomp R_1}), by definition of $m_2^{(i)}$ and (\ref{eq_trivialcontrol}), we have that   
\begin{align*}
 &  \left |m^{(i)}_2 \right |\le\frac{1}{p} \left(\sum_{j\neq i}\xi^2_j| \mathcal G^{(i)}_{jj}|+|z|^{-1} \right)
    \le\frac{1}{p} \left[\sum_{j\neq 1}\xi^2_j\left( | \mathcal G_{jj}|+\frac{|\mathcal G_{j1}||\mathcal G_{1j}|}{|\mathcal G_{11}|} \right)+|z|^{-1} \right] \prec n^{-1/\alpha},
\end{align*}
where in the second step we used Lemma \ref{lem_resolvent} and in the last step we used (\ref{eq_expansionone}), (\ref{eq_expansiontwo}) and event $\Omega$. Moreover, by Lemma \ref{lem_resolvent} and (\ref{eq_expansiontwo}), we find that $(1+\mathbf{y}_{i,2}^{\prime} G^{(i)} \mathbf{y}_{i,2})^{-1} \prec 1.$ Therefore, we conclude that for all $1 \leq i \leq n,$
\begin{align*}
  &   {\rm tr}\left(\frac{ G^{(i)}(\mathbf{y}_{i,2}\mathbf{y}_{i,2}^{\prime}-n^{-1}\xi^2_i\Sigma)}{1+\mathbf{y}_{i,2}^{\prime} G^{(i)}\mathbf{y}_{i,2}}(I+m_2^{(i)}\Sigma_2)^{-1}\Sigma_2\right) \\
  &\asymp {\rm tr} \left(\xi^2_i G^{(i)}(\mathbf{u}_i\mathbf{u}_i^{\prime}-p^{-1}I) (I+m_2^{(i)}\Sigma_2)^{-1} \Sigma_2^2 \right)\\
   &= \xi_i^2 \left( \mathbf{u}_i^{\prime} G^{(i)} (I+m_2^{(i)}\Sigma_2)^{-1} \Sigma^2 \mathbf{u}_i -p^{-1} {\rm tr} \left( G^{(i)} (I+m_2^{(i)}\Sigma_2)^{-1} \Sigma_2^2 \right) \right)\\
 & \prec \frac{\xi_i^2}{p}\|G^{(i)}\|_F\prec\frac{\xi_i^2}{n^{1/2+1/\alpha}},
\end{align*}
where in the last step we used a discussion similar to (\ref{eq_standarddiscussion}) and $p/n\rightarrow\phi$. Under $\Omega$, we can conclude that $\mathtt R_{11} \prec n^{-1/2-1/\alpha}.$ For $\mathtt R_{12},$ using (\ref{eq_decompositionleavoneout}), (\ref{eq_expansionone}) and (\ref{eq_expansiontwo}) 
\begin{equation*}
m_2-m^{(i)}_2 \prec\frac{1}{p}\sum_{j\neq i}\frac{\xi^2_in^{-1-2/\alpha}}{n^{-1/\alpha}} + \frac{1}{p}\prec n^{-1}.
\end{equation*}
Then by an argument similar to (\ref{eq_simimimimimi}), we can conclude that $\mathtt R_{12} \prec n^{-1-1/\alpha}.$ Similarly, for $R_2,$ we have that
\begin{align*}
    &\frac{1}{p} \left|{\rm tr}\left(\frac{(G^{(i)}-G)\Sigma(I+m_2\Sigma_2)^{-1}\Sigma_2}{1+\mathbf{y}_{i,2}^{\prime} G^{(i)}\mathbf{y}_{i,2}}\right) \right|=\frac{1}{p}\left|\frac{\mathbf{y}_{i,2}^{\prime} G^{(i)}\Sigma_2(I+m_2\Sigma_2)^{-1}\Sigma_2 G\mathbf{y}_{i,2}}{1+\mathbf{y}_{i,2}^{\prime} G^{(i)}\mathbf{y}_{i,2}}\right|\\
    &\asymp\frac{1}{p}\left|\mathbf{y}_{i,2}^{\prime} G^{(i)}\Sigma_2(I+m_2\Sigma_2)^{-1}\Sigma_2 G^{(i)}\mathbf{y}_{i,2} \right | \prec\frac{\xi_i^2}{p^2}\| G^{(i)}\|_F \| G \|_F \prec\frac{\xi_i^2}{n^{1+2/\alpha}}.
\end{align*}
Consequently, we have that $\frac{z}{p} \left|{\rm tr}(R_2\Sigma_2) \right|\prec\frac{1}{p}\sum_i\frac{\xi_i^2}{n^{1+1/\alpha}}\prec n^{-1-1/\alpha}.$

Combining all the above arguments, we find that \eqref{eq: m_1 by m_2} still holds true. Armed with all the above controls, using an argument similar to the discussions between (\ref{eq_closenessm2m2n}) and (\ref{eq_verbalttt}), we can conclude the proof. 
\end{proof}

Via the above two lemmas, we now proceed to the proof of Proposition \ref{eq_propooursidebulk}.  We will use a continuity argument as in \cite[Lemma A.12]{Ding&Yang2018} or \cite[Section 4.1]{Alex2014}. In fact, our discussion is easier since the real part in the spectral domain $\widetilde{\mathbf{D}}_{\mu}$ is divergent so that the rate is independent of $\eta$. Due to similarity, we focus on explaining the key ingredients. 

\begin{proof}[\bf Proof of Proposition \ref{eq_propooursidebulk}]
 For each $z=E+\mathrm{i}\eta\in\widetilde{\mathbf{D}}_{\mu}$, we fix the real part and construct a sequence $\{\eta_j\}$ by setting $\eta_j=\mathrm{C} \mu_1-j n^{-3}$. Then it is clear that $\eta$ falls in an interval $[\eta_{j-1},\eta_j]$ for some $0\le j\le Cn^{1/\alpha+3}$, where $C>0$ is some constant.

In Lemma \ref{lem: average local law for large eta}, we have proved that the results hold  for $\eta_0$. Now we assume (\ref{eq_entrywiselocallaw}) holds for some $\eta_j$. Then according to Lemma \ref{lem: self improvement}, we have that
\begin{equation*}
|m_1(z_j)-m_{1n}(z_j)|+|m_{S_2}(z_j)-m_n(z_j)| \prec n^{-1/2-2/\alpha}, \ |m_2(z_j)-m_{2n}(z_j)| \prec n^{-1/2-1/\alpha}. 
\end{equation*}
For any $\eta^{\prime}$ lying in the interval $[\eta_{k-1},\eta_k]$, denote $z^{\prime}=E+\mathrm{i}\eta^{\prime}$ and $z_j=E+\mathrm{i}\eta_j$. According to the first resolvent identity, we have that 
 \begin{equation} \label{eq: Lip for mathcal G}
 \| \mathcal G(z')-\mathcal{G}(z_j) \| \leq n^{-3} \| \mathcal{G}(z') \| \| \mathcal{G}(z_j) \| \prec n^{-11/6-1/\alpha},
\end{equation}  
where in the second step we used the basic bound $\|\mathcal G(z^{\prime})\|\leq n^{2/3},$ (\ref{eq_expansionone}), (\ref{eq_expansiontwo}) and the Gershgorin circle theorem.  

On the one hand, according to the definitions in \eqref{eq_def_m12}, using the first resolvent identity, we have that  
\begin{align*}
    &m_1(z^{\prime})-m_1(z_j)=\frac{1}{p}{\rm tr}[(G(z^{\prime})-G(z_j))\Sigma_2]\\
    =&\frac{1}{p^4}{\rm tr}(G(z^{\prime})G(z_j)\Sigma_2) \prec\frac{1}{p^4 \eta_j}\|G(z_j)\|_F \prec n^{-17/6-1/\alpha},
\end{align*}
where in the last step we used a discussion similar to (\ref{eq_standarddiscussion}). Similarly, using (\ref{eq: Lip for mathcal G}) and under $\Omega$,  we have that  
\begin{equation*}
    m_2(z^{\prime})-m_2(z_j)=\frac{1}{p}\sum_{i=1}^n\xi^2_i(\mathcal G_{ii}(z^{\prime})-\mathcal G_{ii}(z_j))
\prec n^{-11/6-1/\alpha},
\end{equation*}
and by a discussion similar to (\ref{eq_standarddiscussion})
\begin{align*}
|m_{S_2}(z')-m_{S_2}(z_j)|& =\frac{1}{p} \left| \operatorname{tr} \left( G(z')-G(z_j) \right) \right|  \leq p^{-4} \|G(z') \|_F \| G(z_j) \|_F \\
& \prec p^{-4} (n^{1-2/\alpha})^{1/2} n^{1/2+2/3}=n^{-7/3-1/\alpha}. 
\end{align*}
 On the other hand, using the definitions in \eqref{eq_systemequations}, we decompose that 
\begin{align*}
    m_{1n}(z^{\prime})-m_{1n}(z_j)
    &=\frac{1}{p}\sum_{i=1}^{p-m}\left(\frac{\sigma_{m+i}}{-z^{\prime}(1+\sigma_{m+i}m_{2n}(z^{\prime}))}-\frac{\sigma_{m+i}}{-z^{\prime}(1+\sigma_{m+i}m_{2n}(z_j))}\right)\\
    &+\frac{1}{p}\sum_i^{p-m}\left(\frac{\sigma_{m+i}}{-z^{\prime}(1+\sigma_{m+i}m_{2n}(z_j))}-\frac{\sigma_{m+i}}{-z_j(1+\sigma_{m+i}m_{2n}(z_j))}\right)\\
    &:=\mathcal{M}_{11}+\mathcal{M}_{12}.
\end{align*}
For $\mathcal{M}_{11}$, according to Lemma \ref{lem_prf_fakesignals_asymptoticproperties} and \eqref{eq_systemequations}, we readily obtain that
\begin{align*}
    \mathcal{M}_{11}&=\frac{1}{p}\sum_{i=1}^{p-m}\frac{\sigma^2_{m+i}}{-z^{\prime}(1+\sigma_{m+i}m_{2n}(z^{\prime}))(1+\sigma_{m+i}m_{2n}(z_j))}\big(m_{2n}(z_j)-m_{2n}(z^{\prime})\big)\\
    &=\mathrm{O}(|z'|^{-1})\times\frac{1}{p}\sum_i\Big(\frac{\xi^2_i}{-z_j(1+\xi^2_im_{1n}(z_j))}-\frac{\xi^2_i}{-z_j(1+\xi^2_im_{1n}(z^{\prime}))}\Big)\\
    &+\mathrm{O}(|z'|^{-1})\times\frac{1}{p}\sum_i\left(\frac{\xi^2_i}{-z_j(1+\xi^2_im_{1n}(z^{\prime}))}-\frac{\xi^2_i}{-z^{\prime}(1+\xi^2_im_{1n}(z^{\prime}))}\right)\\
    &=\mathrm{O}(|z'|^{-1}) \times \left( \mathtt{M}_{11,1}+ \mathtt{M}_{11,2} \right). 
\end{align*}

For $\mathtt{M}_{11,1},$ by a discussion similar to (\ref{eq_g11}) and (\ref{eq_expansiontwo}), we find that 
\begin{align*}
   & \mathtt{M}_{11,1}=\frac{1}{p}\sum_{i=1}^n\frac{\xi_i^4}{-z_j (1+\xi^2_im_{1n}(z^{\prime}))(1+\xi^2_im_{1n}(z_j))}(m_{1n}(z')-m_{1n}(z_j))\\
    \prec&\left(\frac{\xi_1^4}{-pz_j \mathsf{q}^2 |m_{1n}(z_j)||m_{1n}(z')|}+\frac{1}{p}\sum_{i\ge2}\frac{\xi_i^4}{|z^{\prime}(1+\xi^2_1m_{1n}(z^{\prime}))(1+\xi^2_im_{1n}(z_j))|}\right)\\
    &\times|m_{1n}(z_j)-m_{1n}(z^{\prime})|\\
    \prec& \mathrm{o}(1) \times |m_{1n}(z_j)-m_{1n}(z^{\prime})|.
\end{align*}
Similarly, for $\mathtt{M}_{11,2},$ we have that 
\[
\begin{split}
    \mathtt{M}_{11,2}=\frac{1}{p}\sum_{i}\frac{\xi_i^2 (z_j-z^{\prime})}{z^{\prime}z_j(1+\xi_i^2 m_{1n}(z'))} \prec \frac{z^{\prime}-z_j}{z^{\prime}z_j} \prec n^{-3-2/\alpha}.
\end{split}
\]
Analogously, we can prove that $\mathcal{M}_{12} \prec n^{-3-4/\alpha}$. 
Therefore, under $\Omega$, we see that 
\begin{equation*}
    |m_{1n}(z^{\prime})-m_{1n}(z_j)|\prec  n^{-3-2/\alpha}.
\end{equation*}

\quad By similar procedures and arguments, we can also prove that 
\begin{equation*}
    |m_{2n}(z^{\prime})-m_{2n}(z_j)|\prec n^{-3-2/\alpha}, \  |m_{n}(z^{\prime})-m_{n}(z_j)|\prec n^{-3-2/\alpha},
\end{equation*}
and
\begin{gather*}
    \left\|(z^{\prime})^{-1}(I+m_{1n}(z^{\prime})D^2)^{-1}-(z_j)^{-1}(I+m_{1n}(z_j)D^2)^{-1}\right\|\prec n^{-3-2/\alpha}.
\end{gather*}

Therefore, combining all the above bounds with triangle inequality, we see that the results of part 1 of Theorem \ref{thm_prf_fakesignals_locallaw} hold for $z'.$ Using an induction procedure and a standard lattice argument (for example, see \cite{Alex2014,Ding&Yang2018}), we find that the results hold for all $z \in \widetilde{\mathbf{D}}_{\mu}$ and conclude the proof of Proposition \ref{eq_propooursidebulk}. 
\end{proof}

\subsubsection{Completion of the proof of Theorem \ref{thm_prf_fakesignals_locallaw}} \label{sec_proofa2}

Once Proposition \ref{eq_propooursidebulk} is proved, we can roughly locate the edge eigenvalues of $S_2$ as in Lemma \ref{lem: upper bound for eigenvalues} so that we can expand the spectral domain from $\widetilde{\mathbf{D}}_{\mu}$ to $\mathbf{D}_{\mu}$ for $S_2^{(1)}$ and conclude the proof of Theorem \ref{thm_prf_fakesignals_locallaw}.

\quad  Recall  the definitions of $\mu_2$ and $\lambda_1^{(1)}$. By Lemma \ref{lem: upper bound for eigenvalues} and an analogous argument, as well as  
Weyl's inequality, we find that conditional on the event $\Omega,$ with high probability, 
\begin{equation}\label{eq_keyused}
    \mu_1>\lambda_1>\mu_2>\lambda_1^{(1)}.
\end{equation}
By (\ref{eq_mu1part}) and a similar argument, we see that $\mu_k \asymp \xi_k^2, k=1,2.$ Under $\Omega$, we have that $\mu_1-\mu_2\ge C_1 n^{1/\alpha}\log^{-1}n$ for some constant $C_1>0$. This implies for some constant $C>0$ and $z \in \mathbf{D}_{\mu},$ 
\begin{gather}\label{eq: eigen gap}
    |\lambda_1^{(1)}-z|\ge Cn^{1/\alpha}\log^{-1}n,
\end{gather}

Now we proceed to the proof of Theorem \ref{thm_prf_fakesignals_locallaw}. Recall (\ref{eq_defnminor}) and (\ref{eq_defnminorG}). 
\begin{proof}[\bf Proof of Theorem \ref{thm_prf_fakesignals_locallaw}]  
Observe  by \eqref{eq: eigen gap} that it holds uniformly over $z\in\mathbf{D}_{\mu}$ and $\mathcal{T}\subset\{2,\dots,n\}$, for some constant $C_1>0$
\begin{gather}\label{eq: est of G^(1)}
    \|G^{(1\mathcal{T})}\|\leq C_1 n^{-1/\alpha} \log n.
\end{gather}
By the definition of $m_2^{(1)}$ and a decomposition similar to (\ref{eq: decomp m_2}),  we have that 
\begin{gather*}
        m_2^{(1)}=\frac{1}{p}\sum_{i=2}^n\frac{\xi^2_i}{-z-z\mathbf{y}_{i,2}^{\prime}G^{(1i)}\mathbf{y}_{i,2}}=\frac{1}{p}\sum_{i=2}^n\frac{\xi^2_i}{-z(1+\xi^2_ip^{-1}\operatorname{tr}G^{(1i)}\Sigma_2+Z_i^{(1)})},\\
        Z_i^{(1)}=\mathbf{y}_{i,2}^{\prime}G^{(1i)}\mathbf{y}_{i,2}-\xi^2_ip^{-1}\operatorname{tr}G^{(1i)}\Sigma_2.
\end{gather*}
By arguments similar to (\ref{eq: bound for Z}), (\ref{eq_m1approximate}) and using \eqref{eq: est of G^(1)}, we obtain that 
\begin{gather*}
    Z_i^{(1)}\prec\frac{\xi^2_i}{p}\|G^{(1i)}\Sigma_2\|_F\le\frac{\xi^2_i}{n}\|G^{(1i)}\|\|\Sigma_2\|_F\prec\frac{\xi^2_i}{p^{1/2}}n^{-1/\alpha},\\
    \frac{1}{p}\operatorname{tr}(G^{(1i)}\Sigma_2)-m_1^{(1)}(z)=\frac{1}{p}\mathbf{y}_{i,2}^{\prime}G^{(1)}\Sigma_2 G^{(1i)}\mathbf{y}_{i,2}\prec \frac{\xi^2_i}{p}n^{-2/\alpha}.
\end{gather*}
In addition, using (\ref{eq: est of G^(1)}) and a discussion similar to (\ref{eq_eeeone})--(\ref{eq_modeltemp}), we readily see that 
\begin{gather*}
    m^{(1)}_2=\frac{1}{p}\sum_{i=2}^n\frac{\xi^2_i}{-z(1+\xi^2_im^{(1)}_1)}+\mathrm{O}_{\prec}\left(n^{-1/2-1/\alpha}\right).
\end{gather*}
Using the decomposition 
\begin{gather*}
    S_2^{(1)}-zI=\sum_{i=2}^n\mathbf{y}_{i,2}\mathbf{y}_{i,2}^{\prime}+zm^{(1)}_2(z)\Sigma_2-z(I+m_2^{(1)}(z)\Sigma_2),
\end{gather*}
by arguments similar to (\ref{eq: decomp of mathcal_G})--(\ref{eq: m_1 by m_2}) with $\|\mathcal{G}^{(1\mathcal{T})}\|=\|G^{(1 \mathcal{T})}\|\prec n^{-1/\alpha}$, we conclude that 
\begin{gather*}
\begin{split}
    m_1^{(1)}&=-z^{-1}\frac{1}{p}\operatorname{tr}((I+m_2^{(1)}(z)\Sigma_2)^{-1}\Sigma_2)+\mathrm{O}_{\prec}\left(n^{-1/2-2/\alpha}\right)\\
    &=-\frac{1}{p}\sum_{i=1}^{p-m}\frac{\sigma_{m+i}}{z(1+m_2^{(1)}\sigma_{m+i})}+\mathrm{O}_{\prec}\left(n^{-1/2-2/\alpha}\right).
\end{split}
\end{gather*}
Combining the definitions in \eqref{eq_systemequations}, we see that
\begin{gather*}
    \begin{split}
      &  m_1^{(1)}(z)-m_{1n}^{(1)}(z)\\
    =&\frac{1}{p}\sum_{i=1}^{p-m}\frac{\sigma_{m+i}^2(m_2^{(1)}(z)-m_{2n}^{(1)}(z))}{z(1+\sigma_{m+i}m_{2n}^{(1)}(z))(1+\sigma_{m+i}m_2^{(1)}(z))}+\mathrm{O}_{\prec}(n^{-1/2-2/\alpha})\\
        =&\left(\frac{1}{p}\sum_{i=1}^{p-m}\frac{\sigma_{m+i}^2}{z(1+\sigma_{m+i}m_{2n}^{(1)}(z))(1+\sigma_{m+i}m_2^{(1)}(z))}\right)\\
        &\times\left(\frac{1}{p}\sum_{i=2}^n\frac{\xi^4_i(m_1^{(1)}(z)-m_{1n}^{(1)}(z))}{z(1+\xi^2_im_1^{(1)}(z))(1+\xi^2_im_{1n}^{(1)}(z))} \right)+\mathrm{O}_{\prec}(n^{-1/2-2/\alpha})\\
        =&\mathrm{o}(1)(m_1^{(1)}(z)-m_{1n}^{(1)}(z))+\mathrm{O}_{\prec}(n^{-1/2-2/\alpha}),
    \end{split}
\end{gather*} 
where in the second step we used a discussion similar to (\ref{eq_eeetwo}) under $\Omega$. This completes our proof. 
\end{proof}

\subsubsection{Proof of the results in Section \ref{sec_testing_procedures}}\label{sec_proof_testing procedures}
\subsubsection{Proof of Proposition \ref{lem_realsignals_bootstrapped}}
\begin{proof}
The proof of Proposition \ref{lem_realsignals_bootstrapped} consists of two steps. In the first step, we use the results in Lemma \ref{lem_realsignal_secondratio1}, Theorem \ref{thm_realsignal_clt} and Theorem \ref{thm_realsignal_extremeheavy}. We find that the real signals will be close to the population ones in the sense that 
\begin{equation*}
    \frac{\lambda_i}{\sigma_i}=1+\mathrm{o}(1)+\mathrm{O}_{\mathbb{P}}(\frac{1}{\sqrt{n}}+\sqrt{\frac{\mathsf{T}}{n}}\times\mathbbm{1}(\alpha\in(1,2])),
\end{equation*}
for $1\le i\le m$. This gives a rough estimation of the strength of the real signals. Under the assumption of Proposition \ref{lem_realsignals_bootstrapped}, we find that 
\begin{equation*}
    \lambda_i\gg \mathsf{T}w_{(1)},
\end{equation*}
for some properly chosen random variable $w$. As a consequence, in the second step, we regard $DW_j$ as an updated diagonal matrix for each round of magnification procedure and apply Theorem \ref{thm_realsignal_clt} and Theorem \ref{thm_realsignal_extremeheavy} again (notice that the probability space is changed). The results of this lemma can be easily concluded; we omit further details here.
\end{proof}

\subsubsection{Proof of Proposition \ref{col_realsignals_bootstrapped}}
\begin{proof}
Based on Proposition \ref{lem_realsignals_bootstrapped}, recalling the definition of the error $\mathsf{r}_n=\frac{1}{\sqrt{n}}+\sqrt{\frac{\mathsf{T}}{n}}\times\mathbbm{1}(\alpha\in(1,2])$, we find that 
\begin{equation*}
    \frac{\lambda_i^{(j)}}{\theta_i}=\mathbb{E}\big(\frac{\lambda_i^{(j)}}{\theta_i}\big)+\mathrm{O}_{\mathbb{P}^*}(\mathsf{r}_n).
\end{equation*}
On the other hand, using CLT for the sequence of $\{\lambda_i^{(j)}/\theta_i\}_{1\le j\le K}$, we observe that
\begin{equation*}
    \frac{1}{K}\sum_{s=1}^K\frac{\lambda_i^{(s)}}{\theta_i}-\mathbb{E}\big(\frac{\lambda_i^{(1)}}{\theta_i}\big)\overset{d^*}{\rightarrow}\mathrm{N}(0,\frac{3\mathbb{E}[\xi^4_1w_1^2]-1}{nK}).
\end{equation*}
Consequently, we find that
\begin{equation*}
    \frac{\lambda_i^{(j)}}{K^{-1}\sum_{s=1}^K\lambda_i^{(s)}}=\frac{\lambda_i^{(j)}/\theta_i}{K^{-1}\sum_{s=1}^K\lambda_i^{(s)}/\theta_i}=\frac{\mathbb{E}\big(\lambda_i^{(j)}/\theta_i\big)+\mathrm{O}_{\mathbb{P}^*}(\mathsf{r}_n)}{\mathbb{E}\big(\lambda_i^{(j)}/\theta_i\big)+\mathrm{O}_{\mathbb{P}^*}(K^{-1/2}\mathsf{r}_n)}=1+\mathrm{O}_{\mathbb{P}^*}(\mathsf{r}_n).
\end{equation*}
The results will follow immediately. 
\end{proof}

\subsubsection{Proof of Proposition \ref{lem_fakesignals_bootstrapped}}
\begin{proof}
Based on Proposition \ref{lem_fakesignal_compare_nonspikes} and Theorem \ref{thm_fakesignal_limit}, we first observe for each $1\le j\le K$ that
\begin{equation*}
{\lambda_{m+1}^{(j)}}/{(\xi^2w)_{(1)}}=\bar{\sigma}+\mathrm{o}_{\mathbb{P}}(1)+\mathrm{o}_{\mathbb{P}}(n^{-1/2+\epsilon}).
\end{equation*}
Second, conditional on the sample $Y$, we find that $\{\lambda_{m+1}^{(j)}\}_{1\le j\le K}$ are i.i.d. random variables with mean $(\bar{\sigma}+\mathrm{o}(1))\mathbb{E}[\xi^2_{(1)}w_1]$ and variance $(\bar{\sigma}+\mathrm{o}(1))\operatorname{Var}[(\xi^2w)_{(1)}]$. Then, we have by CLT that
\begin{equation*}
    \frac{1}{K}\sum_{s=1}^K\lambda^{(s)}_{m+1}=(\bar{\sigma}+\mathrm{o}_{\mathbb{P}}(1))\cdot\mathbb{E}[(\xi^2w)_{(1)}]+\mathrm{O}_{\mathbb{P}^*}(\frac{\xi^2_{(1)}}{\sqrt{K}}).
\end{equation*}
It gives that 
\begin{align*}
            &\frac{1}{K}\sum_{j=1}^K\big(\lambda_{m+1}^{(j)}-\frac{1}{K}\sum_{s=1}^{K}\lambda_{m+1}^{(s)}\big)^2=\frac{1}{K}\sum_{j=1}^K(\lambda_{m+1}^{(j)})^2-\big(\frac{1}{K}\sum_{j=1}\lambda_{m+1}^{(j)}\big)^2\\
                 &=(\Bar{\sigma}+\mathrm{o}_{\mathbb{P}}(1))^2\cdot\mathbb{E}[(\xi^2 w)^2_{(1)}]+\mathrm{O}_{\mathbb{P}^*}(\frac{\xi^4_{(1)}}{\sqrt{K}})-(\Bar{\sigma}+\mathrm{o}_{\mathbb{P}}(1))^2\cdot\mathbb{E}^2[(\xi^2 w)_{(1)}]+\mathrm{O}_{\mathbb{P}^*}(\frac{\xi^4_{(1)}}{K})\\
                 &=(\Bar{\sigma}+\mathrm{o}_{\mathbb{P}}(1))^2\cdot\big(\mathbb{E}[(\xi^2 w)^2_{(1)}]-\mathbb{E}^2[(\xi^2 w)_{(1)}]\big)+\mathrm{O}_{\mathbb{P}^*}(\frac{\xi^4_{(1)}}{\sqrt{K}}).
\end{align*}
Therefore, we observe that 
\begin{align*}
                &\frac{1}{K}\sum_{i=1}^K(\lambda_{m+1}^{(j)}-\frac{1}{K}\sum_{s=1}^K\lambda_{m+1}^{(s)})^2/(K^{-1}\sum_{s=1}^K\lambda_{m+1}^{(s)})^2\\
                &=\frac{(\Bar{\sigma}+\mathrm{o}_{\mathbb{P}}(1))^2\cdot\big(\mathbb{E}[(\xi^2 w)^2_{(1)}]-\mathbb{E}^2[(\xi^2 w)_{(1)}]+\mathrm{O}_{\mathbb{P}^*}(\frac{\xi^4_{(1)}}{\sqrt{K}})\big)}{(\Bar{\sigma}+\mathrm{o}_{\mathbb{P}}(1))^2\cdot\mathbb{E}^2[(\xi^2 w)_{(1)}]+\mathrm{O}_{\mathbb{P}^*}(\frac{\xi^4_{(1)}}{\sqrt{K}})}=\frac{\operatorname{Var}[(\xi^2w)_{(1)}]}{\mathbb{E}^2[(\xi^2 w)_{(1)}]}+\mathrm{O}_{\mathbb{P}^*}\Big(\frac{\mathsf{T}^2}{\sqrt{K}}\Big).
\end{align*}
On the other hand, suppose $a=\nu b$ for some $0<\nu<1$ and recall the condition $(a+b)/2=1$. We find that 
 \begin{align*}
        \frac{\operatorname{Var}[(\xi^2w)_{(1)}]}{\mathbb{E}^2[(\xi^2w)_{(1)}]}
        =\Big(\frac{\mathbb{E}[(\xi^2w)^2_{(1)}]}{\mathbb{E}^2[(\xi^2w)_{(1)}]}-1\Big)=\frac{(1+\nu+\nu^2)b^2\xi^4_{(1)}}{3\xi^4_{(1)}}-1>0.
\end{align*}
The last inequality can be easily proved by $(1+\nu+\nu^2)b^2>3$ with the restriction $(1+\nu)b=2$, $0<\nu<1$. 

Finally, taking $K$ sufficiently large, we have 
\begin{equation*}
    \frac{1}{K}\sum_{i=1}^K(\lambda_{m+1}^{(j)}-\frac{1}{K}\sum_{s=1}^K\lambda_{m+1}^{(s)})^2/(K^{-1}\sum_{s=1}^K\lambda_{m+1}^{(s)})^2=c+\mathrm{O}_{\mathbb{P}^*}\Big(\frac{\mathsf{T}^2}{\sqrt{K}}\Big)>0,
\end{equation*}
for some constant $c>0$ only depending on $a,b$.
\end{proof}

\subsubsection{Proof of Lemma \ref{lem_secondround_consistency}}
\begin{proof}
    We first observe from \eqref{eq_firstround_consistency} that under Assumptions \ref{ass_sigma} and \ref{ass_xi}
    \begin{equation*}
        \lim_{n\rightarrow\infty}\mathbb{P}^*(r^*=m+1)=1.
    \end{equation*}
    Additionally, we find from Assumption \ref{ass_sigma} that
    \begin{equation*}
        \frac{\sigma_i-\sigma_{i+1}}{\sigma_{i+1}-\sigma_{i+2}}=\frac{\sigma_i/\sigma_{i+1}-1}{1-\sigma_{i+2}/\sigma_{i+1}}\le C_1,\quad i=1,\dots,m-2,
    \end{equation*}
    for some large but finite constant $C>0$. Therefore, by Theorem \ref{lem_realsignal_preratio}, we find that 
    \begin{equation*}
        \lim_{n\rightarrow\infty}\mathbb{P}(\mathbb{G}_i\le C_2)=1,\quad i=1,\dots,m-1,
    \end{equation*}
    for some constant $C_2>0$. On the other hand, from Assumption \ref{ass_sigma} and Lemma \ref{lem_goodconfiguration}, we observe that
    \begin{equation*}
        \lim_{n\rightarrow\infty}\mathbb{P}(\mathbb{G}_m\gg C_3)=1,
    \end{equation*}
    for some constant $C_3>0$. Then, combining the above observation and Algorithm \ref{alg_secondround_estrealsignal}, we have in the case that $Y$ contains the heavy-tailed random variables, 
    \begin{equation*}
        \lim_{n\rightarrow\infty}\mathbb{P}^*(\widehat{r}=m)=1.
    \end{equation*}

    Next, if $Y$ dose not contain the heavy-tailed random variable, then \eqref{eq_firstround_consistency} still holds for real signals and we have
    \begin{equation*}
        \lim_{n\rightarrow\infty}\mathbb{P}^*(r^*\ge m+1)=1.
    \end{equation*}
    On the other hand, the light-tailed sample ensures the consistency of the statistics $\sup_{1\le i\le r^*-1}\{i:\mathbb{G}_i\ge\delta_n^{(i)}\}$ from \cite{onatski2010determining}. Then, Algorithm \ref{alg_secondround_estrealsignal} gives that 
    \begin{equation*}
        \lim_{n\rightarrow\infty}\mathbb{P}^*(\widehat{r}=m)=1.
    \end{equation*}

    Integrating the above discussion, we can conclude the proof.
\end{proof}

\end{document}